\newtheorem{theorem}{Theorem}[section]
\newtheorem{lemma}{Lemma}[section]
\newtheorem{corollary}{Corollary}[section]
\newtheorem{definition}{Definition}[section]
\newtheorem{proposition}{Proposition}[section]
\newtheorem{conjecture}{Conjecture}[section]
\newtheorem{fact}{Fact}[section]
\newtheorem*{rep@theorem}{\rep@title}
\newcommand{\newreptheorem}[2]{%
\newenvironment{rep#1}[1]{%
 \def\rep@title{#2 \ref{##1}}%
 \begin{rep@theorem}}%
 {\end{rep@theorem}}}
\newtheorem{subdefinition}{Definition}[definition]
\newenvironment{customthm}[1]
  {\innercustomthm}
  {\endinnercustomthm}
\newenvironment{customdef}[1]
  {\innercustomdef}
  {\endinnercustomdef}
\newenvironment{customconj}[1]
  {\innercustomconj}
  {\endinnercustomconj}
\theoremstyle{definition}
\newtheorem{example}{Example}[section]
\newtheorem{remark}{Remark}[section]
\newcommand{\mO}{\mathcal{O}}
\newcommand{\Pclass}{\mathsf{P}}
\newcommand{\RP}{\mathsf{RP}}
\newcommand{\BPP}{\mathsf{BPP}}
\newcommand{\NP}{\mathsf{NP}}
\newcommand{\NqP}{\mathsf{NqP}}
\newcommand{\MA}{\mathsf{MA}}
\newcommand{\AM}{\mathsf{AM}}
\newcommand{\UMA}{\mathsf{UMA}}
\newcommand{\QMA}{\mathsf{QMA}}
\newcommand{\UQMA}{\mathsf{UQMA}}
\newcommand{\QCMA}{\mathsf{QCMA}}
\newcommand{\UQCMA}{\mathsf{UQCMA}}
\newcommand{\PCP}{\mathsf{PCP}}
\newcommand{\QPCP}{\mathsf{QPCP}}
\newcommand{\QCPCP}{\mathsf{QCPCP}}
\newcommand{\SPCP}{\mathsf{SPCP}}
\newcommand{\PH}{\mathsf{PH}}
\newcommand{\BQP}{\mathsf{BQP}}
\newcommand{\StoqMA}{\mathsf{StoqMA}}
\newcommand{\poly}{\mathrm{poly}}
\newcommand{\kb}[2]{|#1\rangle\langle#2|}
\title{Guidable Local Hamiltonian Problems with Implications to Heuristic Ansatz State Preparation and the Quantum PCP Conjecture}
\author[1,2]{Jordi Weggemans}
\author[1]{Marten Folkertsma}
\author[2,3]{Chris Cade}
\affil[1]{QuSoft \& CWI, Amsterdam, the Netherlands}
\affil[2]{Fermioniq, Amsterdam, the Netherlands}
\affil[3]{QuSoft \& University of Amsterdam (UvA), Amsterdam, the Netherlands}
\date{\today}
\begin{document}
\maketitle

\begin{abstract}
\noindent We study `Merlinized' versions of the recently defined Guided Local Hamiltonian problem, which we call `\emph{Guidable} Local Hamiltonian' problems. Unlike their guided counterparts, these problems do not have a guiding state provided as a part of the input, but merely come with the promise that one \emph{exists}. We consider in particular two classes of guiding states: those that can be prepared efficiently by a quantum circuit; and those belonging to a class of quantum states we call \emph{classically evaluatable}, for which it is possible to efficiently compute expectation values of local observables classically. We show that guidable local Hamiltonian problems for both classes of guiding states are $\QCMA$-complete in the inverse-polynomial precision setting, but lie within $\NP$ (or $\NqP$) in the constant precision regime when the guiding state is classically evaluatable.

Our completeness results show that, from a complexity-theoretic perspective, classical Ansätze selected by classical heuristics are just as powerful as quantum Ansätze prepared by quantum heuristics, as long as one has access to quantum phase estimation. In relation to the quantum PCP conjecture, we (i) define a complexity class capturing quantum-classical probabilistically checkable proof systems and show that it is contained in $\BQP^{\NP[1]}$ for constant proof queries; (ii) give a no-go result on `dequantizing' the known quantum reduction which maps a $\QPCP$-verification circuit to a local Hamiltonian with constant promise gap; (iii) give several no-go results for the existence of quantum gap amplification procedures that preserve certain ground state properties; and (iv) propose two conjectures that can be viewed as stronger versions of the NLTS theorem. Finally, we show that many of our results can be directly modified to obtain similar results for the class $\MA$.
\end{abstract}

\newpage
\tableofcontents

\newpage

\section{Introduction}
\label{sec:intro}
Quantum chemistry and quantum many-body physics are generally regarded as two of the most promising application areas of quantum computing~\cite{Aaronson2009ComputationalCW,Bauer2020chemical}. Whilst perhaps the original vision of the early pioneers of quantum computing was to simulate the \emph{time-dynamics} of quantum systems~\cite{Benioff1980computer,feynman1982simulating}, for many applications one is interested in \emph{stationary} properties.
One particularly noteworthy quantity is the \textit{ground state energy} (which corresponds to the smallest eigenvalue) of a local Hamiltonian describing a quantum mechanical system of interest, say a small molecule or segment of material.
The precision to which one can estimate the ground state energy plays a crucial role in practice: for instance, in chemistry the relative energies of molecular configurations enter into the exponent of the term computing reaction rates, making the latter exceptionally sensitive to small (non-systematic) errors in energy calculations. Indeed, to match the accuracy obtained by experimentation for such values one aims for an accuracy that is smaller than so-called \emph{chemical accuracy}, which is about 1.6 millihartree.\footnote{This quantity, which is $\approx$1 kcal/mol, is chosen to match the accuracy achieved by thermochemical experiments.} This quantity -- which reads as a constant -- is defined with respect to a (physical) Hamiltonian whose norm grows polynomially in the system size and particle dimension, and thus chemical accuracy is in fact a quantity that scales inverse polynomially in the system size when one considers (sub-)normalized Hamiltonians, which is often the case in the quantum computing / Hamiltonian complexity literature.

The problem of estimating the smallest eigenvalue of a local Hamiltonian up to some additive error (the decision variant of which is known as the \emph{local Hamiltonian problem}) is well-known to be $\QMA$-hard when the required accuracy scales inversely with a polynomial, where $\QMA$ is the quantum analogue of the class $\NP$, also known as Quantum Merlin Arthur. Therefore, it is generally believed that, without any additional help or structure, quantum computers are not able to accurately estimate the smallest eigenvalues of general local Hamiltonians, and there is some evidence that this hardness carries over to those Hamiltonians relevant to chemistry and materials science~\cite{o2022intractability}. A natural question to ask is then the following:  how much `extra help' needs to be provided in order to accurately estimate ground state energies using a quantum computer? 

\

\noindent In the quantum chemistry community, it is often suggested that this extra help could come from a classical heuristic that first finds some form of  \emph{guiding state}: a classical description of a quantum state that can be used as an input to a quantum algorithm to compute the ground state energy accurately~\cite{liu2022prospects}. Concretely, this comes down to the following two-step procedure~\cite{cade2023improved}:
\begin{itemize}
    \item Step~1 (Guiding state preparation): A classical heuristic algorithm is applied to obtain a \emph{guiding state} $\ket{\psi}$, which is hoped to have `good'\footnote{`Good' here means at least inverse polynomial in the number of qubits the Hamiltonian acts on.} fidelity with the ground space.
    \item Step~2: (Ground state energy approximation): The guiding state $\ket{\psi}$ is used as input to Quantum Phase Estimation (QPE) to efficiently and accurately compute the corresponding ground state energy.
\end{itemize}
Step~2 of the above procedure can be formalised by the \emph{Guided $k$-local Hamiltonian problem ($k$-GLH)}, which was introduced in~\cite{gharibian2021dequantizing} and shown to be $\BQP$-complete under certain parameter regimes that were subsequently improved and tightened in~\cite{cade2023improved}. The problem $k$-GLH is stated informally as follows:  given a $k$-local Hamiltonian $H$, an appropriate classical `representation' of a guiding state $\ket{u}$ promised to have $\zeta$-fidelity with the ground space of $H$, and real thresholds $b> a$, decide if the ground state energy of $H$ lies above or below the interval $[a,b]$. In a series of works~\cite{gharibian2021dequantizing,cade2023improved}, it was shown that $2$-GLH is $\BQP$-complete for \emph{inverse polynomial} precision and fidelity, i.e.\ $b-a\geq 1/\poly(n)$ and  $\zeta=1-1/\poly(n)$ respectively. In contrast, when $b-a \in\Theta(1)$ and $\zeta = \Omega(1)$, $k$-GLH can be efficiently solved \emph{classically} by using a dequantised version of the quantum singular value transformation. 

\

\noindent The GLH problem forms the starting point of this work. We study `\emph{Merlinized}' versions of GLH -- in which guiding states are no longer given as part of the input but instead are only promised to exist -- and use these as a way to gain some insight into important theoretical questions in quantum chemistry and complexity theory. In the subsequent paragraphs, we introduce some of the motivating questions guiding the study of the complexity of these so-called `guidable' local Hamiltonian problems.

\paragraph[title]{Ansätze\footnote{An Ansatz (plural Ansätze) is a German word often used in physics and mathematics for an assumption about the form of an unknown function or solution which is made in order to facilitate the solution of some problem. An Ansatz for state preparation in our context refers to the family of quantum states considered to be prepared on a quantum computer, for example those of matrix product states with polynomially bounded bond dimension or stabilizer states.}} for state preparation.
Step~1 of the aforementioned two-step procedure generally requires one to have access to classical heuristics capable of finding guiding states whose energies can be estimated classically (as a metric to test whether candidate states are expected to be close to the actual ground state or not). Furthermore, these `trial states' should also be preparable as quantum states on a quantum computer, so that they can be used as input to phase estimation in Step~2. In~\cite{gharibian2021dequantizing}, inspired by a line of works that focused on the dequantization of quantum machine learning algorithms~\cite{Tang2019AQuantum,Chia2020Sampling,Jethwani2020Quantum}, a particular notion of `sampling-access' to the guiding state $u$ is assumed. Specifically, it is assumed that one can both query the amplitude of arbitrary basis states, and additionally that one can sample basis states according to their $l_2$ norm with respect to the overall state $u$.\footnote{In this work we slightly abuse notation by making a distinction between the vector representing a quantum state, which we will denote as `$u$', and that same vector instantiated as a quantum state (e.g. living on a quantum computer), which we will denote by `$\ket{u}$'. Of course, these are the same mathematical object ($u = \ket{u} \in \mathcal{C}^{2^n}$), and we only use the different notation to make our theorem statements and proofs clearer.} Whilst this can be a somewhat powerful model~\cite{Jordan2021Revisiting}, it is closely related to the assumption of QRAM access to classical data, and thus in the context of quantum machine learning (where such access is commonly assumed), it makes sense to compare quantum machine learning algorithms to classical algorithms with sampling access to rule out quantum speed-ups that come merely from having access to quantum states that are constructed from exponential-size classical data. 

However, for quantum chemistry and quantum many-body applications, this type of access to quantum states seems to be somewhat artificial. From a theoretical perspective, one might wonder to what extent this sampling access model `hides' some complexity, allowing classical algorithms to perform well on the problem when they otherwise would not.

Finally, one may ask whether the fact that the ground state preparation in Step~1 considers only \emph{classical} heuristics might be too restrictive. \emph{Quantum} heuristics for state preparation, such as variational quantum eigensolvers~\cite{tilly2022the} and adiabatic state preparation techniques~\cite{albash2018adiabatic}, have contained considerable attention as possible quantum approaches within the NISQ era. However, one can argue that even in the fault-tolerant setting, such heuristics will likely still be viable approaches to state preparation, in particular when used in conjunction with Quantum Phase Estimation.

\paragraph{The quantum PCP conjecture.}
Arguably the most fundamental result in classical complexity theory is the Cook-Levin Theorem~\cite{cook1971the,levin1973universal}, which states that constraint satisfaction problems (CSPs) are $\NP$-complete. The PCP theorem~\cite{Arora1998proof,Arora1998probabilistic}, which originated from a long line of research on the complexity of interactive proof systems, can be viewed as a `strengthening' of the Cook-Levin theorem. In its proof-checking form, it states that all decision problems in $\NP$ can be decided, with a constant probability of error, by only checking a constant number of bits of a polynomially long proof string $y$ (selected randomly from the entries of $y$).   There are also alternative equivalent formulations of the PCP theorem. One is in terms of \emph{hardness of approximation}: it states that it remains $\NP$-hard to decide whether an instance of CSP is either completely satisfiable, or whether no more than a constant fraction of its constraints can be satisfied.\footnote{The transformation of a CSP to another one which is hard to approximate is generally referred to as \emph{gap amplification}, and is realised in Dinur's proof of the PCP theorem~\cite{Dinur2007thepcp}.} It is straightforward to show that this formulation is equivalent to the aforementioned proof-checking version: one simply samples a clause at random and checks whether it is satisfied, which with constant probability detects a violated clause.

Naturally, quantum complexity theorists have proposed proof-checking and hardness of approximation versions of PCP in the quantum setting. Given the close relationship between $\QMA$ and the local Hamiltonian problem, the most natural formulation is in terms of hardness of approximation: in this context, the \emph{quantum} PCP conjecture roughly states that energy estimation of a (normalized) local Hamiltonian up to \emph{constant} precision, relative to the operator norm of the Hamiltonian, remains $\QMA$-hard. This conjecture is arguably one of the most important open problems in quantum complexity theory and has remained unsolved for nearly two decades. Under the assumption $\NP \neq \QMA$, the quantum PCP conjecture implies that there exists Hamiltonians for which all low-energy states have no efficient classical description from which their energy can be evaluated efficiently classically. In a recent breakthrough result, the NLTS conjecture was proven to be true, which (amongst other things) means that constant-depth quantum circuits -- for which the energies can be computed efficiently, as shown by a standard lightcone argument -- are not expressive enough to estimate the ground state energies of all Hamiltonians up to even constant precision~\cite{anshu2022nlts}. However, there have also been some no-go results: for example,  a quantum PCP statement cannot hold for local Hamiltonians defined on a grid, nor on high-degree or expander graphs~\cite{brandao2013product}. 

One way to shed light on the validity of the quantum PCP conjecture can be to study PCP-type conjectures for other `Merlinized' complexity classes. Up until this point, PCP-type conjectures have not been considered for other classes besides $\NP$ and $\QMA$.\footnote{This is barring a result by Drucker which proves a PCP theorem for the class $\AM$~\cite{drucker2011apcp}; though there is no direct relationship between $\QMA$ and $\AM$ and hence it is not clear whether this gives any intuition about the likely validity of the quantum PCP conjecture.} However, there is the beautiful result of~\cite{Aharonov2019stoquastic}, which studies the possibility of a gap amplification procedure for the class $\MA$ by considering a particular type of Hamiltonian: uniform stoquastic local Hamiltonians. The authors show that deciding whether the energy of such a Hamiltonian is exactly zero or inverse polynomially bounded away from zero is $\MA$-hard, but that the problem is in $\NP$ when this interval is increased to be some constant. Consequently, this implies that there can exist a gap-amplification procedure for uniform stoquastic Local Hamiltonians (in analogy to the gap amplification procedure for constraint satisfaction problems in the original PCP theorem) if and only if $\MA = \NP$ -- i.e.~if $\MA$ can be derandomized. Since $\MA \subseteq \QMA$, this result also shows that if a gap amplification procedure for the general local Hamiltonian problem would exist that `preserves stoquasticity', then it could also be used to derandomize $\MA$.

\subsection{Summary of main results}\label{sec:summary_of_main_results}
\subsubsection{Completeness results for the guidable local Hamiltonian problem}
Inspired by classical heuristics that work with Ansätze to approximate the ground states of local Hamiltonians, we define a general class of states that we call \emph{classically evaluatable and quantumly preparable}.

\begin{customdef}{1.1 (Informal)}[Classically evaluatable and quantumly preparable states, from Definition~\ref{def:cds}] We say that an $n$-qubit state $u$ is classically evaluatable if
\begin{enumerate}[label=(\roman*)]
\item it has an efficient classical description which requires at most a polynomial number of bits to write down and
\item one can, given such a description, classically efficiently compute expectation values of $\mO(\log n)$-local observables of $u$.
\end{enumerate}
 In addition, we say that the state is also quantumly preparable if (iii) there exists a quantum circuit that prepares $u$ as a quantum state $\ket{u}$ using only a polynomial number of two-qubit gates.
\end{customdef}

In the main text we consider a more general version of the definition above, which also allows for probabilistic estimation of expectation values, and we provide four concrete examples of Ansätze that satisfy all three conditions: matrix product states (MPS), stabilizer states, constant-depth quantum circuits and IQP circuits~\cite{bremner2011classical}. We also relate classically evaluatable states to the samplable states of~\cite{gharibian2021dequantizing}, and show that if one allows for an error in the estimation of local observables, it forms in fact a larger class of quantum states (Theorem~\ref{thm:sampl_vs_ces}).

\ 

\noindent Our main focus is on a new family of local Hamiltonian problems, which we call \emph{Guidable local Hamiltonian problems}, in which we are promised that the ground state is close (with respect to fidelity) to some state from a particular class of states.

\begin{customdef}{1.2 (Informal)}[Guidable Local Hamiltonian problems, from Definition~\ref{def:GaLH}]  Guidable Local Hamiltonian Problems are problems defined by having the following input, promise, output and some extra promise to be precisely defined below for each of the problems separately:\\
\textbf{Input:} A $k$-local Hamiltonian $H$ with $\|H\|\leq 1$ acting on $n$ qubits, threshold parameters $a,b \in \mathbb{R}$ such that $b-a \geq \delta > 0$ and a fidelity parameter $\zeta \in (0,1]$.\\
\textbf{Promise:} We have that either $\lambda_0(H) \leq a$ or $\lambda_0(H) \geq b$ holds, where $\lambda_0(H)$ denotes the ground state energy of $H$. \\
\textbf{Extra promises:} 
Let $\Pi_\text{gs}$ be the projection on the subspace spanned by the ground states of $H$.  Then for each problem, we have that either one of the following promises hold:
\begin{enumerate}
    \item \textbf{Classically Guidable and Quantumly Preparable $k$-LH} ($\mathsf{CGaLH}^{*}(k,\delta,\zeta))$:  there exists a classically evaluatable and quantumly preparable state $u\in \mathbb{C}^{2^n}$ for which $\norm{\Pi_\text{gs} u}^2 \geq \zeta$. 
    \item \textbf{Quantumly Guidable $k$-LH} ($\mathsf{QGaLH}(k,\delta,\zeta)$): There exists a quantum circuit of polynomially many two-qubit gates that produces the state $\ket{\phi}$ for which $\norm{\Pi_\text{gs} \ket{\phi}}^2 \geq \zeta$.
\end{enumerate}
\textbf{Output:} \begin{itemize}
    \item If $\lambda_0(H) \leq a$, output {\sc yes}.
    \item If $\lambda_0(H) \geq b$, output {\sc no}.
\end{itemize}
\label{def:GaLH_inf}
\end{customdef}
\noindent A guidable local Hamiltonian problem variant for a different class of guiding states was already introduced in~\cite{gharibian2021dequantizing} without giving any hardness results. Using techniques from Hamiltonian complexity we obtain the following completeness results.\footnote{In fact $\mathsf{QGaLH}(k,\delta,\zeta)$ remains $\QCMA$-hard all the way up to $\zeta = 1$.}
\begin{customthm}{1.1 (Informal)}[Complexity of guidable local Hamiltonian problems, from Corollary~\ref{cor:CGaLHstar} and Theorem~\ref{thm:QGaLH}] For $k=2$ and $\delta=1/\poly(n)$, we have that both $\mathsf{CGaLH}^{*}(k,\delta,\zeta)$ and $\mathsf{QGaLH}(k,\delta,\zeta)$ are $\QCMA$-complete when $\zeta \in (1/\poly(n),1-1/\poly(n))$.
\end{customthm}
We also obtain similar complexity results for a guidable version of the quantum satisfiability problem (see Appendix~\ref{app:GaQSAT}).

A direct corollary of the above theorem is the following.
\begin{corollary}[Classical versus quantum state preparation] When one has access to a quantum computer (and in particular quantum phase estimation), then having the ability to prepare any quantum state preparable by a polynomially-sized quantum circuit is no more powerful than the ability to prepare states from the family of classically evaluatable and quantumly preparable states, when the task is to decide the local Hamiltonian problem with precision $1/\poly(n)$.
\label{cor:c_cs_q_stateprep}
\end{corollary}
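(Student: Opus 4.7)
The plan is to derive the corollary directly from Theorem~1.1, exploiting the fact that two problems complete for the same complexity class are polynomial-time interreducible, and then to interpret this equivalence as the stated ``equal power'' statement about state-preparation resources.

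First I would unpack what the two sides of the comparison correspond to complexity-theoretically. The ``ability to prepare any polynomially-quantum-preparable state'' as a guide to solve a $1/\poly(n)$-precision local Hamiltonian problem is, by definition, exactly the problem $\mathsf{QGaLH}(2,1/\poly(n),\zeta)$: the promise is that \emph{some} poly-size quantum circuit produces a state with inverse-polynomial overlap with the ground space, and we must decide the ground energy to inverse-polynomial precision. Analogously, restricting the guides to the classically evaluatable and quantumly preparable family gives $\mathsf{CGaLH}^{*}(2,1/\poly(n),\zeta)$.

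Next I would invoke Theorem~1.1, which asserts that both $\mathsf{QGaLH}(2,\delta,\zeta)$ and $\mathsf{CGaLH}^{*}(2,\delta,\zeta)$ are $\QCMA$-complete in the stated parameter regime. By $\QCMA$-completeness of both, there exist polynomial-time many-one (Karp) reductions in both directions, and in particular a reduction from $\mathsf{QGaLH}$ to $\mathsf{CGaLH}^{*}$. Concretely, given any instance $(H,a,b,\zeta)$ of $\mathsf{QGaLH}$, this reduction produces an instance $(H',a',b',\zeta')$ of $\mathsf{CGaLH}^{*}$ with the same answer: whenever a witness state preparable by an arbitrary poly-size quantum circuit exists for the original instance, the reduced instance is guaranteed to admit a witness from the much more restrictive classically evaluatable and quantumly preparable family. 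Composing this reduction with any algorithm (e.g., one based on QPE applied to a guide from the classically evaluatable and quantumly preparable family) that solves $\mathsf{CGaLH}^{*}$ yields an algorithm for $\mathsf{QGaLH}$ of the same overall cost up to polynomial factors, which is the content of Corollary~\ref{cor:c_cs_q_stateprep}.

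The only real content hidden here, and what I would flag as the main subtlety, is that the $\QCMA$-hardness half of Theorem~1.1 for $\mathsf{CGaLH}^{*}$ must genuinely produce a Hamiltonian whose low-energy subspace is guaranteed to contain a state from the classically evaluatable and quantumly preparable family whenever the original $\QCMA$ instance is a {\sc yes} instance. In other words, the crucial ingredient is that the Kitaev-style history-state construction (or its variant used in the proof of Theorem~\ref{thm:QGaLH} / Corollary~\ref{cor:CGaLHstar}) can be engineered so that the ground state itself lies in one of the concrete Ansatz classes identified in Definition~1.1 (such as MPS or constant-depth quantum circuits augmented with an efficiently classical witness register). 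Assuming that step, which is established by Theorem~1.1, the corollary is immediate: the two models of guide preparation decide the same decision problems, and neither is strictly more powerful than the other for $1/\poly(n)$-precision local Hamiltonian estimation.
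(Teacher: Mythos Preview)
Your proposal is correct and matches the paper's approach: the paper presents this as ``a direct corollary'' of Theorem~1.1, using precisely the argument that both $\mathsf{QGaLH}$ and $\mathsf{CGaLH}^{*}$ are $\QCMA$-complete and hence interreducible, with the caveat (which you also flag) that the reduction maps to a \emph{new} Hamiltonian instance rather than guaranteeing a classically evaluatable guide for the original one. One minor imprecision: the $\QCMA$-hardness of $\mathsf{CGaLH}^{*}$ in the paper is established under \emph{randomized} reductions (via a Valiant--Vazirani-style step to $\UQCMA$), so the resulting reduction from $\mathsf{QGaLH}$ to $\mathsf{CGaLH}^{*}$ is randomized rather than a deterministic Karp reduction---this does not affect the conclusion but is worth stating correctly.
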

It should be noted that our result does \emph{not} imply that all Hamiltonians which have efficiently quantumly preparable guiding states also necessarily have guiding states that are classically evaluatable. All this result says is that for any instance of the guidable local Hamiltonian problem with the promise that there exist guiding states that can be efficiently prepared by a quantum computer, there exists an (efficient) \emph{mapping} to another instance of the guidable local Hamiltonian problem with the promise that there exist guiding states that are classically evaluatable and quantumly preparable. Whilst this reduction is efficient in the complexity-theoretic sense, it might not be for practical purposes, as it would likely remove all the physical structure present in the original Hamiltonian. Hence, the main implication of our result is not that these kinds of reductions are of practical merit, but that at least from a complexity-theoretic point of view the aforementioned classical-quantum hybrid approach of guiding state selection through \emph{classical} heuristics combined with \emph{quantum} energy estimation is at least as powerful as using quantum heuristics for state preparation instead.

\ 

\noindent We complement our quantum hardness results with classical containment results (of the classically guidable local Hamiltonian problem), obtained through a deterministic dequantized version of Lin and Tong's ground state energy estimation algorithm~\cite{Lin2020nearoptimalground}. Here $\mathsf{CGaLH}$ is just as $\mathsf{CGaLH}^*$ but without the promise of the guiding state being quantumly preparable (see Definition~\ref{def:GaLH} in the main text).

\begin{customthm}{1.2 (Informal)}[Classical containment of the classically guidable local Hamiltonian problem, from Theorem~\ref{thm:cp_NP_NqP}.] Let $k = \mO(\log n)$. When $\delta$ is constant, we have that $\mathsf{CGaLH}(k,\delta,\zeta)$ is in $\NP$ when $\zeta$ is constant and is in $\NqP$ when $\zeta =  1/\poly(n)$. Here $\NqP$ is just as $\NP$ but with the Turing machine being allowed to run in quasi-polynomial time.
\end{customthm}
\noindent Through a more careful analysis of when exactly the quantum hardness vanishes, the  picture of Figure~\ref{fig:charac} emerges, which characterises the complexity of $\mathsf{CGaLH}^{*}(k,\delta,\zeta)$ for relevant parameter settings in the desired precision and promise on the fidelity. One additional result to mention, is that when the overlap between the guiding state becomes very close to one, $\zeta = 1 - 1/\exp(n)$, the problem remains in $\NP$ even when the promise gap becomes polynomially small, $\delta = 1/\poly(n)$ (Theorem~\ref{thm:CGaLH_close}).\footnote{After this work, Jiang published a work on a similar problem for a different class of states then we consider~\cite{jiang2023local}. Jiang shows that if the ground state admits a polynomial-time algorithm to compute the amplitudes, the class of states for which this is possible being called \emph{succinct}, the corresponding local Hamiltonian problem is $\MA$-complete even in the inverse polynomial precision setting. However, since our proof of Theorem~\ref{thm:CGaLH} uses as a witness polynomial-sized subset states, which are succinct, it also shows that Jiang's problem is $\QCMA$-hard when it is only promised that there exists a succinct state with only at most $1-1/\poly(n)$ overlap with a ground state. Hence, Jiang's result is similar to our Theorem~\ref{thm:CGaLH_close} in that if the ground state itself becomes (exponentially close to) a special class of states, the problem becomes classically solvable.}

\tikzset{every picture/.style={line width=0.75pt}} %set default line width to 0.75pt        
\begin{figure}[h!]
    \centering
\begin{tikzpicture}[x=0.75pt,y=0.75pt,yscale=-0.85,xscale=0.85]
%uncomment if require: \path (0,302); %set diagram left start at 0, and has height of 302

%Shape: Rectangle [id:dp26956949296292454] 
\draw  [color={rgb, 255:red, 155; green, 155; blue, 155 }  ,draw opacity=0.53 ][fill={rgb, 255:red, 227; green, 12; blue, 12 }  ,fill opacity=0.28 ] (105,26) -- (205.2,26) -- (205.2,119.25) -- (105,119.25) -- cycle ;
%Shape: Rectangle [id:dp26413029258527865] 
\draw  [color={rgb, 255:red, 155; green, 155; blue, 155 }  ,draw opacity=0.53 ][fill={rgb, 255:red, 227; green, 12; blue, 12 }  ,fill opacity=0.53 ] (105,119.25) -- (205.2,119.25) -- (205.2,212.5) -- (105,212.5) -- cycle ;
%Shape: Rectangle [id:dp5507751890577569] 
\draw  [color={rgb, 255:red, 155; green, 155; blue, 155 }  ,draw opacity=0.53 ][fill={rgb, 255:red, 0; green, 124; blue, 255 }  ,fill opacity=0.44 ] (205.03,26.18) -- (305.23,26.18) -- (305.23,119.43) -- (205.03,119.43) -- cycle ;
%Shape: Rectangle [id:dp7874685527382818] 
\draw  [color={rgb, 255:red, 155; green, 155; blue, 155 }  ,draw opacity=0.53 ][fill={rgb, 255:red, 255; green, 93; blue, 0 }  ,fill opacity=0.32 ] (205.2,119.25) -- (305.39,119.25) -- (305.39,212.5) -- (205.2,212.5) -- cycle ;
%Shape: Rectangle [id:dp5500016084956724] 
\draw  [color={rgb, 255:red, 155; green, 155; blue, 155 }  ,draw opacity=0.53 ][fill={rgb, 255:red, 0; green, 124; blue, 255 }  ,fill opacity=0.22 ] (305.39,26) -- (405.59,26) -- (405.59,119.25) -- (305.39,119.25) -- cycle ;
%Shape: Rectangle [id:dp9658879247772332] 
\draw  [color={rgb, 255:red, 155; green, 155; blue, 155 }  ,draw opacity=0.53 ][fill={rgb, 255:red, 0; green, 124; blue, 255 }  ,fill opacity=0.22 ] (405.59,26) -- (505.78,26) -- (505.78,119.25) -- (405.59,119.25) -- cycle ;
%Shape: Rectangle [id:dp4138237703258706] 
\draw  [color={rgb, 255:red, 155; green, 155; blue, 155 }  ,draw opacity=0.53 ][fill={rgb, 255:red, 0; green, 124; blue, 255 }  ,fill opacity=0.22 ] (505.78,26) -- (605.98,26) -- (605.98,119.25) -- (505.78,119.25) -- cycle ;
%Shape: Rectangle [id:dp4529462179541187] 
\draw  [color={rgb, 255:red, 155; green, 155; blue, 155 }  ,draw opacity=0.53 ][fill={rgb, 255:red, 0; green, 124; blue, 255 }  ,fill opacity=0.22 ] (505.62,119.25) -- (605.81,119.25) -- (605.81,212.5) -- (505.62,212.5) -- cycle ;
%Shape: Right Triangle [id:dp5748768002775084] 
\draw  [color={rgb, 255:red, 155; green, 155; blue, 155 }  ,draw opacity=0.53 ][fill={rgb, 255:red, 255; green, 93; blue, 0 }  ,fill opacity=0.32 ] (405.42,119.43) -- (505.62,212.68) -- (405.42,212.68) -- cycle ;
%Shape: Right Triangle [id:dp619094024006891] 
\draw  [color={rgb, 255:red, 155; green, 155; blue, 155 }  ,draw opacity=0.53 ][fill={rgb, 255:red, 0; green, 124; blue, 255 }  ,fill opacity=0.22 ] (505.78,212.5) -- (405.42,119.43) -- (505.62,119.25) -- cycle ;
%Shape: Rectangle [id:dp4689015701122713] 
\draw  [color={rgb, 255:red, 155; green, 155; blue, 155 }  ,draw opacity=0.53 ][fill={rgb, 255:red, 255; green, 93; blue, 0 }  ,fill opacity=0.32 ] (305.23,119.43) -- (405.42,119.43) -- (405.42,212.68) -- (305.23,212.68) -- cycle ;

% Text Node
\draw (124,150.9) node [anchor=north west][inner sep=0.75pt]  [font=\LARGE]  {$\mathsf{QMA}$};
% Text Node
\draw (115,65.9) node [anchor=north west][inner sep=0.75pt]  [font=\small]  {$\mathsf{QPCP}[\mathcal{O}( 1)]$};
% Text Node
\draw (230,59.9) node [anchor=north west][inner sep=0.75pt]  [font=\Large]  {$\mathsf{NqP}^\dagger$};
% Text Node
\draw (275.5,150.4) node [anchor=north west][inner sep=0.75pt]  [font=\LARGE]  {$\mathsf{QCMA}$};
% Text Node
\draw (485,103.9) node [anchor=north west][inner sep=0.75pt]  [font=\LARGE]  {$\mathsf{NP}$};
% Text Node
\draw (214,87) node [anchor=north west][inner sep=0.75pt]  [color={rgb, 255:red, 100; green, 100; blue, 100 }  ,opacity=1 ] [align=left] {{\footnotesize Theorem~\ref{thm:cp_NP_NqP}}};
% Text Node
\draw (272,179) node [anchor=north west][inner sep=0.75pt]  [color={rgb, 255:red, 100; green, 100; blue, 100 }  ,opacity=1 ] [align=left] {{\footnotesize Corollary~\ref{cor:CGaLHstar}}};
% Text Node
\draw (433,131) node [anchor=north west][inner sep=0.75pt]  [color={rgb, 255:red, 100; green, 100; blue, 100 }  ,opacity=1 ] [align=left] {{\footnotesize Theorems~\ref{thm:cp_NP_NqP} and~\ref{thm:CGaLH_close}}};
% Text Node
\draw (342,223.4) node [anchor=north west][inner sep=0.75pt]    {$\Omega( 1)$};
% Text Node
\draw (55,68.4) node [anchor=north west][inner sep=0.75pt]    {$\Omega( 1)$};
% Text Node
\draw (405,223.4) node [anchor=north west][inner sep=0.75pt]  [font=\small]  {$1-1/\mathrm{poly}( n)$};
% Text Node
\draw (513,223.4) node [anchor=north west][inner sep=0.75pt]  [font=\small]  {$1-1/\mathrm{exp}( n)$};
% Text Node
\draw (222,223.4) node [anchor=north west][inner sep=0.75pt]  [font=\small]  {$1/\mathrm{poly}( n)$};
% Text Node
\draw (122,223.4) node [anchor=north west][inner sep=0.75pt]  [font=\small]  {$1/\mathrm{exp}( n)$};
% Text Node
\draw (33,157.4) node [anchor=north west][inner sep=0.75pt]  [font=\small]  {$1/\mathrm{poly}( n)$};
% Text Node
\draw (324,253.4) node [anchor=north west][inner sep=0.75pt]  [font=\large]  {$\mathrm{Fidelity} \ \zeta $};
% Text Node
\draw (5.5,185.44) node [anchor=north west][inner sep=0.75pt]  [font=\large,rotate=-270.36]  {$\mathrm{Promise\ gap} \ \delta $};
\end{tikzpicture}
    \caption{Complexity characterization of $\mathsf{CGaLH}^{*}(k,\delta,\zeta)$ over parameter regime $\delta$ and $\zeta$, for $k = \mO(1)$. Any classification indicates completeness for the respective complexity class, except for $\NqP$, for which we only know containment (indicated by the $`\dagger'$). Here completeness for certain parameter combinations means that for all functions of the indicated form, the problem is contained in the complexity class, and for a subset of these functions the problem is also hard. The results for $\QPCP[\mO(1)]$ and $\QMA$ follow directly from~\cite{Aharonov2008The} and~\cite{Kitaev2002ClassicalAQ}.}
    %\jordi{Modify this picture such that it includes (clickable?) theorem numbers.}\chris{If you know how to do that, go ahead :)}\jordi{Added it, need to write down the references for QPCP and QMA}
    \label{fig:charac}
\end{figure}
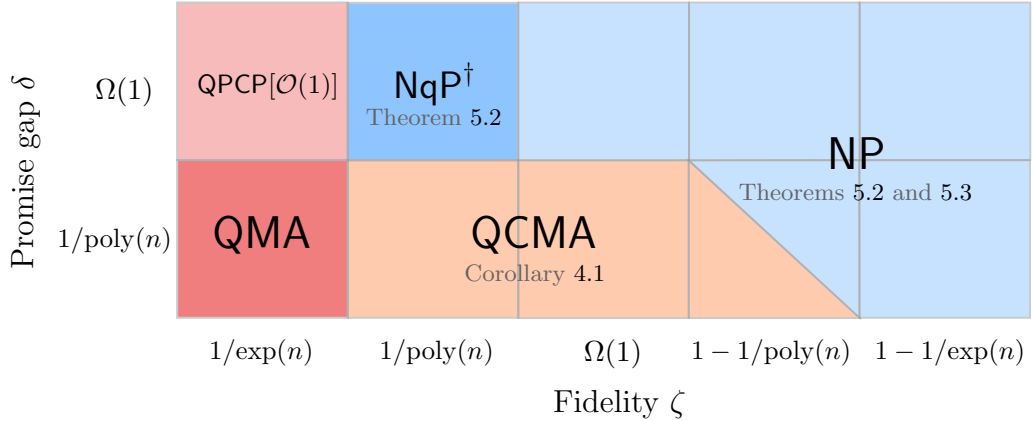

\subsubsection{Quantum-classical probabilistically checkable proofs}
We introduce the notion of a \emph{quantum-classical probabilistically checkable proof system} in the following way.

\begin{customdef}{1.3 (Informal)}[Quantum-classical PCP, from Definition~\ref{def:QCPCP}] 
 A $\QCPCP[q]$ protocol consists of a polynomial-time quantum verifier $V$ that uses $\poly(n)$ ancilla qubits and is given an input $x \in \{0,1\}^n$ and a classical proof $y \in \{0,1\}^{p(n)}$, where $p(n) \leq \poly(n)$, from which it queries at most $q(n)$ bits non-adaptively. The verifier measures the first qubit and accepts only if the outcome is $\ket{1}$. A promise problem $A = (A_\text{yes},A_\text{no})$ belongs to the class $\QCPCP[q]$ if it has a $\QCPCP[q]$ verifier system with the following properties
\begin{itemize}
\item[~] \textbf{Completeness.} If $x\in A_\text{yes}$, then there exists a classical proof $y$ such that the verifier accepts with probability at least $2/3$.
\item[~] \textbf{Soundness.} If $x \in A_\text{no}$, then for all classical proofs $y$ the verifier accepts with probability at most $1/3$. 
\end{itemize}
\label{def:QCPCP_inf}
\end{customdef}
\noindent Note that we have that $\QCPCP[\mO(1)]$ trivially contains $\mathsf{CGaLH}$ with any constant promise gap and any constant fidelity,\footnote{Recall that the problem is $\QCMA$-hard when the promise gap is inverse polynomial in the number of Hamiltonian terms instead, even when the fidelity is constant (but $< 1$).} since we have shown that this problem is in $\NP$ (Theorem~\ref{thm:cp_NP_NqP}) and therefore admits a classical PCP system to solve the problem. Note that replacing the classically evaluatable states with samplable states in the promise, as in~\cite{gharibian2021dequantizing}, does not necessarily mean that the problem is in $\QCPCP [\mO(1)]$ as we do not know whether $\MA$ admits a (quantum-classical) PCP.

\

\noindent We first prove two basic facts about $\QCPCP$: we show that it allows for weak error reduction (Proposition~\ref{prop:error_reduction}) and that the non-adaptiveness restriction does not limit the power of the class when the number of proof queries is constant (Theorem~\ref{thm:QCPCP_A}). Our `quantum-classical PCP conjecture' then posits that $\QCPCP[\mO(1)] = \QCMA$, analogously to the quantum PCP conjecture which states that $\QPCP[\mO(1)] = \QMA$ (here $\QPCP[q]$ denotes the complexity class associated with \textit{quantum} probabilistically checkable proof systems).\footnote{The question of whether a PCP can be shown for QCMA was also raised briefly in~\cite{bittel2022optimizing}.}  

\

\noindent Our main result regarding $\QCPCP[\mO(1)]$ is that we can provide a non-trivial upper bound on the complexity of the class.
\begin{customthm}{1.3 (Informal)}[Upper bound on $\QCPCP$, from Theorem~\ref{thm:QCPCP_NP_qr}] 
\begin{align*}
    \QCPCP[\mO(1)] \subseteq \BQP^{\NP[1]}.
\end{align*}
\end{customthm}
Here $\BQP^{\NP[1]}$ is the class of all problems that can be solved by a $\BQP$-verifier that makes a single query to an $\NP$-oracle. The key idea behind the proof is that a quantum reduction can be used to transform a $\QCPCP$ verification circuit to a local Hamiltonian that is \textit{diagonal} in the computational basis, and thus can be solved with a single query to an $\NP$ oracle. Using this upper bound, we then show that if our quantum-classical PCP conjecture is true, then $\NP^\BQP \subseteq \BQP^\NP$. This inclusion would have the consequence that $\NP \subseteq \BQP$ implies $\PH \subseteq \BQP$, i.e.~that if $\NP$ is contained in $\BQP$ then so is the entire polynomial hierarchy.\footnote{This would contrast the result by~\cite{aaronson2021acrobatics} which shows that $\NP^\BQP \not\subset \BQP^\NP$ relative to an oracle. However, since our inclusion goes via $\QCMA \subseteq \QCPCP[\mO(1)]$, which would likely require non-relativizing techniques just as was the case for the classical PCP Theorem, the conjecture and this result could simultaneously be true.} Such a result would provide strong evidence that quantum computers are indeed not capable of solving $\NP$-hard problems.

\subsubsection{Three implications for the quantum PCP conjecture}
Finally, we use our obtained results on $\QCPCP$ and $\mathsf{CGaLH}$ to obtain two interesting results and a new conjecture with respect to the quantum PCP conjecture. First, we give evidence that it is unlikely that there exists a \emph{classical} reduction from a $\QPCP$-system (see Definition~\ref{def:QPCP} for a formal definition) to a local Hamiltonian problem with a constant promise gap having the same properties as the known \textit{quantum} reduction (see for example~\cite{Grilo2018thesis}), unless $\BQP \subseteq \QCPCP[\mO(1)] \subseteq \NP$, something that is not expected to hold~\cite{aaronson2010bqp,raz2022oracle}.
\begin{customthm}{1.4 (Informal)}[No-go for classical polynomial-time reductions, from Theorem~\ref{thm:clas_red_no_go}] For any $\epsilon <1/6 $ there cannot exist a classical polynomial-time reduction from a $\QPCP[\mO(1)]$ verification circuit $V$ to a local Hamiltonian $H$ such that, given a proof $\ket{\psi}$,
\begin{align*}
    \abs{\mathbb{P}[V \text{ accepts } \ket{\psi}] - \left(1 - \bra{\psi}H\ket{\psi}\right)}\leq \epsilon,
\end{align*}
unless $\QCPCP[ \mO(1)] \subseteq \NP$ (which would imply $\BQP \subseteq \NP$).
\end{customthm}
This provides strong evidence that allowing for reductions to be quantum is indeed necessary to show equivalence between the gap amplification and proof verification formulations of the quantum PCP conjecture~\cite{aharonov2013guest}.

\

\noindent Second, our classical containment results of $\mathsf{CGaLH}^{*}$ with constant promise gap can be viewed as no-go theorems for a gap amplification procedure for $\QPCP$ having certain properties, as illustrated by the following result.
\begin{customthm}{1.5 (Informal)}[No-go results for Hamiltonian gap amplification, from Theorem~\ref{thm:QCPCP_nogos}]
There cannot exist a gap amplification procedure for the local Hamiltonian problem that preserves the
fidelity between the ground space of the Hamiltonian and any classically evaluatable state up to a
\begin{itemize}
    \item multiplicative constant, unless $\QCMA = \NP$, or
    \item multiplicative inverse polynomial, unless $\QCMA \subseteq \NqP$.
\end{itemize}
\end{customthm}
This result is analogous to the result of~\cite{Aharonov2019stoquastic}, which rules out a gap amplification procedure that preserves stoquasticity under the assumption that $\MA \neq \NP$.\footnote{Or taking a different view, proving the existence of such gap amplifications would allow one to simultaneously prove that $\MA$ can be derandomized (or even $\RP$ if it exhibits some additional properties)~\cite{Aharonov2019stoquastic}.} Moreover, we point out that many Hamiltonian gadget constructions \textit{do} satisfy such fidelity-preserving conditions, and indeed are precisely those that were used in~\cite{cade2023improved} to improve the hardness results for the guided local Hamiltonian problem.\footnote{For a quantum version of gap amplification one would typically expect locality-reducing Hamiltonian gadgets as part of the procedure, to compensate for a ``powering step'' which consists of taking powers of the Hamiltonian (which therefore increases locality). It is already known that the current best-known locality-reducing gadgets~\cite{bravyi2008quantum} cannot be used because they increase the norm of the Hamiltonian by a constant factor, which results in an unmanageable decrease of the relative promise gap. Our result shows that even if one would find better constructions that don't have this effect, they would still have to satisfy the additional constraints as described in Theorem~\ref{thm:QCPCP_nogos}.} We obtain similar results for the class $\MA$ by considering a variant of $\mathsf{CGaLH}$ that restricts the Hamiltonian to be stoquastic (Appendix~\ref{app:MA}). 

\ 

\noindent Third, we can use our results to formulate a stronger version of the NLTS theorem (and an alternative to the NLSS conjecture~\cite{gharibian2021dequantizing}), which we will call the \emph{No Low-energy Classically evaluatable States conjecture}. This conjecture can hopefully provide a new stepping stone towards proving the quantum PCP conjecture. 

\begin{customconj}{1.1 (Informal)}[NLCES conjecture, from Conjecture~\ref{conj:NLCES}] There exists a family of local Hamiltonians $\{H_n\}_{n \in \mathbb{N}}$ on $n$ qubits, and a constant $\beta > 0$, such that for sufficiently large $n$ for every classically evaluatable state $u \in \mathbb{C}^{2^n}$ as per Definition~\ref{def:cds}, we have that
\[
\bra{u}H_n\ket{u} \geq \lambda_0(H_n) + \beta\,.
\]
\label{conj:NLCES_inf}
\end{customconj}
Just as is the case for the NLSS conjecture and the NLTS theorem, the NLCES conjecture would, if proven to be true, not necessarily imply the quantum PCP conjecture. For example, it might be that there exist states that can be efficiently described classically but for which computing expectation values is hard (just as, for example, tensor network contraction is $\mathsf{\# P}$-hard in the worst case~\cite{schuch2007computational, biamonte2015tensor}). Furthermore, as we have shown in this work, states with high energy but also a large fidelity with the ground state suffice as witnesses to decision problems on Hamiltonian energies, and these would not be excluded by a proof of the NLCES conjecture above. To make this more concrete, we also formulate an even stronger version of the NLCES conjecture, which states that there must be a family of Hamiltonians, for which no classically evaluatable state has good fidelity with the low energy spectrum (Conjecture~\ref{conj:strong_NLCES}). \\

Our results can also be interpreted as tests of the `robustness' of the definitions of quantum PCP conjectures with respect to adding some extra notion of `classicallity', either in the proof-checking or local Hamiltonian formulation. In the local Hamiltonian formulation, the existence of classically evaluatable states as guiding states makes complexity drop from $\QCMA$ to $\NP$ (or $\NqP$) when the promise gap goes from inverse polynomial to constant. In the proof-checking formulation, making the proof classical allows one to show containment for the corresponding class of quantum-classical PCP systems in $\BQP^{\NP}$, a class not known to contain $\QCMA$.

\subsection{Overview of techniques}

\paragraph{QCMA-hardness proof for guidable local Hamiltonian problems.}
We follow a similar proof structure as used in the $\BQP$-hardness proofs of the Guided Local Hamiltonian problem~\cite{gharibian2021dequantizing,cade2023improved}.There are several obstacles which prevent one from directly adopting the same proof in the $\QCMA$ setting, i.e.~when starting with a $\QCMA$ verification circuit $U$. This mostly comes down to the fact that $U$, unlike a $\BQP$-circuit, has an additional input register for the witness. This creates many valid `history states' (which are $0$-eigenvectors of the Hamiltonian $H_{\text{in}} + H_{\text{clock}} + H_{\text{prop}}$), giving us less control over and knowledge about the actual ground state of the Hamiltonian generated by the circuit-to-Hamiltonian construction.  To work around this, we use several tricks in our new construction. First, we use the CNOT-trick, introduced in~\cite{Wocjan2003two}, to `force' all witnesses to be classical. Second, use the result by~\cite{Aharonov2022pursuitof} which shows that there exists a randomized reduction from a $\QCMA$ protocol with verification circuit $U$ to one in which the verification circuit $\tilde{U}$ such that there exists a unique accepting witness in the {\sc yes}-case. Next, we apply the small-penalty circuit-to-Hamiltonian mapping of~\cite{Deshpande2020} which, together with error reduction on the verification circuit $U$, gives us fine control over the bounds on the energies in the low-energy subspace of the Hamiltonian. Combining this with the aforementioned randomized reduction, we find that the ground space of $H$ is now one-dimensional and can be made to have exponentially close fidelity with the history state corresponding to the uniquely accepting witness in the {\sc yes}-case. This allows us to construct a corresponding polynomial-sized subset state (which we show to be classically evaluatable and quantumly preparable) that has good fidelity with this history state, and use this as our guiding state. We also apply the pre-idling and block-encoding tricks from~\cite{gharibian2021dequantizing} to increase fidelity with the guiding state and to handle the {\sc no}-case, respectively. Finally, by using locality-reducing Hamiltonian gadget constructions that preserve the `classical evaluatibility' of the guiding state, we arrive at our final result.

\paragraph{A deterministic spectral amplification algorithm.} Our classical algorithm is inspired by the techniques developed in~\cite{gharibian2021dequantizing}, which in a more general setting dequantizes the quantum singular value transformation~\cite{gilyen2019quantum} for sparse matrices. Our technique can essentially be viewed as a dequantization of Lin and Tong's ground state energy estimation algorithm~\cite{Lin2020nearoptimalground}: here one assumes access to a unitary $U_H$ that implements a block-encoding of $H$. Since $H$ is Hermitian, a polynomial function applied to $H$ can be viewed as acting on the eigenvalues of $H$. An approximate low-energy subspace projector can then be constructed using a polynomial which approximates the sign function, using a result from~\cite{LowHamiltonian2017}. We construct a similar algorithm, but this time in a classical deterministic setting, where we assume the input states are of the form of classically evaluatable states (see Section~\ref{sec:summary_of_main_results}). We measure the complexity of our dequantization algorithm by counting the number of expectation values of local observables that have to be computed, which follows straightforwardly counting the number of Hermitian terms in which the polynomial approximation we consider can be expanded. Finally, we derive the complexity of the algorithm when it is applied to solving the Hamiltonian energy decision problems considered in this paper.

\paragraph{Upper bounds on quantum-classical PCPs} Our proof of $\QCPCP[q] \subseteq \BQP^{\NP[1]}$ for $q \in \mO(1)$ uses a quantum reduction from a $\QCPCP$-verifier $V$ to a \emph{diagonal} local Hamiltonian problem. The $\BQP$-verifier can then perform the reduction with high success probability, and -- if the reduction succeeded -- solve the diagonal local Hamiltonian problem with a single query to the $\NP$-oracle.
At its core, the quantum reduction iterates over all possible local configurations of the proof $y$ (which are all $2^q$ bit strings) and runs $V$ to collect information on both (i) the likeliness that the proof should be queried at certain index locations and (ii) that it accepts seeing a certain local configuration of a proof. By careful analysis of the error bounds one can use results from learning theory to bound the number of runs of $V$ that should be performed to learn the Hamiltonian up to a desired precision in operator norm.

\subsubsection{Relation to previous work}
The starting point of this paper is the \textit{guided} local Hamiltonian problem, introduced in~\cite{gharibian2021dequantizing}. Our work diverges from theirs in two principle directions: 1) whereas their work focuses predominantly on the case in which a guiding state is given as a part of the input, we focus here on the `guidable' version of the problem -- i.e.~when a guiding state is only \textit{promised} to exist\footnote{This was briefly touched upon in~\cite{gharibian2021dequantizing}, where it was shown that the local Hamiltonian problem for all Hamiltonians whose ground space has constant fidelity with samplable states can be estimated up to constant precision in $\MA$, but without any hardness results.}; and 2) we consider a more general and natural family of guiding states, namely what we term \textit{classically evaluatable} states. 

Direction 1) allows us to introduce and consider the idea of a quantum-classical PCP (QCPCP) conjecture and, combined with our results on the hardness of the guidable local Hamiltonian problem, obtain results about the relationship between problems admitting QCPCPs and other computational complexity classes. The move from `guided' to `guidable' here is necessary: without the notion of a witness, a PCP framework cannot be considered.

Direction 2) puts our results in a more general setting, and in particular one that is somewhat more relevant to questions surrounding the application of quantum computers to hard problems in chemistry and physics. In particular, as we show in Section~\ref{sec:ces_and-GaLH}, our notion of classically evaluatable states captures many Ansätze commonly used for estimating ground state energies of physically relevant Hamiltonians. Moreover, the set of $\epsilon$-classically evaluatable states captures a larger set of states than those that are sampleable, as we discuss in Section~\ref{sec:ces_and-GaLH}. Moving to this class of states allows us to weaken the fairly stringent assumptions of the original guided local Hamiltonian problem.

In \cite{Wocjan2003two}, the authors consider a $\QCMA$-complete problem of a similar flavour to ours, namely deciding whether a 3-local Hamiltonian has low energy states that can be prepared using a polynomially bounded number of elementary quantum gates. Apart from the specificity of the requirement (preparable via polynomial-time quantum circuits vs. classically evaluatable), this differs in an important way from the type of constraint that we consider in this work: that the requirement on the ground states of the Hamiltonian is regarding their \textit{fidelity} with a particular class of quantum states, not that they themselves belong to that class. We elaborate more on these differences at the end of Section~\ref{sec:ces_and-GaLH}.

\subsection{Open questions and future work}

\paragraph{A non-trivial lower bound on quantum PCP.}
A trivial lower bound on the computational power of quantum PCP is $\NP$, which follows from the standard PCP theorem. Our formulation of the $\QCPCP$ provides a way to prove the first non-trivial lower bound on quantum PCP. Since the proofs in $\QCPCP$ are always (or can be forced to be) classical, one might hope to do this by using some of the techniques used to prove (formulations of the) PCP theorem, like exponentially long PCPs, PCPs of proximity, alphabet reduction etc., which could carry over more easily to the $\QCPCP$ setting as compared to $\QPCP$.

\paragraph{Proof checking versus the local Hamiltonian formulations of quantum PCPs.}
Another obvious lower bound to $\QPCP$ (or $\QCPCP$) comes from $\BQP \subseteq \QPCP$, since the verifier can simply ignore the proof. However, the relationship between $\BQP$ and $\NP$ is very much unclear: it is generally believed that for both classes there exist problems that are exclusively contained in only one of them. In this work we show that it is unlikely that there exists a classical reduction from a $\QPCP$ verifier circuit to a local Hamiltonian problem with a constant promise gap that has the same properties as the known quantum reduction. This means that it is entirely possible that the generic local Hamiltonian problem with constant promise gap is contained $\NP$, whilst the \textit{proof checking} version of $\QPCP$ is not, provided that the quantum reduction from the proof checking formulation to the local Hamiltonian problem can indeed not be `dequantized'.\footnote{Indeed, it could be that the local Hamiltonian problem with constant promise gap is contained in some complexity class $\mathsf{C}$. Then so long as $\BQP \not\subset \mathsf{C}$, it is possible that the proof checking version of $\QPCP$ is strictly more powerful than the local Hamiltonian version (i.e.~$\QPCP \not\subset \mathsf{C}$), since the quantum reduction cannot necessarily be performed `inside' $\mathsf{C}$.} That is, despite results that show `equivalence' of the proof-checking and local Hamiltonian variants of the quantum PCP conjecture, the two variants could actually have quite different computational power since equivalence is shown only under quantum reductions. It would be interesting to explore the possibility of different complexities for the proof checking and local Hamiltonian variants of $\QPCP$ further.

\paragraph{The (strong) NLCES conjecture.} It would be interesting to see whether the family of Hamiltonians used to prove the well-known NLTS conjecture, or constructions inspired by the proof thereof (in particular Hamiltonians that arise from error-correcting codes), can also be used to prove (weaker versions of) our NLCES conjecture (see Conjecture~\ref{conj:NLCES}). Note that our NLCES conjecture is strictly stronger than NLTS, since it includes all states that can be prepared by constant depth quantum circuits (i.e.~those states covered by the NLTS conjecture), but also includes states that require super-constant quantum depth, for example arbitrary Clifford circuits\footnote{This has in fact recently been proven for Clifford circuits, see~\cite{coble2023local}.}, matrix-product states, etc.

\paragraph{$\MA$ containment of guidable stoquastic LH.} It is well-known that for stoquastic Hamiltonians, deciding if the ground state energy is $\leq a$ or $\geq b$ with $b-a = 1/\poly(n)$ is $\StoqMA$-complete, for arbitrary $b \geq a$ inverse polynomially separated, but $\MA$-complete when $a=0$ and $b = 1/\poly(n)$~\cite{bravyi2006,bravyi2010complexity}. In~\cite{BravyiMonte2015}, it is shown that for a much stronger type of assumption on the existence of a guiding state than what we consider, the problem is also $\MA$-complete for arbitrary $b \geq a$ inverse polynomially separated. Showing $\MA$-containment for our definition of \emph{guidable} stoquastic local Hamiltonian problems (with arbitrary $a,b$, inverse polynomially separated) could provide a way to study the exact relationship between $\StoqMA$ and $\MA$.

\paragraph{The classical guiding state existence assumption.} As discussed in~\cite{cade2023improved}, the existence of practical quantum advantage based on the previously mentioned two-step procedure is only expected if there exist guiding states, quantum or classical, that have not too much (exponentially close) but also not too little (exponentially small) fidelity with the ground space of the Hamiltonian under study.  Whilst there is some literature that (partially) explores this direction~\cite{vonBurgquantum,2018arXiv180905523T,lee2023evaluating}, it would be useful and interesting to study this assumption in the special case of Ansätze that describe classically evaluatable and quantumly preparable states. This could provide numerical evidence to support the results that we have shown from a complexity-theoretic perspective: that classical heuristics combined with quantum phase estimation is indeed the right way to approach fault-tolerant quantum advantage in chemistry.

\subsection*{Guide for readers}
This work can be viewed as a collection of results related to `toned-down' versions of the quantum PCP conjecture, both in its proof verification and local Hamiltonian problem formulation, where all results are unified when their implications to the actual quantum PCP conjecture are discussed in Section~\ref{sec:PCP}. This work introduces several new concepts, and for readers only interested in specific concepts the following guide can be used:
\begin{enumerate}
    \item For those interested in the introduced class of classically evaluatable states and its relation to other classes of quantum states, consult Section~\ref{subsec:ces}.
    \item For guidable local Hamiltonian problems, the general definition can be found in Section~\ref{subsec:GaLH_def}, the $\QCMA$-hardness proofs in Section~\ref{sec:QCMA-hardness} and classical containment results in Section~\ref{sec:clas_cont}.
    \item For the definition and results regarding quantum-classical PCPs, see Section~\ref{sec:QCPCP}.
\end{enumerate}

\section{Preliminaries}
\label{sec:prelim}
\subsection{Notation}

We write $\lambda_i(A)$ to denote the $i$th eigenvalue of a Hermitian matrix $A$, ordered in non-decreasing order, with $\lambda_0(A)$ denoting the smallest eigenvalue (ground state energy). When we write $\norm{\cdot}$ we refer to the operator norm when its input is a matrix, and Euclidean norm for a vector. 

\subsection{Some basic definitions and results from complexity theory}

Let us first recall a couple of basic definitions and results from (quantum) complexity theory, which is central to this work. All complexity classes will be defined with respect to promise problems (and not languages). To this end, we take a (promise) problem $A = (A_\text{yes},A_\text{no})$ to consist of two non-intersecting sets $A_\text{yes},A_\text{no} \subseteq \{ 0,1\}^*$ (the {\sc yes} and {\sc no} instances, respectively). We have that $A_\text{inv} = \{0,1\}^* \setminus A_\text{yes} \cup A_\text{no}$ is the set of all invalid instances, and we do not care how a verifier behaves on problem instances $x \in A_\text{inv}$ (it can accept or reject arbitrarily, see the paragraph `oracle access' for a more elaborate discussion on what this entails).

\begin{definition}[$\Pclass$] A promise problem $A = (A_\text{yes},A_\text{no})$ is in $\Pclass$ if and only if there exists a deterministic polynomial-time Turing machine $M$ which takes as input a string $x\in \{0,1\}^*$ and decides on acceptance or rejection of $x$ such that
\begin{itemize}
    \item if $x \in A_\text{yes} $ then $M$ accepts $x$.
    \item if $x \in A_\text{no}$ then $M$ rejects $x$.
\end{itemize}
\end{definition}

\begin{definition}[$\NP$] A promise problem $A = (A_\text{yes},A_\text{no})$ is in $\NP$ if and only if there exists a deterministic polynomial-time Turing machine $M$ and a polynomial $p$, where $M$ takes as input a string $x\in \{0,1\}^*$ and a $p(|x|)$-bit witness $y$ and decides on acceptance or rejection of $x$ such that
\begin{itemize}
    \item if $x \in A_\text{yes} $ then there exists a $y \in \{0,1\}^{p(n)}$ such that $M$ accepts $(x,y)$.
    \item if $x \in A_\text{no}$ then for every $y \in \{0,1\}^{p(n)}$ we have that $M$ rejects $(x,y)$.
\end{itemize}
\end{definition}

\begin{subdefinition}[$\NqP$] If the above Turing machine $M$ is instead allowed to run in quasi-polynomial time, i.e.~$2^{\mO(\log^c(n) )}$ for some constant $c >0$, $A$ is in $\NqP$ (Non-deterministic quasi-Polynomial time).
\end{subdefinition}

\begin{definition}[$\MA$]
A promise problem $A = (A_\text{yes},A_\text{no})$ is in $\MA[c,s]$ if and only if there exists a probabilistic polynomial-time Turing machine $M$ and a polynomial $p$, where $M$ takes as input a string $x\in \{0,1\}^*$ and a $p(|x|)$-bit witness $y$ and decides on acceptance or rejection of $x$ such that
\begin{itemize}
    \item if $x \in A_\text{yes} $ then there exists a $y \in \{0,1\}^{p(n)}$ such that $M$ accepts $(x,y)$ with probability $\geq c$,
    \item if $x \in A_\text{no}$ then for every $y \in \{0,1\}^{p(n)}$ we have that $M$ accepts $(x,y)$ with probability $\leq s$,
\end{itemize}
where $c-s = 1/\poly(n)$. When $c=2/3$ and $s=1/3$ we omit the $[c,s]$ notation and call the class $\MA$.
\end{definition}
\begin{subdefinition}[$\UMA$]
 The class $\UMA [c,s]$ has the same definition as $\MA$ but with the extra constraint that if $x\in A_\text{yes} $ then there exists only a single $y^*$ such that $M$ accepts $(x,y^*)$ with probability $\geq c(=2/3)$, and otherwise for all $y \neq y^*$ we have that $M$ accepts $(x,y)$ with probability $\leq s(=1/3)$.
\end{subdefinition}

All quantum complexity classes will be defined in the quantum circuit model.\footnote{In order to capture the fact that the length of the inputs $x$ is allowed to vary, whilst an input size to a given circuit is fixed, one considers the notion of a polynomial-time uniform family of quantum circuits. Specifically, one says that a set of quantum circuits $\{V_n\}$ is a polynomial-time uniform if there exists a polynomial-time deterministic Turing machine, which on input $1^n$, outputs a description of the circuit $V_n$.}

\begin{definition}[$\QMA$] 
A promise problem $A = (A_\text{yes},A_\text{no})$ is in $\QMA$[c,s] if and only if there exists a polynomial-time uniform family of quantum circuits $\{V_n\}$ and a polynomial $p$, where $V_n$ takes as input a string $x\in \{0,1\}^*$ with $|x|=n$, and a $p(n)$-qubit witness quantum state $\ket{\psi}$ and decides on acceptance or rejection of $x$ such that
\begin{itemize}
    \item if $x \in A_\text{yes} $ then there exists a witness state $\ket{\psi} \in \left( \mathbb{C}^2\right)^{\otimes p(n)}$ such that $V_n$ accepts $(x,\ket{\psi})$ with probability $\geq c$,
    \item if $x \in A_\text{no}$ then for every witness state $\ket{\psi} \in \left( \mathbb{C}^2\right)^{\otimes p(n)}$, $V_n$ accepts $(x,\ket{\psi})$ with probability $\leq s$,
\end{itemize}
where $c-s = 1/\poly(n)$. If $c=2/3$ and $s=1/3$, we abbreviate it as $\QMA$.
\end{definition}

\begin{definition}[$\QCMA$] 
A promise problem $A = (A_\text{yes},A_\text{no})$ is in $\QCMA$[c,s] if and only if there exists a polynomial-time uniform family of quantum circuits $\{V_n\}$ and a polynomial $p$, where $V_n$ takes as input a string $x\in \{0,1\}^*$ with $|x|=n$, and a $p(n)$-qubit witness quantum state $\ket{\psi}$ and decides on acceptance or rejection of $x$ such that
\begin{itemize}
    \item if $x \in A_\text{yes} $ then there exists a $y \in \{0,1\}^{p(n)}$ such that $V_n$ accepts $(x,\ket{y})$ with probability $\geq c$,
    \item if $x \in A_\text{no}$ then for every witness state $y \in \{0,1\}^{p(n)}$, $V_n$ accepts $(x,\ket{y})$ with probability $\leq s$,
\end{itemize}
where $c-s = 1/\poly(n)$. If $c=2/3$ and $s=1/3$, we abbreviate to $\QCMA$.
\label{def:QCMA}
\end{definition}
\begin{subdefinition}[$\UQCMA$]
 The class $\UQCMA [c,s]$ has the same definition as $\QCMA$ but with the extra constraint that if $x\in A_\text{yes} $ then there exists only a single string $y*$ such that $V_n$ accepts $(x,\ket{y^*})$ with probability $\geq c(=2/3)$, and otherwise for all $y \neq y^*$ we have that $V$ accepts $(x,\ket{y})$ with probability $\leq s(=1/3)$.
\end{subdefinition}
Unlike $\QMA$, it is known that a lot of the behaviours exhibited by the classical complexity classes $\NP$ and $\MA$ hold for $\QCMA$ as well. An example of this, and one that we use later, is a result from~\cite{Aharonov2022pursuitof} stating that there exists a randomized reduction from $\QCMA$ to $\UQCMA$, anagolous to Valiant-Vazirani theorem for $\NP$~\cite{valiant1985np}.

\begin{lemma}[Randomized reduction from $\QCMA$ to $\UQCMA$~\cite{Aharonov2022pursuitof}] Let $\langle U_n, p_1,p_2 \rangle$ describe a promise problem in $\QCMA$, where $U_n$ is the description of a quantum circuit which takes an input $x$ of length $|x| = n$ and a witness $y$ with length $|y|=\poly(n)$. Denote $p_1$ and $p_2$ with $p_1-p_2 =1/\poly(n)$ for the completeness and soundness, respectively. Then there exists a randomized reduction to a $\UQCMA$ instance $\langle \tilde{U_n}, \tilde{p_1},\tilde{p_2} \rangle$, with $\tilde{p_1} - \tilde{p_2} = 1/\poly(n)$ such that:
\begin{itemize}
    \item If there exists a witness $y$ which makes $U_n$ accept $(x,y)$ with probability $\geq p_1$ then there exists a single $y^*$ which makes $\tilde{U}_n$ accept $(x,y^*)$ with probability $\geq \tilde{p_1}$ and accept $(x,y)$ for all other $y \neq y^*$ with probability $\leq \tilde{p_2}$.
    \item If $U_n$ accepts with probability $\leq p_2$ for all $y$ then $\tilde{U}_n$ accepts with probability $\leq \tilde{p_2}$ for all $y$. 
\end{itemize}
This randomized reduction succeeds with probability $\Omega(1/|y|)$.
\label{lem:UQCMA}
\end{lemma}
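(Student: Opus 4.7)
The plan is to carry out a quantum analogue of the Valiant--Vazirani isolation lemma, exploiting the fact that the witness in $\QCMA$ is classical so that pairwise independent hashing can be applied directly to the witness register. First I would strongly amplify the completeness/soundness gap of the given verifier $U_n$ by sequential repetition with majority vote on the classical witness, yielding a new verifier with completeness $\geq 1 - 2^{-r(n)}$ and soundness $\leq 2^{-r(n)}$ for a sufficiently large polynomial $r(n)$; since $y$ is classical it can be reused across many independent runs without invoking any no-cloning obstruction, which is precisely the feature that distinguishes $\QCMA$ from $\QMA$ here.

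Next I would guess an integer $m$ uniformly at random from $\{0,1,\dots,|y|\}$ and sample a hash function $h$ uniformly from a pairwise independent family $\mathcal{H}_m$ mapping $\{0,1\}^{|y|}$ to $\{0,1\}^m$. The new verifier $\tilde{U}_n$ takes $(x,y)$, classically checks whether $h(y)=0^m$ (rejecting immediately otherwise), and then runs the amplified $U_n$ on $(x,y)$ and outputs its decision. In a yes-instance with $k$ classical accepting witnesses, a standard second-moment calculation using pairwise independence shows that, conditioned on the correct guess $2^{m-1} < k \leq 2^{m}$, with probability $\Omega(1)$ exactly one accepting witness $y^{*}$ satisfies $h(y^{*})=0^{m}$. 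Since the correct $m$ is hit with probability $\Omega(1/|y|)$ and there are only polynomially many candidate values, the overall success probability of the randomized reduction is $\Omega(1/|y|)$ as required.

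The main obstacle is the unique-soundness clause: even after isolation I must ensure $\tilde{U}_n$ accepts every other classical string $y \neq y^{*}$ with probability at most $\tilde{p}_2$, not just the originally rejecting ones but also all the other originally accepting witnesses that were killed by the hash. This is precisely why the strong initial amplification is indispensable: the $2^{-r(n)}$ per-string soundness, together with a union bound over the at most $2^{|y|}$ strings that could a priori survive the hash check, keeps the total acceptance probability of $\tilde{U}_n$ on any $y \neq y^{*}$ well below any desired inverse polynomial threshold $\tilde{p}_2$. Completeness for $y^{*}$ is preserved up to the $2^{-r(n)}$ amplification error, so setting $\tilde{p}_1 = 1 - 2 \cdot 2^{-r(n)}$ and $\tilde{p}_2 = 1/\poly(n)$ yields the promised inverse polynomial gap; the no-instance case is then immediate since the hash check can only decrease acceptance probabilities.
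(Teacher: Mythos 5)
The paper does not prove this lemma itself---it is imported verbatim from~\cite{Aharonov2022pursuitof}---so the comparison here is against the argument that reference actually needs. Your amplify-then-hash strategy has a genuine gap in exactly the place you identify as ``the main obstacle,'' and the patch you propose does not work. The problem is the witnesses $y$ whose original acceptance probability lies strictly \emph{between} $p_2$ and $p_1$: the $\QCMA$ promise says nothing about them in a {\sc yes}-instance, majority-vote amplification does not suppress them (a string with $P(y)$ just below $p_1$ gets amplified towards $1$, not towards $0$), and the pairwise-independent hash lets each such string survive with probability $2^{-m}$ \emph{independently of its acceptance probability}. So after isolation you can easily be left with some $y \neq y^{*}$ that passes the hash check and is accepted with probability close to $1$, violating the unique-soundness clause. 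Your union-bound argument does not rescue this: the $2^{-r(n)}$ bound you invoke applies only to strings that originally had $P(y) \leq p_2$, and in any case the $\UQCMA$ requirement is a \emph{per-string} bound on each $y \neq y^{*}$, not a bound on a sum over strings, so summing $2^{-r}$ over $2^{|y|}$ strings is answering the wrong question.

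The missing idea---and the actual technical content of the cited result---is a threshold-pigeonhole step that makes the isolation set and the ``good'' set nearly coincide. Partition $[p_2,p_1]$ into $k=\poly(n)$ bands $t_0 < t_1 < \dots < t_k$ and let $S_t = \{y : P(y) \geq t\}$; since $|S_{t_0}| \leq 2^{|y|}$ and the sizes are nested, some index $i$ satisfies $|S_{t_{i-1}}| \leq 2\,|S_{t_i}|$. The reduction guesses this $i$ (an extra $1/\poly$ factor in the success probability), guesses $m$ with $2^{m} \approx |S_{t_{i-1}}|$, and applies Valiant--Vazirani isolation to $S_{t_{i-1}}$. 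With constant probability exactly one element of $S_{t_{i-1}}$ survives the hash, and because $|S_{t_i}| \geq |S_{t_{i-1}}|/2$ that survivor lies in $S_{t_i}$ with probability at least $1/2$. Setting $\tilde{p}_1 = t_i$ and $\tilde{p}_2 = t_{i-1}$ then gives unique soundness for free: every $y \neq y^{*}$ either fails the hash check (acceptance $0$) or passes it but lies outside $S_{t_{i-1}}$, hence is accepted with probability below $t_{i-1} = \tilde{p}_2$ by definition. No initial amplification is needed, and the intermediate witnesses are handled structurally rather than probabilistically. Without some version of this step your construction does not establish the lemma.
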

Another one of these properties is the equivalence of one-sided and two-sided error in the acceptance and rejectance probabilities, which just as in the $\MA$ setting holds for $\QCMA$ (assuming robustness under the choice of the universal gate-set that is used to construct the verification circuits). Formally, this is established via the following lemma.
\begin{lemma}[Perfect completeness $\QCMA$~\cite{Jordan2012}]  Let $\mathcal{G} = \{H,X,\text{Toffoli}\}$ be a fixed gate set. For any $c, s \in [0,1]$ satisfying $c-s := \delta \geq 1/q(n)$ for some polynomial $q :\mathbb{N} \rightarrow \mathbb{R}_{> 0}$, we have that
\[
\QCMA_\mathcal{G}[c,s] \subseteq \QCMA_\mathcal{G}[1,s'],
\]
where $s'= 1 - \frac{1}{2}\delta^2 - \frac{1}{2}\delta^3 $.
\label{lem:QCMA1}
\end{lemma}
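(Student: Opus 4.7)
The plan is to construct, from a $\QCMA_{\mathcal{G}}[c,s]$ verification circuit $V_n$, a new verifier $\tilde{V}_n$ over the same gate set $\mathcal{G} = \{H, X, \text{Toffoli}\}$ that achieves $\QCMA_{\mathcal{G}}[1, s']$. The key structural observation I would exploit is that $\mathcal{G}$ is \emph{exactly} simulable: since $X$ and Toffoli have entries in $\{0,1\}$ and $H$ introduces only factors of $1/\sqrt{2}$, every amplitude of $V_n$ on a classical input $(x, y, 0^{*})$ lies in $\mathbb{Z}[1/\sqrt{2}]$. In particular, the acceptance probability $p(y)$ on a classical witness $y$ is a rational number whose denominator is a fixed power of $2$ determined by the number of Hadamard gates in $V_n$. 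This rigidity of the amplitude structure is what enables rotating acceptance probabilities \emph{exactly} (rather than approximately), which is the necessary ingredient for perfect completeness.

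Given this, I would apply a single-step amplitude-amplification-style construction, following the strategy used by Jordan--Kobayashi--Nishimura--Shimada. Concretely, $\tilde{V}_n$ appends a single ancilla qubit, prepares it using a rotation expressible exactly over $\mathcal{G}$ whose angle depends only on the known rational value $c$ (and \emph{not} on the unknown $p(y)$), runs $V_n$ and $V_n^{\dagger}$ conditioned on this ancilla interleaved with reflections about the ancilla's initial state, and finally measures a combined flag qubit. Because $c$ is fixed in advance, the rotation angle can be chosen such that the effective acceptance amplitude on $(x, y)$ reaches exactly $1$ whenever $p(y) \geq c$, while only partially aligning when $p(y) \leq s$. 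The claimed soundness bound $s' = 1 - \tfrac{1}{2}\delta^2 - \tfrac{1}{2}\delta^3$ would then be recovered via a direct trigonometric computation of the final acceptance probability as a function of $p(y) \in [0, s]$; the quadratic-in-$\delta$ leading term is the signature loss one expects from a one-step interference that pins the yes-case to $1$.

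The step I anticipate being most delicate is ensuring that every rotation used in the construction is \emph{exactly} realisable by a polynomial-size circuit over $\{H, X, \text{Toffoli}\}$, since any Solovay--Kitaev-type compilation would introduce a positive approximation error and thereby destroy perfect completeness. Because the target rotation angles are of the form $\arccos\sqrt{r}$ for rationals $r$ with power-of-two denominators, they can be synthesised exactly using auxiliary qubits and further Hadamards via an explicit gadget; carrying this out without blowing up the circuit size is the main technical obstacle, but it is precisely where the restriction to the specific gate set $\mathcal{G}$ plays its essential role. I would also emphasise that the fact the witness is \emph{classical} is what allows this construction to go through; the analogous perfect-completeness question for $\QMA$ is known to be considerably more subtle, since a quantum witness cannot be interrogated twice without disturbance, invalidating the coherent reuse of $V_n$ and $V_n^{\dagger}$ on the same $\ket{y}$.
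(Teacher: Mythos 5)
You should first note that the paper does not prove this lemma at all: it is imported verbatim from Jordan, Kobayashi, Nagaj and Nishimura~\cite{Jordan2012}, so the relevant comparison is with their proof. Your proposal correctly isolates two genuine ingredients of that proof: over $\mathcal{G}=\{H,X,\text{Toffoli}\}$ every acceptance probability on a classical witness is a dyadic rational $p(y)=N/2^h$ with $h$ the number of Hadamard gates, and a classical witness can be fed coherently to both $V_n$ and $V_n^{\dagger}$ without disturbance. You are also right that exact (rather than Solovay--Kitaev-approximate) implementability is the crux.

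However, the central step of your construction fails. You calibrate the rotation to the known threshold $c$ only and claim the acceptance probability reaches exactly $1$ ``whenever $p(y)\geq c$''. A fixed circuit assembled from $V_n$, $V_n^{\dagger}$ and fixed reflections/rotations accepts with probability $f(p(y))$ for a fixed low-degree polynomial $f$; demanding $f(p)=1$ for all of the exponentially many dyadic values $p\in[c,1]$ forces $f\equiv 1$, which destroys soundness. Exact one-step amplitude amplification requires the exact value of $p(y)$, not a one-sided bound. The construction of~\cite{Jordan2012} uses the dyadic structure precisely to get around this, in a way your sketch omits: Merlin appends to $y$ the exact integer $N$ with $p(y)=N/2^h$ (polynomially many \emph{classical} bits --- this is where the classical witness and the discrete gate set interact). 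The verifier rejects outright unless $N\geq c\,2^h$, and otherwise runs, in balanced superposition, the original test and a complementary biased coin that accepts with probability exactly $1-N/2^h$ (realised exactly by $h$ Hadamards followed by a reversible comparison against $N$). The honest yes-witness is then accepted with probability exactly $1/2$, while in the no-case every witness, whatever $N$ it claims, is accepted with probability at most $\tfrac{1}{2}-\tfrac{\delta}{2}$. A final reflection test --- run the intermediate verifier, phase-flip the accept qubit, run its inverse, and accept unless the register has returned to its initial state --- accepts with probability $4q(1-q)$, which equals $1$ exactly at $q=\tfrac{1}{2}$ and is $1-\Omega(\delta^2)$ in the no-case, consistent with the claimed $s'$. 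Without Merlin communicating $p(y)$ exactly, your scheme cannot simultaneously achieve perfect completeness and any nontrivial soundness.
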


\paragraph{Oracle access} For a (promise) class $\mathcal{C}$ with complete (promise) problem $A$, the class $\Pclass^\mathcal{C} = \Pclass^A$ is the class of all (promise) problems that can be decided by a polynomial-time verifier circuit $V$ with the ability to query an oracle for $A$. If $V$ makes invalid queries (i.e.~$x \in A_\text{inv}$), the oracle may respond arbitrarily. However, since $V$ is deterministic, it is required to output the same final answer regardless of how such invalid queries are answered ~\cite{gharibian2019complexity,goldreich2006promise}. Hence, the answer to any query outside of the promise set should not influence the final output bit. For a function $f$, we define $\Pclass^{\mathcal{C}[f]}$ to be just as $\Pclass^{\mathcal{C}}$ but with the additional restriction that $V$ may ask at most $f(n)$ queries on an input of length $n$.\footnote{This is different from the convention, where usually $\mO(f(n))$ is used instead.} One defines $\NP^\mathcal{C}$ or $\NP^{\mathcal{C}[f]}$ in the same way but replacing the polynomial-time deterministic verifier $V$ by a nondeterministic polynomial-time verifier $V'$, taking an additional input $y \in \{0,1\}^{p(n)}$ for some polynomial $p(n)$.

\subsection{Locality reducing perturbative gadgets}
Perturbative gadgets are standard techniques from the Hamiltonian complexity toolbox and are used to transform one Hamiltonian into another whilst approximately preserving the (low-energy) spectrum. We will use such gadgets here, and will be particularly interested in those that preserve not only the low-energy spectrum of the original Hamiltonian, but also the structure of the low-energy eigenstates. In~\cite{Cubitt2018Universal}, the authors introduce the following definition of simulation, and demonstrate via the use of perturbative gadgets that there are families of Hamiltonians which can be `reduced' to different families of Hamiltonians with simpler/lower locality interactions. Note that these results originally only applied to qubits, but can be extended to qudits~\cite{piddock2021universal}.
 
\begin{definition}[Approximate Hamiltonian simulation~\cite{Cubitt2018Universal}] We say that an $m$-qubit Hamiltonian $H'$ is a $(\Delta,\eta,\epsilon)$-simulation of an $n$-qubit Hamiltonian $H$ if there exists a local encoding $\mathcal{E}(H) = V(H \otimes P + \bar{H} \otimes Q) V^\dagger$ such that
\begin{enumerate}
 \item There exists an encoding $\tilde{\mathcal{E}}(H) = \tilde{V} (H \otimes P + \bar{H} \otimes Q) \tilde{V}^\dagger$ such that $\tilde{\mathcal{E}}(\mathds{1}) = P_{\leq \Delta (H')}$ and $\norm{V-\tilde{V}} \leq \eta$, where $P_{\leq \Delta (H')}$ is the projector onto the subspace spanned by eigenvectors of $H'$ with eigenvalue below $\Delta$,
 \item $\norm{H'_{\leq \Delta} -\tilde{\mathcal{E}}(H)} \leq \epsilon$, where $H'_{\leq \Delta}:=P_{\leq \Delta(H')}H'$.
\end{enumerate}
Here, $V$ is a local isometry that can be written as $V = \otimes_i V_i$, where each $V_i$ is an isometry acting on at most $1$ qubit, and $P$ and $Q$ are locally orthogonal projectors such that $P + Q = I$, and $\bar{M}$ is the complex conjugate of $M$. Moreover, we say that the simulation is efficient if $m$ and $\norm{H'}$ are at most $\mO\left(\poly(n,\eta^{-1},\epsilon^{-1},\Delta)\right)$, and the description of $H'$ can be computed in $\poly(n)$ time given the description of $H$.
\end{definition}
The full definition is only needed to specify the next lemma, and we would like to point readers to~\cite{Cubitt2018Universal} for a full explanation of the definition. 

For guided local Hamiltonian problems one is not just interested in the \emph{energy values}, but also in what happens to the actual \emph{eigenstates} throughout such transformations. In~\cite{cade2023improved}, Appendix B, the authors check for a large range of Hamiltonian transformations to what extent the initial eigenstates are affected. In order to obtain our results, we need the following lemma which summarizes a whole chain of reductions in~\cite{cade2023improved}.
 
\begin{lemma}[`Classical evaluability'-preserving eigenstate encodings] Suppose $H$ is an arbitrary $k$-local Hamiltonian on $n$ qubits with a non-degenerate ground state $\ket{g}$ separated from excited states by a gap $\gamma$. Then $H$ can be efficiently $(\Delta,\eta,\epsilon)$ simulated by a $2$-local Hamiltonian $H'$ on $m = \poly(n)$ qubits which has a non-degenerate ground state $\ket{g'}$, such that
\begin{align*}
    \norm{\mathcal{E}_\text{state}(\ket{g})-\ket{g'}} \leq \eta + \mO(\gamma^{-1} \epsilon),
\end{align*}
where $\mathcal{E}_\text{state}(\cdot)$ appends only states of a semi-classical form as a tensor product to $\ket{g}$, i.e.~preserves the classical evaluability as in Definition~\ref{def:cds}. 
\label{lem:2localsim}
\end{lemma}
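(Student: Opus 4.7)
The plan is to invoke the chain of perturbative gadget reductions assembled in Appendix~B of~\cite{cade2023improved}, and to verify at each link of the chain that the gadget introduces ancillas whose ground configuration has a semi-classical (computational-basis product) form, so that the cumulative encoding $\mathcal{E}_\text{state}$ preserves the classical evaluability property of Definition~\ref{def:cds}. Concretely, I would first reduce $k$-locality down to $3$-locality via a standard subdivision/projector gadget, and then from $3$-local to $2$-local via the Oliveira--Terhal gadget (or its sharper variants used in~\cite{cade2023improved}). In each such gadget, the penalty Hamiltonian is chosen so that its unique ground state on the ancilla register is a fixed product state in the computational basis, and the composed isometry $V$ takes the form $V_\text{sys}\otimes |0\rangle_\text{anc}^{\otimes a}$ (up to local isometries on the original qubits). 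Tensoring a classically evaluatable state with such a product state keeps it classically evaluatable, since expectation values of $\mathcal{O}(\log m)$-local observables on the joint register reduce to $\mathcal{O}(\log m)$-local expectation values on the system register multiplied by fixed, classically computable ancilla overlaps.

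Second, for each gadget in the chain I would combine the definition of $(\Delta,\eta,\epsilon)$-simulation with a Davis--Kahan style eigenvector perturbation bound. The condition $\|H'_{\leq \Delta} - \tilde{\mathcal{E}}(H)\| \leq \epsilon$ together with the spectral gap $\gamma$ of $H$ (which carries over to $\tilde{\mathcal{E}}(H)$ because $\tilde{\mathcal{E}}$ is an isometric encoding) implies that the one-dimensional ground-space projectors of $H'_{\leq\Delta}$ and $\tilde{\mathcal{E}}(H)$ are within $\mathcal{O}(\gamma^{-1}\epsilon)$ in operator norm, whence the corresponding normalized ground states are within the same distance up to a phase. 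Applying $\|V-\tilde V\|\le \eta$ then contributes an additive $\eta$ to the distance between $\tilde{\mathcal{E}}(\ket g)=\mathcal{E}_\text{state}(\ket g)$ and $\ket{g'}$, giving the claimed bound $\eta + \mathcal{O}(\gamma^{-1}\epsilon)$ per reduction step.

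Third, I would compose the two (or more) reductions, choosing the intermediate simulation parameters so that the spectral gap is preserved up to a constant (or at worst polynomial) factor at each stage. Because the individual errors compose additively under composition of encodings, and because each intermediate $\mathcal{E}_\text{state}$ appends only computational-basis product ancillas, the final encoding still has the form $\mathcal{E}_\text{state}(\ket g) = \ket g \otimes \ket{\text{anc}}$ with $\ket{\text{anc}}$ classically evaluatable. Setting $\Delta$ sufficiently large then ensures that the final $2$-local Hamiltonian $H'$ has a non-degenerate ground state $\ket{g'}$ within distance $\eta + \mathcal{O}(\gamma^{-1}\epsilon)$ of $\mathcal{E}_\text{state}(\ket g)$, as required.

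The main obstacle is bookkeeping rather than a conceptual hurdle: one must track how the gap $\gamma$ and the norm $\|H'\|$ evolve through the composition in order to guarantee that the parameters remain $\poly(n,\eta^{-1},\epsilon^{-1},\Delta)$, and one must check gadget by gadget that the low-energy isometry $V$ truly appends a product state of computational-basis ancillas (rather than, e.g., an entangled stabilizer state that would no longer satisfy condition~(ii) of Definition~\ref{def:cds} after composition with the original $\ket g$). This gadget-by-gadget verification is exactly what~\cite{cade2023improved} carries out for its chain of reductions, and the lemma follows by observing that their chain already has the required semi-classical ancilla structure at every step.
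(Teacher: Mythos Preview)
Your proposal is correct and follows essentially the same approach as the paper: both defer to the chain of gadget reductions in~\cite{cade2023improved} (the paper cites Propositions~2 and~3 there, you cite Appendix~B) and verify that each encoding appends only semi-classical ancilla states, thereby preserving classical evaluability. The paper's proof is a two-line remark, whereas you spell out the Davis--Kahan-type eigenvector perturbation argument and the gadget-by-gadget ancilla check more explicitly, but the underlying strategy is identical.
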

\begin{proof} This follows immediately from the proofs of Proposition 2 and Proposition 3 in~\cite{cade2023improved}, while making the observation that all encodings up to the Spatially sparse $2$-local Hamiltonian (with Pauli interactions with no $Y$-terms) only append states that satisfy the definition of poly-sized subset states (see the proof of Theorem~\ref{thm:CGaLH} in the main text) to the original eigenstate of $H$.  
\end{proof}

\section{Guidable local Hamiltonian problems}
\label{sec:ces_and-GaLH}

\subsection{Classically evaluatable states}
\label{subsec:ces}
Let us first introduce Gharibian and Le Gall's definition of query and sampling access to quantum states~\cite{gharibian2021dequantizing}, which slightly generalizes the original definition as first proposed by Tang used to dequantize quantum algorithms for recommendation systems~\cite{Tang2019AQuantum}.

\begin{definition}[Query and sampling access, from~\cite{gharibian2021dequantizing}] We say that we have \emph{query and $\xi$-sampling access} to a vector $u \in \mathbb{C}^N$ if the following three conditions are satisfied:
\begin{enumerate}[label=(\roman*)]
    \item we have access to an $\mO\left(\poly(\log(N)\right)$-time classical algorithm $\mathcal{Q}_u$ that on input $i\in[N]$ outputs the entry $u_i$.
    \item we have access to an $\mO\left(\poly(\log(N)\right)$-time classical algorithm $\mathcal{S}\mathcal{Q}_u$ that samples from a probability distribution $p:[N]\rightarrow [0,1]$ such that
    \begin{align*}
        p(j) \in \left[ (1-\xi)\frac{|u_j|^2}{\norm{u}^2}, (1+\xi) \frac{|u_j|^2}{\norm{u}^2}\right]
    \end{align*}
    for all $j\in [N]$.
    \item we are given a real number $m$ satisfying $|m - \norm{u}| \leq \xi \norm{u}$.
\end{enumerate}
We simply say that we have sampling access to $u$ (without specifying $\xi$) if we have $0$-sampling access.
\label{def:sampaccess}
\end{definition}

In this work we propose a new class of quantum states, conceptually different from those of Definition~\ref{def:sampaccess}, which we will call \emph{classically evaluatable quantum states}. Our main motivations for doing so are the following:
\begin{enumerate}
    \item It seems rather difficult to find Ansätze that are used in practice for ground state energy estimation that satisfy all conditions of Definition~\ref{def:sampaccess}. As one of the main motivations of this work is to investigate the power of quantum versus classical state preparation when one has access to Quantum Phase Estimation, we wanted to define a class of states that can both be prepared efficiently on a quantum computer and which contains a large class of Ansätze commonly used in practice. 
    \item Analogous to Dinur's construction, one would expect that determining if a local Hamiltonian has ground state energy (exponentially close to) zero or some constant away from zero is $\QMA$-hard if the quantum PCP conjecture is true. However, there are arguments from physics\footnote{In this setting the LH problem becomes equivalent to determining whether the free energy of the system becomes negative at a finite temperature. One expects then that at such temperatures, the system loses its quantum characteristics on the large scale, making the effects of long-range entanglement become negligible. Hence, this means that the ground state of such a system should have some classical description, which places the problem in $\NP$~\cite{arad2011note}.} on why one might expect this problem to be in $\NP$~\cite{poulin2011markov}. To study the question of containment in $\NP$ it is necessary to be able to work with states within a deterministic setting, and therefore it does not make sense to rely on a form of sampling access which inherently relies on a probabilistic model of computation.
    \item To add to the previous point, being able to study containment in $\NP$ comes with the additional advantage of being able to make statements about whether the problem admits a PCP by the classical PCP theorem. No such theorem is currently known for $\MA$, and we exploit this further in Section~\ref{sec:QCPCP} where we introduce a new type of `quantum' PCP.
\end{enumerate}
We will define these quantum states in a slightly more general setting for completeness -- by allowing for probabilistic computation of expectation values as well  -- but this will not be important for the remainder of this work.

 \begin{definition}[$\epsilon$-classically evaluatable and quantumly preparable states] We say a state $u \in \mathbb{C}^{2^n}$ is \textbf{$\epsilon$-classically evaluatable} if 
\begin{enumerate}[label=(\roman*)]
    \item there exists a classical description of $u$, denoted as $\text{desc}(u)$, which requires at most $\poly(n)$ bits to write down, and
    \item  there exists a classical probabilistic algorithm $\mathcal{OQ}_u$ which, given $\text{desc}(u)$ and the matrix elements of some $k$-local observable $O$ with $\norm{O} \leq 1$, computes an estimate $\hat{z}$ such that $|\hat{z}-\bra{u}O\ket{u}|\leq \epsilon$ in time $\poly(n,\epsilon,2^k)$, with success probability $\geq 2/3$.
\end{enumerate} 
Furthermore, we say a state $u \in \mathbb{C}^{2^n}$ is also \textbf{quantumly preparable} if
\begin{enumerate}[label=(\roman*)]
  \setcounter{enumi}{2}
    \item there exists a quantum circuit $V$ of at most $\poly(n)$ $1$- and $2$-qubit gates that prepares $u$ as a quantum state, i.e.~$\ket{u}$. The description of $V$ can be computed efficiently using some efficient classical algorithm $\mathcal{A}_V$, which only takes $\text{desc}(u)$ as an input.
\end{enumerate} 
Finally, if $\epsilon = 0$ and the algorithm used in (ii) is deterministic instead of probabilistic, we simply say that $u$ is \textbf{classically evaluatable}.
\label{def:cds}
\end{definition}
Note that it is not required that $u$ is normalized, however by requirement (ii) it is possible to calculate the norm of $u$. Normalization is of course required for $u$ to be quantumly preparable. Also note that if condition (iii) holds, condition (ii) (for $\epsilon>0$) is no longer necessary in order to work with the class of states as a suitable Ansatz provided that one has access to a quantum computer, since there exist quantum algorithms to estimate the expectation values of the observables up to arbitrarily precise inverse polynomial precision. However, the current definition allows one to adopt the two-step classical-quantum procedure of \emph{classical} Ansatz generation and \emph{quantum} ground state preparation, as described in Section~\ref{sec:intro}.

\
 
To demonstrate the practical relevance of Definition~\ref{def:cds}, we give four examples of Ansätze which all satisfy the required conditions to be ($\epsilon$-)classically evaluatable and quantumly preparable. The first two examples will also be perfectly samplable, as in Definition~\ref{def:sampaccess}, of which the proofs are given in Appendix~\ref{app:samp}.
\begin{example}[Matrix-product states with bounded bond and physical dimensions] Matrix-product states are quantum states of the following form
\begin{align*}
    \ket{u} = \sum_{\{s\}} \text{Tr}[A_1^{(s_1)} A_2^{(s_2)} \dots A_n^{(s_n)}] \ket{s_1,\dots,s_{n}},
\end{align*}
where $s_i$ are qudits of `physical' dimension $p$ ($s_i \in \{0,1,\dots, p-1\}$), the $A_i^{(s_i)}$ are complex, square matrices of bond dimension $D$, and $n$ denotes the total number of qudits. We say that the bond dimension is bounded if it is at most polynomial in $n$, and that the physical dimension is bounded if it is taken to be some constant independent of $n$. MPS are also $0$-samplable, which is shown in Appendix~\ref{app:samp}.

\

\noindent \textit{Conditions check:}
\begin{enumerate}[label=(\roman*)]
    \item The MPS is fully determined by the set of matrices $\{A_i^{s_i}\}$, and can be described explicitly using at most $npD^2 = \poly(n)$ complex numbers.
    \item One can compute the inner product $\braket{u}{M|u}$ in time at most $n(2D^3\chi p + D^2\chi^2p^2)$ for any (even $n$-local) operator $M$ having a matrix product operator decomposition with bond dimension $\chi$~\cite{schollwock2011density,orus2014practical}. Since $O$ is $k$-local, it can be represented by an MPO with bond dimension at most $p^k$, and so $\braket{u}{O|u}$ can be computed in a maximum time of $n(2D^3 p^{k+1} + D^2 p^{2k+2}) = \poly(n, D, 2^k)$ when $p$ is constant.
    \item  An MPS on $n$ qubits with bond dimensions $D$ can be prepared on a quantum computer up to distance $\epsilon$ using at most $\mO(n D\log(D)^2 \log(n/\epsilon))$ $1$- and $2$-qubit gates and requiring $\lceil\log(D)\rceil$ additional ancilla qubits. A method for constructing such a circuit can be found in Appendix~\ref{app:MPS_to_Circuit} and is based on \cite{schon2005sequential}.
\end{enumerate}
\label{ex:MPS}
\end{example}

\begin{example}[Stabilizer states]
\label{exm:stabilizer_states}Gottesmann and Knill~\cite{gottesman1998heisenberg} showed that there exists a class of quantum states, containing states that exhibit large entanglement, that can be efficiently simulated on a classical computer. These states are called \emph{stabilizer states} and are those generated by circuits consisting of Clifford gates, $\mathcal C = \langle \text{CNOT}, H, S\rangle$ where $S = \sqrt{Z}$ is a phase gate, starting on a computational basis state. Any measurement of local Pauli's on these states can be efficiently classically simulated. Amongst other things, stabilizer states have been used to formulate error correcting codes~\cite{steane2003quantum}, study entanglement~\cite{Bennett_1996}, and in evaluating quantum hardware through randomised benchmarking~\cite{Knill_2008}. Stabilizer states are also $0$-samplable, again shown in Appendix~\ref{app:samp}.

\

\noindent \textit{Conditions check:}
\begin{enumerate}[label=(\roman*)]
\item Any stabilizer state can be described by a linear depth circuit consisting of Clifford gates starting on the $\ket{0^n}$ state~\cite{maslov2018shorter}. A possible description of such a circuit is a list of tuples $(q_1,q_2,t,g)$, where $q_1$ (resp. $q_2$) denotes the first (resp. second) qubit that $g \in \mathcal C$ acts on at depth $t$. This description takes at most $\tilde{\mO}(n^2)$ bits to write down.

\item The Gnottesman-Knill~\cite{gottesman1998heisenberg} theorem shows that stabilizer states allow for strong classical simulation and efficient classical computation of probabilities for Pauli measurements. This in particular allows for the calculation of expectation values in time $\poly(2^k)$. 

\item The description is given as a quantum circuit, which can be implemented to prepare the quantum state.

\end{enumerate}
\end{example}

We will now give two examples of Ans\"atze that have been shown to not be $\xi$-samplable, even up to some large constant values of $\xi$.

\begin{example}[Constant depth quantum circuits]
\label{exm:constant_depth_circuit}
Constant depth quantum circuits are circuits that, given some fixed gate set $\mathcal{G}$ with just local operations, are only allowed to apply at most $t = \mO(1)$ consecutive layers of operations from $\mathcal{G}$ on some initial quantum state, which we take to be the all-zero state $\ket{0\dots 0}$. An example of constant depth quantum circuits that are used as classical Ansätze would be the simple case of the \emph{product state Ansatz}, where one only considers one-qubit gates applied per site. Product state Ansätze are widely used in classical approximation algorithms to Local Hamiltonian problems, see for example~\cite{brandao2013product,Gharibian2019almost}. In~\cite{Terhal2002adaptive} it was shown that the ability to perform approximate weak sampling from the output of a constant depth quantum circuit up to relative error $0< \xi < 1/3$ implies that $\BQP \subseteq \AM$, which means that it is unlikely that constant depth quantum circuits are $\xi$-samplable for any $\xi < 1/3$.

\

\noindent \textit{Conditions check:}
\begin{enumerate}[label=(\roman*)]
    \item By definition.
    \item $\bra{u} O \ket{u} = \bra{0} U^\dagger O U\ket{0}$, where $U^\dagger O U$ is a $k 2^t$-local observable (via a light-cone argument), and hence we can compute $\bra{0 }U^\dagger  O U \ket{ 0}$ in time $\mO\left(2^{\mO( k 2^t)}\cdot \poly(n)\right ) $ which is $\poly(2^k)$ if $t = \mO(1)$.
    \item This holds again by definition.
\end{enumerate}
\label{ex:CDQC}
\end{example}
By combining Example~\ref{exm:stabilizer_states} and Example~\ref{exm:constant_depth_circuit} we find that any state of the form $UC\ket{0^n}$, with $U$ a constant-depth circuit and $C$ a Clifford circuit, is also classically evaluatable and quantumly preparable. Our final example is of a class of states that are not perfectly classically evaluatable, but are $\epsilon$-classically evaluatable for any $\epsilon = 1/\poly(n)$. 
\begin{example}[Instantaneous quantum polynomial (IQP) circuits] IQP circuits start in $\ket{0^n}$ and apply a polynomial number of local gates that are diagonal in the $X$-basis, followed by a computational basis measurement~\cite{bremner2011classical}. An equivalent definition would be to consider circuits with gates that are diagonal in the $Z$-basis, but then sandwiched in two layers of Hadamard gates (again followed by a measurement in the computational basis). It is well known that IQP circuits are difficult to sample from:  if IQP circuits could be weakly simulated to within multiplicative error $1\leq c < \sqrt{2}$, then the polynomial hierarchy would collapse to its third level~\cite{bremner2011classical}. Hence, they are not $\xi$-samplable for any $\xi < \sqrt{2}-1$. However, we will now show that states generated by IQP circuits are $\epsilon$-classically evaluatable for all $\epsilon = 1/\poly(n)$.

\

\noindent \textit{Conditions check:}
\begin{enumerate}[label=(\roman*)]
    \item This follows by definition, since all gates are local and there are only a polynomial number of them.
    \item This is a corollary from Theorem 3 in~\cite{bremner2011classical}, where it is shown that one can exactly sample basis states on $\mO(\log n)$ qubits according to their $l_2$-norm. Let $C$ be an IQP-circuit of $n$ qubits which produces the state $\ket{u} = C \ket{0^n}$ before the final measurement, and let $S \subseteq [n]$ with $|S| = k$ be the qubits on which a $k$-local observable $O$ acts. Following the proof of Theorem 3 in~\cite{bremner2011classical}, the state right before the last layer of Hadamard is given by
    \begin{align*}
        \ket{\phi} = \frac{1}{\sqrt{2^n}} \sum_{x\in S, y\in [n] \setminus S} e^{i f(x,y)} \ket{x,y},
    \end{align*}
    where the pair $x,y \in \{0,1\}^n$ denotes the bit string state corresponding to the concatenation (with the correct indexing) of the bit strings $x\in \{0,1\}^{|S|}$ and $y \in \{0,1\}^{n-|S|}$. Here $f(x,y)$ is a phase function which can be computed efficiently, by accumulating the relevant diagonal entries of the successive commuting gates. Since $O$ does not act on the qubits with indices $[n]\setminus S$, and they only get acted upon by Hadamards, further measurements on this register should not influence any POVM that only acts on $S$ by the no-signaling principle.  By this observation, the protocol is now very simple: one samples a random bit string $y' \in \{0,1\}^{n-|S|}$ and computes the random variable
    \begin{align*}
        X_i = \frac{1}{2^{|S|}} \sum_{x,x' \in \{0,1\}^{|S|}} \bra{x} e^{-i f(x,y')}H^{\otimes |S|} O H^{\otimes |S|} e^{i f(x,y')} \ket{x'},
    \end{align*}
    which can be done exactly in time $\poly(2^k)$ since $f(x,y')$ can be computed efficiently. Since $\norm{O} \leq 1$, We have that $\mathbb{E}[X_i^2] \leq 1$ and $|\mathbb{E}[X_i]| \leq 1$, and therefore $\text{Var}[X_i] = \mathbb{E}[X_i^2] - \mathbb{E}[X_i]^2 \leq 2$. Therefore, taking $s = c/\epsilon^2$ samples of $X_i$ (which are independent random variables) and computing $\hat{z}= \frac{1}{s} \sum_{i \in [s]} X_i$ ensures that
    \begin{align*}
        |\hat{z}-\bra{u}O\ket{u}|\leq \epsilon,
    \end{align*}
    with probability $\geq 2/3$, provided that $c \geq 6$. This follows from a simple application of Chebyshev's inequality.
    \item This follows also by definition.
\end{enumerate}
\label{ex:IQP}
\end{example}
In general quantum states will \emph{not} be classically evaluatable (as that would imply $\QMA = \NP$ as they could be used as witnesses for the $\QMA$-hard local Hamiltonian problem), and some other notable examples of classes of states which are not expected to be classically evaluatable are Projected Entangled Pair States (PEPS) (since computing expectation values of local observables is $\#\Pclass$-hard~\cite{schuch2007computational}) and collections of local reduced density matrices (to check whether they are consistent with a global quantum state is $\QMA$-hard~\cite{liu2007consistency,broadbent2022qma}).

We have seen that constant-depth quantum circuits are not even approximately samplable (under the conjecture that $\BQP \not\subset \AM$~\cite{Terhal2002adaptive}). We can formalize this in the following proposition which relates $\xi$-samplable states to $\xi$-classically evaluatable states. 
First, we need the following Lemma which is almost a direct corollary of the proof of Theorem 4.1 in~\cite{gharibian2021dequantizing}.\footnote{We cannot use their theorem directly, as it only works for even polynomials and we are interested in the polynomial $P(x)=x$.}

\begin{lemma}\label{lem:samplobsest}
Given query access to a $s$-sparse Hermitian matrix $A \in \mathbb{C}^{N \cross N}$ with $\norm{A} \leq 1$, query and $\xi$-sampling access to a vector $u \in \mathbb{C}^N$ with $\norm{u} \leq 1$ as per definition~\ref{def:sampaccess}, for any $\xi \leq \epsilon/8$ and $\epsilon \in (0,1]$ there exists a classical randomized algorithm which with probability $\geq 1-1/\poly(N)$ outputs an estimate $\hat{z} \in \mathbb{R}$ such that
\begin{align*}
    \abs{\hat{z} - u^\dagger A u} \leq \epsilon 
\end{align*}
in time $\mO^*(s/\epsilon^2)$.
\end{lemma}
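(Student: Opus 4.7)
The plan is to dequantize the quadratic form $u^\dagger A u = \sum_i \overline{u_i}(Au)_i$ with a single-layer importance-sampling estimator. Because $A$ is $s$-sparse and we have query access to it, for any given row index $i$ the entry $(Au)_i = \sum_{j:A_{ij}\neq 0} A_{ij}u_j$ can be computed exactly in time $\tilde{\mO}(s)$ via the query oracles for $A$ and $u$. Thus only the outer index $i$ needs to be sampled, which I do via $\mathcal{SQ}_u$: draw $T=\Theta(1/\epsilon^2)$ independent samples $i_1,\ldots,i_T$, compute $w_k = \overline{u_{i_k}}\,(Au)_{i_k}/\widetilde{p}(i_k)$ with $\widetilde{p}(i) := |u_i|^2/m^2$ (evaluable from query access to $u$ and the norm estimate $m$), and output $\hat{z} = \mathrm{Re}\bigl(\tfrac{1}{T}\sum_{k=1}^T w_k\bigr)$, which is legitimate since $u^\dagger A u \in \mathbb{R}$ by the Hermiticity of $A$.

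The analysis of one sample $w$ hinges on bounding the ratio $p'(i)/\widetilde{p}(i)$, where $p'$ is the actual distorted distribution generated by $\mathcal{SQ}_u$. Condition~(iii) of Definition~\ref{def:sampaccess} gives $m^2/\|u\|^2 \in [(1-\xi)^2,(1+\xi)^2]$, so $\widetilde{p}(i)$ differs from the ideal $|u_i|^2/\|u\|^2$ by at most a multiplicative factor $(1\pm\xi)^2$; combined with the $(1\pm\xi)$ bound on $p'(i)$ from condition~(ii), this yields $p'(i)/\widetilde{p}(i)\in[(1-\xi)^3,(1+\xi)^3]$. A Cauchy--Schwarz step then produces the bias bound
\[
\bigl|\mathbb{E}[w]-u^\dagger A u\bigr| \;\leq\; \bigl((1+\xi)^3-1\bigr)\sum_i |u_i|\,|(Au)_i| \;\leq\; \mO(\xi)\,\|u\|\,\|A\|\,\|u\| \;\leq\; \mO(\xi),
\]
while the identity $|u_i|^2 = m^2\widetilde{p}(i)$ telescopes the second moment to
\[
\mathbb{E}\bigl[|w|^2\bigr] \;=\; m^2\sum_i \tfrac{p'(i)}{\widetilde{p}(i)}\,|(Au)_i|^2 \;\leq\; (1+\xi)^5\,\|u\|^2\,\|Au\|^2 \;\leq\; \mO(1),
\]
using $\|A\|,\|u\|\leq 1$.

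With bias $\mO(\xi)\leq \epsilon/2$ (which is precisely what the hypothesis $\xi\leq \epsilon/8$ buys, since $(1+\xi)^3-1\leq 4\xi$ for small $\xi$) and variance $\mO(1)$, Chebyshev's inequality applied to the empirical mean of $T=\Theta(1/\epsilon^2)$ samples gives $|\hat{z}-u^\dagger A u|\leq \epsilon$ with constant probability, and a median-of-means amplification over $\mO(\log N)$ independent blocks boosts this to $1-1/\poly(N)$. Each of the $T\cdot\mO(\log N)$ samples costs $\tilde{\mO}(s)$ queries and arithmetic, giving the claimed $\mO^*(s/\epsilon^2)$ runtime. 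The main (mild) obstacle is the cubic accumulation of the multiplicative distortions in $p'/\widetilde{p}$, which is exactly what forces the stricter tolerance $\xi\leq \epsilon/8$ rather than $\xi\leq \epsilon$; once this is tracked carefully, the remainder of the argument is the degree-one specialization of the dequantized QSVT framework used in the proof of Theorem 4.1 of~\cite{gharibian2021dequantizing}, and no Chebyshev-polynomial expansion is needed.
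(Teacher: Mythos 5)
Your proposal is correct and is essentially the same argument as the paper's: the paper simply observes that $(Au)_i$ is computable in $\mO(s)$ time and then invokes the importance-sampling inner-product estimator from the proof of Theorem~4.1 of~\cite{gharibian2021dequantizing} with $v=u$ and $P(\sqrt{A^\dagger A})$ replaced by the Hermitian $A$, which is exactly the estimator you reconstruct. Your explicit tracking of the $(1\pm\xi)^3$ distortion, the second-moment bound, and the median-of-means amplification is a faithful unpacking of that cited argument rather than a different route.
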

\begin{proof} 
Since we have query access to the entries of $A$ and those of $u$, we can compute the $i$th entry of the vector $Au$ in time $\mO(s)$. The lemma follows then directly from the proof of Theorem 4.1 in~\cite{gharibian2021dequantizing}, taking $v=u$ and replacing their $P(\sqrt{A^\dagger A})$ with our Hermitian $A$ (this also makes the estimation of the imaginary part in the proof in~\cite{gharibian2021dequantizing} redundant). 
\end{proof}

\begin{theorem}\label{thm:sampl_vs_ces}
For any $\xi>0$, any $\xi$-samplable state is also $\mO(\xi)$-classically evaluatable. On the other hand, there exist states that are perfectly classically evaluatable but not $\xi'$-samplable for all $0< \xi'<1/3$, unless $\BQP \subseteq \AM$.
\end{theorem}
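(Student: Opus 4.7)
The plan is to establish each direction of the biconditional independently, with both following fairly directly from results already developed in the excerpt.

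For the forward direction, I start from a state $u \in \mathbb{C}^{2^n}$ that is $\xi$-samplable per Definition~\ref{def:sampaccess}, which gives me both query access to the entries of $u$ and $\xi$-sampling access. The goal is to verify conditions (i) and (ii) of Definition~\ref{def:cds} with parameter $\epsilon = \mO(\xi)$. Condition (i) (a $\poly(n)$-bit classical description) is immediate, since the sampling and query oracles $\mathcal{Q}_u,\mathcal{SQ}_u$ themselves run in $\poly(n)$ time and hence admit $\poly(n)$-bit descriptions that can serve as $\text{desc}(u)$. For condition (ii), the key observation is that any $k$-local observable $O$ with $\|O\|\leq 1$, viewed as a $2^n \times 2^n$ matrix, is $2^k$-sparse: since $O$ acts non-trivially only on $k$ qubits and as identity elsewhere, each row has at most $2^k$ nonzero entries, each of which is directly computable from the given $2^k \times 2^k$ matrix elements of the $k$-qubit part. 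Choosing $\epsilon = 8\xi$ so that the hypothesis $\xi \leq \epsilon/8$ of Lemma~\ref{lem:samplobsest} is met, I apply the lemma with sparsity $s = 2^k$ to obtain a classical randomized algorithm producing an estimate $\hat z$ with $|\hat z - \bra{u}O\ket{u}|\leq 8\xi = \mO(\xi)$ with success probability $\geq 1 - 1/\poly(2^n)$ (in particular $\geq 2/3$), running in time $\mO^*(2^k/\xi^2)$, which lies within the $\poly(n,\epsilon^{-1},2^k)$ budget required by Definition~\ref{def:cds}.

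For the converse, the task is to exhibit a single family of perfectly classically evaluatable states that fails to be $\xi'$-samplable for any constant $\xi' < 1/3$, under the hypothesis $\BQP \not\subseteq \AM$. The family of states prepared by constant-depth quantum circuits, analysed in Example~\ref{ex:CDQC}, is exactly such a witness. On the one hand, these states are (perfectly) classically evaluatable since, by a standard lightcone argument, for any $k$-local observable $O$ the transformed observable $U^\dagger O U$ remains $k 2^t$-local with $t = \mO(1)$, and $\bra{0^n} U^\dagger O U \ket{0^n}$ can therefore be computed exactly in $\poly(n, 2^k)$ time. On the other hand, the result of Terhal and DiVincenzo~\cite{Terhal2002adaptive} states that approximate weak sampling from the output of constant-depth quantum circuits up to relative error $\xi' < 1/3$ implies $\BQP \subseteq \AM$; hence, under $\BQP \not\subseteq \AM$ no such $\xi'$-sampler can exist, which is precisely the obstruction to $\xi'$-samplability for any $0 < \xi' < 1/3$.

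No step presents a genuine obstacle: the proof is essentially an assembly of tools already in place (Lemma~\ref{lem:samplobsest}, Example~\ref{ex:CDQC}, and~\cite{Terhal2002adaptive}). The only bookkeeping points worth double-checking are (a) the sparsity count for $k$-local observables when the $k$ non-trivially acted-upon qubits are not contiguous, where an implicit relabelling absorbs the permutation and the $2^k$-sparsity bound still holds; and (b) that the constant prefactor $8$ arising from the hypothesis $\xi \leq \epsilon/8$ of Lemma~\ref{lem:samplobsest} is indeed absorbed by the $\mO(\xi)$ in the theorem statement.
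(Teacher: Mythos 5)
Your proof is correct and follows essentially the same route as the paper's: condition (i) via the $\poly(n)$-bit descriptions of the sampling/query oracles, condition (ii) via Lemma~\ref{lem:samplobsest} applied to the $2^k$-sparse matrix $O \otimes I$ with $\xi \le \epsilon/8$ (yielding the same constant $8$), and the converse via constant-depth circuits (Example~\ref{ex:CDQC}) combined with the Terhal--DiVincenzo result that $\xi'$-sampling for $\xi' < 1/3$ would place $\BQP$ in $\AM$. No gaps; the bookkeeping points you flag are handled the same way in the paper.
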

\begin{proof}
Let $u\in \mathbb C^N$ be a $\xi$-samplable state with $N=2^n$. The first part of the proposition follows by checking the two conditions. 
\begin{enumerate}[label=(\roman*)]
\item $u$ is described by giving the algorithms $\mathcal Q_u$ and $\mathcal{SQ}_u$. Both these algorithms run in $\mO(poly(log(N)))$-time, which implies that both have an efficient description of length at most  $\mO(poly(log(N)))$ (in terms of local classical operations, i.e.~logic gates). 
\item This follows directly from Lemma~\ref{lem:samplobsest}, since the global operator representation of a $k$-local observable $O$ acting on a $k$-subset of $n$ qubits can be written as $N \cross N$ Hermitian matrix $A = O \otimes I$ where $A$ has sparsity $s=2^k$.
\end{enumerate}
This shows that any $\xi$-samplable state is at least $8 \xi$-classically evaluatable. The second part follows directly from~\cite{Terhal2002adaptive}, Theorem 3, which shows that the ability to perform approximate weak sampling from the output of a constant depth quantum circuit up to relative error $0< \xi < 1/3$ implies that $\BQP \subseteq \AM$, obstructing the ability to satisfy condition (ii) in Definition~\ref{def:sampaccess}. By Example~\ref{ex:CDQC}, we already showed that constant-depth quantum circuits produce classically evaluatable states, completing the proof.
\end{proof}

This gives rise to a (conjectured) hierarchical structure of states as depicted in Figure~\ref{fig:hierarcy_of_states}.
An interesting observation is a supposedly significant leap in the hierarchy when we allow for a small error $\epsilon$ in the definition of $\epsilon$-classically evaluatable states. A straightforward way to explain this is by considering how it affects our ability to determine a global property of a quantum state, like its energy with respect to a Hamiltonian $H$. 

Let $H$ be a sum of $m$ log-local terms, i.e.~$H = \sum_{i=0}^{m-1} H_i$, satisfying $\norm{H} \leq 1$. If one wants to evaluate the energy of an $\epsilon$-classically evaluatable state with respect to $H$ up to accuracy $\epsilon'$, then $\epsilon$ has to be less than $\epsilon'/m$ since in the worst case the error grows linearly with the number of terms. Instead, $\xi$-samplable states have a requirement on the accuracy of sampling, which is a property of the global state. \cite{gharibian2021dequantizing} shows that this property can be used for energy estimation, where the requirement on $\xi$ only depends on the precision with which one wants to measure the energy. We see this reflected in Theorem~\ref{thm:sampl_vs_ces}, which shows that if a state has the property of being $\xi$-samplable this implies that the state is $\mO(\xi)$-classically evaluatable, but not the other way around. However, we are not aware of any classes of states which are provably only $\xi$-samplable for a constant, but small, $\xi >0$ (all examples that we give in this work are in fact $0$-samplable).

For the remainder of our work, we will focus on ($0$-)classically evaluatable states, which by Definition~\ref{def:cds} means that $\mathcal{OQ}_u$ is deterministic. A notable advantage of this approach, as opposed to $0$-samplable states, lies in its compatibility with deterministic algorithms, allowing us to give $\NP$ containment results (see Section~\ref{sec:clas_cont}). This is also a prerequisite to make connections to $\MA$ as well, see Appendix~\ref{app:MA}.

\begin{figure}
    \centering
    \includegraphics[width=\linewidth]{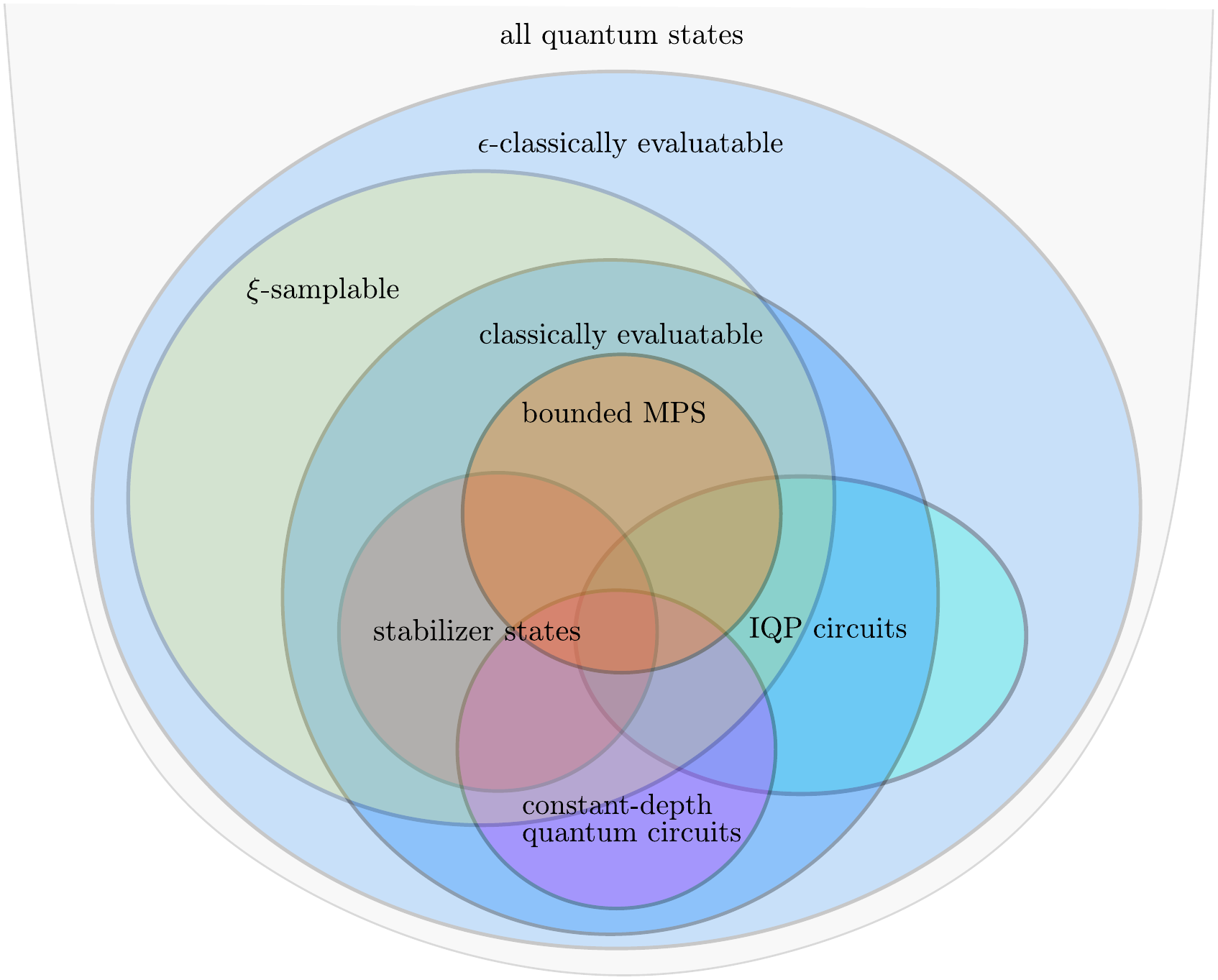}
    \caption{Visualization of the (conjectured) relations between classes of quantum states considered in this work, given a Hilbert space of a fixed dimension. For MPS, we only consider states with polynomially-bounded bond and local dimension. We take $\xi \leq \epsilon/8\leq 1/3$, such that by Theorem~\ref{thm:sampl_vs_ces} we have that (i) all $\xi$-samplable states are also $\epsilon$-classically evaluatable and (ii) constant-depth and IQP circuits are not $\xi$-samplable. One also expects that there are quantum states (which can be prepared by a polynomial time quantum circuit) which are neither classically evaluatable nor samplable, or else $\QMA$ ($\QCMA$) would be in $\NP$ or $\MA$, respectively.}
\label{fig:hierarcy_of_states}
\end{figure}

\subsection{Variants of guidable local Hamiltonian problems}
\label{subsec:GaLH_def}
Let us define the following class of local Hamiltonian problems, which can be viewed as `Merlinized' versions of the original guided local Hamiltonian problem. We make a distinction between different types of promises one can make with respect to the existence of guiding states: we either assume that the guiding states are of the form of Definition~\ref{def:cds} (with or without the promise that the states are also quantumly preparable), or that there exists an efficient quantum circuit that prepares the guiding state.

\begin{definition}[Guidable Local Hamiltonian Problems] Guidable Local Hamiltonian Problems are problems defined by having the following input, promise, one of the extra promises and output:\\
\textbf{Input:} A $k$-local Hamiltonian $H$ with $\|H\|\leq 1$ acting on $n$ qubits, threshold parameters $a,b \in \mathbb{R}$ such that $b-a \geq \delta > 0$ and a fidelity parameter $\zeta \in (0,1]$.\\
\textbf{Promise:} We have that either $\lambda_0(H) \leq a$ or $\lambda_0(H) \geq b$ holds, where $\lambda_0(H)$ denotes the ground state energy of $H$. \\
\textbf{Extra promises:} 
Denote  $\Pi_\text{gs}$ for the projection on the subspace spanned by the ground state of $H$.  Then for each problem class, we have that either one of the following promises hold:
\begin{enumerate}
    \item There exists a classically evaluatable state $u\in \mathbb{C}^{2^n}$ for which $\norm{\Pi_\text{gs} u}^2 \geq \zeta$. Then the problem is called the \textbf{Classically Guidable Local Hamiltonian Problem}, shortened as $\mathsf{CGaLH}(k,\delta,\zeta)$. If $u$ is also quantumly preparable, we call the problem the \textbf{Classically Guidable and Quantumly Preparable Local Hamiltonian Problem}, shortened as $\mathsf{CGaLH}^{*}(k,\delta,\zeta)$. 
    \item There exists a unitary $V$ implemented by a quantum circuit composed of at most $T=\poly(n)$ gates from a fixed gate set $\mathcal{G}$ that produces the state $\ket{\phi}=V\ket{0}$ (with high probability),  which has $\norm{\Pi_\text{gs} \ket{\phi}}^2 \geq \zeta$. Then the problem is called the \textbf{Quantumly Guidable Local Hamiltonian problem}, shortened as $\mathsf{QGaLH}(k,\delta,\zeta)$.
\end{enumerate}
\textbf{Output:} \begin{itemize}
    \item If $\lambda_0(H) \leq a$, output {\sc yes}.
    \item If $\lambda_0(H) \geq b$, output {\sc no}.
\end{itemize}
\label{def:GaLH}
\end{definition}
\begin{remark}
    If one removes the extra promise from the Guidable Local Hamiltonian Problem -- i.e.~the existence of a guiding state -- one recovers the usual definition of the Local Hamiltonian Problem, albeit with the normalization condition on the overall operator norm instead of the operator norm of the local terms. In the remainder of the paper when we refer to the Local Hamiltonian Problem, we mean the Local Hamiltonian Problem as defined in this way. Therefore, in any discussion related to the Local Hamiltonian formulation of the quantum PCP conjecture we consider the local Hamiltonian problem with a promise gap constant relative to the operator norm (which satisfies $\leq 1$). 
\end{remark}
One can also consider other types of guiding states, for example the samplable states as in Definition~\ref{def:sampaccess}. This guidable local Hamiltonian problem variant was already introduced in Section 5 of~\cite{gharibian2021dequantizing}.

The $\mathsf{QGaLH} (k,\delta,\zeta)$ problem is very similar to the low complexity low energy states problem from \cite{Wocjan2003two}, but differs in some key ways. In the low complexity low energy states problem one is promised that for all states $\{ \ket{\phi} \}$ that can be prepared from $\ket{0\dots 0}$ with a polynomially bounded number of gates from a fixed gate set, one has that either there exists at least one such $\ket{\phi}$ such that $\bra{\phi}H\ket{\phi}\leq a$ or for all these $\ket{\phi}$ we have $\bra{\phi}H\ket{\phi}\geq b$. Instead, in $\mathsf{QGaLH}(k,\delta,\zeta)$ one is promised that there exists a state $\ket{\psi}$ which can be prepared efficiently on a quantum computer that has fidelity $\zeta$ with the ground space of $H$. This promise in the fidelity does not imply that the energy of this $\ket{\psi}$ is necessarily low, as it might have a large fidelity with states in the high-energy spectrum of $H$. Nevertheless, it does imply that in the {\sc yes}-case there exists a low complexity low energy state $\ket{\phi}$. One can make use of the state $\ket{\psi}$ that has significant overlap with the ground state and use Lin and Tong's filtering method~\cite{Lin2020nearoptimalground} to project $\ket{\psi}$ onto a state $\ket{\phi}$ with energy at least inverse polynomially close to the ground state (which implies $\ket{\phi}$ can be prepared by a quantum circuit). However,  in the {\sc no}-case this promise on the fidelity implies that every possible state $\ket{\psi}$ has energy $\bra{\psi}H\ket{\psi}\geq b - \mO(1/\exp(n))$, as even in the {\sc no}-case it is still possible to approximate the ground state energy up to polynomial precision. This is different from the {\sc no}-case of the low complexity low energy states problem, where there might exist states with energy lower than $a$, as long as these states are not preparable by a polynomial-time quantum circuit, making the $\mathsf{QGaLH}(k,\delta,\zeta)$ problem more restrictive than the low complexity low energy states problem.  In principle, this could be remedied by relaxing the requirement in $\mathsf{QGaLH}(k,\delta,\zeta)$ from having fidelity with the ground space to having fidelity with the space of states with sufficiently low energy in the {\sc yes}-case only. All our results that follow would still hold, and this new problem could then be seen as a generalisation of the low complexity low energy states problem.

\

\noindent In the upcoming section we will characterize the complexity of these guidable local Hamiltonian problems in various parameter regimes.

\section{$\QCMA$-completeness of guidable local Hamiltonian problems}
\label{sec:QCMA-hardness}
In this section we prove that Guidable Local Hamiltonian problems are $\QCMA$-hard in the inverse polynomial precision regime. Our construction is based on a combination of the ideas needed to show $\BQP$-hardness for the Guided Local Hamiltonian problem~\cite{gharibian2021dequantizing,Cade2022Complexity,cade2023improved} and the small penalty clock construction of~\cite{Deshpande2020}.\\

\noindent The first obstruction one encounters in adopting the ideas from the $\BQP$-hardness proofs of the Guided Local Hamiltonian problem to the guidable setting is the fact that $\QCMA$ verifiers, unlike $\BQP$, have a proof register. In $\QCMA$ the promises of completeness and soundness are always with respect to computational basis state witnesses. Hence, these might no longer hold when \emph{any} quantum state can be considered as witness: for example, in the {\sc no}-case there might be highly entangled states which are accepted with probability $\geq 2/3$. When considering a circuit problem, the verifier can easily work around this by simply measuring the witness and then proceeding to verify with the resulting computational basis state. However, there is also another trick, which retains the unitarity of the verification circuit -- and which we will denote as the `CNOT-trick' from now on -- to force the witness to be classical, first used in proving $\QCMA$-completeness of the \emph{Low complexity low energy states} problem in~\cite{Wocjan2003two}. Since the authors do not explain the precise mechanism behind the workings of this CNOT-trick, we provide a short proof of the lemma below.
\begin{lemma}[The `$\text{CNOT}$-trick']
Let $p(n):\mathbb{N} \rightarrow \mathbb{R}_{> 0}, q(n):\mathbb{N} \rightarrow \mathbb{R}_{> 0}$ be polynomials. Let $U_n$ be a quantum polynomial-time verifier circuit that acts on an $n$-qubit input register $A$, a $p(n)$-qubit witness register $B$ and a $q(n)$-qubit workspace register $C$, initialized to $\ket{0}^{\otimes q(n)}$. Denote $\Pi_0$ for the projection on the first qubit being zero. Let $Q$ be the Marriott-Watrous operator of the circuit, defined as
\begin{align}
    Q = \left(\bra{x} \otimes I_{w} \otimes \bra{0}^{\otimes q(n)} \right) U_n^\dagger \Pi_0 U_n \left(\ket{x} \otimes I_{w} \otimes \ket{0}^{\otimes q(n)} \right).
    \label{eq:MWop}
\end{align}
Consider yet another additional $p(n)$-qubit workspace $D$ initialized to $\ket{0}^{\otimes p(n)}$, on which $U_n$ does not act. Then by prepending $U_n$ with $p(n)$ CNOT-operations, each of which is controlled by a single qubit in register $B$ and targeting the corresponding qubit in register $D$, the corresponding Marriott-Watrous operator becomes diagonal in the computational basis.
\label{lem:CNOT_trick}
\end{lemma}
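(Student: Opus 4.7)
The plan is to prove this by a direct calculation of the matrix elements of the new Marriott--Watrous operator $Q'$ in the computational basis, exploiting the fact that the prepended CNOTs copy the computational-basis content of $B$ into the fresh ancilla register $D$, which $U_n$ does not touch.

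First I would fix notation. Write $U'_n = (U_n \otimes I_D)\cdot W$, where $W$ is the chain of $p(n)$ CNOTs from $B$ to $D$. As an operator on $B\otimes D$, $W$ can be written as $W = \sum_{y\in\{0,1\}^{p(n)}} \proj{y}_B \otimes X^y_D$, so in particular $W(\ket{y}_B \ket{0}^{\otimes p(n)}_D) = \ket{y}_B\ket{y}_D$ for every computational basis string $y$. This is the key structural observation: when $D$ starts in $\ket{0}^{\otimes p(n)}$, the effect of $W$ on a computational basis witness $\ket{y}$ is merely to append a second copy of $y$ on $D$.

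Second, I would compute $\bra{z}Q'\ket{y}$ for arbitrary computational basis strings $y,z\in\{0,1\}^{p(n)}$ of the witness register $B$. By the above,
\begin{align*}
U'_n\bigl(\ket{x}_A\ket{y}_B\ket{0}^{\otimes q(n)}_C\ket{0}^{\otimes p(n)}_D\bigr) &= \bigl(U_n\ket{x}_A\ket{y}_B\ket{0}^{\otimes q(n)}_C\bigr)\otimes\ket{y}_D,\\
U'_n\bigl(\ket{x}_A\ket{z}_B\ket{0}^{\otimes q(n)}_C\ket{0}^{\otimes p(n)}_D\bigr) &= \bigl(U_n\ket{x}_A\ket{z}_B\ket{0}^{\otimes q(n)}_C\bigr)\otimes\ket{z}_D.
\end{align*}
Since $\Pi_0$ acts only on a qubit of $A$ (and thus acts as the identity on $D$), sandwiching $\Pi_0$ between these two vectors factorises as an inner product on $D$ times the original Marriott--Watrous matrix element of $U_n$. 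Concretely,
\begin{align*}
\bra{z}Q'\ket{y} = \braket{z}{y}_D \cdot \bra{z}Q\ket{y} = \delta_{y,z}\,\bra{y}Q\ket{y}.
\end{align*}

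Third, I would conclude: the off-diagonal entries of $Q'$ in the computational basis vanish and the diagonal entries equal $p_y := \bra{y}Q\ket{y}$, the acceptance probability of $U_n$ on the classical witness $\ket{y}$. Hence $Q' = \sum_y p_y \proj{y}$ is diagonal in the computational basis, as claimed. There is no real obstacle here; the only point requiring care is to make explicit that $\Pi_0$ and $U_n$ act trivially on $D$, so that the $D$-register overlap $\braket{z}{y}_D$ can be pulled out and produces the Kronecker delta that kills all off-diagonal matrix elements.
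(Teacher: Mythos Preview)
Your proof is correct and takes essentially the same approach as the paper: compute the $(z,y)$ matrix element of the new Marriott--Watrous operator, use that the CNOTs send $\ket{y}_B\ket{0}_D$ to $\ket{y}_B\ket{y}_D$, and factor out the $D$-overlap $\braket{z}{y}_D=\delta_{z,y}$ since neither $U_n$ nor $\Pi_0$ acts on $D$. One cosmetic point: $\Pi_0$ need not act on a qubit of $A$ specifically (the output qubit could lie anywhere in the original registers), but your essential claim that it acts trivially on $D$ is all that is needed.
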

\begin{proof} Denote $U_\text{{CNOT}}$ for the $2 p(n)$ qubit operation that acts on the two registers $B$ and $D$, and that for each $l \in [p(n)]$ applies a CNOT controlled by qubit $l$ in register $B$ and targets qubit $l$ in register $D$. Consider the new verifier circuit $\tilde{U}_n = U_n U_\text{{CNOT}}$ that acts on the registers $A,B,C$ and $D$, with the corresponding Mariott-Watrous operator $\tilde{Q}$. Let $\ket{i}$ and $\ket{j}$ for $i,j \in [2^{p(n)}]$ be arbitrary computational basis states.  Then we have 
\begin{align*}
    \bra{i} \tilde{Q} \ket{j}  &=  \left(\bra{x} \otimes \bra{i} \otimes \bra{0}^{\otimes q(n)} \otimes \bra{0}^{\otimes p(n)}\right) U_\text{{CNOT}} U_n^\dagger \Pi_0 U_n U_\text{{CNOT}} \left(\ket{x} \otimes \ket{j} \otimes \ket{0}^{\otimes q(n)} \otimes \ket{0}^{\otimes p(n)}\right)\\
    &=  \left(\bra{x} \otimes \bra{i} \otimes \bra{0}^{\otimes q(n)} \otimes \bra{i} \right) U_n^\dagger \Pi_0 U_n \left(\ket{x} \otimes \ket{j} \otimes \ket{0}^{\otimes q(n)} \otimes \ket{j}\right)\\
    &=  \bra{i} \ket{j}  \left(\bra{x} \otimes \bra{i} \otimes \bra{0}^{\otimes q(n)} \right)  U_n^\dagger \Pi_0 U_n  \left(\ket{x} \otimes \ket{j} \otimes \ket{0}^{\otimes q(n)} \right)\\
    &= \delta_{i,j}\bra{i} Q \ket{j},
\end{align*}
where we used the fact that $V$ and $\Pi_0$ themselves do not act on register $D$. Hence, the operator $\tilde{Q}$ is diagonal in the computational basis, where its entries are taken from the diagonal of $Q$.
\end{proof}

The next obstruction one faces is that in the $\QCMA$ setting there might be multiple proofs which all have exponentially close, or even identical, acceptance probabilities. The analysis of the $\BQP$-hardness proof fails to translate directly to this setting, and another technique is needed. For this, we resort to the \textit{small-penalty clock construction} of~\cite{Deshpande2020}. The key idea is to use a Feynman-Kiteav circuit-to-Hamiltonian mapping modified with a tunable parameter $\epsilon$, which maps a quantum verification circuit $U_n$, consisting of $T$ gates from a universal gate set of at most $2$-local gates, taking input $x$ and a quantum proof $\ket{\psi} \in \left( \mathbb{C}^2\right)^{\otimes \poly(n)}$ to a $k$-local Hamiltonian of the form
\begin{align}
H^x_{FK} = H_\text{in} + H_\text{clock} + H_\text{prop} + \epsilon H_\text{out}.
\label{eq:H_FK}
\end{align}
The value of $k$ depends on the used construction. Intuitively, the first three terms check that the Hamiltonian is faithful to the computation and the last term shifts the energy level depending on the acceptance probability of the circuit. Just as in~\cite{Deshpande2020}, we will use Kempe and Regev's $3$-local construction. A precise description of the individual terms in~\eqref{eq:H_FK} can be found in~\cite{kempe20033local}, and will not be relevant for our work, except for the fact that the $H^x_\text{FK}$ has a polynomially bounded operator norm.\footnote{For the results in Appendix~\ref{app:GaQSAT} it is important that all terms are weighted projectors, which is easily verified.} The ground state of the first three terms $H_0 = H_\text{in} + H_\text{clock} + H_\text{prop}$ is given by the so-called \textit{history state}, which is given in~\cite{kempe20033local} by
\begin{align}
    \ket{\eta(\psi)} = \frac{1}{\sqrt{T+1}} \sum_{t = 0}^T U_t \dots U_1 \ket{\psi} \ket{0} \ket{\hat{t}},
    \label{eq:hist_state}
\end{align}
where $\ket{\psi}$ is the quantum proof and $ \hat{t}$ the unary representation of the time step of the computation given by
\begin{align*}
    \hat{t} = | \underbrace{1 \dots 1}_t \underbrace{0 \dots 0}_{T-t} \rangle.
\end{align*}
From the construction in~\cite{kempe20033local}, it is easily verified that if $U_n$ accepts ($x,\ket{\psi}$) with probability $p$ then we have that the corresponding history state has energy
\begin{align}
\bra{\eta(\psi)} H^x_{FK} \ket{\eta(\psi)} = \epsilon \frac{1-p}{T+1}.
\label{eq:energy_hs}
\end{align}
Though the core idea behind the small-penalty clock construction is identical to the one used in the $\BQP$-hardness proof -- rescaling the weight of the $H_\text{out}$ term as compared to the other terms in a Feynman-Kiteav circuit-to-Hamiltonian mapping -- the analysis differs: using tools from the Schrieffer-Wolff transformation one can find precise bounds on intervals in which the energies in the low-energy sector must lie, gaining fine control over the relation between the acceptance probabilities of the circuit and the low-energy sector of the Hamiltonian. The main lemma we use from~\cite{Deshpande2020} is adopted from the proof of Lemma 26 in their work. 

\begin{lemma}[Small-penalty clock construction, adopted from Lemma 26 in~\cite{Deshpande2020}] Let $U_n$ be a quantum verification circuit for inputs $x$, $|x|=n$, where $U_n$ consists of $T = \poly(n)$ gates from some universal gate-set using at most $2$-local gates. Denote $P(\psi)$ for the probability that $U_n$ accepts $(x,\ket{\psi})$, and let $H^x_\text{FK}$ be the corresponding $3$-local Hamiltonian from the circuit-to-Hamiltonian mapping in~\cite{kempe20033local} with a $\epsilon$-factor in front of $H_\text{out}$, as in Eq.~\eqref{eq:H_FK}. Then for all $\epsilon \leq c/T^3$ for some constant $c >0$, we have that low-energy subspace $\mathcal{S}_\epsilon$ of $H$, i.e.
\begin{align*}
    S_\epsilon = \text{span} \{ \ket{\Phi} : \bra{\Phi} H \ket{\Phi} \leq \epsilon \}
\end{align*}
has that its eigenvalues $\lambda_i$ satisfy
\begin{align}
    \lambda_i \in \left[\epsilon \frac{1-P(\psi_i)}{T+1} - \mO(T^3\epsilon^2), \epsilon \frac{1-P(\psi_i)}{T+1} + \mO(T^3\epsilon^2)\right],
    \label{eq:spclock}
\end{align}
where $\{ \ket{\psi_i}\}$ are the eigenstates of the Mariott-Watrous operator of the circuit $U_n$ given by Eq.~\eqref{eq:MWop}.
\label{lem:spcc}
\end{lemma}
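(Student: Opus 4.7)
The plan is to treat this as a perturbative statement and apply the Schrieffer–Wolff (SW) transformation, with $H_0 = H_\text{in} + H_\text{clock} + H_\text{prop}$ playing the role of the unperturbed Hamiltonian and $V = \epsilon H_\text{out}$ playing the role of the perturbation. First I would recall the standard analysis of $H_0$: its ground space has eigenvalue $0$ and is exactly spanned by history states $\ket{\eta(\psi)}$ with $\ket{\psi}$ ranging over arbitrary witnesses (Eq.~\eqref{eq:hist_state}), and Kempe–Regev/Kitaev-style bounds on the individual clock, input and propagation terms give a spectral gap $\gamma = \Omega(1/T^3)$ between this ground space and the rest of the spectrum. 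Because $\|H_\text{out}\| = \mO(1)$ (it is a single projector on the output qubit), we have $\|V\| \leq c'\epsilon$ for an absolute constant $c'$, so the hypothesis $\epsilon \leq c/T^3$ ensures $\|V\| \ll \gamma$ and the SW series converges.

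The second step is to identify the first-order effective Hamiltonian on the ground space of $H_0$. Let $\Pi_0$ be the projector onto $\mathrm{span}\{\ket{\eta(\psi)}\}$. A direct computation, using~\eqref{eq:energy_hs}, shows that
\begin{align*}
\bra{\eta(\psi)} V \ket{\eta(\psi')} = \epsilon\,\frac{\delta_{\psi,\psi'} - \bra{\psi'} Q \ket{\psi}}{T+1}
\end{align*}
where $Q$ is the Marriott–Watrous operator of~\eqref{eq:MWop}. Hence $\Pi_0 V \Pi_0$, expressed in the witness basis diagonalizing $Q$, takes the diagonal form $\mathrm{diag}\bigl(\epsilon (1-P(\psi_i))/(T+1)\bigr)$, giving exactly the leading term appearing in~\eqref{eq:spclock}.

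Third, I would bound the higher-order SW corrections. The standard second-order SW estimate bounds the difference between the true low-lying eigenvalues of $H_0 + V$ and those of the first-order effective Hamiltonian by $\mO(\|V\|^2/\gamma) = \mO(\epsilon^2 \cdot T^3)$, and the higher-order terms (each suppressed by an additional factor $\|V\|/\gamma \leq c'c$) are then a geometric series with the same overall bound. This provides exactly the $\mO(T^3\epsilon^2)$ window in the statement. Finally I would argue that the low-energy subspace $\cS_\epsilon$ is in fact contained in the SW image of the ground space of $H_0$: since every history-state eigenvalue lies in $[0,\epsilon]$ up to the above correction while every orthogonal eigenvalue is at least $\gamma - \mO(\epsilon) = \Omega(1/T^3)$, choosing the constant $c$ small enough separates the two sets and guarantees that only the dressed history-state eigenvalues fit inside $\cS_\epsilon$.

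The main obstacle is making the SW perturbation bounds quantitatively tight enough to yield the clean $\mO(T^3\epsilon^2)$ error; in particular one has to check that the norm of the off-diagonal block of $V$ (which couples history states to their orthogonal complement) is indeed $\mO(\epsilon)$ and not inflated by the large multiplicity of the ground space of $H_0$. This is precisely the technical content of Lemma~26 of~\cite{Deshpande2020}, and I would import that bound directly rather than re-deriving it, since the rest of the argument is a routine application of the SW formalism combined with the explicit first-order computation above.
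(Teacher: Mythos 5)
The paper gives no proof of this lemma at all -- it is imported verbatim from Lemma~26 of~\cite{Deshpande2020}, with the surrounding text noting only that the underlying technique is the Schrieffer--Wolff transformation applied to $H_0 = H_\text{in}+H_\text{clock}+H_\text{prop}$ perturbed by $\epsilon H_\text{out}$. Your sketch reconstructs exactly that argument (first-order effective Hamiltonian on the history-state subspace giving the $\epsilon(1-P(\psi_i))/(T+1)$ term, second-order SW corrections giving the $\mO(T^3\epsilon^2)$ window) and then defers the quantitative bound to the cited lemma, which is precisely what the paper does; this is correct and essentially the same approach.
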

Having a $\QCMA$-verifier with the CNOT-trick of Lemma~\ref{lem:CNOT_trick} ensures that in Lemma~\ref{lem:spcc} all $\ket{\psi_i}$ are computational basis states, as the CNOT-trick diagonalizes the Mariott-Watrous operator. The key idea is now to exploit the fact that $\QCMA$, unlike for what is known for $\QMA$, is a `well-behaved' class in the sense that is equal to $\UQCMA$ (under randomized reductions) and has perfect completeness (see Lemmas~\ref{lem:UQCMA} and Lemma~\ref{lem:QCMA1}). Though it is not clear whether both properties can be satisfied at the same time\footnote{An earlier version of this work falsely claimed this to be case. Though the construction ensured that there was a unique witness that was accepted with probability $1$ in the {\sc yes}-case, the gap to the other witnesses was not constant but exponential.}, it turns out that both can be used to obtain separate results (see Appendix~\ref{app:GaQSAT}).

\begin{figure}
    \centering
    \includegraphics{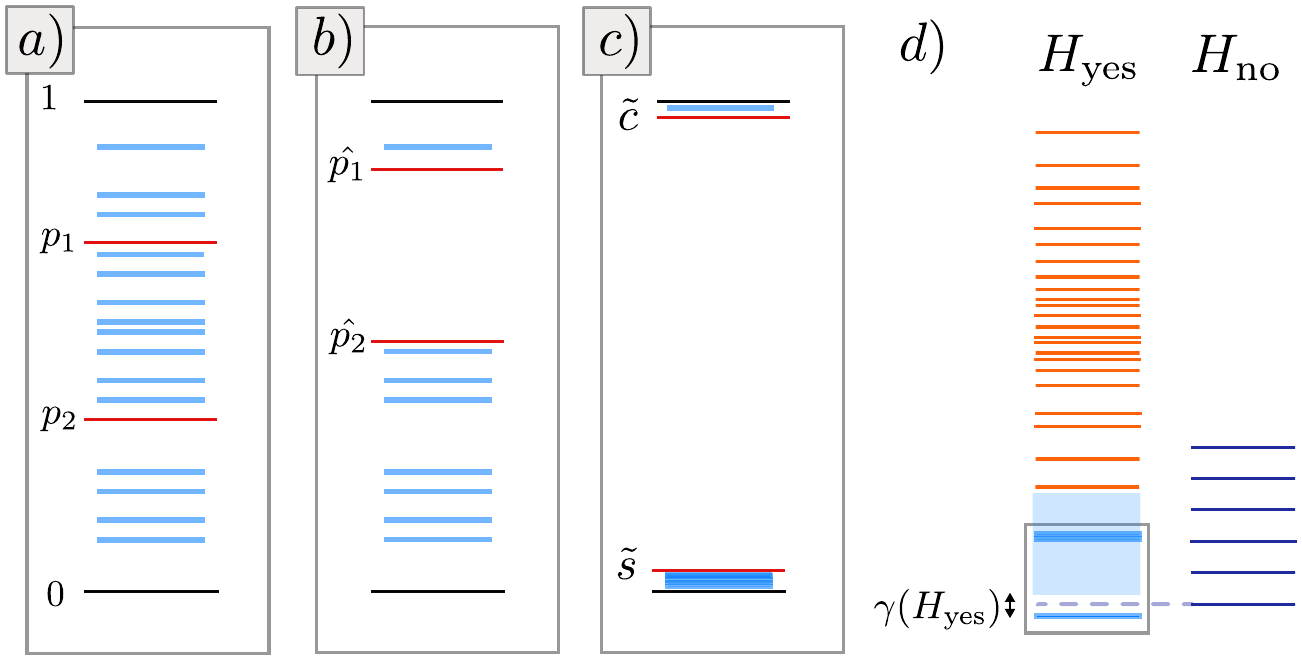}
    \caption{ Illustration of the key ideas to construct the desired witness distribution in the {\sc yes}-case in the first part of the reduction. The blue lines are witnesses, for which their position with respect to the $y$-axis represents the corresponding acceptance probabilities. The dark red lines represent the completeness and soundness parameters. $a) \rightarrow b)$ represents the randomized reduction from a $\QCMA$-problem to a $\UQCMA$ one, $b) \rightarrow c)$ the error reduction and finally $d) \rightarrow e)$ 
    The spectra of $H_\text{yes}$ and $H_\text{no}$ when $x\in A_\text{yes}$. $H_\text{yes}$ follows from the circuit-to-Hamiltonian mapping with the small penalty resulting in a Hamiltonian with fine control over its low-energy subspace, allowing one to ensure that its ground state is unique and can be made exponentially close to the history state corresponding to the unique accepting witness. The light blue shaded area represents the fact that we do not know the exact energy values corresponding to non-accepting witnesses, except for the fact that they are separated from $\lambda_0(H_\text{yes})$ by at least $\gamma(H_\text{yes}) = \Omega(1/\tilde{T}^6)$ for our choice of $\epsilon$. $H_\text{no}$ is chosen such that its ground state energy lies exactly in the gap of $H_\text{yes}$ in the {sc Yes}-case. Observe that if one was able to show that $\QMA \subseteq_r \UQMA $, one could use the same proof construction to show $\QMA$-hardness of inverse-poly-gapped Hamiltonians, for which we only yet know that they are $\QCMA$-hard. $\QCMA$-hardness for inverse-poly-gapped Hamiltonians was already shown in~\cite{Aharonov2022pursuitof} (in fact they even show it for 1D Hamiltonians), and rediscovered in this work.}
    \label{fig:reduction}
\end{figure}

\begin{theorem} $\mathsf{CGaLH}(k,\delta,\zeta)$ is $\mathsf{QCMA}$-hard under randomized reductions for $k\geq 2$, $\zeta \in (1/\poly(n),1-1/\poly(n))$ and $\delta = 1/\poly(n)$.
\label{thm:CGaLH}
\end{theorem}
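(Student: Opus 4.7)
The plan is to reduce an arbitrary $\QCMA$ promise problem to an instance of $\mathsf{CGaLH}(k,\delta,\zeta)$ by following the high-level recipe of the $\BQP$-hardness proofs for the Guided Local Hamiltonian problem~\cite{gharibian2021dequantizing,cade2023improved}, but adapting it to cope with the presence of a witness register. Starting from a $\QCMA$ verifier $U_n$ with gap $1/\poly(n)$, I first apply the \emph{CNOT-trick} (Lemma~\ref{lem:CNOT_trick}) to diagonalize the Marriott-Watrous operator in the computational basis, so that the eigenvectors $\{\ket{\psi_i}\}$ appearing in the small-penalty analysis are automatically classical strings. I then apply the randomized reduction $\QCMA \subseteq_r \UQCMA$ (Lemma~\ref{lem:UQCMA}), which in the yes-case yields a unique accepting classical witness $y^\ast$ accepted with probability $\geq \tilde p_1$ while every other string is accepted with probability $\leq \tilde p_2$, with $\tilde p_1 - \tilde p_2 \geq 1/\poly(n)$. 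Finally, using standard parallel-repetition-style error reduction (which preserves classicality of the witness and the gate set used by the verifier), I amplify these probabilities to $\tilde p_1 \geq 1 - 2^{-\Omega(n^c)}$ and $\tilde p_2 \leq 2^{-\Omega(n^c)}$, while keeping the overall circuit size $\tilde T = \poly(n)$.

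Next, I feed the resulting amplified circuit $\tilde U_n$ into the Kempe--Regev $3$-local circuit-to-Hamiltonian construction, rescaled by a small parameter $\epsilon \leq c/\tilde T^3$ as in Lemma~\ref{lem:spcc}. This produces a Hamiltonian $H_\text{yes}$ whose low-energy eigenvalues live in the narrow windows predicted by~\eqref{eq:spclock}. Because the CNOT-trick forces every $\ket{\psi_i}$ to be computational and because only the string $y^\ast$ is accepted with high probability in the yes-case, the ground state of $H_\text{yes}$ is unique, exponentially close to the history state $\ket{\eta(y^\ast)}$ of~\eqref{eq:hist_state}, and separated from the rest of the low-energy subspace by a gap $\gamma(H_\text{yes}) = \Omega(\epsilon/\tilde T) = \Omega(1/\tilde T^4)$, while in the no-case every eigenvalue in $\mathcal S_\epsilon$ is bounded below by $\epsilon(1-\tilde p_2)/(\tilde T+1)-\mO(\tilde T^3 \epsilon^2)$. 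This gives precise control over the $1/\poly(n)$-separation needed for $\delta$.

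The guiding state is built from the history state $\ket{\eta(y^\ast)} = \tfrac{1}{\sqrt{\tilde T+1}}\sum_t U_t\cdots U_1\ket{x,y^\ast,0}\ket{\hat t}$ by approximating each snapshot $U_t\cdots U_1\ket{x,y^\ast,0}$ by a \emph{polynomial-sized subset state} $\ket{S_t} = \tfrac{1}{\sqrt{|S_t|}}\sum_{s\in S_t}\ket{s}$, where $S_t$ is a polynomial-sized set of computational-basis strings with significant amplitude (rounded from the true amplitudes). The resulting $u = \tfrac{1}{\sqrt{\tilde T+1}}\sum_t \ket{S_t}\ket{\hat t}$ is classically evaluatable in the sense of Definition~\ref{def:cds}: its description consists of the lists $S_t$, and expectation values of $\mO(\log n)$-local observables can be evaluated in polynomial time by summing over the small supports $S_t, S_{t'}$. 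To hit any target $\zeta \in (1/\poly(n), 1-1/\poly(n))$, I apply the \emph{pre-idling trick} of~\cite{gharibian2021dequantizing}, padding the circuit with $N = \poly(n)$ identity steps so that the fraction of clock values on which the history state coincides with the trivial state $\ket{x,y^\ast,0}$ (which sits inside $u$ by construction) is tunable. To uniformly handle both yes- and no-cases, I use the block-encoding/offset trick: embed $H_\text{yes}$ as $H = H_\text{yes}\otimes\proj{0}_{\text{flag}} + \beta\, I\otimes\proj{1}_{\text{flag}}$ with a flag qubit and $\beta$ chosen in the gap of $H_\text{yes}$, so that the guiding state $u$ (tensored with $\ket{0}_\text{flag}$ in the yes-case and with $\ket{1}_\text{flag}$ in the no-case, selected by the promise) always has fidelity $\geq \zeta$ with the ground space.

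The construction so far is $3$-local; to reach $k=2$ I invoke the classical-evaluatibility-preserving simulation of Lemma~\ref{lem:2localsim}, which reduces locality via perturbative gadgets while only tensoring on additional subset-state-like ancillas, so that the image of $u$ under the encoding remains classically evaluatable. The main technical obstacle I anticipate is the simultaneous parameter bookkeeping across the whole chain: the small-penalty rescaling shrinks the promise gap by a factor $\epsilon/\tilde T$, the gadget simulation introduces further $1/\poly(n)$ losses in both the gap and the fidelity, and the subset-state truncation must be accurate enough that $u$ still has fidelity $\geq \zeta$ with $\ket{\eta(y^\ast)}$ after pre-idling, while the error-reduced soundness bound $\tilde p_2$ must be small enough to keep the non-accepting history states safely above the ground energy after all rescalings. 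Choosing $\epsilon$, the amount of error amplification, the pre-idling length $N$, the subset sizes $|S_t|$, and the gadget precision in the correct order so that conditions (i)--(iii) from Definition~\ref{def:GaLH} are simultaneously satisfied with $\delta = 1/\poly(n)$ and the prescribed range of $\zeta$ is the technical heart of the argument.
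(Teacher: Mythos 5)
Your overall architecture (CNOT-trick, reduction to $\UQCMA$, error reduction, the small-penalty clock construction, pre-idling, a block-encoding for the no-case, and the classical-evaluatability-preserving locality reduction) matches the paper's proof closely. However, there is a genuine gap in your construction of the guiding state. You propose to approximate \emph{each} snapshot $U_t\cdots U_1\ket{x,y^\ast,0}$ of the history state by a polynomial-sized subset state $\ket{S_t}$ supported on basis strings ``with significant amplitude.'' For a generic verification circuit there is no such set: after even a modest number of Hadamard-type gates an intermediate state can be spread over exponentially many basis strings with amplitudes of order $2^{-\Omega(n)}$ each, so every polynomial-sized subset state has exponentially small fidelity with that snapshot, and your $u=\tfrac{1}{\sqrt{\tilde T+1}}\sum_t\ket{S_t}\ket{\hatt}$ would have negligible overlap with $\ket{\eta(y^\ast)}$. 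The paper avoids this entirely: the guiding state is supported \emph{only} on the pre-idling clock values, namely $\ket{u^{\text{new}}_\text{yes}}=\tfrac{1}{\sqrt N}\sum_{t=0}^{N-1}\ket{x}\ket{y^\ast}\ket{0\dots0}\ket{t}_C\ket{0}_D$, where the computational register is still the untouched initial basis state; its squared overlap with the history state is exactly $N/(N+\tilde T+1)$, which pre-idling pushes to $1-1/\poly(n)$. No approximation of the non-trivial snapshots is needed, and the resulting state is a bona fide polynomial-sized subset state. Your invocation of pre-idling is the right idea, but it makes the snapshot-approximation step unnecessary rather than repairing it; as written, the snapshot approximation is the step that fails.

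A secondary issue: your no-case block is $\beta\, I\otimes\proj{1}_{\text{flag}}$, whose ground space in the {\sc no}-case is the \emph{entire} flag-$1$ sector, hence exponentially degenerate. Lemma~\ref{lem:2localsim} requires a non-degenerate, spectrally gapped ground state in order to carry out the $2$-local simulation, so the locality reduction could not be applied in the {\sc no}-case. The paper instead takes $H_{\text{\sc no}}=\sum_i\kb{1}{1}_i+bI$, which has the all-zeros string as its unique ground state (with unit spectral gap) and energy $b$ placed in the gap of $H_{\text{\sc yes}}$; this keeps the ground state unique and classically evaluatable in both cases and makes the gadget lemma applicable.
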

\begin{proof} 
Let us first state a `basic version' reduction, which uses basis states as guiding states which trivially satisfy the conditions of Definition~\ref{def:cds}, for which we prove completeness and soundness, and finally improve its parameters in terms of the achievable fidelity and locality domains. 

\paragraph{The reduction} Let $\langle U_n, p_1,p_2 \rangle$ be a $\QCMA$ promise problem. By using Lemma~\ref{lem:UQCMA}, there exists randomized reduction to a $\UQCMA$ promise problem $\langle \hat{U_n}, \hat{p}_1,\hat{p}_2 \rangle$, $\hat{p}_1-\hat{p}_2 \geq 1/q(n)$ for some polynomial $q$, which uses witnesses $y \in \{0,1\}^{p(n)}$ for some polynomial $p(n)$ and uses at most $T = \poly(n)$ gates. We will now apply the following modifications to the $\UQCMA$ instance:
\begin{enumerate}
\item First, we force the witness to be classical by adding another register to which we `copy' all bits of $y$ (through CNOT operations), before running the actual verification protocol -- i.e.~we use the CNOT trick of Lemma~\ref{lem:CNOT_trick}, which diagonalizes the corresponding Marriot-Watrous operator in the computational basis.
\item We apply \emph{error reduction} to the circuit.  This is done by applying the so-called ``Marriot and Watrous trick" for error reduction, described in~\cite{Mariott2005quantum}, which allows one to repeat the verification circuit several times whilst re-using the same witness. It is shown in~\cite{Mariott2005quantum}, Theorem 3.3, that for any quantum circuit $V_n$ using $T = \poly(n)$ $2$-qubit gates which decides on acceptance or rejectance of an input $x$, $|x|=n$, using a $p(n)$-qubit witness $\ket{\psi}$ for some polynomial $p$, satisfying completeness and soundness probabilities $c$, $s$ such that $c-s \geq 1/q(n)$ there is another circuit $\tilde{V}_n$ that again uses a $p(n)$-qubit witness $\ket{\psi}$ but has completeness and soundness $1-2^{-r}$ and $2^{-r}$, respectively, at the cost of using $\tilde{T} = \mO( q^2 r T)$ gates.
\end{enumerate}
Let the resulting protocol be denoted by 
$\langle \tilde{U}_n, \tilde{c},\tilde{s} \rangle$, where $\hat{U}_n$ has an input register $A$, a witness register $W$ and ancilla register $B$, uses $\tilde{T} = \mO(q^2 r T)$ gates and has completeness and soundness $C = 1-2^{-r}$ and $\hat{s} = 2^{-r}$. We denote $y^*$ for the (unique) witness with acceptance probability $\geq C$ in the {\sc yes}-case. We keep $r$ as a parameter to be tuned later in our construction. We will also write $P(y):=\text{Pr}[\hat{U}\text{ accepts } (y)]$. Now consider the $4$-local Hamiltonian
\begin{align}
     H^x = H_{\sc yes} \otimes \ket{0} \bra{0}_D +   H_{\sc no} \otimes \ket{1}\bra{1}_D,
     \label{eq:full_H}
\end{align}
where $H_{\sc yes} = H^x_\text{FK}$ is the Hamiltonian given by Eq.~\eqref{eq:H_FK} using the circuit $\hat{U}_n$ and parameter $\epsilon$ and $H_{\sc no}$ is given by 
\begin{align}
    H_{\sc no} =  \sum_{i=0}^{R-1}  \kb{1}{1}_i + b I, 
    \label{eq:H_no}
\end{align}
where $R$ is the total size of the registers $A$, $W$, $B$ and the clock register $C$, and $b >0$ is yet another tunable parameter. Note that $H_{\sc no}$ has a \emph{unique} ground state with energy $b$ given by the all zeros state, and the spectrum after that increases in steps of $1$ (and so it in particular has a \emph{spectral gap} of $1$). We also have that  $\norm{H_{\sc no}} = R +b = \poly(n)$.
As a guiding state in the {\sc yes}-case will use the following basis state 
\begin{align}
    \ket{u_\text{yes}} =  \ket{x}_{A} \ket{y^*}_{W} \ket{0\dots0}_{B} \ket{0}_C \ket{0}_D,
    \label{eq:guidstate_yes}
\end{align}
which satisfies $\left(\bra{\eta(y^{ *})}\bra{0}_D\right)\ket{u_\text{yes}} = 1/\sqrt{(T+1)} = \mO(1/\poly(N))$, with $\ket{\eta(y^{ *})}$ being the history state of witness $y^{ *}$ for Hamiltonian $H_{yes}$. In the {\sc no}-case, we will show that the state
\begin{align}
    \ket{u_\text{no}} =  \ket{0 \dots 0}_{A W B C} \ket{1}_D,
    \label{eq:guidstate_no}
\end{align}
will be in fact the ground state. We will now show that setting $b := \mO(1/\tilde{T}^7)$ and $\epsilon := \mO(1/\tilde{T}^5)$, our reduction achieves the desired result.

\paragraph{Completeness}
Let us first analyse the {\sc yes}-case. By Lemma~\ref{lem:spcc}, we have that the eigenvalue $\lambda(y)$ corresponding to the witness $y^*$ is upper bounded by
\begin{align*}
    \lambda (y^*) \leq \epsilon \frac{2^{-r}}{\tilde{T}+1} + \mO( \tilde{T}^3\epsilon^2).
\end{align*}
On the other hand, we have that for any $y \neq y^*$ 
\begin{align*}
    \lambda (y) \geq 
       \epsilon \frac{1-2^{-r}}{\tilde{T}+1} - \mO(\tilde{T}^3\epsilon^2) = \Omega\left(\frac{1}{\tilde{T}^6}\right)
\end{align*}
for our choice of $\epsilon$ and $r \geq 1$. Hence, for our choice of $\epsilon$ we must have that the ground state $\ket{\psi}$ of $H_\text{yes}$ is unique and has a spectral gap that can be bounded as
\begin{align}
    \gamma(H_{\sc yes}) \geq  \epsilon\frac{1-2^{-r+1}}{\tilde{T}+1} - \mO \left( \tilde{T}^3\epsilon^2 \right)= \Omega \left(\frac{1}{ \tilde{T}^6} \right),
    \label{eq:spectral_gap}
\end{align}
for some $r \geq \Omega(1)$ (we will pick $r$ to be much larger later). Let us consider the fidelity of the history state $\ket{\eta(y^*)}$ with the actual ground state. First, we have that the energy of $\ket{\eta(y^*)}$ is upper bounded by 
\begin{align*}
  \bra{\eta(y^*)}  H_{\sc yes }\ket{\eta(y^*)} \leq  \epsilon \frac{2^{-r}}{\tilde{T}+1} = \mO\left( \frac{2^{-r}}{ \tilde{T}^6 } \right),
\end{align*}
which follows directly from Eq.~\ref{eq:energy_hs} and the fact that $P(y^*) \geq 1-2^{-r}$.  We can write $\ket{\eta(y^*)}$ in the eigenbasis of $H_{\sc yes}$ as
\begin{align*}
    \ket{\eta(y^*)} = \alpha \ket{\psi} +\sqrt{1-\alpha^2} \ket{\psi^\perp},
\end{align*}
for some real number $\alpha \in [0,1]$, where $\ket{\psi}$ is the actual ground state of $H_{\sc yes}$ and $\ket{\psi^\perp}$ another state orthogonal to $\ket{\psi}$. We have that the energy of $\ket{\eta(y^*)}$ is upper bounded by 
\begin{align*}
  \bra{\eta(y^*)}  H_{\sc yes }\ket{\eta(y^*)} \leq  \epsilon \frac{2^{-r}}{\tilde{T}+1} = \mO\left( \frac{2^{-r}}{ \tilde{T}^6 } \right).
\end{align*}
On the other hand, the energy of $\ket{\eta(y^*)}$ is lower bounded by 
\begin{align*}
    \bra{\eta(y^*)}  H_{\sc yes }\ket{\eta(y^*)}=  \alpha^2 \bra{\psi} H_{\sc yes} \ket{\psi} + (1-\alpha^2) \langle{\psi^\perp}| H_{\sc yes} |\psi^\perp\rangle \geq \Omega \left(\frac{1-\alpha^2 }{ \tilde{T}^6} \right),
\end{align*}
using the fact that $H_{\sc yes}$ is PSD. Combining the upper and lower bounds, we find
\begin{align}
    \alpha^2 = |\bra{\eta(y^*)} \ket{\psi}|^2 \geq  1- \mO\left(2^{-r}\right),
    \label{eq:overlap_yescase}
\end{align}
which can be made $\geq 1-2^{-c \tilde{T}}$ for some $r = c \tilde{T} + \mO(1)$. Hence, we have that the fidelity of $\ket{u_\text{yes}}$ with the unique ground state of $H$ can be lower bounded as
\begin{align*}
    \abs{\bra{u_\text{yes}}\ket{\psi}}^2  &\geq  1 - \left(\sqrt{1 - |\bra{u_\text{yes}}(\ket{\eta(y^*)}\ket{0})|^2} + \sqrt{1 - |(\bra{\eta(y^*)}\bra{0})\ket{\psi}|^2}\right)^2\\
    &\geq 1- \left( \sqrt{1-\frac{1}{\tilde{T}+1}} + 2^{-c \tilde{T}/2}  \right)^2 \\
    &\geq \Omega \left(\frac{1}{\tilde{T}}\right),
\end{align*}
as desired.

\paragraph{Soundness}  We have that all witnesses $y$ get accepted by $\hat{U}$ with at most an exponentially small probability, and hence have that $H_{\sc yes} \succeq \Omega(1/\tilde{T}^6)$. By our choice $b$ we have therefore ensured that the ground state in the {\sc no}-case must be the state given by Eq.~\eqref{eq:guidstate_no}, which has energy $b = \Omega(1/\tilde{T}^7)$.  Hence, the promise gap between {\sc yes} and {\sc no} cases is $\delta = \Omega(1/\tilde{T}^7) = \Omega(1/ q^2 T^8) = 1/\poly(n)$.\\

\noindent We will now use similar tricks as in~\cite{cade2023improved} to improve the basic construction in terms of the fidelity range and locality.

\paragraph{Increasing the fidelity range}
Note that in the {\sc no}-case we already have that the ground state is a semi-classical poly-sized subset state. However, in the {\sc yes}-case, the ground state is a history state with only inverse polynomial fidelity with the state $\ket{u_\text{yes}}$. To work around this, we apply the same trick as in~\cite{cade2023improved}:\footnote{This trick is not original to~\cite{cade2023improved} and is well known, see e.g.~\cite{caha2017feynman}.} by pre-idling the circuit with a polynomial number of identities, of which we denote the total number by $N$, and guiding state to
\begin{align}
     \ket{u_\text{yes}^{\text{new}}} =  \frac{1}{\sqrt{N}}\sum_{t = 0}^{N-1}\ket{x}_{A} \ket{y^*}_{W} \ket{0\dots0}_{B} \ket{t}_C \ket{0}_D,
     \label{eq:new_guidstate_yes}
\end{align}
which satisfies
\begin{align*}
     | \bra{u_\text{yes}^{\text{new}}}(\ket{\eta(y^*)}\ket{0})|^2 &= \frac{N}{N+\tilde{T}+1}
\end{align*}
Since the history state itself has an exponentially close fidelity with the ground state by equation Eq.~\eqref{eq:overlap_yescase}, we have that the guiding state itself has an inverse polynomially close to $1$ fidelity with the unique ground state $\ket{\psi}$ of large enough $N$. For the new pre-idled circuit we have to replace in all our results throughout our construction $\tilde{T}$ by $\tilde{T} + N$ and we have that the fidelity becomes
\begin{align*}
    \abs{\bra{u^{\text{new}}_\text{yes}}\ket{\psi}}^2  &\geq  1 - \left(\sqrt{1 - |\bra{u^{\text{new}}_\text{yes}}(\ket{\eta(y^*)}\ket{0})|^2} + \sqrt{1 - |(\bra{\eta(y^*)}\bra{0})\ket{\psi}|^2}\right)^2\\
    &\geq 1- \left( 1-\frac{N}{N+\tilde{T}+1} + 2^{-c (\tilde{T}+N)/2}  \right)^2 \\
    &\geq 1 - \frac{1}{r(n)},
\end{align*}
for any positive polynomial $r$ for some choice of $N \in \poly(\tilde{T})$. 

\paragraph{Classical evaluatability and quantum preparability}
 We will check each condition of Definition~\ref{def:cds} for $\ket{u_\text{yes}^{\text{new}}}$. Condition (i) follows directly from the definition of polynomially-sized subset states. For condition (ii) we have that $\bra{u} O \ket{u} = \frac{1}{|S|} \sum_{i,j \in S} \bra{i}O \ket{j}$ can be computed efficiently for any $O$ for which we have query access to its matrix elements, since each $\bra{i}O \ket{j}$ corresponds to a query to the element $O_{i,j}$. Hence, when $|S| = \poly(n)$ this can be done efficiently for any $k$. Finally, for condition (iii), we have that such states can be trivially prepared using $\poly(n)$ quantum gates by using a series of controlled rotations on each qubit at a time. For instance, a very simple application of the algorithm from Grover-Rudolph~\cite{grover2002creating} would suffice.

\paragraph{Reducing the locality}
Finally, we show how to reduce the locality of the constructed Hamiltonian. Assume that we have already increased the fidelity as above, so that the number of gates in the circuit is now $M = \tilde{T} + N$. The spectral gap of $H$, denoted as $\gamma(H)$, can be lower bounded as
\begin{align*}
    \gamma(H) &\geq \min \left[\gamma(H|_{x \in A_{\sc yes}}),\gamma(H|_{x \in A_{\sc no}}) \right]\\
    &= \min \left[\delta,\Omega\left(\frac{1}{M^6}\right)- \Omega\left(\frac{1}{M^7}\right) \right]\\
    &= \delta = \Omega\left(\frac{1}{M^7}\right) = 1/\poly(n).
\end{align*}
Since the ground state is unique and inverse-polynomially gapped (in both the {\sc yes}- and {\sc no}-case), we can apply Lemma~\ref{lem:2localsim} to obtain a $2$-local Hamiltonian $H'$ which $(\Delta, \eta, \epsilon)$-simulates $H$, where we can take $\Delta = 1/\poly(n) \geq \gamma(H)$ sufficiently large, $\eta, \epsilon = 1/\poly(n) \leq \delta$ sufficiently small to ensure that the ground energy remains below some $a'$ in a {\sc yes} instance and above some $b'$ in a {\sc no} instance, such that $b' - a' = \delta = 1/\poly(n)$ and so that $\|\mathcal{E}_{\text{state}}(\ket{g}) - \ket{g'}\| \leq \eta + \mathcal{O}(\gamma^{-1}\epsilon)$, where $\ket{g}$ is $\ket{u_\text{yes}}$ in a {\sc yes} instance or $\ket{u_\text{no}}$ in a {\sc no} instance, $\ket{g'}$ is the ground state of $H'$, and $\mathcal{E}_{\text{state}}(\ket{g})$ is as in Lemma~\ref{lem:2localsim}. That is, $H'$ approximates $H$ in the low energy spectrum (below $\Delta$) in a way that the eigenvalues are perturbed by at most some small inverse-polynomial, and where the ground state can be approximated by the old ground state, plus some semi-classical state added as a tensor product. Finally, note that we can obtain $\norm{H} \leq 1$ by simply scaling down by some polynomial, as required by the problem definition. Note that this will not change any of the statements as all relevant parameters and coefficients are (inverse) polynomials in $n$, albeit of very large degree.

Finally, we can ensure $\norm{H} \leq$ by scaling $H$ with an inverse polynomially large factor.
\end{proof}
Since polynomially-sized subset states are also samplable (see~\cite{gharibian2021dequantizing}), our proof would also go through if one considers a variant of the guidable local Hamiltonian problem which considers samplable states as in Definition~\ref{def:sampaccess} instead. We have the following corollary.
\begin{corollary} $\mathsf{CGaLH}^*(k,\delta,\zeta)$ is $\mathsf{QCMA}$-complete, where the hardness is under randomized reductions, for $k\geq 2$, $\zeta \in (1/\poly(n),1-1/\poly(n))$ and $\delta = 1/\poly(n)$.
\label{cor:CGaLHstar}
\end{corollary}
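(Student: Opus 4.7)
The plan is to argue completeness in two parts, with most of the work already discharged by Theorem~\ref{thm:CGaLH}. For $\QCMA$-hardness under randomized reductions, I would simply revisit the guiding states constructed in that proof. In both the {\sc yes}-case (the pre-idled history guiding state $\ket{u_\text{yes}^{\text{new}}}$) and the {\sc no}-case (the computational basis state $\ket{u_\text{no}}$), the guiding states are polynomial-sized subset states. These states are already shown to satisfy condition (iii) of Definition~\ref{def:cds}: they can be prepared in $\poly(n)$ two-qubit gates by, for instance, the Grover--Rudolph procedure. The subsequent application of Lemma~\ref{lem:2localsim} only appends semi-classical states as tensor factors, which are trivially quantumly preparable, so the encoded ground state of the $2$-local Hamiltonian remains quantumly preparable. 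Therefore the very same reduction of Theorem~\ref{thm:CGaLH} produces an instance of $\mathsf{CGaLH}^*(k,\delta,\zeta)$, giving $\QCMA$-hardness.

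For $\QCMA$-containment, I would construct a $\QCMA$-verifier as follows. The classical witness is a description of a quantum circuit $V$ of $\poly(n)$ two-qubit gates promised to prepare $\ket{u} = V\ket{0}$ with $\|\Pi_\text{gs}\ket{u}\|^2 \geq \zeta$. Arthur prepares $\ket{u}$, applies quantum phase estimation for $H$ to precision better than $\delta/3$, and accepts if the returned energy estimate is at most $a + \delta/3$. In a {\sc yes}-instance, the guiding state has fidelity at least $\zeta \geq 1/\poly(n)$ with the ground space, so a single run of QPE outputs an estimate at most $a + \delta/3$ with probability at least $\zeta - o(1)$. In a {\sc no}-instance, every eigenvalue is at least $b$, so QPE outputs an estimate at least $b - \delta/3 > a + \delta/3$ with overwhelming probability. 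Since Arthur must work on the same classical witness in each run, he simply runs this protocol $N = \Theta(1/\zeta) = \poly(n)$ times in sequence and accepts if at least one run does, which amplifies the completeness-soundness gap to a constant by a standard Chernoff-style argument.

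The one technical point worth checking is that all numerical parameters remain polynomial: since $\|H\| \leq 1$ and $H$ is a sum of $\poly(n)$ local terms, Hamiltonian simulation for the QPE unitary is efficient, and QPE to $\delta/3 = 1/\poly(n)$ precision needs only $\mathcal{O}(\log(1/\delta)) = \mathcal{O}(\log n)$ ancilla qubits together with $\poly(n)$ controlled applications of the simulation unitary. No obstacle beyond these routine verifications arises, so the main conceptual content of the corollary is the observation that the subset-state witnesses of Theorem~\ref{thm:CGaLH} are already quantumly preparable, together with the standard ``prepare-then-phase-estimate'' $\QCMA$-protocol that leverages the extra preparability promise.
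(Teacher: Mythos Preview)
Your proposal is correct and essentially matches the paper's proof: hardness is inherited from Theorem~\ref{thm:CGaLH} because the polynomial-sized subset-state witnesses constructed there are already shown to be quantumly preparable (and Lemma~\ref{lem:2localsim} preserves this), while containment is the standard prepare-then-QPE protocol. The only cosmetic difference is that the paper takes $\text{desc}(u)$ as the classical witness and invokes the algorithm $\mathcal{A}_V$ from condition~(iii) of Definition~\ref{def:cds} to obtain the preparation circuit, whereas you hand Arthur the circuit description directly; both are polynomial-size and lead to the same verifier.
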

\begin{proof}
Hardness follows from Theorem~\ref{thm:CGaLH}. Containment follows trivially from the fact that the {\sc yes}-and {\sc no}-cases can be distinguished by using $\text{desc}(u)$ as a witness, and a verifier circuit that prepares the quantum state $\ket{u}$ (which can be done efficiently possible because of the extra condition on $u$) followed by quantum phase estimation to an accuracy strictly smaller than the promise gap $\delta$, see Theorem 2 in~\cite{Cade2022Complexity}.
\end{proof}
Now that we have established $\QCMA$-completeness for $\mathsf{CGaLH}^*$, we get $\QCMA$-completeness for $\mathsf{QGaLH}$ \emph{for free} for the same range of parameter settings, as the latter is a generalization of the former (containing $\mathsf{CGaLH}^*$ as a special case), and containment holds by the same argument as used in the proof of Corollary~\ref{cor:CGaLHstar}. However, with just a little bit of more work we can see that $\QCMA$-hardness for $\mathsf{QGaLH}$ actually persists for a larger range of parameter settings. For this, we will use the following lemma by~\cite{Lin2020nearoptimalground}.
\begin{lemma}[Ground state preparation with a-priori ground energy bound~\cite{Lin2020nearoptimalground}.] Suppose we have a Hamiltonian $H = \sum_k \lambda_k(H) \ket{\phi_q}\bra{\phi_q}$, where $\lambda_k(H) \leq \lambda_{k+1}(H)$, given through its $(\alpha,m,0)$-block-encoding $U_H$. That is, we have access to a $(n+m)$-qubit unitary operator $U$ such that that
\begin{align*}
    \alpha (\bra{0}^{\otimes m} \otimes I) U (\ket{0}^{\otimes m}  \otimes I)=  H.
\end{align*} 
Also suppose we have an initial state $\ket{\psi}$ prepared by some circuit $U_\text{prep}$ with the promise that $|\bra{\phi_0}\ket{\psi}|^2 \geq \Gamma$, and that we have the following bounds on the ground energy and spectral gap: $\lambda_0(H) \leq \mu - \Delta/2 < \mu + \Delta/2 \leq \lambda_1(H)$, for some $\mu,\Delta \in \mathbb{R}$. Then the ground state can be prepared to fidelity $1-\varepsilon$ with probability $1-\nu$ with the following costs:
\begin{enumerate}
    \item Query complexity: $\mO\left(\frac{\alpha}{\Gamma \Delta} \left(\log\left(\frac{\alpha}{\Delta}\right)\log\left(\frac{1}{\Gamma}\right)\log\left(\frac{\log(\alpha/\Delta)}{\nu}\right) + \log\left(\frac{1}{\varepsilon}\right)\right) \right)$ queries to $U_H$ and $\mO\left( \frac{1}{\Gamma} \log\left( \frac{\alpha}{\Delta}\right) \log\left( \frac{\alpha/\Delta}{\nu} \right) \right)$ queries to $U_\text{prep}$,
    \item Number of qubits: $\mO\left(n+m+\log(\frac{1}{\Gamma})\right)$,
    \item Other one- and two-qubit gates: $\mO\left(\frac{m \alpha}{\Gamma \Delta} \left(\log\left(\frac{\alpha}{\Delta}\right)\log\left(\frac{1}{\Gamma}\right)\log\left(\frac{\log(\alpha/\Delta)}{\nu}\right) + \log\left(\frac{1}{\varepsilon}\right)\right) \right)$
\end{enumerate}
\label{lem:gs_prep}
\end{lemma}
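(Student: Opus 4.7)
The plan is to prove this lemma by combining a polynomial approximation to a step function with the quantum singular value transformation (QSVT) framework, and then applying amplitude amplification to boost the success probability of projecting $\ket{\psi}$ onto the ground state. The key observation is that the promised bounds $\lambda_0(H) \leq \mu - \Delta/2 < \mu + \Delta/2 \leq \lambda_1(H)$ mean that a shifted sign function $\tfrac{1}{2}(1 + \mathrm{sign}(\mu - \lambda))$ serves as an exact projector onto the ground state subspace, so if we can implement a sufficiently accurate polynomial approximation of this function applied to $H$, we obtain an approximate ground state projector.

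First, I would invoke the polynomial approximation result of Low and Chuang (or equivalently, Martyn--Rossi--Tan--Chuang), which gives a polynomial $P(x)$ of degree $d = \mathcal{O}\!\left(\tfrac{\alpha}{\Delta}\log(1/\varepsilon')\right)$ that $\varepsilon'$-approximates the shifted sign function centered at $\mu/\alpha$, with approximation guarantee valid outside the gap window $[(\mu-\Delta/2)/\alpha, (\mu+\Delta/2)/\alpha]$. Since $U_H$ is an $(\alpha, m, 0)$-block-encoding of $H$, I would then use QSVT (with an appropriate real-polynomial decomposition into even and odd parts) to construct a $(1, m+\mathcal{O}(1), \varepsilon')$-block-encoding of $P(H/\alpha)$, using $\mathcal{O}(d)$ queries to $U_H$ and $U_H^\dagger$ and $\mathcal{O}(md)$ additional one- and two-qubit gates for the controlled reflections and phase rotations required by the QSVT circuit.

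Next, decompose the initial state as $\ket{\psi} = \alpha_0\ket{\phi_0} + \sum_{k \geq 1}\alpha_k\ket{\phi_k}$, where by assumption $|\alpha_0|^2 \geq \Gamma$. Applying the approximate projector to $\ket{\psi}\ket{0}^{m+\mathcal{O}(1)}$ yields a state whose $\ket{0}^{m+\mathcal{O}(1)}$-flagged component is within $\varepsilon'$ (in operator-norm sense on the input state) of $\alpha_0 \ket{\phi_0}$, i.e.\ a state with amplitude at least $\sqrt{\Gamma} - \mathcal{O}(\varepsilon')$ on the ground state when the ancilla registers read all zeros. To boost this to fidelity $1 - \varepsilon$ with the ground state with failure probability at most $\nu$, I would use fixed-point (oblivious) amplitude amplification around the marked subspace defined by the flag ancillas being $\ket{0}$; this requires $\mathcal{O}(1/\sqrt{\Gamma}\cdot \log(1/\nu))$ rounds of reflection, each invoking the approximate projector and a fresh preparation $U_\text{prep}$ of $\ket{\psi}$. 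Setting $\varepsilon'$ small enough (polylogarithmically in $\alpha/\Delta$, $1/\Gamma$, $1/\nu$, $1/\varepsilon$) ensures the cumulative error stays below $\varepsilon$, and multiplying the per-round cost by the number of rounds reproduces the stated query, gate, and qubit counts (modulo the $1/\sqrt{\Gamma}$ versus $1/\Gamma$ discrepancy, which is absorbed by the freedom to use a simpler repeat-until-success strategy on the filter itself in lieu of amplitude amplification).

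The main obstacle is a careful accounting of error propagation across three layered approximations: the polynomial approximation to the step function (error $\varepsilon'$), the QSVT implementation of $P(H/\alpha)$ (introducing additional block-encoding error that must be controlled), and the amplitude amplification step (which amplifies both the good and the error components). Keeping these errors compatible with the final target $\varepsilon$ while ensuring the logarithmic factors in the query complexity come out correctly requires choosing $\varepsilon'$ roughly inverse-polynomially in all of $\alpha/\Delta$, $1/\Gamma$, and $1/\nu$, and verifying that the fixed-point amplification's robustness to imperfect reflections tolerates the $\varepsilon'$-perturbation of the projector.
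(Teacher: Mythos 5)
This lemma is imported verbatim from Lin and Tong~\cite{Lin2020nearoptimalground} and the paper offers no proof of its own, so the only meaningful comparison is with the cited source: your outline (eigenstate filtering via a polynomial approximation to the shifted sign function on the promised spectral window, implemented through QSVT on the $(\alpha,m,0)$-block-encoding, followed by amplitude amplification or a repeat-until-success loop that accounts for the $1/\Gamma$ versus $1/\sqrt{\Gamma}$ overlap factor) is exactly the argument given there. The plan is sound as stated; a complete write-up would only require the quantitative error-propagation bookkeeping you already identify, choosing the filtering accuracy $\varepsilon'$ polynomially small in $\sqrt{\Gamma}$ and $\varepsilon$ so that the polynomial degree reproduces the stated $\log(1/\Gamma)$ and $\log(1/\varepsilon)$ factors.
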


\begin{theorem}
$\mathsf{QGaLH}(k,\delta)$ is $\mathsf{QCMA}$-complete for $k\geq 2$, $\delta = 1/\poly(n)$ and $\zeta \in (1/\poly(n),1-1/\exp(n)$.
\label{thm:QGaLH}
\end{theorem}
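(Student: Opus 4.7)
My plan is to split into the containment and hardness directions. For $\QCMA$-containment, I will use the standard phase-estimation-based verifier (cf.\ Theorem~2 of~\cite{Cade2022Complexity}): the classical witness is a description of the unitary $V$ that prepares $\ket{\phi}$, and the verifier prepares $\ket{\phi}$ and runs quantum phase estimation on $H$ to precision $\delta/4$, accepting iff the reported energy lies below $(a+b)/2$. Since the promise guarantees $\|\Pi_{\text{gs}}\ket{\phi}\|^2 \geq \zeta \geq 1/\poly(n)$ in the {\sc yes}-case, the acceptance probability is at least $\zeta - 2^{-\poly(n)} = 1/\poly(n)$, while in the {\sc no}-case every eigenvalue exceeds $b$ so acceptance is at most $2^{-\poly(n)}$. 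The resulting inverse-polynomial completeness--soundness gap places the problem in $\QCMA$.

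For hardness, I will first observe that every classically evaluatable and quantumly preparable state is in particular quantumly preparable, so $\mathsf{CGaLH}^*(k,\delta,\zeta) \subseteq \mathsf{QGaLH}(k,\delta,\zeta)$ as promise problems, and Corollary~\ref{cor:CGaLHstar} immediately gives $\QCMA$-hardness of $\mathsf{QGaLH}$ for $\zeta \in (1/\poly(n),1-1/\poly(n))$. The new work is to push the upper endpoint up to $1 - 1/\exp(n)$. I will do this by starting from the Hamiltonian $H$ produced by the proof of Theorem~\ref{thm:CGaLH}: it is $2$-local on $\poly(n)$ qubits with $\|H\| \leq 1$, has an inverse-polynomial spectral gap $\gamma(H) \geq 1/\poly(n)$ in both the {\sc yes}- and {\sc no}-cases, and in the {\sc yes}-case admits a poly-sized subset state $\ket{u^{\text{new}}_{\text{yes}}}$ prepared by a circuit $U_{\text{prep}}$ of $\poly(n)$ gates, with $\|\Pi_{\text{gs}}\ket{u^{\text{new}}_{\text{yes}}}\|^2 \geq \Gamma = 1/\poly(n)$.

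The key step is then to apply Lemma~\ref{lem:gs_prep} as a fidelity-amplification postprocessing on $\ket{u^{\text{new}}_{\text{yes}}}$. Because $H$ is a polynomial sum of low-weight Paulis, it admits an efficient $(\alpha,m,0)$-block-encoding with $\alpha = \poly(n)$ and $m = \mO(\log n)$; combining this block-encoding with $U_{\text{prep}}$, $\Gamma = 1/\poly(n)$, $\Delta = \gamma(H)$, and target parameters $\varepsilon, \nu = 1/\exp(n)$, Lemma~\ref{lem:gs_prep} yields a polynomial-sized unitary $V$ that, with probability at least $1-\nu$, outputs a state $\ket{\phi}$ satisfying $\|\Pi_{\text{gs}}\ket{\phi}\|^2 \geq 1 - \varepsilon = 1 - 1/\exp(n)$. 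This $V$ directly serves as the $\mathsf{QGaLH}$ witness at the desired fidelity. The Hamiltonian itself is unchanged by the postprocessing, so the {\sc no}-promise carries over unchanged from the $\mathsf{CGaLH}^*$ construction.

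The main obstacle I expect is checking that the Lin--Tong resource count stays polynomial all the way out to $\varepsilon = 1/\exp(n)$. The gate count in Lemma~\ref{lem:gs_prep} scales as $\tilde{\mO}\bigl(m\alpha/(\Gamma\Delta)\bigr)\cdot \log(1/\varepsilon)$, and since $\alpha, 1/\Gamma, 1/\Delta, m \in \poly(n)$ and $\log(1/\varepsilon) = \poly(n)$, the total gate count is indeed $\poly(n)$. A secondary subtlety is the interplay between Lin--Tong's success-probability parameter $\nu$ and the ``produces $\ket{\phi}$ with high probability'' clause of Definition~\ref{def:GaLH}: taking $\nu = 2^{-\poly(n)}$ keeps the gate count polynomial while satisfying that clause comfortably, and the uniqueness of the ground state (established in the proof of Theorem~\ref{thm:CGaLH}) guarantees that the post-selected output has the stated fidelity with the ground space.
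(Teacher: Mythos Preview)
Your proposal is correct and follows essentially the same route as the paper: containment via phase estimation with the circuit description as witness, hardness for the narrower fidelity range inherited from $\mathsf{CGaLH}^*$, and extension to $\zeta$ up to $1-1/\exp(n)$ by post-processing the inverse-polynomially-overlapping guiding state through Lemma~\ref{lem:gs_prep} (Lin--Tong), using that $\log(1/\varepsilon)$ enters only linearly in the gate count so $\varepsilon = 2^{-\poly(n)}$ keeps everything polynomial. The only cosmetic difference is that the paper takes $\alpha = \mO(1)$ (which is available since $\|H\|\le 1$) rather than your $\alpha = \poly(n)$, but your looser choice still yields a polynomial gate count and does not affect the argument.
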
 
\begin{proof}
This follows immediately from the proof of Theorem~\ref{thm:CGaLH}, where the used history states themselves can be prepared by a quantum circuit of at most $\poly(n)$ gates. In the locality reduction, we can only ensure that we remain inverse polynomially close to the original ground state. However, due to Lemma~\ref{lem:gs_prep}, this fidelity is enough to guarantee the existence of a quantum circuit, still polynomial in $n$, that produces a new quantum state which is inverse exponentially close to the actual ground state. Let $U_\text{prep}$ be the quantum circuit that creates $\ket{u} = U_\text{prep} \ket{0}$. Let us assume the worst case setting in our construction of Theorem~\ref{thm:CGaLH}, where we have that $\Gamma  = \zeta = 1/\poly(n)$, $\alpha = \mO(1)$, $\Delta = \gamma(H) = \Omega(1/M^7)$, and $\mu = \lambda_0(H_{\sc no}) = \Omega(1/M^7)$. Let $m=\mO(1)$. Since the inverse fidelity $\varepsilon$ appears only logarithmically in Lemma~\ref{lem:gs_prep}, we can prepare a state that is exponentially-close in fidelity with (exponentially) high probability $1-\nu$ in 
\begin{align*}
     \mO\left(\poly(n) M^7 \left(\log\left(\poly(n) M^7\right)\log\left(\poly(n)\right)\log\left(\frac{\log(M^7)}{\nu}\right) + \log\left(\exp(n)\right)\right) \right) = \tilde{\mO}\left(\poly(n)\right)
\end{align*}
queries to $U_H$ (the block encoding of $H$) and single qubit gates, as well as 
\begin{align*}
    \mO\left( \poly(n) \log\left( M^7 \right) \log\left( \frac{M^7}{\nu} \right) \right) = \tilde{\mO}\left(\poly(n)\right)
\end{align*}
queries to $U_\text{prep}$. 
\end{proof}
There are a few more observations one can make about our proof of Theorem~\ref{thm:CGaLH}. First, if one instead uses the reduction to $\QCMA_1$ instead of the randomized reduction to $\UQCMA$,  the ground state energy is exactly zero in the {\sc yes}-case and inverse polynomially large in the {\sc no}-case. This way we are no longer able to apply the locality reductions (since we need a unique ground state), but this suggests that it might be possible to modify the construction such that it actually shows a result for a quantum satisfiability version for $\QCMA$ (which implies hardness for the more general local Hamiltonian version). In Appendix~\ref{app:GaQSAT} we show that this is indeed possible. 

\section{Classical containment via spectral amplification}
\label{sec:clas_cont}
To complement our quantum hardness results with classical containment results in certain parameter regimes, we will use a technique based on the dequantization of the quantum singular value transformation as described in~\cite{gharibian2021dequantizing}. Our algorithm differs conceptually from the one proposed in~\cite{gharibian2021dequantizing} in the following ways:
\begin{itemize}
    \item We consider a different (and less restrictive) input model: whereas~\cite{gharibian2021dequantizing} considers access to states of the form of Definition~\ref{def:sampaccess}, we use states that adhere to the requirements as in Definition~\ref{def:cds}.
    \item For our purposes, we only consider local Hamiltonians (which are Hermitian sparse matrices) and not arbitrary sparse complex matrices. This simplifies the algorithm in the sense that we can view functions on these Hamiltonians as acting on the \emph{spectrum} instead of the \emph{singular values}.
    \item We also simplify the algorithm by tailoring it exactly to ground state \emph{decision} instead of \emph{estimation} problems, which allows us to use a different function acting on $H$ as compared to~\cite{gharibian2021dequantizing} to solve the relevant problems.
\end{itemize}
Let us introduce and prove bounds on the complexity of the \emph{spectral amplification} algorithm in the next subsection. In the subsequent subsection, we will utilize this algorithm to put classical complexity upper bounds on $\mathsf{CGaLH}(k,\delta,\zeta)$ in specific parameter regimes. 
\subsection{Spectral amplification}
\label{ssec:spec_amp}
Let $H = \sum_{i=0}^{m-1} H_i$ be a Hamiltonian on $n$ qubits which is a sum of $k$-local terms $H_i$, which satisfies $\norm{H} \leq 1$. Since $H$ is Hermitian, we can write $H$ as 
\begin{align*}
    H = \sum_{i =0}^{2^n-1} \lambda_i \ket{\psi_i}\bra{\psi_i},
\end{align*}
where $\lambda_i \in [-1,1]$ (by assumption on the operator norm) denotes the $i$'th eigenvalue of $H$ with corresponding eigenvector $\ket{\psi_i}$. Consider a polynomial $P \in \mathbb{R}[x]$ of degree $d$, and write
\begin{align*}
    P(x) = a_0 + a_1 x + \dots + a_{d} x^{d}.
\end{align*}
The \emph{polynomial spectral amplification} of $H$ for $P$ is then defined as 
\begin{align*}
    P(H) &=  a_0 I +a_1 H +  \dots + a_{d} H^{d}\\
    &=  a_0 I + a_1 \sum_{i =0}^{2^n-1}  \lambda_i \ket{\psi_i}\bra{\psi_i} + \dots + a_d \sum_{i =0}^{2^n-1}  \lambda_i^{d} \ket{\psi_i}\bra{\psi_i}\\
    &= \sum_{i =0}^{2^n-1}  P(\lambda_i) \ket{\psi_i}\bra{\psi_i}.
\end{align*}
Now for $\alpha \in [-1,1]$, denote
\begin{align}
    \Pi_{\alpha} = \sum_{\{i : \lambda_i \leq \alpha\}} \ket{\psi_i}\bra{\psi_i}
    \label{eq:low_energy_projector}
\end{align}
for the projection on all eigenstates of $H$ which have eigenvalues at most $\alpha$, which we will call a \emph{low-energy projector} of $H$. Note that for any $\alpha \geq \lambda_0$, we must have that $\Pi_\text{gs} \Pi_\alpha = \Pi_\alpha \Pi_\text{gs} = \Pi_\text{gs}$. We can utilize such a projector to solve $\mathsf{CGaLH}(k,\delta,\zeta)$, simply by computing $\norm{{\Pi}_{\alpha} \ket{u}}$ for $\alpha = a$ given a classically evaluatable state $u$. To see why this works, note that in the {\sc yes}-case, for the witness $\text{desc}(u)$ we have that $\norm{{\Pi}_{a} \ket{u}} \geq \norm{{\Pi}_\text{gs} \ket{u}} \geq \sqrt{\zeta}$ and in the {\sc no}-case we have that $\norm{{\Pi}_{a} \ket{v}} = 0$ for all states, which means that the two cases are separated by $\sqrt{\zeta}$. However, it is unlikely that an efficient description exists of $\Pi_a$, and even if it did, it would not be $k$-local and therefore $\norm{{\Pi}_{a} \ket{u}}$ would not even be necessarily efficiently computable.  

The idea is now to approximate this low-energy projector $\Pi_\alpha$ by a polynomial in $H$. To see this, note that ${\Pi}_{\alpha}$ can be written exactly as
\begin{align*}
    \Pi_{\alpha} = \frac{1}{2} \left(1-\text{sgn} (H - \alpha I)\right),
\end{align*}
where $\text{sgn}(x)$ is the sign function, which for our purposes is defined on $\mathbb{R}:\rightarrow \mathbb{R}$ as 
\begin{align*}
\text{sgn}(x) = \begin{cases} 1 &\text{ if } x > 0,\\
-1 &\text{ if } x  \leq 0.
\end{cases}    
\end{align*}
From~\cite{LowHamiltonian2017} we can then use the polynomial approximation of the sign function, which can subsequently be shifted to obtain the desired approximate low-energy projector $\tilde{\Pi}_{a}$.
\begin{lemma}[Polynomial approximation to the sign function, from~\cite{LowHamiltonian2017}]
For all $\delta' > 0, \epsilon' \in (0,1/2)$ there exists an efficiently computable odd polynomial $P \in \mathbb{R} [x]$ of degree $d = \mO\left(\frac{\log(1/\epsilon')}{\delta'}\right)$, such that
\begin{itemize}
    \item for all $x \in [-2,2] : |P(x)| \leq 1$, and
    \item for all $x \in [-2,2]\setminus(-\delta', \delta'): |P(x)-\text{sgn}(x)| \leq \epsilon'$.
\end{itemize}
\label{lem:poly_approx_sgn}
\end{lemma}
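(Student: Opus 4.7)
The plan is to follow the standard two-step strategy of smoothing followed by polynomial approximation, which is the template used by Low and Chuang in the cited reference. First I would replace the discontinuous $\mathrm{sgn}$ function by a smooth odd surrogate that transitions rapidly across the origin, and then polynomial-approximate that surrogate using Chebyshev expansion on $[-2,2]$. The natural surrogate is the error function $\mathrm{erf}(kx)$: it is odd, entire, bounded by $1$ on $\mathbb{R}$, and satisfies the Gaussian tail bound $|\mathrm{erf}(t) - \mathrm{sgn}(t)| \leq e^{-t^2}/(t\sqrt{\pi})$ for $t \geq 1$.

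The key steps, in order: (i) Choose $k = \Theta(\sqrt{\log(1/\epsilon')}/\delta')$ so that $|\mathrm{erf}(kx) - \mathrm{sgn}(x)| \leq \epsilon'/2$ for every $x \in [-2,2] \setminus (-\delta', \delta')$. (ii) Expand $\mathrm{erf}(kx)$ in a Chebyshev series on $[-2,2]$; because $\mathrm{erf}(kx)$ is entire of order $2$, its Chebyshev coefficients $c_n$ decay super-exponentially once $n$ exceeds a threshold proportional to $k$, with an explicit tail bound obtainable from the Jacobi-Anger-style integral representation of the coefficients. (iii) Truncate the Chebyshev series at degree $d = O(k + \log(1/\epsilon')) = O(\log(1/\epsilon')/\delta')$ so the truncation error is at most $\epsilon'/2$ uniformly on $[-2,2]$. (iv) Note that the truncated series is automatically an \emph{odd} polynomial, since only odd Chebyshev polynomials appear in the expansion of an odd function. (v) If the truncated polynomial has uniform norm slightly exceeding $1$ on $[-2,2]$ because of the truncation error, rescale it by the factor $1/(1+\epsilon'/2)$ and absorb the resulting small losses into the constants in the degree bound.

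Combining (i) with (iii) and the rescaling in (v) yields the desired polynomial $P$: odd, with $|P(x)| \leq 1$ on $[-2,2]$, and $|P(x) - \mathrm{sgn}(x)| \leq \epsilon'$ outside the $\delta'$-neighborhood of the origin. The efficient computability of $P$ follows from the fact that its Chebyshev coefficients are explicit integrals of $\mathrm{erf}(kx)$ that can be evaluated to high precision classically, after which one converts to the monomial basis.

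The main obstacle is obtaining the sharp degree bound $d = O(\log(1/\epsilon')/\delta')$ rather than the weaker bound $d = O(\log^2(1/\epsilon')/\delta')$ that a naive Chebyshev analysis of $\tanh(kx)$ or $\mathrm{erf}(kx)$ would yield. This requires exploiting the fact that $\mathrm{erf}$ is entire (not merely analytic in a strip), so that Chebyshev coefficients decay faster than any geometric rate, together with a careful balance between the ``transition degree'' $O(k)$ and the ``tail degree'' $O(\log(1/\epsilon'))$. This is precisely the technical contribution of~\cite{LowHamiltonian2017}, and in a full write-up I would either reproduce their coefficient-decay estimate or cite it directly.
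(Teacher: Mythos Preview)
The paper does not give its own proof of this lemma; it is quoted verbatim as a black-box result from~\cite{LowHamiltonian2017} and used without further argument. Your proposal correctly outlines the standard smoothing-then-Chebyshev-truncation argument from that reference (erf surrogate, Chebyshev expansion, super-geometric coefficient decay from entireness, truncation at degree $O(k+\log(1/\epsilon'))$, oddness preserved, mild rescaling), and you have also correctly identified where the work lies, namely in extracting the sharp $O(\log(1/\epsilon')/\delta')$ degree rather than the looser $O(\log^2(1/\epsilon')/\delta')$ one.
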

Lemma~\ref{lem:poly_approx_sgn} is, for the required conditions, optimal in its parameters $\delta'$, and $\epsilon'$~\cite{LowHamiltonian2017}. Since Lemma~\ref{lem:poly_approx_sgn} holds on the entire interval $[-2,2]$, choosing any $\alpha \in [-1,1]$ and scaling the $\text{sgn}(x)$ function with the factor $1/2$ will ensure that the error, as in the lemma, will be $\leq \epsilon/2$. Let $q_\alpha(x): \mathbb{R} \rightarrow [0,1]$ defined as $q_\alpha(x) = \frac{1}{2}(1-\text{sgn}(x-\alpha))$ be this function, with polynomial approximation $Q_\alpha \in \mathbb{R}[x]$ of degree $d$. Note that $Q_\alpha$ can be written as a function of $P$ as $Q_\alpha(x) = \frac{1}{2}(1-P(x-\alpha))$.  We will write $\tilde{\Pi}_{\alpha} = Q_\alpha(H)$ for the corresponding polynomial approximation of the approximate low-energy ground state ``projector''. Note that $\tilde{\Pi}_{\alpha} $ is Hermitian (since $H$ is Hermitian), but that $\tilde{\Pi}_{\alpha}$ is no longer necessarily a projector and therefore $\tilde{\Pi}_{\alpha}^2 \neq  \tilde{\Pi}_{\alpha}$. If we now replace ${\Pi}_{\alpha}$ in $\norm{{\Pi}_{\alpha} \ket{u}}$ by $\tilde{\Pi}_{\alpha}$, we get $\norm{\tilde{\Pi}_{\alpha} \ket{u}} = \sqrt{\bra{u}\tilde{\Pi}_{\alpha}^\dagger \tilde{\Pi}_{\alpha} \ket{u}} = \sqrt{\bra{u}\tilde{\Pi}_{\alpha}^2 \ket{u}} = \sqrt{\bra{u} (Q_\alpha(H))^2 \ket{u}}$, which means that we have to evaluate up to degree $2d$ powers of $H$. The next lemma will give an upper bound on the number of expectation values that have to be computed when evaluating a polynomial of $H$ of degree $d$.

\begin{lemma} Given access to a classically evaluatable state $u$, a Hamiltonian $H = \sum_{i=0}^{m-1} H_i$, where each $H_i$ acts on at most $k$ qubits non-trivially, and a polynomial $P[x]$ of degree $d$, there exists a classical algorithm that computes $\bra{u}P(H)\ket{u}$ in $\mO(m^{d})$ computations of $\bra{u} O_i \ket{u}$, where the observables $\{O_i\}$ are at most $kd$-local.
\label{lem:powers_H}
\end{lemma}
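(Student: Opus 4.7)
The plan is to proceed by directly expanding $P(H)$ into a linear combination of products of the local Hamiltonian terms, and then use the classical evaluatability assumption on $u$ to evaluate the expectation value of each such product individually.

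First I would write $P[x] = \sum_{j=0}^{d} a_j x^j$ with coefficients $a_j \in \mathbb{R}$, so that by linearity
\[
\bra{u}P(H)\ket{u} = \sum_{j=0}^{d} a_j \bra{u}H^j\ket{u}.
\]
Next, expanding $H^j = \bigl(\sum_{i=0}^{m-1} H_i\bigr)^j$ multilinearly gives $m^j$ ordered products of local terms,
\[
H^j = \sum_{(i_1,\dots,i_j)\in[m]^j} H_{i_1}H_{i_2}\cdots H_{i_j}.
\]
Each such product $O_{(i_1,\dots,i_j)} := H_{i_1}\cdots H_{i_j}$ is an operator whose non-trivial support is contained in the union of the supports of the $H_{i_\ell}$, and hence acts non-trivially on at most $kj \leq kd$ qubits, i.e.\ is $kd$-local. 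Summing the count of such observables across $j=0,1,\dots,d$ yields at most $\sum_{j=0}^{d} m^j = \mO(m^d)$ observables in total.

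The final step is then to evaluate $\bra{u}O_{(i_1,\dots,i_j)}\ket{u}$ for each such product. Since $u$ is classically evaluatable (Definition~\ref{def:cds}), and each $O_{(i_1,\dots,i_j)}$ is $kd$-local with matrix elements computable in time $\mO(2^{kd})$ from the descriptions of the $H_{i_\ell}$ (by multiplying out $j\leq d$ local matrices of size at most $2^k \times 2^k$ on the joint support), the classical algorithm $\mathcal{OQ}_u$ from condition (ii) of Definition~\ref{def:cds} computes each $\bra{u}O_{(i_1,\dots,i_j)}\ket{u}$ exactly. Multiplying by the appropriate coefficient $a_j$ and summing all $\mO(m^d)$ contributions gives the desired value of $\bra{u}P(H)\ket{u}$.

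There is essentially no obstacle here: the lemma is purely a bookkeeping statement about multilinear expansion, and the bound $\mO(m^d)$ is immediate from the number of terms in $(\sum_i H_i)^d$. The only subtlety worth flagging in the write-up is that the locality bound $kd$ follows from supports of products being contained in the union of the factor supports, and that the matrix elements of the product observable can be obtained in time polynomial in $2^{kd}$ from the descriptions of the individual local terms, so that the calls to $\mathcal{OQ}_u$ are indeed valid invocations of the classical-evaluatability oracle.
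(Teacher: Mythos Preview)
Your approach is essentially the same as the paper's: expand $P(H)$ by linearity, multilinearly expand each $H^j$, and bound the number of resulting local terms by $\sum_{j=0}^d m^j = \mO(m^d)$.

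There is one small but genuine gap you should patch. The oracle $\mathcal{OQ}_u$ in Definition~\ref{def:cds} is promised to work for a $k$-local \emph{observable} $O$ with $\norm{O}\le 1$, i.e.\ a Hermitian, sub-normalized operator. An individual product $O_{(i_1,\dots,i_j)} = H_{i_1}\cdots H_{i_j}$ is in general neither Hermitian nor sub-normalized, so you cannot invoke $\mathcal{OQ}_u$ on it directly. The paper handles this exactly here: since each $H_i$ is Hermitian, the reversed product $H_{i_j}\cdots H_{i_1}$ equals $(H_{i_1}\cdots H_{i_j})^\dagger$ and also appears in the expansion of $H^j$; pairing each product with its reverse yields Hermitian summands $\hat{Q}$, and the resulting norm factor (at most $2$) is absorbed into the coefficients $a_j$. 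This pairing at most halves the off-diagonal count, so the $\mO(m^d)$ bound is unaffected. With that fix your write-up is complete and matches the paper's proof.
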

\begin{proof}
We have that
\begin{align*}
     \bra{u} P(H) \ket{u} &=  \bra{u} \left(a_0 I +a_1 H +  \dots + a_{d} H^{d} \right) \ket{u}\\
     &= a_0 +a_1 \bra{u} H \ket{u}+  \dots + a_{d} \bra{u}H^{d} \ket{u}.
\end{align*}
Let $l \in [d]$ be the different powers for which we have to compute $\bra{u} H^l \ket{u}$.  We have that for each $l$ that 
\begin{align*}
    H^l &= \left(\sum_{i=0}^{m-1} H_i \right)^l
\end{align*}
consists of at $m^l$ terms when fully expanded and is Hermitian. However, when the sum is expanded, not every summand is necessarily Hermitian and therefore a local observable. However, we do have that for every $kl$-local summand $Q_{j,l}$ there exists another summand $Q_{j',l}$ which has all the terms of $Q_j$ but in reverse order, unless the $Q_j$ contains only powers of a single term. Grouping those together, we can write
\begin{align*}
    H^l &= \sum_{j=1}^{m + \frac{m^l-m}{2}} \hat{Q}_{j,l},
\end{align*}
where each $\hat{Q}_{j,l}$ is $2l$ local and has $\norm{\hat{Q}_{j,l}} \leq $2. We can now simply absorb the factor $2$ into the coefficients $a_{l'}$, $l' \in \{0,1,\dots,d\}$, such that all $ \hat{Q}_{j,l}$ satisfy $\norm{ \hat{Q}_{j,l}} \leq 1$. The total number of local observables that have to be computed is now equal to
\begin{align*}
    \sum_{l=1}^{d} m + \frac{m^l-m}{2}= \frac{m(m^d + dm - d -1)}{2(m-1)} = \mO(m^d),
\end{align*}
completing the proof.
\end{proof}
All that remains to show is that for constant promise gap $\delta$, using a good enough approximation $\tilde{\Pi}_\alpha$ with a suitable choice of $\alpha$, will ensure that we can still distinguish the two cases in the $\mathsf{CGaLH}(k,\delta,\zeta)$ problem in a polynomial (resp. quasi-polynomial number of computations in $m$ when $\zeta = \Omega(1)$ (resp. $\zeta = 1/\poly(n)$).

\begin{theorem} Let $H = \sum_{i=0}^{m-1} H_i$ be some Hamiltonian, and $\text{desc}(u)$ be a description of a classically evaluatable state $u \in \mathbb{C}^{2^n}$. Let $a,b \in [-1,1]$ such that $b-a\geq \delta$, where $\delta > 0$ and let $\zeta \in (0,1]$. Consider the following two cases of $H$, with the promise that either one holds:
\begin{enumerate}[label=(\roman*)]
    \item $H$ has an eigenvalue $ \leq a$, and $\norm{\Pi_\text{gs} \ket{u}}^2 \geq \zeta $ holds, or
    \item all eigenvalues of $H$ are $\geq b$.
\end{enumerate}
Then there exists a classical algorithm that is able to distinguish between cases (i) and (ii) using
    \begin{align*}
        \mO\left( m^{ c \left( \log(1/\sqrt{\zeta}))/\delta \right) } \right) 
    \end{align*}
    computations of local expectation values, for some constant $c > 0$.
\label{thm:CED}
\end{theorem}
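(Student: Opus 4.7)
The plan is to construct a polynomial approximation $\tilde{\Pi}_\alpha$ of the low-energy projector $\Pi_\alpha$ from Eq.~\eqref{eq:low_energy_projector} for $\alpha$ placed in the middle of the promise gap, and then compute $\bra{u}\tilde{\Pi}_\alpha^2\ket{u}$ classically to separate the two cases. Concretely, I would set $\alpha=(a+b)/2$, approximation width $\delta'=\delta/2$, and target pointwise error $\epsilon'=\sqrt{\zeta}/2$. Invoking Lemma~\ref{lem:poly_approx_sgn} yields an odd polynomial $P\in\mathbb{R}[x]$ of degree $d=\mO(\log(1/\epsilon')/\delta')=\mO(\log(1/\sqrt{\zeta})/\delta)$, from which I form $Q_\alpha(x)=\tfrac{1}{2}(1-P(x-\alpha))$ and $\tilde{\Pi}_\alpha=Q_\alpha(H)$, exactly as in the preamble to this theorem.

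The algorithm then computes $w:=\bra{u}\tilde{\Pi}_\alpha^2\ket{u}=\bra{u}Q_\alpha(H)^2\ket{u}$, which is a polynomial in $H$ of degree $2d$. By Lemma~\ref{lem:powers_H}, this can be evaluated using $\mO(m^{2d})$ expectation values of $\mO(kd)$-local observables on $u$, each computable efficiently by the classical evaluability assumption on $u$. The algorithm outputs {\sc yes} if $w\geq 3\zeta/8$ and {\sc no} otherwise.

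For correctness I would expand $\ket{u}=\sum_i c_i\ket{\psi_i}$ in the eigenbasis of $H$ and write $w=\sum_i |c_i|^2 Q_\alpha(\lambda_i)^2$. The choice $\alpha\pm\delta'=\{a,b\}$ ensures that the promise forces every $\lambda_i$ to lie in $[-1,1]\setminus(\alpha-\delta',\alpha+\delta')$, where the bound $|Q_\alpha(\lambda_i)-q_\alpha(\lambda_i)|\leq\epsilon'/2$ from Lemma~\ref{lem:poly_approx_sgn} applies (after rescaling the amplitude-$1$ domain of the sign approximation into the interval $[-2,2]$ around $\alpha$). In case (i) the eigenvalues in the ground space satisfy $\lambda_i\leq a$, so $Q_\alpha(\lambda_i)\geq 1-\epsilon'/2$ and, since their total weight is at least $\zeta$ and all other terms are nonnegative, $w\geq \zeta(1-\epsilon'/2)^2\geq \zeta/2$. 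In case (ii) every $\lambda_i\geq b$ gives $Q_\alpha(\lambda_i)\leq \epsilon'/2$, so $w\leq \epsilon'^2/4=\zeta/16$, and the two cases are thus separated by $\Omega(\zeta)$.

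The main subtlety is the coupling between $\zeta$ and the required approximation quality. Because the quantity $\bra{u}\Pi_{\mathrm{gs}}\ket{u}$ is not of the polynomial-in-$H$ form required by Lemma~\ref{lem:powers_H}, I evaluate the proxy $\norm{\tilde{\Pi}_\alpha\ket{u}}^2$, whose soundness contamination in case (ii) scales as $\epsilon'^2$. To preserve a $\Theta(\zeta)$ signal-to-noise gap one must therefore set $\epsilon'=\Theta(\sqrt{\zeta})$, which is exactly what produces the factor $\log(1/\sqrt{\zeta})$ (as opposed to $\log(1/\zeta)$) in the polynomial degree and hence the stated exponent $c\log(1/\sqrt{\zeta})/\delta$ in the number of local expectation value computations.
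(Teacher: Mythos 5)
Your proposal is correct and follows essentially the same route as the paper's proof: set $\alpha=(a+b)/2$, take $\delta'=\delta/2$, choose $\epsilon'=\Theta(\sqrt{\zeta})$, form $\tilde\Pi_\alpha=Q_\alpha(H)$ from Lemma~\ref{lem:poly_approx_sgn}, evaluate $\bra{u}Q_\alpha(H)^2\ket{u}$ via Lemma~\ref{lem:powers_H} in $\mO(m^{2d})$ local expectation values, and threshold. The only differences are cosmetic constants ($\epsilon'=\sqrt{\zeta}/2$ vs.~$\sqrt{\zeta}/10$, threshold $3\zeta/8$ on $\norm{\tilde\Pi_\alpha\ket{u}}^2$ vs.~$\frac{9}{10}\sqrt{\zeta}$ on $\norm{\tilde\Pi_\alpha\ket{u}}$), and your closing observation correctly pinpoints why the $\zeta$-dependence enters as $\log(1/\sqrt{\zeta})$ rather than $\log(1/\zeta)$.
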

\begin{proof}
Let $\tilde{\Pi}_{\alpha} := Q_\alpha(H)$, where $Q$ is a polynomial of degree $d$, be the approximate low-energy projector that approximates $\Pi_{\alpha} = \frac{1}{2} \left(1-\text{sgn} (H-(\alpha I))\right)$.  We set $\alpha := \frac{a+b}{2}$, $\delta':= \delta/2$ and $\epsilon'= 1/10$.  We propose the following algorithm:
\begin{enumerate}
    \item Compute $\norm{\tilde{\Pi}_{a} \ket{u}}$ using a polynomial of degree $2d$ where $d = \mO(\log(1/\epsilon'))/\delta'$, for $\epsilon' := \frac{1}{10}\sqrt{\zeta} $ and $\delta' = \delta/2$. 
    \item If $\norm{\tilde{\Pi}_{\alpha} \ket{u}} \geq \frac{9}{10} \sqrt{\zeta}$, output (i) and output (ii) else. 
\end{enumerate}
Clearly, by Lemma~\ref{lem:powers_H}, we have that this can be done in at most 
    \begin{align*}
        \mO\left( m^{ c \left( \log(1/\sqrt{\zeta}))/\delta \right) } \right) 
    \end{align*}
computations of expectation values of local observables, for some constant $c$. Let us now prove the correctness of the algorithm. Note that we can write  $\tilde{\Pi}_{\alpha}$ as
\begin{align*}
    \tilde{\Pi}_{\alpha} = \sum_{i =0}^{2^n-1} Q(\lambda_i) \ket{\psi_i}\bra{\psi_i}, 
\end{align*}
where we have that
\begin{align*}
      \begin{cases} 1-\sqrt{\zeta}/2 \leq Q(\lambda_i) \leq 1 &\text{ if } \lambda_i \leq a, \\
     0 \leq Q(\lambda_i) \leq \zeta/2 &\text{ if } \lambda_i \geq  b,\\
    0 \leq Q(\lambda_i) \leq 1 &\text{ else,} 
    \end{cases}
\end{align*}
by Lemma~\ref{lem:poly_approx_sgn}. Let us analyse both case (i) and (ii) separately.
\begin{enumerate}[label=(\roman*)]
    \item $H$ has an eigenvalue $\leq a$, and $\norm{\Pi_\text{gs} \ket{u}}^2 \geq \zeta $ holds:
    \begin{align*}
        \norm{\tilde{\Pi}_{\alpha} \ket{u}} &\geq \norm{\tilde{\Pi}_{\alpha} \Pi_\text{gs}\ket{u}}\\
        &= \norm{\Pi_{\alpha} \Pi_\text{gs}\ket{u}   -(\Pi_{\alpha}-\tilde{\Pi}_{\alpha}) \Pi_\text{gs}\ket{u}}\\
        &= \norm{\Pi_\text{gs}\ket{u}  - \left(\sum_{i:\lambda_i \leq \alpha} \ket{\psi_i}\bra{\psi_i} - \sum_{i =0}^{2^n-1} Q(\lambda_i) \ket{\psi_i}\bra{\psi_i} \right)\Pi_\text{gs}\ket{u}}\\
        &= \norm{\Pi_\text{gs}\ket{u}  - \left(\sum_{i:\lambda_i \leq \alpha} (1-Q(\lambda_i)) \ket{\psi_i}\bra{\psi_i} - \sum_{i: \lambda_i > \alpha} Q(\lambda_i) \ket{\psi_i}\bra{\psi_i} \right)\Pi_\text{gs}\ket{u}}\\
        &\geq \norm{\Pi_\text{gs}\ket{u}  - \left(\sum_{i:\lambda_i \leq \alpha} \frac{1}{10} \ket{\psi_i}\bra{\psi_i} \right)\Pi_\text{gs}\ket{u}}\\
        &= \norm{\Pi_\text{gs}\ket{u}  - \frac{1}{10} \left(\sum_{i:\lambda_i \leq \alpha}  \ket{\psi_i}\bra{\psi_i} \right)\Pi_\text{gs}\ket{u}}\\
        &= \norm{\Pi_\text{gs}\ket{u}  - \frac{1}{10} \Pi_\alpha \Pi_\text{gs}\ket{u}}\\
        &= \norm{\Pi_\text{gs}\ket{u}  - \frac{1}{10} \Pi_\text{gs}\ket{u}}\\
        &= (1-\frac{1}{10}) \norm{\Pi_\text{gs}\ket{u}}\\
        &\geq \frac{9}{10} \sqrt{\zeta}.
    \end{align*}
    \item all eigenvalues of $H$ are $\geq b$: \\
We must have that
\begin{align*}
        \norm{\tilde{\Pi}_{\alpha} \ket{u}} \leq \frac{1}{2} \sqrt{\zeta},
    \end{align*}
    since $\lambda_i \geq b$ for all $i \in \{0,\dots,2^n-1 \}$.
\end{enumerate}
Hence, we have that the promise gap between both cases is lower bounded by
\begin{align*}
    \frac{9}{10} \sqrt{\zeta} - \frac{1}{2} \sqrt{\zeta} = \frac{2}{5} \sqrt{\zeta},
\end{align*}
which is $1/\poly(n)$ when $\zeta \geq 1/\poly(n)$.
\end{proof}

\begin{remark} It should be straightforward to adopt the same derivation as above to a more general setting by considering sparse matrices, a promise with respect to fidelity with the low-energy subspace (i.e.~all states with energy $\leq \lambda_0 + \gamma$ for some small $\gamma$), as well as $\epsilon>0$ for $\epsilon$-classically evaluatable states (see Definition~\ref{def:cds}). However, this would likely put constraints on $\gamma$ and $\epsilon$, where $\epsilon$ in principle has to scale inversely proportional to the number of local terms in the Hamiltonian.
\end{remark}

\begin{figure}[h!]
    \centering\includegraphics[width=0.6\linewidth]{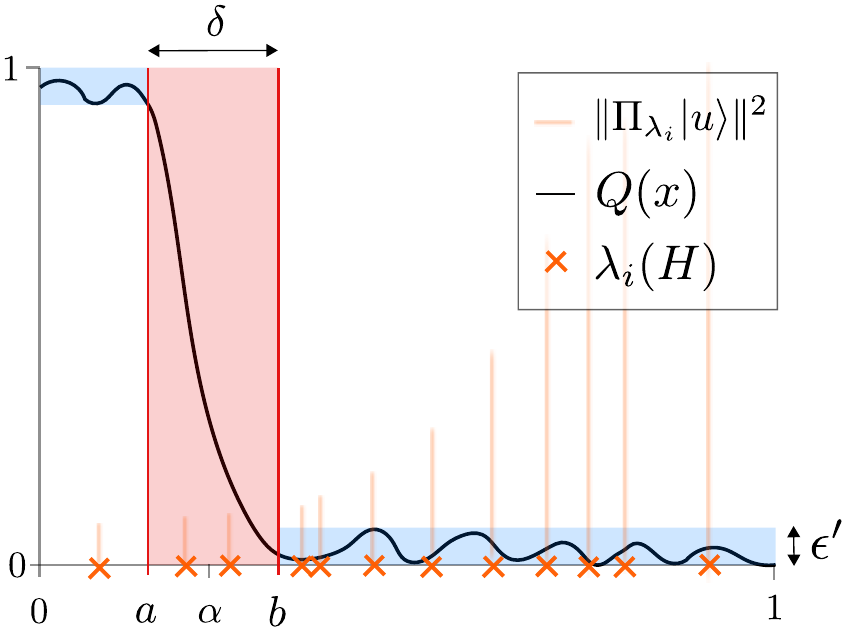}
    \caption{Illustration of the approximate low-energy projector $\Pi_\alpha$ in both the {\sc yes}-case with $\alpha = \frac{a+b}{2}$. The orange crosses correspond to the energy values, and the attached shaded lines indicate the fidelity of the guiding state with the space spanned by all eigenstates $\ket{\psi_l}$ of $H$ that have energy at most $\lambda_i$. The polynomial approximation of the shifted sign function is displayed as $Q_\alpha(x)$, and the $\epsilon'$-error approximation regimes are indicated with the blue-shaded areas. In the red regime we do not have tight bounds on the error, except that the function values are in $[0,1]$. For small enough $\epsilon$, in the {\sc yes}-case the contribution of the ground state to the value of $\norm{\tilde{\Pi}_\alpha \ket{u}}^2$ should be larger than that computed in the {\sc no}-case due to contributions of higher energy values, as a result from an inexact implementation of the low-energy projector. In the {\sc no}-case, all energy values will be larger than $b$.}
    \label{fig:AGPS}
\end{figure}

\subsection{Classical hardness and containment}
All results in this section also hold when `$\mathsf{CGaLH}$' is replaced by `$\mathsf{CGaLH}^*$', as the containment trivially follows since $\mathsf{CGaLH}$ generalises $\mathsf{CGaLH}^*$ and the hardness construction uses a diagonal (i.e.~classical) Hamiltonian, of which the ground states are basis states and can thus be prepared on a quantum computer. To be able to make completeness statements when we consider $\NP$, let us start by first proving a (straightforward) hardness result. 

\begin{lemma} $\mathsf{CGaLH}(k,\delta,\zeta)$ is $\NP$-hard for $k\geq 2$, $\delta \leq \mO(1)$ and $\zeta \leq 1$, where $k,\delta,\zeta$ can also be functions of $n$.
\label{lem:CGaLH_NP-hard}
\end{lemma}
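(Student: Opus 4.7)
The plan is to give a direct polynomial-time reduction from MAX-$2$-SAT (or any other classical $\NP$-hard CSP) to $\mathsf{CGaLH}(k,\delta,\zeta)$. The conceptual point is that classical satisfiability problems embed naturally into \emph{diagonal} local Hamiltonians, whose ground states are always computational basis states, and computational basis states are trivially classically evaluatable, so the extra promise comes essentially for free.

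Concretely, given an instance of MAX-$2$-SAT with $m$ clauses over $n$ variables, I would construct
\[
H \;=\; \frac{1}{m}\sum_{i=1}^{m}\Pi_i,
\]
where each $\Pi_i$ is the rank-$1$ diagonal projector onto the unique assignment of the two variables in clause $i$ that violates that clause. Then $H$ is $2$-local and diagonal in the computational basis with $\|H\|\leq 1$, and $\lambda_0(H)$ equals the minimum fraction of violated clauses over all $s\in\{0,1\}^n$. A direct reduction from $3$-SAT yields thresholds $a=0$ and $b=1/m$, giving $\delta=1/\poly(n)$; alternatively, invoking the classical PCP theorem yields $\delta = \Omega(1)$. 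Either choice lies within the regime $\delta \leq \mO(1)$ claimed by the lemma.

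For the extra promise, since $H$ is diagonal in the computational basis its ground space is spanned by computational basis states $\ket{s}$ with $s\in\{0,1\}^n$. Taking $u=\ket{s}$ as the guiding state gives $\norm{\Pi_\text{gs}\,u}^2 = 1 \geq \zeta$ for any $\zeta\leq 1$, in both the {\sc yes} and {\sc no} cases. It remains to check that $\ket{s}$ is classically evaluatable per Definition~\ref{def:cds}: its description is just the bit string $s$ of length $n$ (condition (i)), and for any $k$-local observable $O$ supported on some $S\subseteq[n]$ with $|S|\leq k$, the expectation value $\bra{s}O\ket{s}$ equals the single diagonal entry $O_{s|_S,\,s|_S}$, which can be read off deterministically in $\mO(2^k)$ time given access to the entries of $O$ (condition (ii)).

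There is no real obstacle here: the reduction is entirely classical and deterministic, and the guiding state is so structured (a computational basis state) that classical evaluatability is essentially by inspection. The content of the lemma is just that $\NP$-hardness of the ordinary local Hamiltonian problem survives the extra promise on the existence of a classically evaluatable guiding state, which is automatic for the diagonal Hamiltonians arising from classical CSPs.
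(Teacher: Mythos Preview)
Your proposal is correct and takes essentially the same approach as the paper: reduce from a classical $\NP$-hard constraint satisfaction problem to a \emph{diagonal} $2$-local Hamiltonian, and observe that the ground space is spanned by computational basis states, which are trivially classically evaluatable with $\zeta=1$. The paper starts from gapped $3$-SAT (via the PCP theorem) and then applies a specific $3$-SAT$\to 2$-SAT gadget to get $k=2$ with constant promise gap, whereas you go straight to MAX-$2$-SAT and invoke the PCP theorem for the constant-gap version; this is a cosmetic difference and both routes are valid.
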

\begin{proof} We will prove this by a reduction from gapped $3$-SAT. Let $\gamma$-$3$-SAT be a promise decision problem where we are given a formula $\phi(x) = \frac{1}{m}\sum_{i=0}^{m-1} C_i$ with $C_i = x_{i_1} \lor x_{i_2} \lor x_{i_3}$, with the promise that either $\phi(x) = 1$ (output {\sc yes}) or  $\phi(x) \leq \gamma$ (output {\sc no}), where $\gamma \in (0,1)$. From (one of) the (equivalent) PCP theorem(s) we know that there exists a constant $\gamma \in (0,1)$ for which deciding on the correct output (we are allowed to output anything if the promise doesn't hold) is $\NP$-hard~\cite{HastadSome1997}. Next, we apply the gadget from~\cite{Ansoteguireducing2021}, which maps $\phi(x)$ to a $2$-SAT instance with formula $\phi'(x) = \frac{1}{10m}\sum_{i=0}^{m-1} \sum_{j \in [10]} C'_{i,j} $. Here we have that $\phi'(x)$ has the property that for every clause $C_i$ there are $10$ corresponding clauses $C'_{i,j}$, $j\in[10]$ such that, if a given assignment $x$ satisfies a clause $C_i$ of $\phi(x)$, then exactly $7$ clauses of $C'_{i,j}$ can be satisfied, and for those $C_i$ that are not satisfied by $x$, at most $6$ clauses of $C'_{i,j}$ are satisfied. Note that if $\phi(x) = 1$, we then must have that $\phi'(x) = 7/10$, and that if $\phi(x) \leq \gamma$, we have that $\phi'(x) \leq 7\gamma/10 + 6(1-\gamma)/10 = (\gamma + 6)/10$. Hence, it is still $\NP$-hard to distinguish between those cases, and the promise gap between the {\sc yes}- and the {\sc no}-case is $\frac{1}{10}(1-\gamma)=:\gamma'$, which is some constant.  Let us now map $\phi'(x)$ into a $2$-local diagonal Hamiltonian $H'$ such that $\bra{x}H'\ket{x} = \phi(x)$, which can be done by a representation of the clauses as diagonal matrices. By our choice of $\phi'(x)$, we have already ensured that the Hamiltonian is (sub)-normalized. To turn the problem into a minimization problem, one can simply invert the spectrum by letting $H = I-H'$ (note that $H' \succeq 0)$. The eigenvectors of $H$ are basis vectors -- and thus themselves classically evaluatable states for which $\zeta = 1 = \mO(1)$ -- and its eigenvalues are precisely the function evaluations of $1-\phi'(x)$. Hence, setting $a := 3/10$ and $b := (4-\gamma)/10$ gives us $\delta = \gamma' = \mO(1)$.
\end{proof}
Theorem~\ref{thm:CED} now gives us a very easy way to establish the following upper bounds on $\mathsf{CGaLH}(k,\delta,\zeta)$ when the required precision $\delta$ is only constant. Combined with Lemma~~\ref{lem:CGaLH_NP-hard}, we obtain the following result, reminiscent of Theorem 5 in~\cite{gharibian2021dequantizing}.
\begin{theorem}
$\mathsf{CGaLH}(k,\delta,\zeta)$ is $\NP$-complete for $k = \mO(\log(n))$, and constants $\delta \in (0,1]$ and $\zeta \in(0,1]$. Furthermore, when $\zeta = 1/\poly(n)$ we have that $\mathsf{CGaLH}(k,\delta,\zeta)$ is in $\NqP$.
\label{thm:cp_NP_NqP}
\end{theorem}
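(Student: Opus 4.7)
The hardness direction is already in hand: Lemma~\ref{lem:CGaLH_NP-hard} gives $\NP$-hardness of $\mathsf{CGaLH}(k,\delta,\zeta)$ for $k \geq 2$ and any constants $\delta, \zeta \in (0,1]$, which certainly covers $k = \mO(\log n)$. So the plan is to establish the two containment statements by invoking the spectral amplification algorithm of Theorem~\ref{thm:CED} and carefully tracking how the number of local expectation value evaluations, and the cost of each such evaluation, scale with the parameter regime.

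For the $\NP$ containment with $\delta, \zeta = \Theta(1)$, the $\NP$ verifier takes as witness the classical description $\text{desc}(u)$ promised in the {\sc yes}-case, and then simulates the spectral amplification algorithm of Theorem~\ref{thm:CED} to decide whether $\|\tilde{\Pi}_a \ket{u}\| \geq \tfrac{9}{10}\sqrt{\zeta}$ or $\|\tilde{\Pi}_a \ket{u}\| \leq \tfrac{1}{2}\sqrt{\zeta}$. By Theorem~\ref{thm:CED}, this requires $\mO(m^{c\log(1/\sqrt{\zeta})/\delta})$ evaluations of expectation values $\bra{u}O_i\ket{u}$ with each $O_i$ being $kd$-local, where $d = \mO(\log(1/\sqrt{\zeta})/\delta)$. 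With constant $\delta,\zeta$ we get $d = \mO(1)$, so the total number of expectation values is $\poly(m) = \poly(n)$ and each observable is $\mO(\log n)$-local. By condition (ii) of Definition~\ref{def:cds}, each such expectation value can be computed deterministically in time $\poly(n, 2^{\mO(\log n)}) = \poly(n)$ from $\text{desc}(u)$. Combining the relevant arithmetic at the final step into a single deterministic threshold check, the whole procedure runs in deterministic polynomial time, placing the problem in $\NP$.

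For the $\NqP$ containment with $\zeta = 1/\poly(n)$ (and constant $\delta$, with $k = \mO(\log n)$), the same verification strategy applies but with worse parameters: now $\log(1/\sqrt{\zeta}) = \mO(\log n)$, so the polynomial degree becomes $d = \mO(\log n)$ and Theorem~\ref{thm:CED} calls for $\mO(m^{c \log n}) = 2^{\mO(\log^2 n)}$ expectation value evaluations. Each observable is now $kd = \mO(\log^2 n)$-local, and by Definition~\ref{def:cds} evaluating one takes $\poly(n, 2^{\mO(\log^2 n)}) = 2^{\mO(\log^2 n)}$ deterministic time. Hence the entire verifier runs in $2^{\mO(\log^2 n)}$ time, which is quasi-polynomial, giving containment in $\NqP$.

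The only subtlety I anticipate, and the main thing to verify carefully, is the interplay between the three scaling parameters $k, \delta, \zeta$: the final runtime must remain quasi-polynomial and the separation $\tfrac{9}{10}\sqrt{\zeta} - \tfrac{1}{2}\sqrt{\zeta} = \tfrac{2}{5}\sqrt{\zeta}$ established in Theorem~\ref{thm:CED} must remain large enough (at least $1/\poly(n)$) to be decidable from the computed estimate. For $\zeta = 1/\poly(n)$ this separation is indeed $1/\poly(n)$, so a threshold test at $\tfrac{7}{10}\sqrt{\zeta}$ suffices. Since all arithmetic in the spectral amplification algorithm is over rationals of polynomial bit-length (polynomial coefficients $a_i$ of $Q_\alpha$, together with rational outputs of the observable evaluations), this comparison is exact and deterministic, so no probabilistic model is needed and the containment statements are clean.
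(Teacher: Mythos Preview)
Your proposal is correct and follows essentially the same approach as the paper: hardness via Lemma~\ref{lem:CGaLH_NP-hard} and containment via the spectral amplification algorithm of Theorem~\ref{thm:CED}, with the same count of $\mO(m^{c\log(1/\sqrt{\zeta})/\delta})$ expectation-value evaluations yielding polynomial (resp.\ quasi-polynomial) time for constant (resp.\ inverse-polynomial) $\zeta$. You actually supply more detail than the paper's terse proof, in particular your observation that the observables become $kd = \mO(\log^2 n)$-local in the $\NqP$ case and hence each evaluation costs $2^{\mO(\log^2 n)}$ via Definition~\ref{def:cds}, and your explicit check that the $\tfrac{2}{5}\sqrt{\zeta}$ separation remains decidable --- both points the paper leaves implicit.
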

\begin{proof}
$\NP$-Hardness follows from Lemma~\ref{lem:CGaLH_NP-hard}. The containment statements follow from Theorem~\ref{thm:CED}, in which the proposed algorithm for $m = \mO(n^k)$, $\delta \in (0,1]$ constant runs in polynomial time when $\zeta$ is constant and in quasi-polynomial time when $\zeta = 1/\poly(n)$ (using the fact that $\mO(n^{\log(n)})= 2^{\mO( \log^c (n))} $ for some constant $c >0$).
\end{proof}
Moreover, by a little more careful inspection one can show that the problem's hardness depends on how $\zeta$ and $\delta$ relate to one another, as shown in the following theorem.

\begin{theorem} Let $f(n): \mathbb{N} \rightarrow \mathbb{R}_{> 0}$, $g(n): \mathbb{N} \rightarrow \mathbb{R}_{> 0}$ be some functions with the property that there exists some constant $n_0$ (which is known), such that for all $n \geq n_0$ we have that $1/g(n)-1/f(n) >0$. Then we have that $\mathsf{CGaLH}(k,\delta, \zeta)$ is $\NP$-complete for $k\geq 2$, $\delta = 1/g(n)$ and $\zeta \geq 1-1/f(n)$. 
\label{thm:CGaLH_close}
\end{theorem}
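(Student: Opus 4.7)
$\NP$-hardness is already in hand: Lemma~\ref{lem:CGaLH_NP-hard} gives hardness even for constant $\delta$ and $\zeta = 1$, a strictly easier regime than the one here, so the hardness half of the theorem follows without new work. The nontrivial direction is containment, and the plan is to give a direct $\NP$-verifier that entirely bypasses the spectral-amplification machinery of Theorem~\ref{thm:CED} (whose cost blows up at inverse-polynomial gap): the near-ground-state fidelity $1 - 1/f(n)$ is already strong enough that the energy of the guiding state itself is the certificate.

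Concretely, the witness is the description $\text{desc}(u)$ of the classically evaluatable guiding state promised in the {\sc yes}-case, which has polynomial length by Definition~\ref{def:cds}(i). The verifier decomposes $H = \sum_{i=0}^{m-1} H_i$ into its $m \le \poly(n)$ $k$-local terms (with $k$ constant) and computes
\[
\widehat E(u) = \sum_{i=0}^{m-1}\bra{u}H_i\ket{u},
\]
each summand produced deterministically in polynomial time by the algorithm $\mathcal{OQ}_u$ of Definition~\ref{def:cds}(ii). Acceptance is declared iff $\widehat E(u) \le \tau$ for a fixed threshold $\tau$ chosen strictly between $a$ and $b$.

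For correctness, I would use the following one-line perturbation bound. Decompose $\ket{u} = \sqrt{p_0}\ket{g} + \sqrt{1-p_0}\ket{g^\perp}$ with $\ket{g}$ in the ground space and $p_0 = \|\Pi_{\text{gs}} u\|^2 \ge \zeta$. The cross terms vanish since $H\ket{g} = \lambda_0(H)\ket{g}$ is orthogonal to $\ket{g^\perp}$, so
\[
\bra{u}H\ket{u} - \lambda_0(H) = (1-p_0)\bigl(\bra{g^\perp}H\ket{g^\perp} - \lambda_0(H)\bigr) \le (1-\zeta)\bigl(\lambda_{\max}(H) - \lambda_0(H)\bigr).
\]
In a {\sc yes}-instance the promised $u$ therefore yields $\widehat E(u) \le a + \mO(1/f(n))$, while in a {\sc no}-instance the variational principle alone gives $\widehat E(u) \ge \lambda_0(H) \ge b = a + 1/g(n)$ for every candidate $u$. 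When $1/g(n) - 1/f(n) > 0$ (up to constant factors), the two intervals are separated and any threshold inside the gap decides the problem; the known constant $n_0$ in the hypothesis fixes the regime where this window is nonempty.

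The main obstacle -- and really the only one -- is tracking the spectral spread $\lambda_{\max}(H) - \lambda_0(H)$, which under the bare normalization $\|H\| \le 1$ can be as large as $2$ and would naively cost a factor of two in the cushion on the {\sc yes}-side. The plan to absorb this is to enlarge $n_0$ by an $\mO(1)$ amount and, if necessary, to first preprocess $H$ by a norm-preserving affine shift that packs the spectrum into an interval of length at most one before running the verifier; both steps only tighten the hypothesis by a constant multiple and leave the asymptotic form of the theorem unchanged.
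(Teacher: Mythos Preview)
Your proposal is essentially the paper's proof: take $\text{desc}(u)$ as the $\NP$-witness, evaluate $\bra{u}H\ket{u}$ term-by-term via Definition~\ref{def:cds}(ii), and threshold. Two small bookkeeping differences are worth noting. First, the paper disposes of $n < n_0$ explicitly by brute-force diagonalisation (constant cost, since $n_0$ is a known constant), whereas you only gesture at this. Second, and this addresses your last-paragraph worry: the paper never passes through the spectral spread $\lambda_{\max}-\lambda_0$. Rather than your identity $\bra{u}H\ket{u} = \lambda_0 + (1-p_0)\bigl(\bra{g^\perp}H\ket{g^\perp} - \lambda_0\bigr)$, it bounds the two pieces of $|\alpha_1|^2\lambda_0 + |\alpha_2|^2\bra{\phi_0^\perp}H\ket{\phi_0^\perp}$ separately by $a$ and by $|\alpha_2|^2\|H\| \le 1/f(n)$, landing at $\bra{u}H\ket{u} \le a + 1/f(n)$ directly with no factor of two, and sets the threshold exactly there; the {\sc no}-side bound $\ge a + 1/g(n)$ then separates strictly under the bare hypothesis $1/g-1/f>0$. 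Your affine-shift remedy would not have saved you anyway: rescaling the spectrum by $\tfrac12$ to pack it into a unit interval also rescales the promise gap $\delta$ by $\tfrac12$, reproducing the same shortfall.
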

\begin{proof}
\noindent\textit{Hardness:}
$\NP$-hardness follows again trivially from Lemma~\ref{lem:CGaLH_NP-hard}, by taking $g(n) = \mO(1)$ and letting $f(n)$ be some arbitrarily large function such that $f(n) \gg g(n)$, which gives a loose lower bound on $\zeta$ which can be as large as exactly $1$.

\noindent\textit{Containment:}
To prove this we split the regime into the $n \geq n_0$ case and the $n < n_0$ case, giving separate algorithms for both cases.

\paragraph{\bf $n < n_0$ case:}
In this setting we have that $n \in [1,n_0]$, which is a constant. Hence, we can simply diagonalize the full Hamiltonian and compute its ground state energy in time upper bounded by some constant.

\paragraph{\bf $n\geq n_0$ case:} 
As always, denote $\Pi_\text{gs}$ for the projector on the ground space of $H$. The verifier expects to be given a $\text{desc}(u)$ such that $\norm{\Pi_\text{gs} \ket{u}}^2 \geq 1-1/f(n)$ and checks if $\bra{u}H\ket{u} \leq a + 1/f(n)$. Let us now check completeness and soundness of this simple protocol. By the definition of the problem, we must have that $\norm{H} \leq 1$. We can write $\ket{u} = \alpha_1 \ket{\phi_0}  + \alpha_2 \ket{\phi_0^\perp}$ for $|\alpha_1|^2 + |\alpha_2|^2 = 1$. Here $\ket{\phi_0}$ lives in the ground space of $H$ and $\ket{\phi_0^\perp}$ in the subspace orthogonal to the ground state. Note that $|\alpha_2|^2 = 1-\norm{\Pi_\text{gs} \ket{u}}^2 \leq \frac{1}{f(n)}$.  Therefore, we must have that in the {\sc yes}-case 
\begin{align*}
    \bra{u}H\ket{u} &= |\alpha_1|^2 \bra{\phi_0} H \ket{\phi_0} + |\alpha_2|^2 \bra{\phi_0^\perp} H  \ket{\phi_0^\perp}\\
    &\leq a+|\alpha_2|^2 \bra{\phi_0^\perp} H  \ket{\phi_0^\perp}\\
    &\leq a+ \norm{H}/f(n)\\
    &\leq a + 1/f(n) .
\end{align*}
In the {\sc no}-case, we can simply evoke the variational principle
\begin{align*}
    \bra{u}H\ket{u} & \geq \lambda_0(H) \geq b = a + 1/g(n) > a + 1/f(n), 
\end{align*}
for all $\ket{u}$, by the assumption of the functions for $n \geq n_0$. Therefore, the two cases are separated, and can therefore be distinguished from one another (using the fact that for our definition of classically evaluatable states the expectation of local observables can be computed exactly\footnote{In practice, one would want the difference to be large enough to be able to detect them with machine precision.}).
\end{proof}

\section{Quantum-classical probabilistically checkable proofs}
\label{sec:QCPCP}
In this section we initiate the study of a new complexity class that sits right between the classical and quantum PCPs. First, let us recall some basic definitions and facts about PCPs.
\begin{definition}[Probabilistically checkable proofs (PCPs)] 
Let $n\in \mathbb{N}$ be the input size and $q: \mathbb{N} \rightarrow \mathbb{N},r :\mathbb{N} \rightarrow \mathbb{N}$. A promise problem $A = (A_\text{yes},A_\text{no})$ has a $(r(n),q(n))$-$\PCP$ verifier if there exists a polynomial-time probabilistic algorithm $V$ which takes an input $x \in \{0,1\}^n$, and has random access to a string $\pi \in \{0,1\}^{*}$ of length at most $q(n) 2^{r(n)}$, uses at most $r(n)$ random coins and makes at most $q(n)$ non-adaptive queries to locations of $\pi$, such that 
\begin{itemize}
\item\textbf{Completeness.} If $x\in A_\text{yes}$, then there is a proof $\pi$ such that $V^\pi(x)$ accepts with certainty.
\item \textbf{Soundness.} If $x \in A_\text{no}$, then for all proofs $\pi$ we have that $V^\pi(x)$ accepts with probability at most $\frac{1}{2}$.
\end{itemize}
A promise problem $A = (A_\text{yes},A_\text{no})$ belongs to $\PCP[q,r]$ if it has a $(r(n),q(n))$-$\PCP$ verifier.
\end{definition}
The celebrated PCP theorem states that $\NP = \PCP[\mO(1),\mO(\log n)]$~\cite{Arora1998proof,Arora1998probabilistic}. It also implies that there exists a constant $\alpha$ such that it is $\NP$-hard to decide for a constraint satisfaction problem with `promise gap' $\alpha$. Dinur~\cite{Dinur2007thepcp} showed that this implication can be obtained directly, by reducing from a constraint satisfaction problem with inverse polynomial promise gap to one with a constant promise gap, whilst retaining $\NP$-hardness. This type of reduction is commonly referred to as \textit{gap amplification}.

\

\noindent A quantum variant to $\PCP$ can naturally be defined as follows~\cite{Aharonov2008The,aharonov2013guest}.

\begin{definition}[Quantum Probabilistically Checkable Proofs ($\QPCP$)] Let $n\in \mathbb{N}$ be the input size and $p,q : \mathbb{N} \rightarrow \mathbb{N}$, $c,s : \mathbb{R}_{\geq 0} \rightarrow \mathbb{R}_{\geq 0} $ with $c-s >0$. A promise problem $A = (A_\text{yes},A_\text{no})$ has a $(p(n),q(n),c,s)$-$\QPCP$-verifier if there exists a quantum algorithm $V$ which acts on an input $\ket{x}$ and a polynomial number of ancilla qubits, and takes as additional input a quantum state $\ket{\xi} \in \left(\mathbb{C}^{2}\right) ^{\otimes p(n)}$, from which it is allowed to access at most $q(n)$ qubits, followed by a measurement of the first qubit after which it accepts only if the outcome is $\ket{1}$, such that
\begin{itemize}
\item[~] \textbf{Completeness.} If $x\in A_\text{yes}$, then there is a quantum state $\ket{\xi}$ such that the verifier accepts with probability at least $c$,
\item[~] \textbf{Soundness.} If $x \in A_\text{no}$, then for all quantum states $\ket{\xi}$ the verifier accepts with probability at most $s$.
\end{itemize}
A promise problem $A = (A_\text{yes},A_\text{no})$ belongs to $\QPCP[p,q,c,s]$ if it has a $(p(n) , q(n),c,s)$-$\QPCP$ verifier. If $p(n) \leq \poly(n)$, $c=2/3$, and $s=1/3$, we simply write $\QPCP[q]$. 
\label{def:QPCP}
\end{definition}

\noindent And likewise, there is a formulation of the \textit{quantum} PCP conjecture, using the above notion quantum probabilistically checkable proofs (QPCPs).
\begin{conjecture}[QPCP conjecture - proof verification version] There exists a constant $q \in \mathbb{N}$ such that 
\begin{align*}
    \QMA = \QPCP[q].
\end{align*}
\label{conj:conj1}
\end{conjecture}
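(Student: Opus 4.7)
The plan is to attack the equality in the two standard directions. The inclusion $\QPCP[q] \subseteq \QMA$ for any constant $q$ is immediate: a $\QMA$ verifier receives the polynomially-sized quantum proof in full, simulates the $\QPCP[q]$ verifier internally (which only queries $q$ qubits of it), and inherits the same completeness and soundness. The real work is in the reverse inclusion $\QMA \subseteq \QPCP[q]$, and my plan here is to mirror Dinur's gap-amplification strategy, applied to the local Hamiltonian formulation and then translated back via the quantum equivalence between the proof-checking and Hamiltonian formulations~\cite{aharonov2013guest}.

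More concretely, I would start from a $1/\poly(n)$-gap $\QMA$-hard local Hamiltonian coming from Kitaev's construction and attempt a quantum analogue of Dinur's three-step amplification loop: a preprocessing step that regularizes the interaction hypergraph, a powering step that essentially considers $H^t$ for growing $t$ in order to widen the (relative) promise gap, and a locality-reduction step that undoes the growth in locality introduced by powering. If each round of the loop multiplies the relative gap by a constant factor strictly greater than one while only polynomially inflating the system size, iterating $\mO(\log n)$ times amplifies a $1/\poly(n)$ gap up to a constant, and the standard reduction from a constant-gap Hamiltonian to a $q$-query verifier (sample a local term, measure it on the proof, accept/reject) then produces the desired $\QPCP[q]$ protocol.

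The hard part will be the locality-reduction step, and to a large extent the quantum powering step itself. Known perturbative gadgets, e.g.~\cite{bravyi2008quantum}, inflate the operator norm of $H$ by constant factors and so destroy exactly the relative promise gap that the powering just produced; moreover, Theorem~\ref{thm:QCPCP_nogos} of this paper shows that gadgets which preserve too much classical structure in the ground space cannot exist unless $\QCMA$ collapses into $\NP$ or $\NqP$. A successful amplification would therefore require genuinely new locality-reducing gadgets that simultaneously (i) do not inflate the norm, (ii) preserve the spectral gap and the geometry of the low-energy eigenspaces, and (iii) do not push ground states into the family of classically evaluatable states of Definition~\ref{def:cds}. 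Further constraints come from the no-go results of~\cite{brandao2013product} on grids and expanders, which restrict the interaction topology that the amplified Hamiltonian may use, and from the fact that a quantum analogue of Dinur's expander-walk powering does not straightforwardly respect the tensor structure of entangled proofs. Given the difficulty of meeting all these constraints at once, a more realistic intermediate target than a full proof is to first establish the NLCES conjecture (Conjecture~\ref{conj:NLCES}) or its strong form proposed in this work, which would already rule out the most natural classical witnesses for the constant-gap local Hamiltonian problem and constitute concrete evidence for the QPCP conjecture without yet proving it.
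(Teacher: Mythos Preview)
This statement is a \emph{conjecture}, not a theorem: the paper does not prove it, and indeed presents it as one of the central open problems in quantum complexity theory. There is therefore no ``paper's own proof'' to compare against.

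Your proposal is not a proof but a research outline, and you appear to recognise this by the end when you retreat to the NLCES conjecture as a ``more realistic intermediate target.'' The easy direction $\QPCP[q] \subseteq \QMA$ is fine and standard. For the hard direction, however, your plan amounts to ``carry out a quantum Dinur,'' and the obstacles you list are not mere technicalities to be overcome but genuine barriers for which no workaround is currently known. In particular: (i) the powering step has no known quantum analogue that respects entanglement structure and amplifies the \emph{relative} gap; (ii) the locality-reduction step, as you note, is blocked by the norm blow-up of existing gadgets and by the no-go results of Theorem~\ref{thm:QCPCP_nogos}; and (iii) you have not supplied any new idea for either step. Listing the obstacles correctly is valuable context, but it does not constitute progress toward a proof. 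As it stands, your proposal identifies the shape of the problem without advancing a candidate solution to any of its known bottlenecks.
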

As in the classical PCP theorem, there exist equivalent formulations of the QPCP conjecture. In particular, one can formulate the PCP theorem in terms of \emph{inapproximability} of constraint satisfaction problems (CSPs), analogously to the classical setting. In the context of the quantum complexity classes (notably $\QMA$), `quantum' CSPs are generalized by local Hamiltonian problems. The formulation of the quantum PCP conjecture in terms of inapproximability of local Hamiltonians is:

\begin{conjecture}[QPCP conjecture - gap amplification version] There exists a (quantum) reduction from the local Hamiltonian problem with promise gap $1/\poly(n)$ to another instance of the local Hamiltonian problem with promise gap $\Omega(1)$.
\label{conj:conj2}
\end{conjecture}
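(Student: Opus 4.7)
The plan is to quantize Dinur's iterative gap amplification proof of the classical PCP theorem. Starting from an instance $(H, a, b)$ of the local Hamiltonian problem with $\norm{H}\leq 1$ and $b-a = 1/\poly(n)$, one would seek a polynomial-time (quantum) reduction to a new local Hamiltonian $H'$ on $\poly(n)$ qubits with $\norm{H'}\leq 1$ and a constant promise gap $b'-a' = \Omega(1)$, preserving the \textsc{yes}/\textsc{no} distinction. The natural scheme is inductive: build $H = H_0, H_1, \ldots, H_k$ with $k = \mO(\log n)$, where each step boosts the promise gap by a constant multiplicative factor while inflating the qubit count by only a polynomial factor and keeping the Hamiltonian $\mO(1)$-local with bounded norm.

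Each inductive step would mirror Dinur's three substeps: (i) \emph{preprocessing}, turning the interaction hypergraph of $H_i$ into a constant-degree quantum expander, a direction already initiated in \cite{Aharonov2008The}; (ii) \emph{powering}, replacing each local term by a weighted sum over short random walks on the interaction graph, thereby boosting the promise gap at the cost of increasing the locality $k_i$; and (iii) \emph{locality reduction}, applying perturbative Hamiltonian gadgets to return to some fixed locality $k$ without erasing the gain from step (ii). To fully mimic Dinur one would additionally want a quantum analog of an \emph{assignment tester / PCP of proximity}: a constant-query local test deciding whether a (possibly encoded) state lies close to the low-energy subspace of a target Hamiltonian, which could then be composed with the above three steps to drive the alphabet down.

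The main obstacle, already flagged in the paper, is step (iii). Existing locality-reducing gadgets such as those of \cite{bravyi2008quantum} inflate the operator norm by a constant factor, so that after renormalizing to $\norm{H'}\leq 1$ the \emph{relative} promise gap shrinks by the same constant and the gain from powering is cancelled out. Any successful attack thus requires new perturbative gadgets that preserve both locality and relative gap, \emph{and} additionally the structural constraints imposed by the no-go results of Theorem~\ref{thm:QCPCP_nogos} (otherwise the amplification would spuriously collapse $\QCMA$ into $\NP$ or $\NqP$ via the classically guidable side of the picture). A secondary obstacle is circularity: a genuine quantum PCP of proximity is itself a tailored QPCP, so any clean Dinur-style induction needs a non-trivial base case already containing much of the conjecture's content.

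A plausible alternative that sidesteps gap amplification altogether is to construct a constant-gap hard family \emph{directly} from good quantum LDPC codes, extending the techniques behind the NLTS theorem of \cite{anshu2022nlts}. Under this route the NLCES conjecture and its strong variant formulated above (Conjecture~\ref{conj:NLCES_inf}) would be natural intermediate milestones, progressively ruling out larger classes of efficient classical witnesses and thereby narrowing the space of possible classical refutations of Conjecture~\ref{conj:conj2}; a proof of the strong NLCES conjecture for a family arising from $\QMA$-hard circuit-to-Hamiltonian constructions (rather than from CSS code Hamiltonians alone) would go a long way towards establishing the constant-gap $\QMA$-hardness that the conjecture demands.
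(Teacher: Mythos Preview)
The statement you are attempting to prove is a \emph{conjecture}, not a theorem: Conjecture~\ref{conj:conj2} is the gap-amplification formulation of the quantum PCP conjecture, and the paper does not provide a proof of it (nor does any other work to date). There is therefore no ``paper's own proof'' to compare your proposal against; the paper merely states the conjecture and records the known equivalence with the proof-verification formulation (Fact~\ref{fact:QPCP}).

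What you have written is not a proof but a research programme: a sketch of how one might try to quantize Dinur's argument, together with an accurate accounting of the known obstacles (norm blow-up in locality-reducing gadgets, the circularity issue with quantum PCPs of proximity, and the constraints imposed by Theorem~\ref{thm:QCPCP_nogos}). As a survey of possible attack routes and their difficulties this is reasonable and well-informed, and indeed the alternative route via quantum LDPC codes and NLCES-type statements is consonant with the paper's own suggestions in Section~\ref{sec:PCP}. But none of the steps you outline has been successfully carried out; in particular, no gadget construction is known that simultaneously reduces locality and preserves the relative promise gap, and no quantum assignment tester is known. Your proposal should therefore be read as motivation and context rather than as an argument establishing the conjecture.
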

\noindent It is well known that, at least under \emph{quantum reductions}, both conjectures are in fact equivalent:
\begin{fact}[\cite{Aharonov2008The}]Conjecture~\ref{conj:conj1} holds if and only if conjecture~\ref{conj:conj2} holds.
\label{fact:QPCP}
\end{fact}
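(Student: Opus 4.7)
The plan is to prove both directions, following the template of Aharonov--Arad--Landau--Vazirani, using throughout that the local Hamiltonian problem with inverse-polynomial promise gap is $\QMA$-complete so that we can reason about this canonical problem.

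For Conjecture~\ref{conj:conj2} $\Rightarrow$ Conjecture~\ref{conj:conj1}, starting from an arbitrary problem in $\QMA$ I would first apply Kitaev's circuit-to-Hamiltonian construction to obtain a $k$-local Hamiltonian instance with inverse-polynomial promise gap, and then invoke the hypothesised gap-amplification reduction to obtain a $k' = \mO(1)$-local Hamiltonian $H = \frac{1}{m}\sum_i H_i$ with $\|H_i\| \leq 1$ and constant promise gap $b - a$. A $\QPCP[\mO(1)]$ verifier is now immediate: on proof $\ket{\xi}$, sample $i$ uniformly, measure the two-outcome POVM $\{I - H_i,\, H_i\}$ on the at most $k'$ qubits on which $H_i$ acts non-trivially, and accept on the first outcome. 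The expected acceptance probability equals $1 - \bra{\xi}H\ket{\xi}$, so completeness and soundness transfer directly from the promise gap, while only $q = k' = \mO(1)$ qubits of the proof are queried.

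For Conjecture~\ref{conj:conj1} $\Rightarrow$ Conjecture~\ref{conj:conj2}, suppose $\QMA \subseteq \QPCP[q]$ for a constant $q$. Let $V$ be the resulting verifier for a $\QMA$-complete problem, using $r(n) = \mO(\log n)$ random coins (after standard amplification) and querying at most $q$ qubits of a $p(n)$-qubit proof. For each coin string $c \in \{0,1\}^{r(n)}$, $V$ selects a query subset $S_c \subseteq [p(n)]$ of size at most $q$ and performs a fixed unitary on those qubits together with $\poly(n)$ ancillas initialised to $\ket{0}$; since the acceptance probability is linear in the reduced proof state on $S_c$, tracing out the ancillas produces a single $2^q \times 2^q$ POVM element $M_c$ with $\Pr[V^{\ket{\xi}}\text{ accepts} \mid c] = \text{Tr}(M_c\, \sigma_{S_c})$, where $\sigma_{S_c}$ is the reduced density matrix of $\ket{\xi}$ on $S_c$. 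Define
\[
    H' \;=\; \frac{1}{2^{r(n)}} \sum_{c} (I - M_c) \otimes I_{[p(n)] \setminus S_c}.
\]
This is a $q = \mO(1)$-local Hamiltonian on the proof register with $\|H'\| \leq 1$ whose ground energy equals $1 - \max_{\ket{\xi}} \Pr[V^{\ket{\xi}}\text{ accepts}]$. The $(2/3, 1/3)$ completeness--soundness gap of $V$ therefore becomes a constant promise gap for $H'$, giving the desired reduction from the (inverse-polynomial gap) $\QMA$-complete local Hamiltonian problem to one with constant gap.

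The main obstacle lies in efficiently producing the local terms $I - M_c$: their entries are amplitudes of $V$'s polynomial-size quantum circuit, so in general they can only be computed by a $\BQP$-type procedure, which is exactly why the equivalence is only stated under \emph{quantum} polynomial-time reductions. Standard amplification techniques should also be invoked in advance to ensure both that $r(n) = \mO(\log n)$ (so the sum over $c$ has polynomially many terms) and that the completeness--soundness gap of $V$ is a fixed universal constant; crucially, neither step increases the number of proof qubits queried, so the locality of $H'$ remains a constant throughout.
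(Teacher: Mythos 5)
The paper does not prove this statement --- it is imported as a cited fact from the reference --- so there is no in-paper argument to compare against; your proof is the standard one from that reference and its overall structure is sound in both directions. A few points are worth tightening. In the direction from gap amplification to proof verification, the POVM $\{I - H_i, H_i\}$ is only valid if $0 \preceq H_i \preceq I$, which requires first shifting and rescaling the terms (e.g.\ replacing $H_i$ by $(H_i + I)/2$, which halves but preserves the constant promise gap); and the resulting completeness/soundness pair $(1-a, 1-b)$ must still be amplified to $(2/3, 1/3)$, which for this particular verifier works by parallel repetition on disjoint proof registers because the majority-vote POVM element is diagonal in a product eigenbasis, so the optimal proof is a product state. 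In the converse direction, your appeal to ``$r(n) = \mO(\log n)$ random coins after standard amplification'' is both unjustified (amplification does not reduce randomness) and unnecessary: for constant $q$ there are only $\binom{p(n)}{q} = \poly(n)$ possible query subsets, so one can group coin strings by the subset they select and write $H' = \sum_S \Pr[S]\,(I - \bar{M}_S)\otimes I$ with $\bar{M}_S$ the conditional average of the acceptance POVM, yielding polynomially many $q$-local terms regardless of $r(n)$. Finally, your closing remark about the reduction being quantum is the right diagnosis but slightly overstated: even a quantum machine cannot compute the entries of $M_c$ exactly, only estimate them to inverse-polynomial additive precision by sampling, so the reduction outputs an approximate Hamiltonian $\tilde H$ with $\|\tilde H - H'\|$ at most a small constant, which still preserves a constant promise gap --- exactly the mechanism this paper uses in its analogous $\QCPCP$ reduction (Lemmas~\ref{lem:H_apx}--\ref{lem:reduction_to_H}).
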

\subsection{Quantum-classical PCPs}
We now consider the notion of a quantum PCP conjecture that, intuitively, conjectures the existence of polynomial-time quantum verifiers for $\QCMA$ problems that need only check a constant number of (the now classical) bits of the proof to satisfy constant completeness and soundness requirements. More formally, this reads as:

\begin{definition}[Quantum-Classical Probabilistically Checkable Proofs ($\QCPCP$)] Let $n\in \mathbb{N}$ be the input size and $p, q : \mathbb{N} \rightarrow \mathbb{N}$, $c,s : \mathbb{R}_{\geq 0} \rightarrow \mathbb{R}_{\geq 0} $ with $c-s >0$. A promise problem $A = (A_\text{yes},A_\text{no})$ has a $(p(n),q(n),c,s)$-$\QCPCP$-verifier if there exists a quantum algorithm $V$ which acts on an input $\ket{x}$ and a polynomial number of ancilla qubits, plus an additional bit string $y \in \{0,1\}^{p(n)}$ from which it is allowed to read at most $q(n)$ bits (non-adaptively), followed by a measurement of the first qubit, after which it accepts only if the outcome is $\ket{1}$, and satisfies:
\begin{itemize}
\item[~] \textbf{Completeness.} If $x\in A_\text{yes}$, then there is a $y \in \{0,1\}^{p(n)}$ such that the verifier accepts with probability at least $c$,
\item[~] \textbf{Soundness.} If $x \in A_\text{no}$, then for all $y \in \{0,1\}^{p(n)}$ the verifier accepts with probability at most $s$.
\end{itemize}
A promise problem $A = (A_\text{yes},A_\text{no})$ belongs to $\QCPCP[p,q,c,s]$ if it has a $(p(n) , q(n),c,s)$-$\QCPCP$ verifier. If $p(n)=\mathcal{O}(\poly(n))$, $c=2/3$, and $s=1/3$, we simply write $\QCPCP[q]$. 
\label{def:QCPCP}
\end{definition}

We remark that there are likely several ways to characterise a PCP for $\QCMA$, with some being more or less natural than others. With that said, we believe that the above characterisation is well-motivated for the following reasons:
\begin{enumerate}
    \item It is a natural definition following the structure of a $\QPCP$ verifier as in Definition~\ref{def:QPCP}, now with proofs given as in the standard definition of $\QCMA$ (see Definition~\ref{def:QCMA}). 
    \item $\QCPCP[\mO(1)]$ captures the power of $\BQP$ as well as $\NP$ (via the PCP theorem), which are both believed to be strictly different complexity classes. Since techniques used to prove the PCP theorem are difficult (or impossible) to translate to the quantum setting~\cite{aharonov2013guest}, studying $\QCPCP[\mO(1)]$ might provide a fruitful direction with which to obtain the first non-trivial lower bound on the complexity of $\QPCP[\mO(1)]$. Indeed, the currently best known lower bound on the complexity of $\QPCP[\mO(1)]$ is $\NP$ via the PCP theorem. 
\end{enumerate}
\ 
Given this definition for QCPCPs, our `quantum-classical' PCP conjecture is naturally formulated as follows.
\begin{conjecture} [quantum-classical PCP conjecture] There exists a constant $q\in \mathbb{N}$ such that 
\begin{align*}
 \QCMA = \QCPCP[q].
\end{align*}
\label{conj:QCPCP}
\end{conjecture}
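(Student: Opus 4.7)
The plan is to attempt a gap-amplification strategy in analogy with Dinur's proof of the classical PCP theorem, adapted to the quantum-classical setting. The containment $\QCPCP[q] \subseteq \QCMA$ for any constant $q$ is immediate because a $\QCMA$ verifier can simply read the entire classical proof $y$ and simulate the $\QCPCP$ verifier exactly, so the genuine content of the conjecture lies in the other direction, $\QCMA \subseteq \QCPCP[q]$.

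For the hard direction, my first step would be to establish an analogue of the equivalence between the proof-verification and local-Hamiltonian formulations of $\QPCP$ in the $\QCPCP$ setting. Using the CNOT-trick of Lemma~\ref{lem:CNOT_trick} one can force the witness to be classical without leaving $\QCMA$, and then the small-penalty clock construction of Lemma~\ref{lem:spcc} produces a local Hamiltonian whose low-lying eigenvalues record the acceptance probabilities of \emph{individual} classical witnesses $y$. This effectively turns any $\QCMA$ problem into a guidable local Hamiltonian problem with a $1/\poly(n)$ promise gap in which the guiding state is promised to be classically evaluatable. One would then try to perform a quantum gap amplification on this Hamiltonian that simultaneously (i) amplifies the $1/\poly(n)$ gap to a constant and (ii) preserves the existence of a classically described witness that a quantum procedure can verify by reading only constantly many classical bits.

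The hard part will be that precisely the no-go results (Theorem~\ref{thm:QCPCP_nogos}) and the classical containment of $\mathsf{CGaLH}$ at constant promise gap (Theorem~\ref{thm:cp_NP_NqP}) block the most natural implementation: any gap amplification procedure that preserves the classical-evaluatability of the guiding state collapses $\QCMA$ into $\NP$, which is not expected to hold. Hence the amplification step cannot be fidelity-preserving in a way that keeps the guiding state classically evaluatable; it must instead produce a new guidable Hamiltonian whose low-energy subspace is only \emph{quantumly} accessible, while still admitting a constant-query \emph{classical} verifier at the level of the proof string. Constructing such a transformation -- one that is strong enough to amplify the gap but weak enough not to trivialize the class -- is, to my mind, the crux of the problem.

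An alternative route, closer to Dinur's classical proof, would be to work directly in the proof-verification formulation: take a $\QCMA$ verifier and encode its classical witness by a locally-testable error-correcting code, then design a quantum verifier that reads only a constant number of code bits and uses its quantum workspace to check consistency. Techniques such as PCPs of proximity, alphabet reduction and proof composition may transfer more readily to $\QCPCP$ than to $\QPCP$, since the proof itself is classical and the no-cloning obstructions that have historically blocked the corresponding steps in the quantum PCP program do not directly apply. Even so, I expect that analyzing how the quantum verifier's measurement statistics interact with the soundness of a classical locally-testable code will require genuinely new techniques, and this is where I would concentrate effort.
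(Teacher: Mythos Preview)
The statement you are attempting to prove is labelled as a \emph{conjecture} in the paper (Conjecture~\ref{conj:QCPCP}), and the paper does not supply a proof of it. On the contrary, the paper treats the quantum-classical PCP conjecture as an open problem: it establishes the upper bound $\QCPCP[\mO(1)] \subseteq \BQP^{\NP[1]}$ (Theorem~\ref{thm:QCPCP_NP_qr}), derives consequences that would follow \emph{if} the conjecture held (Corollary~\ref{cor:NPBQP}), and explicitly lists proving or refuting it as a direction for future work. There is therefore no ``paper's own proof'' to compare against.

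Your write-up is not a proof either, and you appear to recognise this: you correctly note that the easy containment $\QCPCP[q] \subseteq \QCMA$ is trivial, and you then sketch two possible attack routes for the hard direction while simultaneously identifying the obstructions (Theorems~\ref{thm:cp_NP_NqP} and~\ref{thm:QCPCP_nogos}) that block the most natural gap-amplification approach. This is an accurate diagnosis of the landscape and matches the paper's own discussion, but it is a research programme rather than an argument. In particular, the step you flag as ``the crux of the problem'' --- constructing an amplification that is neither fidelity-preserving (which would collapse $\QCMA$ to $\NP$) nor so destructive that the classical proof structure is lost --- is precisely the unresolved difficulty; you have not supplied any mechanism for it. The alternative route via locally testable codes and proof composition is likewise only gestured at. So the proposal should be read as a reasonable survey of approaches to an open conjecture, not as a proof attempt that can be evaluated for correctness.
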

\begin{figure}
  \centering
\includegraphics[scale=1.7]{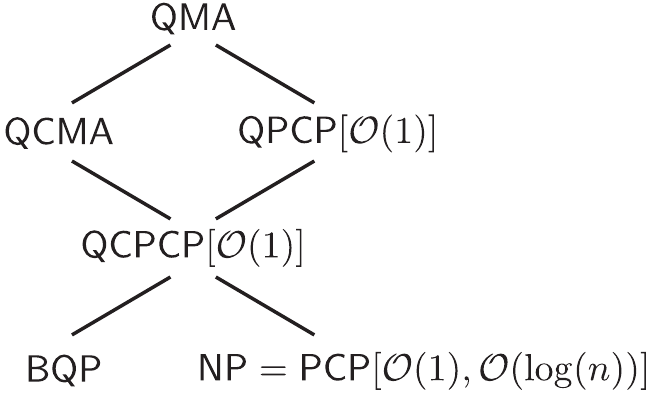}
  \caption{ Known inclusions between some complexity classes and our proposed class $\QCPCP[q]$, with $q= \mO(1)$. A line drawn from complexity class $A$ to another class $B$, where $B$ is placed above $A$ means that $A \subseteq B$. Note that the complexity of our proposed class $\QCPCP[\mO(1)]$ is non-trivial, as it contains both $\NP$ and $\BQP$ for which it is believed that both $ \BQP \not\subset  \NP$ and $ \BQP \not\supset  \NP$.}
  \label{fig:PCPs_inclusions}
\end{figure}
If true, this conjecture would give a  `$\QCMA$ lower bound' on the power of quantum PCP systems, showing that a PCP theorem holds for (quantum) classes above $\NP$, taking a step towards proving the quantum PCP conjecture. If it is false, but the quantum PCP conjecture is true, then this suggests that $\QPCP$ systems must take advantage of the quantumness of their proofs to obtain a probabilistically checkable proof system. In particular, since $\QCMA \subseteq \QMA$, this would imply the existence of a quantum PCP system for every problem in $\QCMA$, but \emph{not} a quantum-classical one, even though the problem admits a classical proof that can be efficiently verified when we are allowed to look at all of its bits. 

Note that we have that $\QCPCP[\mO(1)]$ trivially contains $\mathsf{CGaLH}(k,\zeta,\delta)$ whenever $k = \mO(\log n)$ $\zeta = \Omega(1)$ and $\delta = \Omega(1)$, since we have shown in Theorem~\ref{thm:cp_NP_NqP} that this problem is in $\NP$ and therefore admits a classical PCP system to solve the problem. This would not apriori be clear if we considered guiding states as in Definition~\ref{def:sampaccess}, as we do not know a PCP for the class $\MA$.

\subsubsection{Useful facts about $\QCPCP$}
We begin by showing two basic properties of $\QCPCP$, which we make use of later on. First, we show the simple fact that $\QCPCP$, just as is the case for $\PCP$, allows for error reduction at the cost of extra queries to the proof. 
\begin{proposition}[Strong error reduction for $\QCPCP$] The completeness and soundness parameters in $\QCPCP$ can be made exponentially close in some $t \in \mathbb{N}$ to $1$ and $0$, respectively, i.e.~$c=1-2^{-\mO(t)}$ and $s=2^{-\mO(t)}$, by making $tq$ queries instead of $q$ to the classical proof.
\label{prop:error_reduction}
\end{proposition}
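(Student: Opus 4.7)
The plan is to use a standard parallel repetition argument, analogous to classical PCP error reduction. Given a $\QCPCP[q]$ verifier $V$ with completeness $c = 2/3$ and soundness $s = 1/3$, I would construct a new verifier $V'$ that runs $t$ independent copies of $V$ in parallel on the same input $x$ and the same classical proof $y$, each copy using fresh ancilla qubits. Each copy makes at most $q$ non-adaptive queries to $y$, so $V'$ makes at most $tq$ queries in total. Crucially, non-adaptivity is preserved: since the $i$-th copy's query locations are a function of $x$ and its own fresh randomness (realised by ancilla preparation/measurement) but not of the proof's values, the union of all $tq$ query locations can be computed before any bit of $y$ is read.

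After all $t$ copies have produced their individual accept/reject indicator qubits, $V'$ coherently computes the majority function of these $t$ bits into its first output qubit and measures, accepting iff the outcome is $|1\rangle$. This is implementable by a polynomial-size quantum circuit that does not require any further queries to $y$.

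For the analysis, let $p_y = \Pr[V \text{ accepts } (x,y)]$. In the yes-case there exists $y^*$ with $p_{y^*} \geq 2/3$, so the number of accepting copies out of $t$ is a sum of independent Bernoulli random variables with mean at least $2t/3$; by the Chernoff--Hoeffding bound, the majority is $1$ with probability at least $1 - 2^{-\Omega(t)}$. In the no-case we have $p_y \leq 1/3$ for every $y$, and the same bound gives that the majority is $1$ with probability at most $2^{-\Omega(t)}$. Hence $V'$ witnesses $A \in \QCPCP[p, tq, 1-2^{-\Omega(t)}, 2^{-\Omega(t)}]$.

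I expect no serious obstacle here. The only subtlety worth flagging is the non-adaptivity claim: one must verify that quantum parallel composition does not inadvertently make queries depend on the answers to earlier queries. This is handled by the observation above that each copy of $V$ is already non-adaptive on its own, and the copies run on disjoint ancilla registers with no interaction between them prior to the final majority computation; in particular one can think of $V'$ as first deterministically (given its internal randomness) specifying all $tq$ query indices, then reading those bits of $y$, and only then carrying out the quantum computation and final majority gate.
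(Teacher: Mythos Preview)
Your proof is correct and takes essentially the same approach as the paper: standard parallel repetition with majority vote and a Chernoff bound. The paper's proof is terser but identical in content, additionally noting that querying the same proof bits multiple times across copies is unproblematic precisely because the proof is classical (in contrast to the $\QPCP$ setting).
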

\begin{proof}
This follows from a standard parallel repetition argument, running the $\QCPCP$ protocol $t$ times in parallel and taking a majority vote on the outcomes. Since the proof is classical, it does not matter if the same parts of the proof are queried multiple times by different runs, unlike the case when the proof is quantum. By taking a majority vote on the $t$ outcomes, this yields the desired result by a Chernoff bound.
\end{proof}
\begin{figure}
    \centering
    \includegraphics[width=\linewidth]{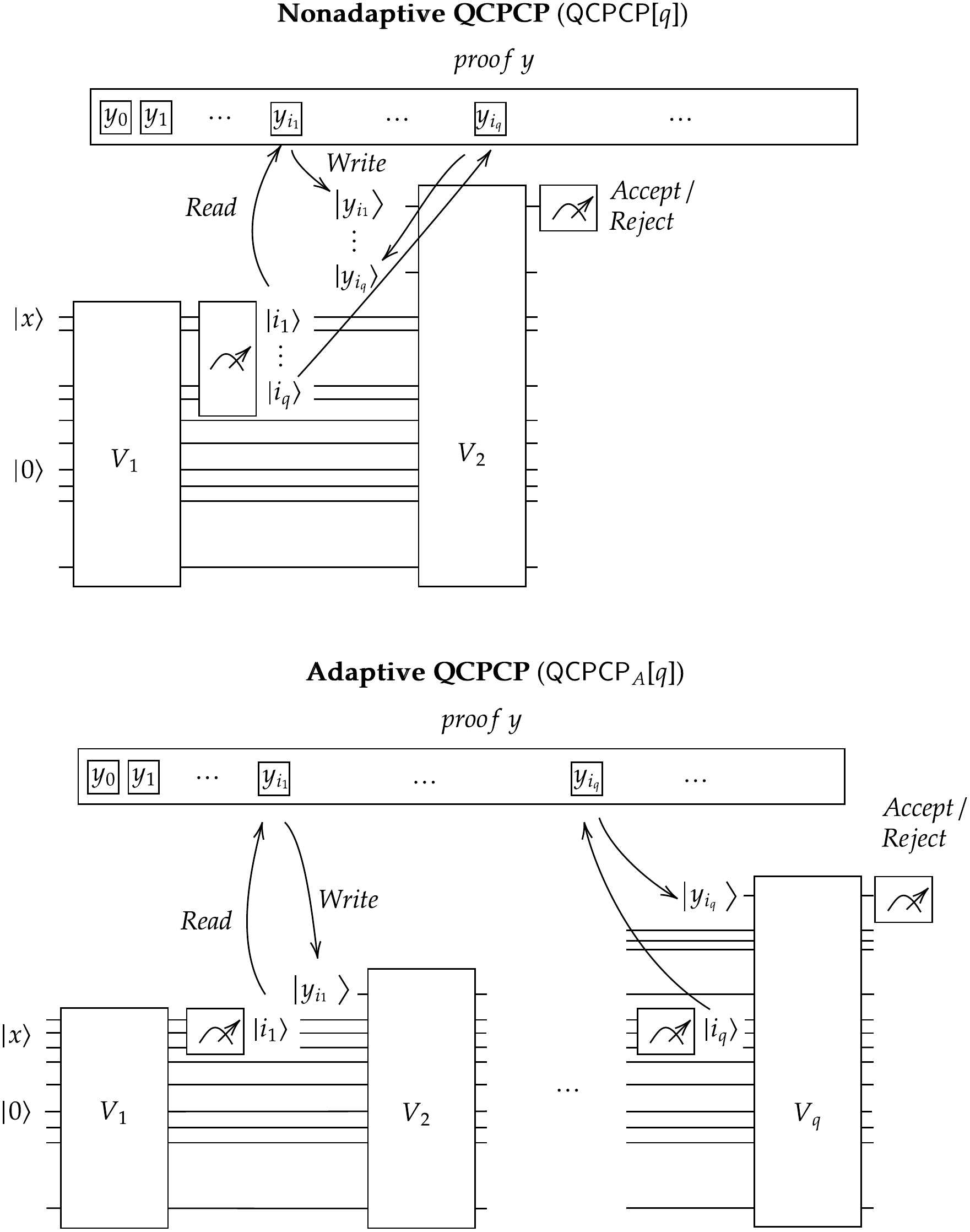}
    \caption{General quantum circuits for nonadaptive QCPCPs ($\QCPCP[q]$, top) versus adaptive ones ($\QCPCP_A[q]$, bottom). When we talk about making a `query' to the classical proof we mean a single read/write procedure, where a single proof bit is read from the proof and then used to initialise a qubit into the basis state corresponding to the value of the bit. To determine which part of the proof must be read, a measurement of some designated index qubits is performed, which outputs a basis state(s) corresponding to index (indices) of the proof. The entire post-measurement state is allowed to be an input to a subsequent quantum circuit.}
    \label{fig:QCPCP_overview}
\end{figure}
Second, we show that when one is interested in $\QCPCP[\mO(1)]$, the non-adaptiveness restriction in the definition does not limit the power of the class. The proof of this is similar to how one would prove it for $\PCP[\mO(1),\mO(\log n)]$, but encounters one difficulty. In the classical setting, the probabilistic PCP verifier can be replaced by a deterministic one that has an auxiliary input for a string which is taken uniformly at random. This way, one can `fix the randomness' in the verifier, and the queries to the proof form a decision tree on the choices over the values of the proof bits, of which only one path can correspond to the actual proof. However, in the quantum setting there is no such equivalent notion of `fixing quantumness', and therefore at every step the circuit might output different values for the indices of the proof that are supposed to be queried, even when all preceding steps gave the same indices and yielded identical values for the proof queries as compared to another run. Fortunately, it is easy to fix this by simply running the circuit many times in a simulated manner, using randomly chosen `fake' proof bits. The probability that one matches the actual proof bits with the randomly chosen ones is small, but still constant when the number of queries to the proof is constant. We formalize this in the following theorem.

\begin{theorem} Let $ \QCPCP_A$ be just as $ \QCPCP$ but with the power to make adaptive queries to the proof. We have that
\begin{align*}
    \QCPCP[\mO(1)] = \QCPCP_A[\mO(1)].
\end{align*}
\label{thm:QCPCP_A}
\end{theorem}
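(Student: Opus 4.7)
The containment $\QCPCP[\mO(1)] \subseteq \QCPCP_A[\mO(1)]$ is immediate, since every non-adaptive protocol is a special case of an adaptive one. The interesting direction is the reverse, and my plan is to simulate an adaptive $\QCPCP_A[q]$ verifier $V$ (with $q = \mO(1)$, completeness $c$, soundness $s$) by a non-adaptive $\QCPCP[q]$ verifier $V'$ that preserves an $\Omega(1)$ completeness-soundness gap, followed by a round of parallel-repetition error reduction as in Proposition~\ref{prop:error_reduction} to restore the standard parameters.

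The construction of $V'$ that I have in mind follows the informal hint in the paragraph preceding the theorem. First, $V'$ samples $q$ uniformly random ``fake'' bits $\vec b = (b_1, \ldots, b_q)$ from fresh ancillas. It then quantumly simulates $V$'s entire adaptive execution end-to-end, feeding $b_j$ as the response each time $V$ would have made its $j$-th proof query, and recording both the query indices $i_1, \ldots, i_q$ produced by $V$'s index measurements and $V$'s final decision bit $d$. By deferring all intermediate measurements to the end of the simulation, this phase can be written as a single quantum circuit that, upon measurement, yields the classical string $(\vec b, i_1, \ldots, i_q, d)$ without ever reading the real proof. Only after this phase does $V'$ access the proof, querying it non-adaptively at positions $i_1, \ldots, i_q$, and outputting $d$ if $b_j = y_{i_j}$ for every $j$ and rejecting otherwise. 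Because all $q$ read locations are fixed before any proof bit is touched, $V'$ fits the non-adaptive syntax of Definition~\ref{def:QCPCP}.

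The core analytic step will be establishing the identity
\[
\Pr[V' \text{ accepts } (x,y)] \;=\; \frac{1}{2^q}\,\Pr[V \text{ accepts } (x,y)],
\]
which I would prove by a coupling argument. Conditioned on any fixed $\vec b$, the joint distribution the simulation produces over $(i_1, \ldots, i_q, d)$ coincides exactly with the distribution the real adaptive $V$ would produce if the proof happened to answer $\vec b$ to its queries, since the unitary evolution after a query depends only on the classical value supplied to the response register and not on whether that value was read from $y$ or from an ancilla. Summing the probability of the ``matching'' branch ($b_j = y_{i_j}$ for all $j$) over the $2^q$ equally likely choices of $\vec b$ then telescopes, for each possible index tuple, into $\Pr[V \text{ accepts}]/2^q$. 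The gap $(c-s)/2^q$ is $\Omega(1)$ because $q = \mO(1)$, and a constant number of parallel repetitions of $V'$ with a threshold vote (the same reasoning as in Proposition~\ref{prop:error_reduction}) boosts the parameters to the usual $2/3$ versus $1/3$ using only $\mO(1)$ further proof queries.

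The main obstacle I expect is making this coupling step rigorous. Classically, the standard way to eliminate adaptivity is to fix the verifier's random tape and then collapse the resulting depth-$q$ decision tree into a single parallel read of all $2^q$ leaves. In the quantum setting there is no analogous way to ``freeze'' the circuit's internal randomness -- the index measurements continue to branch even after the query bits are declared -- and one has to carefully track the fact that the indices output by the simulation themselves depend on the fake responses that preceded them. The guess-and-verify step pays a $2^{-q}$ penalty precisely to sidestep this obstruction, which is only affordable because $q$ is constant; once $q$ were allowed to grow, the reduction would lose its constant gap, which is presumably why the theorem is restricted to $\QCPCP[\mO(1)]$.
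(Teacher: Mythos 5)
Your proposal is correct and follows essentially the same guess-and-verify strategy as the paper's proof: sample fake proof bits uniformly, simulate the adaptive verifier end-to-end without touching the proof, then query the real proof non-adaptively at the recorded indices and accept only on a consistent, accepting run. The only organizational difference is that the paper runs $C2^q$ independent trials and uses the first consistent one (keeping soundness at $s$ but shrinking completeness by a constant factor), whereas you run once and scale both parameters by exactly $2^{-q}$; both variants leave an $\Omega(1)$ gap with both acceptance probabilities below $1/2$, so both correctly need the threshold-vote (Chernoff) form of the amplification in Proposition~\ref{prop:error_reduction} rather than a bare majority vote, which you noted.
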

\begin{proof} The `$\subseteq$' is trivial since adaptive queries generalise non-adaptive ones, so we only have to show `$\supseteq$'. Let $V$ be the circuit description of an adaptive $\QCPCP[\mO(1)]$ protocol which makes $q$ queries to a classical proof $y$. W.l.o.g., we can view $V$ as a quantum circuit that consists of applying a unitary $V_0$ to the input and some ancilla qubits all initialized in $\ket{0}$, followed by a measurement to determine which bit of the proof to read, after which a new qubit is initialized in the basis state corresponding to the proof bit value, followed by another unitary $V_1$ that acts on all qubits, and so on, until at the end a measurement of a single designated output qubit is performed to decide whether to accept or reject (see Figure~\ref{fig:QCPCP_overview}). We can simulate $V$ on proof $y$ non-adaptively in the following way. For a total of $T = C 2^q$ times, with $C > 1$ some constant, we pick  a $z \in \{0,1 \}^q$ uniformly at random, and run circuit $V$ where the answers to the proof queries are taken to be the bits of $z$. The final output qubit is measured as usual. This yields a total of $T$ tuples $(z,i_1,\dots,i_q,o)$, where the first entry indicates the value of the sampled $z \in \{0,1\}^q$, the following $q$ entries the $q$ indices of the bits that were supposed to be queried by $V$, and the last one the outcome of the final measurement, which has $o = 1$ if the circuit accepted and $o = 0$ if it rejected. For all $T$ runs of $V$, we can check if each $(z,i_1,\dots,i_q)$ is consistent with the actual proof $y$, i.e.~whether $z_{i_l} = y_{i_l}$ for all $l \in [q]$, which requires at most $q T = q C 2^q = \mO(1)$ queries to $y$. Since the probability that a randomly selected $z$ is consistent with the proof $y$ is $1/2^q$, we have that the probability that at least one of the $z$ values in the tuples is consistent with the proof satisfies
\begin{align*}
\mathbb{P}[\text{At least one $z_{i_1},\dots,z_{i_q}$ consistent with $y$}] \geq
1-\left( 1-\frac{1}{2^q} \right)^T \geq 1-\frac{1}{1+\frac{T}{2^q}} = \frac{T}{2^q + T} = \frac{C}{1+C}.
\end{align*}
Now we loop over all collected tuples, and for the first tuple we encounter that is consistent with the proof we check $o$: if $o=1$ we accept, and if $o=0$ we reject. If all of the tuples are inconsistent with the proof, we reject. Hence, if the $\QCPCP_A$ protocol has completeness $c$ and soundness $s$, this $\QCPCP$ protocol has completeness $c' = 2C / 3(1+C) $ and soundness $1/3$. We then have that the {\sc yes}- and {\sc no}-cases are separated by at least
\begin{align*}
    \frac{2 C }{3(1+C)} -\frac{1}{3}=  \Omega(1).
\end{align*}
if $C >1$. This can easily be boosted such that the completeness and soundness are $\geq 2/3$ and $\leq 1/3$ at the cost of a constant multiplicative factor in the number of queries, by adopting the strong error reduction of Proposition~\ref{prop:error_reduction}. Hence,  $\QCPCP[\mO(1)] \supseteq \QCPCP_A[\mO(1)]$, completing the proof.
\end{proof}

\begin{remark} The above proof actually holds for all quantum algorithms that have `non-quantum' query access to a classical string and make only a constant of such queries.
\end{remark}
There is an interesting observation one can make when looking more closely at the proof of Theorem~\ref{thm:QCPCP_A}: the resulting non-adaptive protocol employs a quantum circuit only for generating indices/output data needed for queries from the proof. Hence, once the queries are made, all subsequent checks can be performed classically. This means a quantum-classical PCP that is restricted to only using quantum circuits before the proof is accessed is just as powerful as the more general definition as per Definition~\ref{def:QCPCP}. In the next subsection we will see that this idea allows one to put a non-trivial upper bound on the complexity of QCPCPs that make only a constant number of queries to the proof.

\subsection{Upper bound on $\QCPCP$ with constant proof queries}
Here we show that $\QCPCP$ with a constant number of proof queries is contained in $\BQP^{\NP[1]}$, i.e.~in $\BQP$ with only a single query to an $\NP$-oracle. The proof is rather long, but the idea is simple: just as is the case for $\QPCP$,  a \emph{quantum} reduction can be used to transform a $\QCPCP$ system into a local Hamiltonian problem. However, since the proof is now classical, we can either use the CNOT-trick of Lemma~\ref{lem:CNOT_trick} to ensure that the resulting Hamiltonian problem is diagonal in the computational basis, or directly learn a diagonal (i.e.~classical) Hamiltonian that captures the input/output behaviour of the $\QCPCP$-circuit on basis state inputs.\footnote{We use the language of Hamiltonians here to be consistent with the $\QPCP$ setting, but this is not necessary. Since the Hamiltonian is classical, we might as well say that we learn a polynomial $P : \{0,1\}^{p(n)} \rightarrow \mathbb{R}_{\geq 0}$.} The main technical work required is to derive sufficient parameters in the reduction, thereby ensuring that the reduction succeeds with the desired success probability. We first prove a lemma which upper bounds the norm difference between a certain `learned' Hamiltonian satisfying certain accuracy constraints, and the actual Hamiltonian.

\begin{lemma} Let $H = \sum_{i \in [m]} w_i H_i$ be a $k$-local Hamiltonian consisting of weights $w_i \in [0,1]$ such that $\sum_{i \in [m]} w_i = W$, and $k$-local terms $H_i$ for which $\norm{H_i} \leq 1$ for all $i \in [m]$. Let $\Omega_{\geq \gamma} = \{i | w_i \geq \gamma \}$ and $\Omega_{<\gamma} =[m] \setminus \Omega_{\geq \gamma}$, for some parameter $\gamma \in [0,1]$. Suppose $\tilde{H} = \sum_{i \in \Omega_{\geq \gamma}} \tilde{w}_i \tilde{H}_i$ is another Hamiltonian such that, for all $i \in \Omega_{\geq \gamma}$, we have $\abs{\tilde{w}_i -w_i} \leq \epsilon_0$ and  $\norm{H_i - \tilde{H}_i } \leq \epsilon_1$. Then
\begin{align*}
    \norm{H - \tilde{H}} \leq m (\gamma+ \epsilon_0) + (W+m \epsilon_0) \epsilon_1
\end{align*}
\label{lem:H_apx}
\end{lemma}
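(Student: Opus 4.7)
The plan is a direct triangle-inequality bookkeeping argument. First I would split the difference $H - \tilde H$ into two disjoint sums according to the partition $[m] = \Omega_{<\gamma} \cup \Omega_{\geq \gamma}$: the ``missing'' part $\sum_{i\in\Omega_{<\gamma}} w_i H_i$ (which appears in $H$ but has no counterpart in $\tilde H$), and the ``matched'' part $\sum_{i\in\Omega_{\geq \gamma}} (w_i H_i - \tilde w_i \tilde H_i)$. Applying the triangle inequality once reduces the claim to bounding these two sums separately.

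For the missing part, each term satisfies $\|w_i H_i\| \leq w_i < \gamma$ by the definition of $\Omega_{<\gamma}$ and the assumption $\|H_i\|\leq 1$, so summing gives $\|\sum_{i\in\Omega_{<\gamma}} w_i H_i\| \leq |\Omega_{<\gamma}|\gamma \leq m\gamma$. For the matched part, for each $i\in\Omega_{\geq \gamma}$ I would perform the standard add-and-subtract trick
\[
w_i H_i - \tilde w_i \tilde H_i = (w_i - \tilde w_i) H_i + \tilde w_i (H_i - \tilde H_i),
\]
and apply the triangle inequality together with $\|H_i\|\leq 1$, $|w_i - \tilde w_i|\leq \epsilon_0$, $|\tilde w_i| \leq w_i + \epsilon_0$, and $\|H_i - \tilde H_i\|\leq \epsilon_1$, obtaining $\|w_i H_i - \tilde w_i \tilde H_i\| \leq \epsilon_0 + (w_i+\epsilon_0)\epsilon_1$.

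Summing this bound over $i\in\Omega_{\geq \gamma}$ and using $|\Omega_{\geq \gamma}|\leq m$ together with $\sum_{i\in\Omega_{\geq \gamma}} w_i \leq \sum_{i\in[m]}w_i = W$ yields
\[
\Bigl\|\sum_{i\in\Omega_{\geq\gamma}}(w_iH_i-\tilde w_i\tilde H_i)\Bigr\|\leq m\epsilon_0 + (W + m\epsilon_0)\epsilon_1.
\]
Adding the two pieces produces $m\gamma + m\epsilon_0 + (W+m\epsilon_0)\epsilon_1 = m(\gamma+\epsilon_0) + (W+m\epsilon_0)\epsilon_1$, which is exactly the claimed bound. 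There is no real obstacle here — the argument is essentially mechanical; the only subtle point is to remember that $\tilde w_i$ is not directly assumed to be bounded by $1$, so one must use $|\tilde w_i|\leq w_i + \epsilon_0$ (rather than $|\tilde w_i|\leq 1$) in the $\epsilon_1$ term, which is what produces the $W + m\epsilon_0$ factor instead of just $W$ (or $m$).
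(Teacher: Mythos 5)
Your proof is correct and follows essentially the same route as the paper's: the same split into the dropped terms with $w_i<\gamma$ and the matched terms, the same add-and-subtract decomposition $w_iH_i-\tilde w_i\tilde H_i=(w_i-\tilde w_i)H_i+\tilde w_i(H_i-\tilde H_i)$, and the same use of $|\tilde w_i|\le w_i+\epsilon_0$ to produce the $(W+m\epsilon_0)\epsilon_1$ term. No issues.
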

\begin{proof} 
\begin{align*}
    \norm{H - \tilde{H}} &= \norm{\sum_{i \in \Omega_{\geq \gamma}} \tilde{w}_i \tilde{H}_i - \sum_{i \in [m]} w_i H_{i}}\\
    &\leq \sum_{i \in \Omega_{\geq \gamma}}\norm{\tilde{w}_i \tilde{H}_i - w_i H_{i}} + \norm{\sum_{i \in \Omega_{< \gamma}} w_i H_{i}}\\
    &\leq m\gamma+ \sum_{i \in \Omega_{\geq \gamma}}\norm{\tilde{w}_i \tilde{H}_i - w_i H_{i}}\\
    &\leq m\gamma+  \sum_{i \in \Omega_{\geq \gamma}} \norm{\tilde{w}_i \left(\tilde{H}_i - {H}_{i}\right) + H_{i} \left(  \tilde{w}_i -  w_i  \right)} \\
    &\leq m\gamma+ \sum_{i \in \Omega_{\geq \gamma}} \norm{\tilde{w}_i \left(\tilde{H}_i - {H}_{i}\right)} + \norm{H_{i} \left(  \tilde{w}_i -  w_i  \right)} \\
    &\leq  m\gamma+ \sum_{i \in \Omega_{\geq \gamma}} \tilde{w}_i  \norm{\tilde{H}_i - {H}_{i}} + \norm{H_{i}}  \abs{\tilde{w}_i -  w_i  } \\
    &\leq m\gamma+ \sum_{i \in \Omega_{\geq \gamma}} \left(w_i + \epsilon_0 \right) \epsilon_1 + \epsilon_0 \\
    &\leq  m (\gamma+ \epsilon_0) + (W+m \epsilon_0) \epsilon_1,
\end{align*}    
where in going from line $1$ to line $2$ we used the triangle inequality and the definition of $\Omega_{\geq \gamma}$, from line $3$ to $4$ again the definition of $\Omega_{\geq \gamma}$, from line $4$ to line $5$ the triangle inequality, from line $5$ to $6$ the absolute homogeneity of the norm, from $6$ to $7$ and $7$ to $8$ the properties on norm and absolute value differences as stated in the lemma.
\end{proof}
We will also use the fact that the learning of the Hamiltonian parameters in the reduction can be viewed as the well-known `Double dixie cup' problem~\cite{newman1960double}. In this problem, which is a generalization of the coupon collector problem, a collector wants to obtain $m$ copies of each element from a set of $n$ elements, via a procedure where every item is sampled with equal probability. In our setting, we have that the probability over the items is non-uniform. However, it is straightforward to obtain an upper bound on the non-uniform double dixie cup problem in terms of the expectation value in the uniform setting, as illustrated by the following lemma.

\begin{lemma}[Upper bound on the non-uniform double dixie cup problem] Given samples from the set $N = [n]$, according to a distribution $\mathcal{P}$, consider the subset $M_{\gamma} \subseteq N$ such that
\begin{align*}
   M_{\gamma} = \{ i \in N : \mathcal{P}(i) \geq \gamma \},
\end{align*}
for some $\gamma \in [0,1]$. Let $T^\mathcal{P}_m (M)$ be the random variable indicating the first time that all elements in $M_{\gamma}$ have been sampled at least $m$ times when sampling from $N$ over the distribution $\mathcal{P}$. Write $T_m(S)$ when the distribution over some set $S$ is uniform. Then we have that
\begin{align*}
    \mathbb{E}[T^\mathcal{P}_m(M_{\gamma})] \leq \mathbb{E}[T_m(\lceil 1/\gamma \rceil)],
\end{align*}
where 
\begin{align*}
       \mathbb{E}[T_m(\lceil 1/\gamma \rceil)]=  \lceil 1/\gamma \rceil \ln \lceil 1/\gamma \rceil + (m-1)\lceil 1/\gamma \rceil
\ln \ln \lceil 1/\gamma \rceil + \mO\left(\lceil 1/\gamma \rceil\right) .
\end{align*}
\label{lem:ddcp}
\end{lemma}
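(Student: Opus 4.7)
The plan is to construct an explicit coupling between the non-uniform sampling process and a uniform sampling process on $\ell := \lceil 1/\gamma \rceil$ items, reducing the problem to the uniform double dixie cup, and then invoke the classical Newman asymptotic as a black box. First observe that since every element of $M_\gamma$ has probability at least $\gamma$ and $\sum_{i \in N} \mathcal{P}(i) = 1$, we have $|M_\gamma| \leq \lfloor 1/\gamma \rfloor \leq \ell$. Enumerate $M_\gamma = \{i_1, \ldots, i_k\}$ with $k \leq \ell$ and note that $1/\ell \leq \gamma \leq \mathcal{P}(i_j)$ for every $j \in [k]$, which is the key inequality driving the coupling.

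Now let $U_1, U_2, \ldots$ be i.i.d.\ uniform on $[0,1]$, shared by both processes. For $j \in [k]$, assign the interval $[(j-1)/\ell,\, j/\ell)$ of length $1/\ell \leq \mathcal{P}(i_j)$ to element $i_j$ in the $\mathcal{P}$-process, and extend the mapping arbitrarily on $[k/\ell, 1]$ so that the overall marginal matches $\mathcal{P}$ (there is exactly enough remaining mass since $1 - k/\ell = \sum_{i \notin M_\gamma} \mathcal{P}(i) + \sum_{j=1}^k (\mathcal{P}(i_j) - 1/\ell)$). For the $\mathcal{Q}$-process (uniform on $\ell$ items), assign the same intervals $[(j-1)/\ell, j/\ell)$ to items $j = 1, \ldots, \ell$. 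By construction, whenever $U_t \in [(j-1)/\ell,\, j/\ell)$ with $j \in [k]$, the $\mathcal{Q}$-process samples $j$ \emph{and} the $\mathcal{P}$-process samples $i_j$. Therefore, at every time $t$ the number of $\mathcal{P}$-samples of $i_j$ is at least the number of $\mathcal{Q}$-samples of $j$ for each $j \in [k]$. Once every $j \in [\ell]$ has been $\mathcal{Q}$-sampled at least $m$ times (in particular every $j \in [k]$), every $i_j \in M_\gamma$ has been $\mathcal{P}$-sampled at least $m$ times, giving the pointwise bound $T^{\mathcal{P}}_m(M_\gamma) \leq T_m(\ell)$ in this coupling, and hence $\mathbb{E}[T^{\mathcal{P}}_m(M_\gamma)] \leq \mathbb{E}[T_m(\ell)]$.

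Finally, the explicit expansion $\mathbb{E}[T_m(\ell)] = \ell \ln \ell + (m-1)\ell \ln \ln \ell + \mathcal{O}(\ell)$ for the uniform double dixie cup is the main theorem of Newman~\cite{newman1960double}; substituting $\ell = \lceil 1/\gamma \rceil$ yields the stated bound. The only technical subtlety is the bookkeeping when $\mathcal{P}$ has many atoms outside $M_\gamma$: any consistent extension of the interval-to-element mapping on $[k/\ell, 1]$ works, because the coupling argument uses only the alignment on $[0, k/\ell)$ between the $M_\gamma$-component of $\mathcal{P}$ and the first $k$ uniform items. No concentration or depoissonization estimate is needed, so there is essentially no obstacle beyond writing down the coupling carefully.
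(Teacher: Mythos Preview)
Your proof is correct and follows essentially the same approach as the paper: identify $M_\gamma$ with a subset of $[\lceil 1/\gamma\rceil]$, use that each element of $M_\gamma$ has probability at least $\gamma \geq 1/\lceil 1/\gamma\rceil$ to dominate by the uniform double dixie cup, and then invoke Newman's asymptotic as a black box. The only difference is that you make the coupling explicit via shared $U_t\sim\mathrm{Unif}[0,1]$ and aligned intervals, whereas the paper simply asserts the stochastic domination from the probability inequality; your version is the more rigorous rendering of the same one-line idea.
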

\begin{proof} 
W.l.o.g., let the first $|M_{\gamma}|$ items of $\lceil 1/\gamma \rceil$ correspond to the items in $M_{\gamma}$. We have for all $i \in M_{\gamma}$ that $\mathcal{P}(i) \geq \gamma \geq \mathbb{P}[\text{sample $i$ from $\lceil 1/\gamma \rceil$}]$. Hence, if we have seen all elements of $\lceil 1/\gamma \rceil$ at least $m$ times, we also have seen all elements of $M_\gamma$ at least $m$ times, so from both facts together it follows that
\begin{align*}
    \mathbb{E}[T^\mathcal{P}_m(n)] \leq \mathbb{E}[T_m(S_{\lceil 1/\gamma \rceil})],
\end{align*}
where 
\begin{align*}
        \mathbb{E}[T_m(S_{\lceil 1/\gamma \rceil})] =  \lceil 1/\gamma \rceil \ln \lceil 1/\gamma \rceil + (m-1)\lceil 1/\gamma \rceil
\ln \ln \lceil 1/\gamma \rceil + \mO\left(\lceil 1/\gamma \rceil\right) 
\end{align*}
is by the result of~\cite{newman1960double}.
\end{proof}

We are now in a position to show the quantum reduction, which proceeds by learning the probability distribution over which indices of the proof string are queried by the verifier circuit, as well as the probability that the verification circuit $V$ of the $\QCPCP$-verifier accepts when those proof bits take on particular values.

\begin{lemma} Let $q \in \mathbb{N}$ be some constant and $x$ an input with $|x| = n$. Consider a $\QCPCP[q]$ protocol with verification circuit $V_x$ (which is $V$ but with the input $x$ hardcoded into the circuit), and proof $y \in \{0,1\}^{p(n)}$, and let 
\begin{align*}
     \mathcal{P}_x(i_1,\dots,i_q) = \mathbb{P}[V_x \text{ queries the proof at indices } (i_1,\dots,i_q)]
\end{align*}
and
\begin{align*}
    \lambda_{x,(i_1,\dots,i_q)}(z) = \mathbb{P}[V_x \text{ accepts given proof bits $i_1,\dots, i_q$ are queried and are given by $y_{i_1} = z_1,\dots, y_{i_q} = z_q$}].
\end{align*}
Let $\Omega = \{(i_1,\dots,i_q) : i_j \in [p(n)], \forall j \in [q]\}$, $\Omega_{\geq \gamma}  = \{ (i_1,\dots,i_q) \in \Omega | \mathcal{P}_x (i_1,\dots,i_q) \geq \gamma \}$ and $\Omega_{< \gamma} = \Omega \setminus \Omega_{\geq \gamma}$, for some parameter $\gamma \in [0,1]$. Then there exists a quantum algorithm that, for all $(i_1,\dots,i_q) \in \Omega_{\geq \gamma}$ and all $z\in \{0,1\}^q$, provides estimates $\tilde{\mathcal{P}}_x(i_1,\dots,i_q)$ and $\tilde{\lambda}_{x,(i_1,\dots,i_q)}(z)$ such that
\begin{align*}
    \abs{\tilde{\mathcal{P}}_x(i_1,\dots,i_q) - \mathcal{P}_x (i_1,\dots,i_q) } \leq \epsilon_0,
\end{align*}
and
\begin{align*}
    \abs{\tilde{\lambda}_{x,(i_1,\dots,i_q)}(z) - {\lambda}_{x,(i_1,\dots,i_q)}(z)} \leq \epsilon_1,
\end{align*}
with probability $1-\delta$, and runs in time $ \poly(n,1/\gamma,1/\epsilon_0,1/\epsilon_1,1/\delta)$.
\label{lem:reduction}
\end{lemma}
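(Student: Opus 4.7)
The plan is to obtain both sets of estimates simultaneously by running the verification circuit $V_x$ many times on an independent uniform random bitstring in place of the proof, and compiling empirical frequencies. Concretely, for each of $T$ trials I would initialise the $p(n)$-qubit proof register in $\ket{+}^{\otimes p(n)}$ (equivalently, supply uniformly random classical bits where the proof would go), execute $V_x$, and then measure (i) the index register to read off the tuple $(i_1,\dots,i_q)\in\Omega$ of queried positions, (ii) the proof qubits at those positions to obtain $z\in\{0,1\}^q$, and (iii) the designated output qubit for the acceptance bit $o\in\{0,1\}$. This produces $T$ independent samples of the tuple $(i_1,\dots,i_q,z,o)$.

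The estimators are then the natural empirical frequencies: $\tilde{\mathcal P}_x(i_1,\dots,i_q)$ is the fraction of trials whose index tuple matches $(i_1,\dots,i_q)$, and $\tilde\lambda_{x,(i_1,\dots,i_q)}(z)$ is the fraction of trials with $o=1$ among those whose observed $(i_1,\dots,i_q,z)$ matches. Because the simulated proof is uniform and independent of the rest of the circuit, the joint probability of any specific pair $((i_1,\dots,i_q),z)\in\Omega_{\geq\gamma}\times\{0,1\}^q$ equals $\mathcal{P}_x(i_1,\dots,i_q)\cdot 2^{-q}\geq\gamma/2^q$, and conditional on such a pair the probability of acceptance is exactly $\lambda_{x,(i_1,\dots,i_q)}(z)$.

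The analysis then follows by routine concentration. Hoeffding guarantees that each individual empirical frequency is within $\epsilon_0$ of its mean with failure probability at most $2e^{-2T\epsilon_0^2}$, and each conditional acceptance estimator is within $\epsilon_1$ of its mean with failure probability at most $2e^{-2m\epsilon_1^2}$ provided it is built from at least $m$ samples. I would then apply a union bound over the $|\Omega_{\geq\gamma}|\leq p(n)^q$ tuples and $2^q$ possible strings $z$; since $q=\mO(1)$ this union has size $\poly(n)$, so taking $\delta'=\delta/\poly(n)$ makes the total failure probability at most $\delta$.

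The only place where real care is needed is in guaranteeing condition (b), namely that every pair $((i_1,\dots,i_q),z)\in\Omega_{\geq\gamma}\times\{0,1\}^q$ actually appears at least $m=\mO(\log(1/\delta')/\epsilon_1^2)$ times so that its conditional estimator is well-defined and concentrated. This is precisely an instance of the non-uniform double dixie cup problem, with items the pairs and probability threshold $\gamma/2^q$, so Lemma~\ref{lem:ddcp} applies and gives an expected hitting time of $\mO\!\left((2^q/\gamma)\log(2^q/\gamma)+m(2^q/\gamma)\log\log(2^q/\gamma)\right)$; Markov's inequality converts this expectation bound into a high-probability bound at the cost of a further $1/\delta$ factor. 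Combining the two ingredients yields $T=\poly(n,1/\gamma,1/\epsilon_0,1/\epsilon_1,1/\delta)$, and since each trial runs $V_x$ in polynomial time, the total runtime lies in the same class, matching the claim.
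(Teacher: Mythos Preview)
Your proposal is correct and follows essentially the same approach as the paper: repeatedly run $V_x$ with simulated proof bits, record the index/output statistics, and bound the error via Hoeffding-type concentration together with Lemma~\ref{lem:ddcp}. The only cosmetic difference is that the paper iterates deterministically over each $z\in\{0,1\}^q$ (running $T$ trials per choice) rather than drawing $z$ uniformly at random, so it invokes the double dixie cup bound on index tuples alone at threshold $\gamma$ rather than on pairs at threshold $\gamma/2^q$; since $q=\mO(1)$ the resulting runtime bounds coincide.
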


\begin{proof} 
The idea of this quantum algorithm is that it runs $V_x$ many times gathering statistics about which indices are most likely to be queried by $V_x$. The $\QCPCP$ protocol would return bits from the proof $y$ according to these indices, instead, the quantum algorithm provides a string $z \in \{0,1\}^q$ and records the output generated by $V_x$ on input $z$, generating statistics for the probability of acceptance. For every run, counted by $t \in [T]$ for some $T \in \mathbb N$, this generates a tuple $O^{t,z} = ((i^{t,z}_1,\dots,i^{t,z}_k),o^{t,z})$,  in which the proof $y$ was supposed to be queried at indices $i_1,\dots,i_q$, and in which those bits were assigned the values $y_{i_1} = z_1,\dots,y_{i_q} = z_q$, and where $o$ is the accept/reject measurement outcome. It repeats this process $T$ times for every $z$, where we will specify $T$ later ensuring that the statistics gathered are accurate with high probability. \\

The resulting algorithm can be specified as follows:
\begin{enumerate}
    \item For $z \in \{0,1\}^q$:
    \begin{enumerate}
        \item Run $V_x$ for a total of $T$ times to obtain samples $\{O^{t,z}\}_{t \in [T]}$.
        \item For all observed $(i_1^{t,z},\dots,i_q^{t,z})$, set 
        \begin{align*}
            \tilde{\lambda}_{x,(i_1,\dots,i_q)}(z) := \frac{\text{\text{\# times $o^{t,z} =1 \text{ and } i_1,\dots,i_q$ observed }}}{\text{\# times $i_1,\dots,i_q$ observed }}.
        \end{align*}
    \end{enumerate}
    \item Set 
    \begin{align*}
        \tilde{\mathcal{P}}_x(i_1,\dots,i_q) = \sum_{z \in \{0,1\}^{q}} \frac{\text{\# times $(i_1^{t,z},\dots,i_q^{t,z})$ observed}}{2^q T},
    \end{align*}
    \item For any estimated $\tilde{\mathcal{P}}_x(i_1,\dots,i_q) \leq \gamma$ remove both $\tilde{\mathcal{P}}_x(i_1,\dots,i_q)$ and associated $\tilde{\lambda}_{x,(i_1,\dots,i_q)}(z)$ for all $z$, and output all of the remaining ones.
\end{enumerate}

Let us now show that there exists a $T$ not too large such that the criteria of the theorem are satisfied. Since the $\mathcal{P}_x (i_1,\dots,i_q)$ form a discrete distribution over the set $\Omega$,\footnote{Note that $|\Omega| = \binom{n}{q} \leq \left(\frac{e n}{q} \right)^q$ which for constant $q$ is polynomial in $n$.}  we know by a standard result in learning theory (see for example~\cite{canonne2020short}) that a total of 
\begin{align*}
    \Theta \left(\frac{|\Omega| + \log (1/\delta_0)}{\epsilon_0^2} \right)
\end{align*}
samples of $O^{t,z}$ (the `$z$'-value is in fact irrelevant here) suffices to get, with probability at least $1-\delta_0$, estimates $\tilde{\mathcal{P}}_x(i_1,\dots,i_q)$ which satisfy
\begin{align*}
        \abs{\tilde{\mathcal{P}}_x(i_1,\dots,i_q) - \mathcal{P}_x (i_1,\dots,i_q) } \leq \epsilon_0.
\end{align*}
To learn estimates $\tilde{\lambda}_{x,(i_1,\dots,i_q)}(z)$ for a single index configuration $(i_1,\dots,i_q)$ and proof configuration $z$, Hoeffding's inequality tells us that we only need
\begin{align*}
     \mO \left( \frac{ \log \left(1/\delta_1\right) }{\epsilon_1^2}\right)
\end{align*}
samples of $O^{t,z}$ to have that $\abs{\tilde{\lambda}_{x,(i_1,\dots,i_q)}(z) - {\lambda}_{x,(i_1,\dots,i_q)}(z)} \leq \epsilon_1,$ with probability $1-\delta_1$. This means that any index configuration $(i_1,\dots,i_q)$ such that $\mathcal{P}_x (i_1,\dots,i_q) \geq \gamma$ needs to appear $\mO \left( \frac{ \log \left(1/\delta_1\right) }{\epsilon_1^2}\right)$ many times, to get a good estimate of $\tilde{\lambda}_{x,(i_1,\dots,i_q)}(z)$. 
Lemma~\ref{lem:ddcp} shows that the expected number of samples needed such that this condition is met is upper bounded by 
\begin{align*}
   \left\lceil \frac{1}{\gamma} \right\rceil \ln \left\lceil \frac{1}{\gamma} \right\rceil + \left( \mO \left( \frac{ \log \left(1/\delta_1\right) }{\epsilon_1^2}\right)-1\right)\left\lceil \frac{1}{\gamma} \right\rceil  \ln \ln \left\lceil \frac{1}{\gamma} \right\rceil  + \mO\left( \left\lceil \frac{1}{\gamma} \right\rceil \right),
\end{align*}
which by Markov's inequality means that
\begin{align*}
    \frac{1}{\delta_\lambda}\left( \left\lceil \frac{1}{\gamma} \right\rceil \ln \left\lceil \frac{1}{\gamma} \right\rceil + \left( \mO \left( \frac{ \log \left(1/\delta_1\right) }{\epsilon_1^2}\right)-1\right)\left\lceil \frac{1}{\gamma} \right\rceil  \ln \ln \left\lceil \frac{1}{\gamma} \right\rceil +  \mO \left(\left\lceil \frac{1}{\gamma} \right\rceil\right)\right)
\end{align*}
samples of $O^{t,z}$ suffice to turn this into an algorithm that achieves success probability $\geq 1-\delta_\lambda$. To ensure that our entire algorithm succeeds with probability $1-\delta$, we require that
\begin{align*}
    (1-\delta_\lambda)^{2^q } (1-\delta_0) (1-\delta_1)^{2^q \left\lceil \frac{1}{\gamma} \right\rceil} \geq 1-\delta,
\end{align*}
which can be achieved by setting $\delta_\lambda = \delta/(2^{q+2})$,  $\delta_0 = \delta/4$ and $\delta_1 = \delta/( \lceil 1/\gamma\rceil 2^{q+2})$. Both the statistics for probabilities over the set of indices as well as the output probabilities are gathered at the same time. This means that the requirements on the number samples needed for both estimations can be met at the same time, therefore the total number of samples $T$ that we must take satisfies 
\begin{align*}
    T & \geq \max \{\Theta \left(\frac{\left\lceil \frac{1}{\gamma} \right\rceil + \log (1/\delta_0)}{\epsilon_0^2} \right), \frac{2^q}{\delta_\lambda} \left( \left\lceil \frac{1}{\gamma} \right\rceil \ln \left\lceil \frac{1}{\gamma} \right\rceil + \left(\mO \left( \frac{ \log \left(1/\delta_1\right) }{\epsilon_1^2}\right)-1\right)\left\lceil \frac{1}{\gamma} \right\rceil  \ln \ln \left\lceil \frac{1}{\gamma} \right\rceil +  \mO \left(\left\lceil \frac{1}{\gamma} \right\rceil\right)\right) \}\\
 & \geq \max \{\Theta\left(\frac{\left\lceil \frac{1}{\gamma} \right\rceil + \log \left(\frac{1}{\delta}\right)}{\epsilon_0^2} \right), \frac{2^{2(q+1)}}{\delta}\left( \left\lceil \frac{1}{\gamma} \right\rceil \ln \left\lceil \frac{1}{\gamma} \right\rceil + \mO \left( \frac{ q \log \left(\left \lceil \frac{1}{\gamma} \right\rceil /\delta \right) }{\epsilon_1^2}\right) \left\lceil \frac{1}{\gamma} \right\rceil  \ln \ln \left\lceil \frac{1}{\gamma} \right\rceil  \right)\}
\end{align*}
which yields a total runtime of $\mO (\poly(n,\lceil 1/\gamma \rceil,1/\delta,1/\epsilon_1,1/\epsilon_0))$ when $q = \mO(1)$. 
\end{proof}
\begin{lemma}
Let $q \in \mathbb{N}$ be some constant, then there exists a quantum algorithm that can reduce any problem solvable by a $\QCPCP[q]$ protocol, without access to the proof $y$, to a diagonal Hamiltonian $\tilde{H}_x$ with the following properties:
\begin{itemize}
    \item $x \in P_{\sc yes} \Rightarrow \exists y \in \{0,1\}^{p(n)} : \bra{y }\tilde{H}_x \ket{y } \leq \frac{1}{3} + \epsilon$ 
    \item $x \in P_{\sc no} \Rightarrow \forall y \in \{0,1\}^{p(n)} : \bra{y }\tilde{H}_x \ket{y } \geq \frac{2}{3} - \epsilon$\,.
\end{itemize}
This reduction succeeds with probability $1-\delta$ and runs in time $\poly(n, 1/\epsilon)$.
\label{lem:reduction_to_H}
\end{lemma}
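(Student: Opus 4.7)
The plan is to combine the learning algorithm of Lemma~\ref{lem:reduction} with the norm-perturbation bound of Lemma~\ref{lem:H_apx} to build a diagonal Hamiltonian whose value on a basis state $\ket{y}$ is (approximately) the rejection probability of $V_x$ on proof $y$. Since the proof is classical, the ``obvious'' target is
\[
H_x = \sum_{(i_1,\dots,i_q)\in\Omega}\mathcal{P}_x(i_1,\dots,i_q)\,H_{(i_1,\dots,i_q)},\qquad
H_{(i_1,\dots,i_q)} = \sum_{z\in\{0,1\}^q}\bigl(1-\lambda_{x,(i_1,\dots,i_q)}(z)\bigr)\ket{z}\bra{z}_{i_1\cdots i_q}\otimes I_{\text{rest}},
\]
which is diagonal in the computational basis and satisfies, by construction and the law of total probability,
\[
\bra{y}H_x\ket{y}=1-\Pr[V_x\text{ accepts }y].
\]
Each $H_{(i_1,\dots,i_q)}$ has operator norm at most $1$, and the weights $\mathcal{P}_x(i_1,\dots,i_q)$ form a probability distribution, so $W=1$ and $m=|\Omega|\leq \binom{p(n)}{q}\leq p(n)^q=\poly(n)$ since $q=\mathcal{O}(1)$.

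The algorithm itself is: first invoke Lemma~\ref{lem:reduction} with parameters $\gamma,\epsilon_0,\epsilon_1,\delta$ (chosen below) to obtain, with probability at least $1-\delta$, estimates $\tilde{\mathcal{P}}_x(i_1,\dots,i_q)$ and $\tilde{\lambda}_{x,(i_1,\dots,i_q)}(z)$ for every $(i_1,\dots,i_q)\in\Omega_{\geq\gamma}$ and every $z\in\{0,1\}^q$, which we write out classically; second, assemble the diagonal Hamiltonian
\[
\tilde{H}_x \;=\; \sum_{(i_1,\dots,i_q)\in\Omega_{\geq\gamma}}\tilde{\mathcal{P}}_x(i_1,\dots,i_q)\,\tilde{H}_{(i_1,\dots,i_q)},\qquad
\tilde{H}_{(i_1,\dots,i_q)}=\sum_{z}\bigl(1-\tilde{\lambda}_{x,(i_1,\dots,i_q)}(z)\bigr)\ket{z}\bra{z}\otimes I.
\]
Note that $\norm{H_{(i_1,\dots,i_q)}-\tilde{H}_{(i_1,\dots,i_q)}}\leq \epsilon_1$ since the two operators are simultaneously diagonalised and differ entry-wise by at most $\epsilon_1$.

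Applying Lemma~\ref{lem:H_apx} with the above $m$, $W=1$, $\epsilon_0$ and $\epsilon_1$, I get $\norm{H_x-\tilde{H}_x}\leq m(\gamma+\epsilon_0)+(1+m\epsilon_0)\epsilon_1$. To drive this below $\epsilon$, take $\gamma=\epsilon/(6m)$, $\epsilon_0=\epsilon/(6m)$, and $\epsilon_1=\epsilon/6$; then each of the three contributions is at most $\epsilon/3$. Since $m=\poly(n)$, all of $1/\gamma$, $1/\epsilon_0$, and $1/\epsilon_1$ are $\poly(n,1/\epsilon)$, and Lemma~\ref{lem:reduction} therefore runs in $\poly(n,1/\epsilon,1/\delta)$ time, producing a classical description of $\tilde{H}_x$ of polynomial size.

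Finally, for any $y\in\{0,1\}^{p(n)}$ the operator-norm bound gives $|\bra{y}(H_x-\tilde{H}_x)\ket{y}|\leq \epsilon$. In a yes-instance the completeness condition (assumed with parameters $c=2/3$, $s=1/3$, which can be enforced using Proposition~\ref{prop:error_reduction}) provides a $y^\ast$ with $\Pr[V_x\text{ accepts }y^\ast]\geq 2/3$, hence $\bra{y^\ast}H_x\ket{y^\ast}\leq 1/3$ and $\bra{y^\ast}\tilde{H}_x\ket{y^\ast}\leq 1/3+\epsilon$. In a no-instance, for every $y$ we have $\Pr[V_x\text{ accepts }y]\leq 1/3$, so $\bra{y}H_x\ket{y}\geq 2/3$ and $\bra{y}\tilde{H}_x\ket{y}\geq 2/3-\epsilon$. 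The step I expect to require the most care is the parameter balancing: one has to remember that the polynomial factor $m=p(n)^q$ shows up multiplicatively in the perturbation bound, which forces $\gamma$ and $\epsilon_0$ to scale as $\epsilon/m$, and one must verify that this does not blow up the runtime of Lemma~\ref{lem:reduction} past $\poly(n,1/\epsilon)$ — which it does not, precisely because $q$ is held constant and therefore $m$ is polynomial in $n$.
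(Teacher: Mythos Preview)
Your proposal is correct and follows essentially the same route as the paper: define the ideal diagonal Hamiltonian $H_x$ with $\bra{y}H_x\ket{y}=1-\Pr[V_x\text{ accepts }y]$, learn its weights and local terms via Lemma~\ref{lem:reduction}, and control $\norm{H_x-\tilde H_x}$ using Lemma~\ref{lem:H_apx} with $\gamma,\epsilon_0\sim\epsilon/|\Omega|$ and $\epsilon_1\sim\epsilon$. The only cosmetic differences are your choice of constants ($6$ versus the paper's $4$) and your explicit mention of the $1/\delta$ dependence in the runtime, which the paper suppresses since $\delta$ is taken constant in the application.
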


\begin{proof}
Let $A$ be any promise problem in $\QCPCP[q]$ and $x$ an input, and denote by $V_x$ the corresponding $\QCPCP[q]$-verifier circuit which is $V$ but with the input $x$ hardcoded into the circuit. Using the notation of Lemma~\ref{lem:reduction}, we know that the $\QCPCP[q]$ protocol has the following property\footnote{We write $1-\bra{y}H_x\ket{y}$ only to follow the convention that the local Hamiltonian problem is usually stated as a minimization problem.}
\begin{align*}
    \mathbb{P}[\QCPCP[q]\text{ protocol accepts } y] &= \sum_{(i_1,\dots,i_q)\in \Omega} \mathcal{P}_x (i_1,\dots,i_q) \lambda_{x,(i_1,\dots,i_q)}(y_{i_1},\dots,y_{i_q})\\
    &=  \bra{y} \sum_{(i_1,\dots,i_q)\in \Omega}\mathcal{P}_x (i_1,\dots,i_q) \sum_{z \in \{0,1\}^q} \lambda_{x,(i_1,\dots,i_q)} (z) |z\rangle\langle z|_{i_1,\ldots,i_q} \ket{y}\\
    &=1-\bra{y} H_x \ket{y},
\end{align*}
where $H_x$ is a $q$-local Hamiltonian that is diagonal in the computational basis, and takes the form
\begin{align*}
    H_x = \sum_{(i_1,\dots,i_q)\in \Omega}\mathcal{P}_x (i_1,\dots,i_q) H_{x,(i_1,\dots,i_q)}\,,
\end{align*}
with $q$-local terms given by
\begin{align*}
    H_{x,(i_1,\dots,i_q)}  = \sum_{z \in \{0,1\}^q} \big(1-\lambda_{x,(i_1,\dots,i_q)} (z)\big) |z\rangle\langle z|_{i_1,\ldots,i_q}.
\end{align*}
Note that this Hamiltonian has the following two properties inherited from the $\QCPCP[q]$ protocol:
\begin{itemize}
    \item $x \in P_{\sc yes} \Rightarrow \exists y \in \{0,1\}^{p(n)} : \bra{y }H_x \ket{y } \leq \frac{1}{3}$ 
    \item $x \in P_{\sc no} \Rightarrow \forall y \in \{0,1\}^{p(n)} : \bra{y }H_x \ket{y } \geq \frac{2}{3}$\,.
\end{itemize}

By Lemma~\ref{lem:reduction}, there exists a polynomial-time quantum algorithm that given $V_x$ and a parameter $\gamma$ provides estimates $\tilde{\mathcal{P}}_x(i_1,\dots,i_q)$ and $\tilde{\lambda}_{x,(i_1,\dots,i_q)}(z)$ such that
\begin{align*}
    \abs{\tilde{\mathcal{P}}_x(i_1,\dots,i_q) - \mathcal{P}_x (i_1,\dots,i_q) } \leq \epsilon_0,
\end{align*}
and
\begin{align*}
    \abs{\tilde{\lambda}_{x,(i_1,\dots,i_q)}(z) - {\lambda}_{x,(i_1,\dots,i_q)}(z)} \leq \epsilon_1,
\end{align*}
for all $(i_1,\dots,i_q)$ such that $\mathcal{P}_x (i_1,\dots,i_q) \geq \gamma$. This algorithm succeeds with probability $1-\delta$ and runs in time $\mO(poly(n,\lceil1/\gamma\rceil,1/\delta, 1/\epsilon_0 , 1/\epsilon_1))$. Given these estimates, the quantum algorithm constructs the Hamiltonian
\begin{align*}
    \tilde{H}_x = \sum_{(i_1,\dots,i_q)\in \Omega_{\geq \gamma}} \tilde{\mathcal{P}}_x(i_1,\dots,i_q) \tilde{H}_{x,(i_1,\dots,i_q)},
\end{align*}
where
\begin{align*}
     \tilde{H}_{x,(i_1,\dots,i_q)}  = \sum_{z \in \{0,1\}^q} (1-\tilde{\lambda}_{x,(i_1,\dots,i_q)}) (z) |z\rangle\langle z|_{i_1,\ldots,i_q}\,.
\end{align*}
By Lemma~\ref{lem:H_apx} we can upper bound the difference between the Hamiltonian and the learned Hamiltonian as follows
\begin{align*}
    \norm{\tilde{H}_x-H_x} \leq |\Omega| (\gamma + \epsilon_0) + (1 + |\Omega|\epsilon_0)\epsilon_1
\end{align*}
Now we set $\gamma = \epsilon/(4|\Omega|)$, $\epsilon_0 = \epsilon/(4|\Omega|)$, and $\epsilon_1 = \epsilon/4$, such that $\norm{\tilde{H}_x-H_x} \leq \epsilon$. Then, conditioning on the quantum algorithm succeeding, we have that
\begin{itemize}
    \item $x \in P_{\sc yes} \Rightarrow \exists y \in \{0,1\}^{p(n)} : \bra{y }\tilde{H}_x \ket{y } \leq \frac{1}{3} + \epsilon$ 
    \item $x \in P_{\sc no} \Rightarrow \forall y \in \{0,1\}^{p(n)} : \bra{y }\tilde{H}_x \ket{y } \geq \frac{2}{3} - \epsilon$\,.
\end{itemize}
This algorithm succeeds with probability $\geq 1-\delta$ and runs in time $\poly(n, 1/\epsilon)$, when $q = \mO(1)$.
\end{proof}

We now have all the ingredients to obtain the main result of this section.
\begin{theorem}For all constant $q \in \mathbb{N}$, we have that
\begin{align*}
    \QCPCP[q] \subseteq \BQP^{\NP[1]}.
\end{align*}
\label{thm:QCPCP_NP_qr}
\end{theorem}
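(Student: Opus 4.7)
The plan is to assemble the theorem directly from Lemma~\ref{lem:reduction_to_H}: the hard technical work (running the verifier many times, learning the proof-query distribution $\mathcal{P}_x$ and the conditional acceptance probabilities $\lambda_{x,(i_1,\dots,i_q)}$, and bounding the error by Lemma~\ref{lem:H_apx}) has already been done. So the proof reduces to wiring that lemma into a $\BQP^{\NP[1]}$ protocol and choosing the parameters.

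Concretely, given a promise problem $A\in\QCPCP[q]$ and an input $x$, the $\BQP$ machine first invokes Lemma~\ref{lem:reduction_to_H} with, say, $\epsilon=1/12$ and failure parameter $\delta=1/10$. This runs in $\poly(n)$ time (since $q=\mO(1)$) and outputs, with probability at least $9/10$, a classical description of a diagonal $q$-local Hamiltonian $\tilde{H}_x$ on $p(n)$ qubits satisfying $\bra{y}\tilde{H}_x\ket{y}\le 1/3+1/12 < 1/2$ for some $y$ in the $\textsc{yes}$ case and $\bra{y}\tilde{H}_x\ket{y}\ge 2/3-1/12 > 1/2$ for every $y$ in the $\textsc{no}$ case. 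Truncating each learned weight to $\mO(\log(1/\epsilon))$ bits preserves this separation and makes $\tilde{H}_x$ describable by $\poly(n)$ classical bits.

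Next, the $\BQP$ machine makes a single query to an $\NP$-oracle for the following (non-promise) language: ``given a classical description of a diagonal $q$-local Hamiltonian $\tilde{H}$ on $m$ bits together with a rational threshold $t$, does there exist $y\in\{0,1\}^m$ with $\bra{y}\tilde{H}\ket{y}\le t$?'' This is plainly in $\NP$, with the certificate being $y$ itself: since $\tilde{H}$ is a sum of $\poly(n)$ diagonal terms, $\bra{y}\tilde{H}\ket{y}$ is a classically computable rational number of polynomial bit-length. The $\BQP$ machine queries $(\tilde{H}_x,\tfrac{1}{2})$ and outputs the oracle's reply. Whenever the reduction of Lemma~\ref{lem:reduction_to_H} succeeds, the threshold $\tfrac12$ lies strictly inside the promise gap $[1/3+\epsilon,\,2/3-\epsilon]$, so the oracle's answer is correct; the total probability of error is therefore at most $\delta=1/10<1/3$, which can be amplified further by standard $\BQP$ repetition if desired.

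The main obstacle is conceptually the one already handled by Lemma~\ref{lem:reduction}: controlling the accuracy of the learned Hamiltonian simultaneously over all heavy index-tuples and all $2^q$ assignments to the queried bits. Beyond that, two small points deserve a sentence of care. First, the oracle must be a genuine (non-promise) $\NP$-oracle, otherwise the usual ambiguity about invalid queries would intrude; the formulation above sidesteps this because the diagonal-Hamiltonian threshold problem is a total language. Second, one must check that the classical truncation of $\tilde{\mathcal{P}}_x$ and $\tilde{\lambda}_{x,(i_1,\dots,i_q)}$ to a fixed polynomial number of bits only enlarges $\norm{\tilde{H}_x - H_x}$ by a further $\mO(\epsilon)$, which is immediate from Lemma~\ref{lem:H_apx} by absorbing the rounding error into $\epsilon_0$ and $\epsilon_1$.
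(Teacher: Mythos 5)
Your proposal is correct and follows essentially the same route as the paper: invoke Lemma~\ref{lem:reduction_to_H} to quantumly learn a diagonal $q$-local Hamiltonian $\tilde{H}_x$ with a constant spectral separation between the {\sc yes}- and {\sc no}-cases, then decide it with a single $\NP$-oracle call. Your phrasing of the oracle query as a total threshold language is a clean way to handle the invalid-query issue that the paper instead addresses in a footnote via discretization of the learned eigenvalues, but the substance of the argument is the same.
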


\begin{proof}
Let $q\in \mO(1)$ be any constant, and consider any promise problem $A$ in $\QCPCP[q]$ with a $\QCPCP[q]$-protocol that uses a verification circuit $V_x$ (which is again $V$ with input $x$ hardwired into the circuit), which makes $q = \mO(1)$ non-adapative queries to a proof $y \in \{0,1\}^{p(n)}$. By Lemma~\ref{lem:reduction_to_H} there exists a polynomial-time quantum algorithm that reduces $V_x$ to a diagonal local Hamiltonian $\tilde{H_x}$ without access to the proof $y$, such that
\begin{itemize}
    \item $x \in P_{\sc yes} \Rightarrow \exists y \in \{0,1\}^{p(n)} : \bra{y }\tilde{H}_x \ket{y } \leq \frac{1}{3} + \epsilon$ 
    \item $x \in P_{\sc no} \Rightarrow \forall y \in \{0,1\}^{p(n)} : \bra{y }\tilde{H}_x \ket{y } \geq \frac{2}{3} - \epsilon$\,.
\end{itemize}
Note that $\frac{1}{3} + \epsilon$ and $\frac{2}{3} - \epsilon$ are separated by a constant for any constant $\epsilon < 1/6$. Since our reduction created a \emph{diagonal} (and thus classical) Hamiltonian, which is a $\mathsf{CGaLH}(q,\zeta,\delta)$ instance with $\zeta =1$ and $\delta = \Omega(1)$, the corresponding $q$-local Hamiltonian problem can be solved in $\NP$.\footnote{We would like to stress here that the $\NP$ containment only holds because all complexity classes considered are defined as classes of promise problems and not just languages. However, the promise on the final diagonal Hamiltonian eigenvalue problem can be removed by using that (i) every diagonal entry of each local term is only learned to a certain number of bits of precision and (ii) the eigenvalues of a sum of diagonal matrices are given by the sums of the eigenvalues (here we have to consider the full matrix, so also with the tensored identities which are left out in the notation). This way, one knows exactly what values the eigenvalues of the final Hamiltonian can take, and one can simply modify the completeness and soundness parameters such that they are both represented exactly by a certain number of bits and have a promise gap which is given by the difference between two successive numbers in the used binary representation. The promise gap will shrink considerably (and no longer be constant), but this is fine as the diagonal local Hamiltonian problem will still be in $\NP$ (it is even for an exponentially small promise gap).} The $\BQP^{\NP[1]}$ protocol would then consist of performing the quantum reduction to obtain the $\mathsf{CGaLH}(q,\zeta,\delta)$ instance, followed by a single call to an $\NP$ oracle. This protocol succeeds with the same success probability as the reduction, which ensures completeness $\geq 1-\delta \geq 2/3$ and soundness $\leq \delta \leq 1/3$ for any $\delta \leq 1/3$. 
\end{proof}

\noindent The above theorem yields an interesting implication in a world where the quantum-classical PCP conjecture is true, as illustrated by the following corollary.

\begin{corollary}
If Conjecture~\ref{conj:QCPCP} is true, then we have that $\NP^\BQP \subseteq \BQP^\NP$.
\label{cor:NPBQP}
\end{corollary}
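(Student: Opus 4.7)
The plan is to chain three inclusions: $\NP^{\BQP} \subseteq \QCMA$ unconditionally, $\QCMA = \QCPCP[\mO(1)]$ by the conjecture, and $\QCPCP[\mO(1)] \subseteq \BQP^{\NP[1]} \subseteq \BQP^{\NP}$ by Theorem~\ref{thm:QCPCP_NP_qr}. Composing these gives $\NP^{\BQP} \subseteq \BQP^{\NP}$, as required.

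The only non-immediate step is the first containment $\NP^{\BQP} \subseteq \QCMA$, which I would handle as follows. Let $A \in \NP^{\BQP}$, so $A$ is decided by a deterministic polynomial-time Turing machine $M$ equipped with a classical witness $y \in \{0,1\}^{p(n)}$ and a $\BQP$-oracle $B$. To simulate this in $\QCMA$, the verifier $V$ receives $y$ as its classical witness and internally runs $M$ step by step. Each time $M$ issues a query $q$ to $B$, the verifier instead executes the underlying polynomial-time $\BQP$-algorithm for $B$ on input $q$ a total of $r = \poly(n)$ times (on its own quantum workspace, uncomputing and re-initialising the ancillae between runs) and takes the majority vote as the answer. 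By the standard Chernoff bound, this drives the single-query error down to $2^{-\Omega(r)}$, and a union bound over the at most $\poly(n)$ queries made by $M$ shows that with probability at least $1-2^{-\Omega(r)}+ \text{poly}(n) = 1-o(1)$ all queries are answered as an honest $\BQP$-oracle would. Because $M$ is deterministic and, by the oracle-access convention spelled out in the preliminaries, its final output is insensitive to how $B$ responds on inputs outside its promise set, this simulation correctly computes $M$'s output with probability $\geq 2/3$, satisfying the completeness and soundness requirements of Definition~\ref{def:QCMA}.

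Given $\NP^{\BQP} \subseteq \QCMA$, the conjecture $\QCMA = \QCPCP[\mO(1)]$ gives $\NP^{\BQP} \subseteq \QCPCP[\mO(1)]$, and then Theorem~\ref{thm:QCPCP_NP_qr} yields $\NP^{\BQP} \subseteq \BQP^{\NP[1]} \subseteq \BQP^{\NP}$. The hardest part is mostly bookkeeping: making sure the oracle-access conventions (promise-class $\BQP$-oracle with possibly invalid queries and with bounded two-sided error) are handled correctly when boosting the $\BQP$ subroutines and passing the boosted answers to the deterministic outer machine. Everything else is a direct composition of the two results already in the paper.
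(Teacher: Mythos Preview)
Your proposal is correct and follows essentially the same approach as the paper: chain $\NP^{\BQP}\subseteq\QCMA$ (via amplifying each $\BQP$ oracle call inside the deterministic verifier and invoking the promise-oracle convention for invalid queries) with the conjecture and Theorem~\ref{thm:QCPCP_NP_qr}. The only blemish is a typo in the union-bound line, where ``$1-2^{-\Omega(r)}+\poly(n)$'' should read $1-\poly(n)\cdot 2^{-\Omega(r)}$.
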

\begin{proof}
    We only have to show that $\NP^\BQP \subseteq \QCMA$, since if Conjecture~\ref{conj:QCPCP} is true it implies that $\QCMA \subseteq \QCPCP \subseteq \BQP^{\NP[1]} \subseteq \BQP^\NP$ by Theorem~\ref{thm:QCPCP_NP_qr}. Let $q(n):\mathbb{N} \rightarrow \mathbb{N}$ be a polynomial and $M^\Pi(x,y)$ be a deterministic polynomial-time verification circuit that uses as an additional input a proof $y$ and make queries to a $\BQP$ oracle $\Pi = (\Pi_\text{yes},\Pi_\text{no},\Pi_\text{inv})$ at most $q(n)$ times as a black box. We define $z \in \{0,1,\perp \}^{q(n)}$ as the string that describes the sets that each query input belonged to (here `$\perp$' indicates $\Pi_\text{inv}$). Define $I = \{i : z_i =0 \text{ or } z_i = 1\}$. Since $M^\Pi$ is deterministic, we have that any string $z'$ that matches $z$ on all indices $I$ must produce the same output, so all of such $z'$s can be considered correct query strings. For each $i \in I$, a $\QCMA$ verifier can, conditioned on all previous $z_j$ for $j <i$ being computed such that it matches the first $i-1$ bits of a correct query string,  perform the $\BQP$ computation required to compute $z_i$ with success probability exponentially close to $1$ (this follows by the fact that $\BQP$ allows for probability amplification). If $i \in [q(n)]\setminus I$, any output will match in a correct query string in this case. Since there are a total of $q(n) \leq \poly(n)$ such queries, the overall success probability of simulating a query sequence can be made $\geq 2/3$. The $\QCMA$ protocol then simply simulates $M^\Pi$ by executing all operations in $M$ directly and replacing every oracle call by a direct $\BQP$-computation, which can be easily made to succeed with success probability $\geq 2/3$, ensuring completeness and soundness.
\end{proof}

An implication of Corollary~\ref{cor:NPBQP} is that it can be used to show that under the assumption $\NP \subseteq \BQP$ and the quantum-classical PCP conjecture being true, we have that $\PH \subseteq \BQP$, where the class $\PH$ is the union of all complexity classes in the polynomial hierarchy, i.e.~$\PH = \Pclass^{\NP^{\NP^{\dots}}}$. This follows from 
\begin{align*}
    \NP^\NP \subseteq \NP^\BQP \subseteq \BQP^\NP \subseteq \BQP^\BQP = \BQP,
\end{align*}
where the first and the third `$\subseteq$' are by assumption, the second is by the assumption of Conjecture~\ref{conj:QCPCP} to be true and the last equality follows from the fact that $\BQP$ is self-low. We then have that $\PH \subseteq \BQP$ follows by induction, just as is the case for $\BPP$~\cite{zachos1988probabilistic}.\footnote{See also~\url{https://blog.computationalcomplexity.org/2005/12/pulling-out-quantumness.html}.} Moreover, this would also imply that under these assumptions $\QCMA \subseteq \BQP$,  since
\begin{align*}
    \QCMA \subseteq \QCPCP[\mO(1)] \subseteq \BQP^\NP \subseteq \BQP^\BQP \subseteq \BQP.
\end{align*}
Both of these implications would provide further evidence that it is unlikely that $\NP \subseteq \BQP$.

It is known that there exists an oracle relative to which the conclusion of Corollary~\ref{cor:NPBQP} is not true, i.e.~there exists an oracle $A$ relative to which $\NP^{\BQP^A} \not\subset \BQP^{\NP^A}$~\cite{aaronson2021acrobatics}. Nevertheless, this does not necessarily mean the premise (i.e.~the quantum-classical PCP conjecture) is false: one can also easily construct an oracle separation between $\PCP$ and $\NP$, and both classes are now known to be equal~\cite{fortnow1994role}. However, this suggests that, if Conjecture~\ref{conj:QCPCP} is true, showing so requires non-relativizing techniques, just as was the case for the PCP theorem.

\section{Implications to the quantum PCP conjecture}
\label{sec:PCP}
In this final section, we consider some implications from all previous sections to the quantum PCP conjecture. We find that our results give insights into the longstanding open question of whether the reduction from a $\QPCP$ verifier to a local Hamiltonian with constant promise gap can be made classical. Furthermore, we give a no-go result for the existence of quantum gap amplification procedures exhibiting certain properties (unless $\QCMA = \NP$ or $\QCMA \subseteq \NqP$), and our results allow us to pose a conjecture which generalizes NLTS (now Theorem~\cite{anshu2022nlts}) and provides an alternative to NLSS~\cite{gharibian2021dequantizing}.\footnote{See also~\cite{coble2023local}, which proposes a closely-related conjecture independently of this work.}

\subsection{`Dequantizing' the $\QPCP$-to-local-Hamiltonian quantum reduction}
Recall Fact~\ref{fact:QPCP}, which states that the two types of QPCP conjectures (i.e.~the proof verification and local Hamiltonian problem formulations) are known to be equivalent under \emph{quantum} reductions. It has also been a longstanding open question whether the $\QPCP$-to-local-Hamiltonian reduction can be made classical~\cite{aharonov2013guest}. However, from our definition of $\QCPCP[q]$ it follows that it is unlikely that a reduction having exactly the same properties as the known quantum reduction exists, as illustrated by the following theorem.

\begin{theorem}[No-go for classical polynomial-time reductions] Let $q\in \mathbb{N}$ constant. For any $\epsilon <1/6 $ there cannot exist a classical polynomial-time reduction from a $\QPCP[q]$ circuit $V_x$ (which is the $\QPCP[q]$ verifier circuit $V$ with the input $x$ hardwired into it) to a $\mO(q)$-local Hamiltonian $H_x$ such that, given a proof $\ket{\psi}$,
\begin{align*}
    \abs{\mathbb{P}[V_x \text{ accepts } \ket{\psi}] - \left(1 - \bra{\psi}H_x \ket{\psi}\right)}\leq \epsilon,
\end{align*}
unless $\QCPCP[q] \subseteq \NP$ (which would imply $\BQP \subseteq \NP$).
\label{thm:clas_red_no_go}
\end{theorem}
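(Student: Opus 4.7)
The plan is to derive the theorem by showing that such a classical reduction would place $\QCPCP[q] \subseteq \NP$, from which $\BQP \subseteq \NP$ follows via the trivial inclusion $\BQP \subseteq \QCPCP[q]$ (a $\BQP$ algorithm is a $\QCPCP[q]$ protocol that ignores its proof). In other words, I want to build an $\NP$-verifier for an arbitrary problem in $\QCPCP[q]$ using the hypothesised reduction as a subroutine.

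First I would observe that any $\QCPCP[q]$ verifier circuit $V_x$ can be viewed as a $\QPCP[q]$ verifier by treating its classical $p(n)$-bit proof register as a $p(n)$-qubit register that is accessed only through computational-basis reads. For any classical $y \in \{0,1\}^{p(n)}$, identified with the basis state $\ket{y}$, the two views agree on acceptance probability. Applying the hypothesised reduction to $V_x$ then produces, in classical polynomial time, a $\mO(q)$-local Hamiltonian $H_x$ satisfying
\begin{align*}
    \left|\Pr[V_x \text{ accepts } \ket{y}] - \bigl(1 - \bra{y}H_x\ket{y}\bigr)\right| \leq \epsilon
\end{align*}
for every $y \in \{0,1\}^{p(n)}$.

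The $\NP$ protocol for an arbitrary $A \in \QCPCP[q]$ is then natural: on input $x$ and witness $y \in \{0,1\}^{p(n)}$, the verifier constructs the description of $V_x$, runs the reduction to obtain $H_x$, computes $\bra{y}H_x\ket{y}$ exactly in polynomial time (this is feasible since $H_x$ is a sum of polynomially many $\mO(q)$-local terms and the expectation of each on a computational basis state can be read off from its diagonal entries), and accepts iff $\bra{y}H_x\ket{y} \leq 1/2$. Completeness: if $x \in A_\text{yes}$, a $y$ with $\Pr[V_x \text{ accepts } y] \geq 2/3$ exists, so the reduction guarantee gives $\bra{y}H_x\ket{y} \leq 1/3 + \epsilon < 1/2$, using $\epsilon < 1/6$. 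Soundness: if $x \in A_\text{no}$, every $y$ has $\Pr[V_x \text{ accepts } y] \leq 1/3$, whence $\bra{y}H_x\ket{y} \geq 2/3 - \epsilon > 1/2$ and the verifier rejects every witness.

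The main subtlety I would flag carefully is justifying that the reduction, whose hypothesis is stated for $\QPCP[q]$ verifier circuits, can legitimately be applied to a $\QCPCP[q]$ circuit in the first place. This is precisely what the lifting above takes care of, together with the observation that the reduction's guarantee is a purely quantitative statement about acceptance probabilities on (quantum) proofs and does not depend on the verifier exploiting coherence in the proof register; specialising that guarantee to computational-basis proofs preserves it. Once this is in place, the chain $\BQP \subseteq \QCPCP[q] \subseteq \NP$ would contradict the (relativised) separations between $\BQP$ and $\NP$ of~\cite{aaronson2010bqp, raz2022oracle}, delivering the claimed conditional no-go for the stated parameter range $\epsilon < 1/6$.
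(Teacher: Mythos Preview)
Your proposal is correct and follows essentially the same argument as the paper: cast a $\QCPCP[q]$ verifier as a $\QPCP[q]$ verifier (the paper phrases this as measuring the $q$ proof qubits before acting on them, you phrase it as treating the classical register as a qubit register accessed only via computational-basis reads), apply the hypothesised classical reduction to obtain $H_x$, and then build an $\NP$ verifier that checks $\bra{y}H_x\ket{y}$ against a threshold. Your completeness/soundness analysis with the explicit threshold $1/2$ is in fact a bit more detailed than the paper's, which simply notes that any $\epsilon < (c-s)/2 = 1/6$ suffices.
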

\begin{proof}
This follows directly from the fact that a $\QPCP[q]$ protocol can simulate a $\QCPCP[q]$ protocol by using the exact same verification circuit $V_x$ and asking for basis states as quantum proofs (this can be forced by measuring the $q$ qubits before any quantum operation acts upon them). Since one can compute $\bra{y} H_x \ket{y}$ efficiently classically for any basis state $y$ and local Hamiltonian $H_x$, we have that the behaviour of the $\QCPCP$-system given a proof $y$ can be evaluated up to precision $\epsilon$ by computing $\bra{y} H_x \ket{y}$, where $H_x$ is the $\mO(q)$-local Hamiltonian induced by the $\QPCP$ verification circuit $V_x$. For any constant $\epsilon < (c-s)/2$ completeness and soundness are ensured.
\end{proof}

\subsection{Gap amplifications}
\noindent We now consider the implications of Section~\ref{sec:clas_cont} to gap amplifications of guidable local Hamiltonian problems.
\begin{theorem}[No-go's for quantum-classical gap amplification] 
There cannot exist
\begin{enumerate}
    \item A polynomial time classical reduction from an instance of $\mathsf{CGaLH}(k,\zeta,\delta)$ with $k\geq 2$, some constant $\zeta > 0$, and $\delta = 1/\poly(n)$ to some $\mathsf{CGaLH}(k',\zeta',\delta')$ with $k'\geq 2$, some constant $\zeta' > 0$, and $\delta' = \Omega(1)$,
\end{enumerate}
unless $\QCMA = \NP$, and
\begin{enumerate}
\setcounter{enumi}{1}
\item A quasi-polynomial time classical reduction from an instance of $\mathsf{CGaLH}(k,\zeta,\delta)$ with $k\geq 2$,  $\zeta =1/\poly(n)$, and $\delta = 1/\poly(n)$ to some $\mathsf{CGaLH}(k',\zeta',\delta')$ with $k'\geq 2$,  $\zeta' =1/\poly(n)$, and $\delta' = \Omega(1)$,
\end{enumerate}
unless $\QCMA \subseteq \NqP$.
\label{thm:QCPCP_nogos}
\end{theorem}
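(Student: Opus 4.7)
The plan is a direct composition argument: if either of the claimed classical gap-amplifying reductions existed, we would chain it with the classical containment algorithm of Theorem~\ref{thm:cp_NP_NqP} to place a $\QCMA$-hard problem inside $\NP$ (resp.~$\NqP$), contradicting the presumed quantum/classical separation. No new Hamiltonian construction or dequantization is needed; everything follows from the completeness and containment results already proved in Sections~\ref{sec:QCMA-hardness} and~\ref{sec:clas_cont}.

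For statement (1), I would proceed as follows. Suppose for contradiction that a polynomial-time classical reduction $R$ exists that maps instances of $\mathsf{CGaLH}(k,\delta,\zeta)$ with $\delta = 1/\poly(n)$ and constant $\zeta$ to instances of $\mathsf{CGaLH}(k',\delta',\zeta')$ with $\delta' = \Omega(1)$ and constant $\zeta' > 0$. Theorem~\ref{thm:CGaLH} supplies a (randomized) reduction from any $\QCMA$ language to the source problem, and Theorem~\ref{thm:cp_NP_NqP} places the target problem in $\NP$. Concatenating the three steps yields an $\NP$ decision procedure for an arbitrary $\QCMA$ language, so $\QCMA \subseteq \NP$; the reverse inclusion $\NP \subseteq \QCMA$ is trivial, hence $\QCMA = \NP$.

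For statement (2), the argument is structurally identical but in the parameter regime $\zeta = \zeta' = 1/\poly(n)$. The $\QCMA$-hardness of $\mathsf{CGaLH}(k,1/\poly(n),1/\poly(n))$ is again given by Theorem~\ref{thm:CGaLH}, while Theorem~\ref{thm:cp_NP_NqP} places $\mathsf{CGaLH}(k',\Omega(1),1/\poly(n))$ inside $\NqP$. The hypothesized reduction is quasi-polynomial, and since the family of quasi-polynomial functions is closed under composition, the resulting verifier still runs in quasi-polynomial time, which gives $\QCMA \subseteq \NqP$.

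The main (and essentially only) subtle point I expect to have to address carefully is the randomized nature of the $\QCMA$-hardness reduction of Theorem~\ref{thm:CGaLH}, which invokes the Valiant--Vazirani-style reduction from $\QCMA$ to $\UQCMA$ of Lemma~\ref{lem:UQCMA}. Strictly speaking, the naive composition lands in $\BPP^{\NP}$ (resp.\ $\BPP^{\NqP}$), not in $\NP$ (resp.\ $\NqP$). To obtain the cleaner statement above I would first amplify the success probability of that reduction to be exponentially close to one by $\poly(n)$-fold parallel repetition, and then absorb the polynomially many random bits into the nondeterministic witness string, which the deterministic verifier can check alongside the Hamiltonian witness. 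All remaining steps are straightforward invocations of theorems already established in the paper, so I do not anticipate any further obstacle.
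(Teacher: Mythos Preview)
Your proposal is correct and takes essentially the same approach as the paper, whose entire proof is the single sentence ``These all follow directly from Theorem~\ref{thm:cp_NP_NqP}.'' Your write-up is simply a more explicit unpacking of that sentence: compose the $\QCMA$-hardness of the inverse-polynomial-gap source problem (Theorem~\ref{thm:CGaLH}) with the hypothetical gap-amplifying reduction and the $\NP$/$\NqP$ containment of the constant-gap target (Theorem~\ref{thm:cp_NP_NqP}).

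Your extra care about the randomized nature of the $\QCMA$-hardness reduction is a genuine point the paper glosses over; your fix is fine, though note that you do not actually need amplification. The Valiant--Vazirani-style reduction of Lemma~\ref{lem:UQCMA} has one-sided error ({\sc no}-instances always map to {\sc no}-instances), so the randomness can be folded directly into the nondeterministic witness without any repetition: for {\sc yes}-instances some choice of coins works, and for {\sc no}-instances none does.
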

\begin{proof}
These all follow directly from Theorem~\ref{thm:cp_NP_NqP}.
\end{proof}

One can also interpret the no-go results for gap amplifications (points $1$ and $2$ in the above theorem) in a more general setting: if one wants to prove the QPCP conjecture through a gap amplification procedure a la Dinur, the procedure needs to have the property that it doesn't preserve `classically evaluatable' properties of eigenstates (it cannot even maintain an inverse polynomial fidelity with such states) unless at the same time showing that $\QCMA = \NP$ (or $\QCMA \subseteq \NqP$, which is also very unlikely)! Hence, this result can be viewed as a `$\QCMA$-analogy' to the result from~\cite{Aharonov2019stoquastic}, where the authors showed that the existence of quantum gap amplifications that preserve stoquasticity of Hamiltonians would imply that $\NP = \MA$. We also point out that it is possible that -- even though the complexity of $\mathsf{QGaLH}$ and $\mathsf{CGaLH}^*$ was in the \emph{inverse polynomial} precision regime the same when $1/\poly(n) \leq \zeta \leq 1-1/\poly(n)$ -- it might very well be that their complexities will differ when considering a \emph{constant} precision, as our containment results of Section~\ref{sec:clas_cont} crucially use the properties of classically evaluatable states. Finally, we note that many of the above results can also be easily adopted to the $\MA$ setting, for which all obtained results and proofs are given in Appendix~\ref{app:SPCP}. 

\subsection{Classically evaluatable states and $\QPCP$}
Finally, we close by formulating a new conjecture which can be viewed as a strengthening of the NLTS theorem, or as an alternative to the NLSS conjecture of~\cite{gharibian2021dequantizing} in light of our results, and which must hold if the quantum PCP conjecture is true and $\QMA \neq \NP$.

\begin{conjecture}[NLCES (no low-energy classically-evaluatable states) conjecture] There exists a family of local Hamiltonians $\{H_n\}_{n \in \mathbb{N}}$, where each $H_n$ acts on $n$ qubits, and a constant $\beta > 0$,  such that for sufficiently large $n$ we have that for all classically evaluatable states $\ket{u} \in \mathbb{C}^{2^n}$ as in Definition~\ref{def:cds} it holds that
$\bra{u}H\ket{u} \geq \lambda_0(H_n) + \beta$.
\label{conj:NLCES}
\end{conjecture}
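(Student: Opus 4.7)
The plan is to build on the recently proven NLTS theorem of Anshu–Breuckmann–Nirkhe, using the same family of good quantum LDPC code Hamiltonians $\{H_n\}$, and to argue class by class that classically evaluatable states cannot achieve energy within $\beta$ of $\lambda_0(H_n)$. The reason for starting here is that the NLTS Hamiltonians already have two features we need: (i) every low-energy state is forced to be close in trace distance to a genuine code state, and (ii) code states exhibit linear circuit complexity and macroscopic entanglement across any balanced bipartition. These two features will be the ``workhorses'' against the different subfamilies of classically evaluatable states.

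First I would dispatch the subclasses that already admit structural attacks. Constant-depth circuits are handled directly by NLTS. Stabilizer states are handled by Coble–Coladangelo–Nirkhe (cited in the paper's footnote), whose proof shows that Clifford-circuit states cannot approximate codewords of good qLDPC codes below a constant energy density. For MPS with polynomial bond dimension $D = \mathrm{poly}(n)$, the attack is entanglement-based: for any bipartition $(A,\bar A)$, an MPS has $S(\rho_A) \leq \log D = \mathcal{O}(\log n)$, whereas the low-energy subspace of the qLDPC Hamiltonian contains only states with bipartite entanglement $\Omega(n)$ across some cut (this follows from the linear-distance property together with the robustness of the code subspace under low-energy perturbation). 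For IQP circuits, one would exploit that the output state has a structured amplitude $\langle x|u\rangle = 2^{-n/2} e^{i f(x)}$ with $f$ a low-degree polynomial; low energy against a qLDPC Hamiltonian forces a concentration condition on $|\langle x|u\rangle|^2$ that is incompatible with uniform magnitudes in the computational basis.

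The main obstacle, and the reason this is only a conjecture, is that ``classically evaluatable'' is defined operationally (Definition~\ref{def:cds}), not structurally: a family of states can qualify merely by admitting any succinct description plus a local-expectation oracle, without falling into any of the above Ans\"atze. To attack the general case I would try a reduction-style argument. Suppose towards contradiction that every $H_n$ in our candidate family admits a classically evaluatable state $u_n$ with $\langle u_n|H_n|u_n\rangle \leq \lambda_0(H_n) + \beta$. Then $\mathrm{desc}(u_n)$ is a polynomial-size NP witness certifying low energy, because condition (ii) of Definition~\ref{def:cds} lets a verifier compute $\sum_i \langle u_n|H_{n,i}|u_n\rangle$ in polynomial time with error below $\beta/2$. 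This would place the low-energy decision problem for $\{H_n\}$ in $\NP$. The strategy is therefore to choose $\{H_n\}$ so that its constant-gap local Hamiltonian problem is provably \emph{not} in $\NP$ under some well-believed hypothesis, e.g.\ by embedding a $\QCMA$-hard (or $\QMA$-hard under stronger assumptions) instance whose hardness survives gap amplification of the form ruled out for classically evaluatable guiding states by our Theorem~\ref{thm:QCPCP_nogos}.

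The hard part will be making the final conditional argument \emph{unconditional}, since packaging the reduction cleanly seems to require precisely the kind of quantum PCP-style gap amplification whose existence is equivalent to the quantum PCP conjecture. A more realistic near-term target is therefore a \emph{weak} NLCES theorem covering MPS, stabilizer, constant-depth, and IQP states (plus finite compositions thereof) via the structural arguments of the first two paragraphs, leaving the genuinely operational case as the residual obstacle. I would also expect that any proof of the full conjecture will need to exploit something like a classical-simulation dichotomy: either $\mathrm{desc}(u)$ exposes enough structure (bounded entanglement, stabilizer rank, bounded Schmidt rank across many cuts, etc.) to invoke a structural no-go, or it is rich enough that computing local expectations efficiently forces the state to lie in a subclass already ruled out -- closing the loop around Definition~\ref{def:cds} without passing through the QPCP conjecture.
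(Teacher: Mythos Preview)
This statement is a \emph{conjecture} in the paper (Conjecture~\ref{conj:NLCES}), not a theorem; the paper offers no proof and explicitly poses it as open. There is therefore nothing to compare your proposal against. You yourself recognise this midway through (``the reason this is only a conjecture''), so your write-up is really a research programme rather than a proof attempt, and it should be framed as such from the outset.

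On the substance of the programme: the structural attacks you sketch for the four explicit subclasses (constant-depth via NLTS, stabilizer via Coble et al., MPS via entanglement-entropy bounds, IQP via amplitude-uniformity) are the natural first moves and are consistent with what the paper gestures at in its open-problems discussion. But your reduction-style argument for the general case has a gap you partly acknowledge: the existence of a classically evaluatable low-energy state for a \emph{fixed} family $\{H_n\}$ only puts that particular family's constant-gap problem in $\NP$, which is not a contradiction unless you can exhibit a family for which this is independently known to be hard---and that is precisely the quantum PCP conjecture (or something morally equivalent). So the conditional route really does collapse back onto QPCP, as you note. The ``weak NLCES'' target covering the four explicit Ans\"atze and their finite compositions is the realistic deliverable here; the full conjecture remains open for exactly the reason you identify, namely that Definition~\ref{def:cds} is operational rather than structural.
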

Taking into account our results about the containment of the constant-gapped classically guidable local Hamiltonian problem in $\NP$ -- namely the insight that what really matters is the \textit{fidelity} of a classically evaluatable state with the low-lying energy subspace of the Hamiltonian, and not the energy of the classically evaluatable state itself -- we can also define a stronger version of the NLCES conjecture, which must hold if the quantum PCP conjecture holds. 

\begin{conjecture}[Strong-NLCES conjecture] There exists a family of local Hamiltonians $\{H_n\}_{n \in \mathbb{N}}$, where each $H_n$ acts on $n$ qubits, and a constant $\beta > 0$, such that for sufficiently large $n$ we have that for all classically evaluatable states $\ket{u} \in \mathbb{C}^{2^n}$, as in Definition~\ref{def:cds}, we have that $\|\Pi_{\lambda_0(H_n) + \beta} \ket{u}\|^2 = o(1/\poly(n))$. Here $\Pi_{\lambda_0(H_n) + \beta}$ is the projector onto the space spanned by eigenvectors of $H$ with energy less than $\lambda_0(H_n) + \beta$.
\label{conj:strong_NLCES}
\end{conjecture}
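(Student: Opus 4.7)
The plan is to establish the implication suggested by the preamble: assuming the QPCP conjecture and the standard belief $\QMA \not\subseteq \NqP$, Strong-NLCES holds. I would argue by contrapositive, reusing the deterministic spectral amplification framework of Section~\ref{sec:clas_cont}. Fix a QPCP-hard family of local Hamiltonians $\{H_n\}$ with constant promise gap $\delta$, as guaranteed by the local-Hamiltonian formulation of the QPCP conjecture. Suppose, for contradiction, that Strong-NLCES fails on this family: then for every constant $\beta > 0$ there exist infinitely many $n$ for which some classically evaluatable state $\ket{u_n}$ satisfies $\|\Pi_{\lambda_0(H_n)+\beta}\ket{u_n}\|^2 \geq 1/q(n)$ for some polynomial $q$.

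Pick $\beta < \delta/2$. In a YES instance the low-energy subspace lies below $a + \beta < b$, so $\ket{u_n}$ carries $1/\poly(n)$ fidelity with eigenstates that the shifted-sign projector can single out; in a NO instance the entire spectrum sits above $b > a + \beta$, so the projector evaluates to (approximately) zero on every state. This is exactly the setting of Theorem~\ref{thm:CED}, except that we replace fidelity with the ground space by fidelity with the low-energy subspace $\Pi_{\lambda_0 + \beta}$, as the remark following that theorem flags is a straightforward generalization. Instantiating the approximation of the sign function with threshold $\alpha = (a+b)/2$ and parameters $\delta' = (\delta - \beta)/2$, $\epsilon' = \Theta(\sqrt{\zeta})$ yields a polynomial of degree $\tilde{\mathcal{O}}(\log(1/\zeta)/\delta)$, so evaluating $\|\tilde{\Pi}_\alpha \ket{u_n}\|^2$ requires $m^{\mathcal{O}(\log n)} = 2^{\mathcal{O}(\log^2 n)}$ local expectation values, each computable exactly from $\mathrm{desc}(u_n)$. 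A nondeterministic quasi-polynomial-time verifier guessing $\mathrm{desc}(u_n)$ and carrying out this computation then decides the QPCP-hard family, placing it in $\NqP$ and forcing $\QMA \subseteq \NqP$, a contradiction.

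The main obstacle will be carefully justifying the extension of Theorem~\ref{thm:CED} to low-energy-subspace fidelity: the original analysis treats only eigenvalues $\leq a$ and $\geq b$, whereas here one must bound the contribution of eigenvalues in the ``undetermined'' interval $(\lambda_0, \lambda_0 + \beta]$ on the YES side and track how the $\epsilon'$-error zone of the polynomial sign approximation interacts with $\beta$; this forces a strict choice $\beta < \delta$ and a slightly larger polynomial degree, but the argument is structurally the same. A secondary subtlety is that the negation of Strong-NLCES only supplies guiding states for infinitely many $n$, not all sufficiently large $n$; this can be absorbed by a sub-sequence/padding argument that converts the partial procedure into an $\NqP$ decision algorithm for a genuinely $\QMA$-hard promise language. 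Strengthening the conclusion to $\QMA = \NP$ (the version used to rule out Strong-NLCES under $\QMA \neq \NP$ alone) would additionally require restricting the failure of Strong-NLCES to guiding states with $\zeta = \Omega(1)$, appealing to the $\NP$-containment branch of Theorem~\ref{thm:cp_NP_NqP} instead of its $\NqP$ branch.
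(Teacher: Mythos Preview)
The statement is a \emph{conjecture}; the paper does not prove it. It only motivates it with the single sentence that Strong-NLCES ``must hold if the quantum PCP conjecture holds'' (implicitly under a separation such as $\QMA \neq \NP$), pointing back to the containment results of Section~\ref{sec:clas_cont}, and then moves on. There is no proof in the paper for your attempt to be compared against.

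What you have written is not a proof of the conjecture itself but of the conditional implication $\text{QPCP} \wedge (\QMA \not\subseteq \NqP) \Rightarrow \text{Strong-NLCES}$, which is exactly the informal motivation the paper gestures at. Your strategy --- use the failure of Strong-NLCES to obtain classically evaluatable guiding states with $1/\poly$ overlap with the low-energy subspace, then invoke a low-energy-subspace extension of Theorem~\ref{thm:CED} to put constant-gap LH in $\NqP$ --- is sound and matches the paper's intended reasoning. The real weakness is the quantifier issue you label ``secondary.'' QPCP-hardness gives a reduction $x \mapsto H_x$, not a single sequence $\{H_n\}$, and negating Strong-NLCES on any particular family only supplies guiding states at \emph{infinitely many} $n$, not at every instance the reduction produces. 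Your padding suggestion does not obviously close this: an $\NqP$ verifier must succeed on \emph{all} valid inputs, and you have not argued that guiding states exist at the input lengths that actually arise from the reduction. The cleaner contrapositive runs at the level of the promise problem directly (if every constant-gap $k$-local Hamiltonian admitted a $1/\poly$-good classically evaluatable state, then constant-gap LH $\in \NqP$, hence $\QMA \subseteq \NqP$); one then extracts a witnessing family $\{H_n\}$ by choosing, for each $n$ where this universal hypothesis fails, a Hamiltonian with no such state. Even then, upgrading ``infinitely many $n$'' to ``all sufficiently large $n$'' in the conclusion requires a separate Hamiltonian-padding step whose interaction with the class of classically evaluatable states in Definition~\ref{def:cds} needs to be checked rather than asserted.
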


Note that the NLCES Conjecture is strictly weaker than the Strong-NLCES conjecture, and that both do not necessarily imply the $\QPCP$ conjecture.

\section*{Acknowledgements}
The authors thank Sevag Gharibian and Fran{\c{c}}ois Le Gall for useful discussions and comments on an earlier version of the manuscript, with a special thanks to Fran{\c{c}}ois Le Gall for pointing us towards looking into the implications to the quantum PCP conjecture. We also thank Harry Burhman for his help regarding promise classes, Jonas Helsen for helpful discussions, and Ronald de Wolf for providing feedback on the introduction. We thank the anonymous reviewers for their helpful comments, and in particular an anonymous STOC reviewer who provided an annotated version of an earlier version of this manuscript with very detailed and helpful comments. CC was supported by Fermioniq B.V., and thanks QuSoft and CWI for their accommodation whilst this work was completed. MF and JW were supported by the Dutch Ministry of Economic Affairs and Climate Policy (EZK), as part of the Quantum Delta NL programme.

\medskip

\printbibliography

@article{vonBurgquantum,
  title = {Quantum computing enhanced computational catalysis},
  author = {Burg, Vera von and Low, Guang Hao and H\"aner, Thomas and Steiger, Damian S. and Reiher, Markus and Roetteler, Martin and Troyer, Matthias},
  journal={Physical Review Research},
  volume={3},
  number={3},
  pages={033055},
  year={2021},
  month = jul,
  publisher={APS},
note={\href{https://arxiv.org/abs/2007.14460}{\tt arXiv:2007.14460}},
  doi = {10.1103/PhysRevResearch.3.033055},
}

@ARTICLE{2018arXiv180905523T,
       author = {{Tubman}, Norm M. and {Mejuto-Zaera}, Carlos and {Epstein}, Jeffrey M. and {Hait}, Diptarka and {Levine}, Daniel S. and {Huggins}, William and {Jiang}, Zhang and {McClean}, Jarrod R. and {Babbush}, Ryan and {Head-Gordon}, Martin and {Whaley}, K. Birgitta},
        title = "{Postponing the orthogonality catastrophe: efficient state preparation for electronic structure simulations on quantum devices}",
      journal = {arXiv preprint},
         year = 2018,
        month = sep,
        note = {\href{https://arxiv.org/abs/1809.05523}{\tt arXiv:1809.05523}}
}

@article{lee2023evaluating,
  title={Evaluating the evidence for exponential quantum advantage in ground-state quantum chemistry},
  author={Lee, Seunghoon and Lee, Joonho and Zhai, Huanchen and Tong, Yu and Dalzell, Alexander M and Kumar, Ashutosh and Helms, Phillip and Gray, Johnnie and Cui, Zhi-Hao and Liu, Wenyuan and others},
  journal={Nature communications},
  volume={14},
  number={1},
  pages={1952},
  year={2023},
  publisher={Nature Publishing Group UK London},
note = {\href{https://arxiv.org/abs/2208.02199}{\tt arXiv:2208.02199}}
}

@article{verstraete2008matrix,
  title={Matrix product states, projected entangled pair states, and variational renormalization group methods for quantum spin systems},
  author={Verstraete, Frank and Murg, Valentin and Cirac, J Ignacio},
  journal={Advances in physics},
  volume={57},
  number={2},
  pages={143--224},
  year={2008},
  publisher={Taylor \& Francis},
note = {\href{https://arxiv.org/abs/0907.2796}{\tt arXiv:0907.2796}},
doi = {10.1080/14789940801912366}
}

@article{aaronson2004improved,
  title={Improved simulation of stabilizer circuits},
  author={Aaronson, Scott and Gottesman, Daniel},
  journal={Physical Review A},
  volume={70},
  number={5},
  pages={052328},
  year={2004},
  publisher={APS},
   note = {\href{https://arxiv.org/abs/quant-ph/0406196}{\tt arXiv:quant-ph/0406196}},
  doi = {10.1103/PhysRevA.70.052328}
}

@inproceedings{gharibian2021dequantizing,
author = {Gharibian, Sevag and Le Gall, Fran\c{c}ois},
title = {Dequantizing the Quantum singular value transformation: hardness and applications to Quantum chemistry and the Quantum PCP conjecture},
year = {2022},
isbn = {9781450392648},
publisher = {Association for Computing Machinery},
address = {New York, NY, USA},
doi = {10.1145/3519935.3519991},
abstract = {The Quantum Singular Value Transformation (QSVT) is a recent technique that gives a unified framework to describe most quantum algorithms discovered so far, and may lead to the development of novel quantum algorithms. In this paper we investigate the hardness of classically simulating the QSVT. A recent result by Chia, Gily\'{e}n, Li, Lin, Tang and Wang (STOC 2020) showed that the QSVT can be efficiently “dequantized” for low-rank matrices, and discussed its implication to quantum machine learning. In this work, motivated by establishing the superiority of quantum algorithms for quantum chemistry and making progress on the quantum PCP conjecture, we focus on the other main class of matrices considered in applications of the QSVT, sparse matrices.  We first show how to efficiently “dequantize”, with arbitrarily small constant precision, the QSVT associated with a low-degree polynomial. We apply this technique to design classical algorithms that estimate, with constant precision, the singular values of a sparse matrix. We show in particular that a central computational problem considered by quantum algorithms for quantum chemistry (estimating the ground state energy of a local Hamiltonian when given, as an additional input, a state sufficiently close to the ground state) can be solved efficiently with constant precision on a classical computer. As a complementary result, we prove that with inverse-polynomial precision, the same problem becomes BQP-complete. This gives theoretical evidence for the superiority of quantum algorithms for chemistry, and strongly suggests that said superiority stems from the improved precision achievable in the quantum setting. We also discuss how this dequantization technique may help make progress on the central quantum PCP conjecture.},
booktitle = {Proceedings of the 54th Annual ACM SIGACT Symposium on Theory of Computing},
pages = {19–32},
numpages = {14},
keywords = {quantum chemistry, quantum PCP, local Hamiltonian, dequantization, Quantum singular value transform},
location = {Rome, Italy},
series = {STOC 2022},
note={\href{https://arxiv.org/abs/2111.09079}{\tt arXiv:2111.09079}}
}

@InProceedings{aaronson2021acrobatics,
  author =	{Aaronson, Scott and Ingram, DeVon and Kretschmer, William},
  title =	{{The Acrobatics of BQP}},
  booktitle =	{37th Computational Complexity Conference (CCC 2022)},
  pages =	{20:1--20:17},
  series =	{Leibniz International Proceedings in Informatics (LIPIcs)},
  ISBN =	{978-3-95977-241-9},
  ISSN =	{1868-8969},
  year =	{2022},
  volume =	{234},
  editor =	{Lovett, Shachar},
  doi =		{10.4230/LIPIcs.CCC.2022.20},
  annote =	{Keywords: BQP, Forrelation, oracle separations, Polynomial Hierarchy, query complexity},
  note={\href{https://arxiv.org/abs/2111.10409}{\tt arXiv:2111.10409}}
}

@InProceedings{liu2020stoqma,
  author =	{Liu, Yupan},
  title =	{{StoqMA Meets Distribution Testing}},
  booktitle =	{16th Conference on the Theory of Quantum Computation, Communication and Cryptography (TQC 2021)},
  pages =	{4:1--4:22},
  ISBN =	{978-3-95977-198-6},
  ISSN =	{1868-8969},
  year =	{2021},
  volume =	{197},
  editor =	{Hsieh, Min-Hsiu},
  doi =		{10.4230/LIPIcs.TQC.2021.4},
note = {\href{https://arxiv.org/abs/2011.05733}{\tt arXiv:2011.05733}}
}

@article{fortnow1994role,
  title={The role of relativization in complexity theory},
  author={Fortnow, Lance},
  journal={Bulletin of the EATCS},
  volume={52},
  pages={229--243},
  year={1994},
  publisher={Citeseer}
}

@article{aharonov2013guest,
author = {Aharonov, Dorit and Arad, Itai and Vidick, Thomas},
title = {Guest column: the quantum PCP conjecture},
year = {2013},
issue_date = {June 2013},
address = {New York, NY, USA},
volume = {44},
number = {2},
issn = {0163-5700},
doi = {10.1145/2491533.2491549},
abstract = {The classical PCP theorem is arguably the most important achievement of classical complexity theory in the past quarter century. In recent years, researchers in quantum computational complexity have tried to identify approaches and develop tools that address the question: does a quantum version of the PCP theorem hold? The story of this study starts with classical complexity and takes unexpected turns providing fascinating vistas on the foundations of quantum mechanics and multipartite entanglement, topology and the so-called phenomenon of topological order, quantum error correction, information theory, and much more; it raises questions that touch upon some of the most fundamental issues at the heart of our understanding of quantum mechanics. At this point, the jury is still out as to whether or not such a theorem holds. This survey aims to provide a snapshot of the status in this ongoing story, tailored to a general theory-of-CS audience.},
journal = {SIGACT News},
month = {jun},
pages = {47–79},
numpages = {33},
note={\href{https://arxiv.org/abs/1309.7495}{\tt arXiv:1309.7495}}
}

@article{kempe20033local,
author = {Kempe, Julia and Regev, Oded},
title = {3-local Hamitonian is QMA-complete},
year = {2003},
issue_date = {May 2003},
publisher = {Rinton Press, Incorporated},
address = {Paramus, NJ},
volume = {3},
number = {3},
issn = {1533-7146},
abstract = {It has been shown by Kitaev that the 5-LOCAL HAMILTONIAN problem is QMA-complete. Here we reduce the locality of the problem by showing that 3-LOCAL HAMILTONIAN is already QMA-complete.},
journal = {Quantum Info. Comput.},
month = {may},
pages = {258–264},
numpages = {7},
keywords = {local Hamiltonian problem, complete problems, QMA},
note={\href{https://arxiv.org/abs/quant-ph/0302079}{\tt arXiv:0302079}}
}

@book{Kitaev2002ClassicalAQ,
  title={Classical and quantum computation},
  author={Alexei Y. Kitaev and Alexander Shen and Mikhail N. Vyalyi},
  number={47},
  year={2002},
  publisher={American Mathematical Society},
ISBN = {978-0-8218-3229-5}
}

@article{Lin2020nearoptimalground,
  doi = {10.22331/q-2020-12-14-372},
  title = {Near-optimal ground state preparation},
  author = {Lin, Lin and Tong, Yu},
  journal = {{Quantum}},
  issn = {2521-327X},
  publisher = {{Verein zur F{\"{o}}rderung des Open Access Publizierens in den Quantenwissenschaften}},
  volume = {4},
  pages = {372},
  month = dec,
  year = {2020},
  note={\href{https://arxiv.org/abs/2002.12508}{\tt arXiv:2002.12508}}
}

@article{biamonte2015tensor,
  title={Tensor network contractions for \#{SAT}},
  author={Biamonte, Jacob D. and Morton, Jason and Turner, Jacob W.},
  journal={Journal of Statistical Physics},
  volume={160},
  pages={1389--1404},
  year={2015},
  month=jun,
doi = {https://doi.org/10.1007/s10955-015-1276-z},
  note = {\href{https://arxiv.org/abs/1405.7375}{\tt arXiv:1405.7375}}
}

@InProceedings{bittel2022optimizing,
  author =	{Bittel, Lennart and Gharibian, Sevag and Kliesch, Martin},
  title =	{{The Optimal Depth of Variational Quantum Algorithms Is QCMA-Hard to Approximate}},
  booktitle =	{38th Computational Complexity Conference (CCC 2023)},
  pages =	{34:1--34:24},
  ISBN =	{978-3-95977-282-2},
  ISSN =	{1868-8969},
  year =	{2023},
  volume =	{264},
  doi =		{10.4230/LIPIcs.CCC.2023.34},
  note={\href{https://arxiv.org/abs/2211.12519}{\tt arXiv:2211.12519}}
}

@article{zachos1988probabilistic,
  title={Probabilistic quantifiers and games},
  author={Zachos, Stathis},
  journal={Journal of Computer and System Sciences},
  volume={36},
  number={3},
  pages={433--451},
  year={1988},
  publisher={Elsevier},
doi = {10.1016/0022-0000(88)90037-2}
}

@article{Aaronson2009ComputationalCW,
  title={Computational complexity: Why quantum chemistry is hard},
  author={Scott Aaronson},
  journal={Nature Physics},
  year={2009},
  month = oct,
  volume={5},
  pages={707-708},
  doi = {10.1038/nphys1415},
}

@article{Bauer2020chemical,
author = {Bauer, Bela and Bravyi, Sergey and Motta, Mario and Chan, Garnet K.},
title = {Quantum Algorithms for Quantum Chemistry and Quantum Materials Science},
journal = {Chemical Reviews},
volume = {120},
number = {22},
pages = {12685-12717},
year = {2020},
month = oct,
doi = {10.1021/acs.chemrev.9b00829},
note={\href{https://arxiv.org/abs/2001.03685}{\tt arXiv:2001.03685}}
}

@article{Jordan2012,
author = {Jordan, Stephen P. and Kobayashi, Hirotada and Nagaj, Daniel and Nishimura, Harumichi},
title = {Achieving Perfect Completeness in Classical-Witness Quantum {Merlin-Arthur} Proof Systems},
year = {2012},
 journal={Quantum Information \& Computation},
  volume={12},
  number={5-6},
  pages={461--471},
  year={2012},
month = may,
pages = {461–471},
numpages = {11},
note = {\href{https://arxiv.org/abs/1111.5306}{\tt arXiv:1111.5306}}
}

@article{bravyi2010complexity,
  title={Complexity of stoquastic frustration-free Hamiltonians},
  author={Bravyi, Sergey and Terhal, Barbara},
  journal={Siam journal on computing},
  volume={39},
  number={4},
  pages={1462--1485},
  year={2010},
  publisher={SIAM},
note = {\href{https://arxiv.org/abs/0806.1746}{\tt arXiv:0806.1746}},
doi = {10.1137/08072689X}
}

@inproceedings{valiant1985np,
author = {Valiant, L G and Vazirani, V V},
title = {NP is as easy as detecting unique solutions},
year = {1985},
isbn = {0897911512},
publisher = {Association for Computing Machinery},
address = {New York, NY, USA},
doi = {10.1145/22145.22196},
abstract = {For all known NP-complete problems the number of solutions in instances having solutions may vary over an exponentially large range. Furthermore, most of the well-known ones, such as satisfiability, are parsimoniously interreducible, and these can have any number of solutions between zero and an exponentially large number. It is natural to ask whether the inherent intractability of NP-complete problems is caused by this wide variation. In this paper we give a negative answer to this using randomized reductions. We show that the problems of distinguishing between instances of SAT having zero or one solution, or finding solutions to instances of SAT having unique solutions, are as hard as SAT itself. Several corollaries about the difficulty of specific problems follow. For example if the parity of the number of solutions of SAT can be computed in RP then NP = RP. Some further problems can be shown to be hard for NP or DP via randomized reductions.},
booktitle = {Proceedings of the Seventeenth Annual ACM Symposium on Theory of Computing},
pages = {458–463},
numpages = {6},
location = {Providence, Rhode Island, USA},
series = {STOC '85}
}

@article{Deshpande2020,
  title = {Importance of the Spectral gap in Estimating Ground-State Energies},
  author = {Deshpande, Abhinav and Gorshkov, Alexey V. and Fefferman, Bill},
  journal = {PRX Quantum},
  volume = {3},
  issue = {4},
  pages = {040327},
  numpages = {35},
  year = {2022},
  month = {Dec},
  publisher = {American Physical Society},
  doi = {10.1103/PRXQuantum.3.040327},
note = {\href{https://arxiv.org/abs/2007.11582}{\tt arXiv:2007.11582}}
}

@article{Aharonov2022pursuitof,
  doi = {10.22331/q-2022-03-17-668},
  title = {The {P}ursuit of {U}niqueness: {E}xtending {V}aliant-{V}azirani {T}heorem to the {P}robabilistic and {Q}uantum {S}ettings},
  author = {Aharonov, Dorit and Ben-Or, Michael and Brand{\~{a}}o, Fernando G.S.L. and Sattath, Or},
  journal = {{Quantum}},
  issn = {2521-327X},
  publisher = {{Verein zur F{\"{o}}rderung des Open Access Publizierens in den Quantenwissenschaften}},
  volume = {6},
  pages = {668},
  month = mar,
  year = {2022},
  note={\href{https://arxiv.org/abs/0810.4840}{\tt arXiv:0810.4840}}
}

@article{Cade2022Complexity,
       author = {{Cade}, Chris and {Folkertsma}, Marten and {Weggemans}, Jordi},
        title = {Complexity of the Guided Local {Hamiltonian} Problem: Improved Parameters and Extension to Excited States},
         year = 2022,
        month = jul,
        note = {\href{https://arxiv.org/abs/2207.10097}{\tt arXiv:2207.10097}}
}

@article{piddock2021universal,
  title={Universal qudit hamiltonians},
  author={Piddock, Stephen and Montanaro, Ashley},
  journal={Communications in Mathematical Physics},
  volume={382},
  pages={721--771},
  year={2021},
  publisher={Springer},
  note = {\href{https://arxiv.org/abs/1802.07130}{\tt arXiv:1802.07130}},
doi = {10.1007/s00220-021-03940-3}
}

@article{caha2017feynman,
  title={The {Feynman-Kitaev} computer's clock: bias, gaps, idling and pulse tuning},
  author={Caha, Libor and Landau, Zeph and Nagaj, Daniel},
  year={2017},
  note = {\href{https://arxiv.org/abs/1712.07395}{\tt arXiv:1712.07395}}
}

@ARTICLE{Wocjan2003two,
       author = {{Wocjan}, Pawel and {Janzing}, Dominik and {Beth}, Thomas},
        title = "{Two QCMA-complete problems}",
 journal={Quantum Information \& Computation},
  volume={3},
  number={6},
  pages={635--643},
  year={2003},
  month = nov,
  note = {\href{https://arxiv.org/abs/quant-ph/0305090}{\tt arXiv:quant-ph/0305090}},
  publisher={Rinton Press, Incorporated Paramus, NJ}
}

@ARTICLE{Mariott2005quantum,
       author = {{Marriott}, Chris and {Watrous}, John},
        title = {Quantum {Arthur-Merlin} Games},
      journal = {CCC},
       pages={275--285},
        year={2004},
        month = jun,
doi = {10.1007/s00037-005-0194-x},
        note = {\href{https://arxiv.org/abs/cs/0506068}{\tt arXiv:cs/0506068}}
    }

@phdthesis{Grilo2018thesis,
  title={Quantum proofs, the local {Hamiltonian} problem and applications},
  author={Grilo, Alex B.},
  year={2018},
  month = apr,
  school={Universit{\'e} Sorbonne Paris Cit{\'e}},
URL = {https://www.irif.fr/~abgrilo/thesis.pdf}
}

@inproceedings{Aharonov2008The,
author = {Aharonov, Dorit and Arad, Itai and Landau, Zeph and Vazirani, Umesh},
title = {The detectability lemma and quantum gap amplification},
year = {2009},
isbn = {9781605585062},
address = {New York, NY, USA},
doi = {10.1145/1536414.1536472},
abstract = {The quantum analogue of the constraint satisfaction problem is the fundamental physics question of finding the minimal energy state of a local Hamiltonian --- each term of the Hamiltonian specifies a local constraint whose violation contributes to the energy of the given quantum state. However, in general it is not meaningful to ask for the probability that a given quantum state violates at least one constraint; the difficulty being that the terms of the Hamiltonian do not commute. We show how to make sense of this notion under mild restrictions on the form of the Hamiltonian. We then provide two main results. We first prove the quantum detectability lemma, which states that the probability of detecting a violation of a constraint in a local Hamiltonian system is bounded from below by some constant times the minimal energy of the system. The proof reveals some intrinsic structure of the Hilbert space of local Hamiltonians, which is captured in the "exponential decay" lemma, and formalized using a novel decomposition of the Hilbert space called the XY decomposition. As an application of the detectability lemma, we prove our second main result: a quantum analogue of the classical gap amplification lemma using random walks over expander graphs, which was the seed for Dinur's celebrated new proof of the PCP theorem [6]. We hope that these results will pave the way to better understandings of the computational properties of local Hamiltonians systems, and to the evolving field of quantum Hamiltonian complexity.},
booktitle = {Proceedings of the Forty-First Annual ACM Symposium on Theory of Computing},
pages = {417–426},
numpages = {10},
keywords = {PCP, QMA, detectability lemma, gap amplification, local hamiltonians, quantum computation},
location = {Bethesda, MD, USA},
series = {STOC '09},
note={\href{https://arxiv.org/abs/0811.3412}{\tt arXiv:0811.3412}}
}

@ARTICLE{Cubitt2018Universal,
       author = {{Cubitt}, Toby S. and {Montanaro}, Ashley and {Piddock}, Stephen},
        title = "{Universal quantum Hamiltonians}",
      journal = {The National Academy of Science},
         year = 2018,
        month = sep,
       volume = {115},
       number = {38},
        pages = {9497-9502},
        note = {\href{https://arxiv.org/abs/1701.05182}{\tt arXiv:1701.05182}},
doi = {
https://doi.org/10.1073/pnas.1804949115}
}

@ARTICLE{Aradlinear2015,
       author = {{Arad}, Itai and {Santha}, Miklos and {Sundaram}, Aarthi and {Zhang}, Shengyu},
        title = "{Linear time algorithm for quantum 2SAT}",
     journal={Theory of Computing},
     volume={14},
    number={1},
    pages={1--27},
    year={2018},
    month = mar,
    note = {\href{https://arxiv.org/abs/1508.06340}{\tt arXiv:1508.06340}},
 doi = {10.4086/toc.2018.v014a001}
}

@INPROCEEDINGS{Gossetquantum2013,
  author={Gosset, David and Nagaj, Daniel},
  booktitle={2013 IEEE 54th Annual Symposium on Foundations of Computer Science}, 
  title={Quantum 3-SAT Is QMA1-Complete}, 
  year={2013},
  volume={},
  number={},
  pages={756-765},
  keywords={Clocks;Registers;Hilbert space;Quantum computing;Stationary state;Complexity theory;Logic gates;Quantum computing;Computational complexity},
  doi={10.1109/FOCS.2013.86},
 note = {\href{https://arxiv.org/abs/1302.0290}{\tt arXiv:1302.0290}}
}

@ARTICLE{Bravyiefficient2006,
       author = {{Bravyi}, Sergey},
        title = "{Efficient algorithm for a quantum analogue of 2-SAT}",
         year = 2006,
        month = feb,
        note = {\href{https://arxiv.org/abs/quant-ph/0602108}{\tt arXiv:quant-ph/0602108}}
}

@article{Ansoteguireducing2021,
  title     = {Reducing {SAT} to {Max2SAT}},
  author    = {Ansótegui, Carlos and Levy, Jordi},
  journal = {IJCAI},
  pages     = {1367--1373},
  year      = {2021},
  month     = aug,
}

@article{BravyiMonte2015,
author = {Bravyi, Sergey},
title = {Monte {Carlo} Simulation of Stoquastic {Hamiltonians}},
year = {2015},
month = oct,
journal = {Quantum Information \& Computation},
volume = {15},
number = {13–14},
issn = {1533-7146},
pages = {1122–1140},
numpages = {19},
keywords = {quantum Monte Carlo, transverse Ising model, fermionic sign problem},
note = {\href{https://arxiv.org/abs/1402.2295}{\tt arXiv:1402.2295}}
}

@article{HastadSome1997,
author = {H\r{a}stad, Johan},
title = {Some Optimal Inapproximability Results},
  journal={Journal of the ACM},
  volume={48},
  number={4},
  pages={798--859},
  year={2001},
  month = jul 
}

@ARTICLE{LowHamiltonian2017,
       author = {{Hao Low}, Guang and {Chuang}, Isaac L.},
        title = {Hamiltonian Simulation by Uniform Spectral Amplification},
     keywords = {Quantum Physics},
         year = 2017,
        month = jul,
          note = {\href{https://arxiv.org/abs/1707.05391}{\tt arXiv:1707.05391}}
}

@ARTICLE{Terhal2002adaptive,
       author = {{Terhal}, Barbara M. and {DiVincenzo}, David P.},
        title = {Adaptive Quantum Computation, Constant Depth Quantum Circuits and {Arthur-Merlin} Games},
     journal={Quantum Information \& Computation},
  volume={4},
  number={2},
  pages={134--145},
  year={2004},
  month = mar,
  note = {\href{https://arxiv.org/abs/quant-ph/0205133}{\tt arXiv:quant-ph/0205133}},
doi = {10.26421/QIC4.2-5}

}

@INPROCEEDINGS{Aharonov2019stoquastic,
  author={Aharonov, Dorit and Bredariol Grilo, Alex},
  booktitle={2019 IEEE 60th Annual Symposium on Foundations of Computer Science (FOCS)}, 
  title={Stoquastic PCP vs. Randomness}, 
  year={2019},
  volume={},
  number={},
  pages={1000-1023},
  keywords={Computer science;Complexity theory;Eigenvalues and eigenfunctions;Probabilistic logic;Quantum PCP;Derandomization;Complexity theory},
  doi={10.1109/FOCS.2019.00065},
note = {\href{https://arxiv.org/abs/1901.05270}{\tt arXiv:1901.05270}}
}

@article{grover2002creating,
  title={Creating superpositions that correspond to efficiently integrable probability distributions},
  author={Grover, Lov and Rudolph, Terry},
  journal={arXiv preprint},
  year={2002},
  month = aug,
  note = {\href{https://arxiv.org/abs/quant-ph/0208112}{\tt arXiv:quant-ph/0208112}}
}

@article{orus2014practical,
  title={A practical introduction to tensor networks: Matrix product states and projected entangled pair states},
  author={Or{\'u}s, Rom{\'a}n},
  journal={Annals of physics},
  volume={349},
  pages={117--158},
  year={2014},
  month = oct,
  publisher={Elsevier},
  note = {\href{https://arxiv.org/abs/1306.2164}{\tt arXiv:1306.2164}}
}

@article{schollwock2011density,
  title={The density-matrix renormalization group: a short introduction},
  author={Schollw{\"o}ck, Ulrich},
  journal={Philosophical Transactions of the Royal Society A: Mathematical, Physical and Engineering Sciences},
  volume={369},
  number={1946},
  pages={2643--2661},
  year={2011},
  month = jul,
  publisher={The Royal Society Publishing}
}

@article{Gharibian2019almost,
  author =	{Sevag Gharibian and Ojas Parekh},
  title =	{Almost Optimal Classical Approximation Algorithms for a Quantum Generalization of {Max-Cut}},
journal = {LIPIcs},
year = 2019,
month = oct,
  pages =	{31:1--31:17},
  volume =	{145},
  note = {\href{https://arxiv.org/abs/1909.08846}{\tt arXiv:1909.08846}}
 }

@article{bravyi2006,
  title={Merlin-Arthur games and stoquastic complexity},
  author={Bravyi, Sergey and Bessen, Arvid J and Terhal, Barbara M},
  year={2006},
month = nov,
   note = {\href{https://arxiv.org/abs/quant-ph/0611021}{\tt arXiv:quant-ph/0611021}}
}

@article{schon2005sequential,
  title={Sequential generation of entangled multiqubit states},
  author={Sch{\"o}n, Christian and Solano, Enrique and Verstraete, Frank and Cirac, J. Ignacio and Wolf, Michael M.},
  journal={Physical review letters},
  volume={95},
  number={11},
  pages={110503},
  year={2005},
  month = sep,
  note = {\href{https://arxiv.org/abs/quant-ph/0501096}{\tt arXiv:quant-ph/0501096}},
  publisher={APS}
}

@article{bravyi2006complexity,
  title={The complexity of stoquastic local {Hamiltonian} problems},
  author={Bravyi, Sergey and {DiVincenzo}, David P. and Oliveira, Roberto I. and Terhal, Barbara M.},
  journal={arXiv preprint},
  year={2006},
  month=jun,
  note= {\href{https://arxiv.org/abs/quant-ph/0606140}{\tt arXiv:quant-ph/0606140}}
}

@inproceedings{Tang2019AQuantum,
author = {Tang, Ewin},
title = {A quantum-inspired classical algorithm for recommendation systems},
year = {2019},
isbn = {9781450367059},
publisher = {Association for Computing Machinery},
address = {New York, NY, USA},
doi = {10.1145/3313276.3316310},
abstract = {We give a classical analogue to Kerenidis and Prakash’s quantum recommendation system, previously believed to be one of the strongest candidates for provably exponential speedups in quantum machine learning. Our main result is an algorithm that, given an m \texttimes{} n matrix in a data structure supporting certain ℓ2-norm sampling operations, outputs an ℓ2-norm sample from a rank-k approximation of that matrix in time O(poly(k)log(mn)), only polynomially slower than the quantum algorithm. As a consequence, Kerenidis and Prakash’s algorithm does not in fact give an exponential speedup over classical algorithms. Further, under strong input assumptions, the classical recommendation system resulting from our algorithm produces recommendations exponentially faster than previous classical systems, which run in time linear in m and n. The main insight of this work is the use of simple routines to manipulate ℓ2-norm sampling distributions, which play the role of quantum superpositions in the classical setting. This correspondence indicates a potentially fruitful framework for formally comparing quantum machine learning algorithms to classical machine learning algorithms.},
booktitle = {Proceedings of the 51st Annual ACM SIGACT Symposium on Theory of Computing},
pages = {217–228},
numpages = {12},
keywords = {sampling, recommender systems, quantum machine learning, low-rank approximation, exponential speedup},
location = {Phoenix, AZ, USA},
series = {STOC 2019},
note={\href{https://arxiv.org/abs/1807.04271}{\tt arXiv:1807.04271}}
}

@inproceedings{Chia2020Sampling,
author = {Chia, Nai-Hui and Gily\'{e}n, Andr\'{a}s and Li, Tongyang and Lin, Han-Hsuan and Tang, Ewin and Wang, Chunhao},
title = {Sampling-based sublinear low-rank matrix arithmetic framework for dequantizing Quantum machine learning},
year = {2020},
isbn = {9781450369794},
publisher = {Association for Computing Machinery},
address = {New York, NY, USA},
doi = {10.1145/3357713.3384314},
abstract = {We present an algorithmic framework for quantum-inspired classical algorithms on close-to-low-rank matrices, generalizing the series of results started by Tang’s breakthrough quantum-inspired algorithm for recommendation systems [STOC’19]. Motivated by quantum linear algebra algorithms and the quantum singular value transformation (SVT) framework of Gily\'{e}n et al.&nbsp;[STOC’19], we develop classical algorithms for SVT that run in time independent of input dimension, under suitable quantum-inspired sampling assumptions. Our results give compelling evidence that in the corresponding QRAM data structure input model, quantum SVT does not yield exponential quantum speedups. Since the quantum SVT framework generalizes essentially all known techniques for quantum linear algebra, our results, combined with sampling lemmas from previous work, suffices to generalize all recent results about dequantizing quantum machine learning algorithms. In particular, our classical SVT framework recovers and often improves the dequantization results on recommendation systems, principal component analysis, supervised clustering, support vector machines, low-rank regression, and semidefinite program solving. We also give additional dequantization results on low-rank Hamiltonian simulation and discriminant analysis. Our improvements come from identifying the key feature of the quantum-inspired input model that is at the core of all prior quantum-inspired results: ℓ2-norm sampling can approximate matrix products in time independent of their dimension. We reduce all our main results to this fact, making our exposition concise, self-contained, and intuitive.},
booktitle = {Proceedings of the 52nd Annual ACM SIGACT Symposium on Theory of Computing},
pages = {387–400},
numpages = {14},
keywords = {sampling, quantum-inspired algorithms, quantum machine learning, low-rank approximation, dequantization},
location = {Chicago, IL, USA},
series = {STOC 2020},
note = {\href{https://arxiv.org/abs/1910.06151}{\tt arXiv:1910.06151}}
}

@InProceedings{Jethwani2020Quantum,
  author =	{Jethwani, Dhawal and Le Gall, Fran\c{c}ois and Singh, Sanjay K.},
  title =	{{Quantum-Inspired Classical Algorithms for Singular Value Transformation}},
  booktitle =	{45th International Symposium on Mathematical Foundations of Computer Science (MFCS 2020)},
  pages =	{53:1--53:14},
  series =	{Leibniz International Proceedings in Informatics (LIPIcs)},
  ISBN =	{978-3-95977-159-7},
  ISSN =	{1868-8969},
  year =	{2020},
  volume =	{170},
  editor =	{Esparza, Javier and Kr\'{a}l', Daniel},
  publisher =	{Schloss Dagstuhl -- Leibniz-Zentrum f{\"u}r Informatik},
  address =	{Dagstuhl, Germany},
  URN =		{urn:nbn:de:0030-drops-127193},
  doi =		{10.4230/LIPIcs.MFCS.2020.53},
  annote =	{Keywords: Sampling algorithms, quantum-inspired algorithms, linear algebra},
  note = {\href{https://arxiv.org/abs/1910.05699}{\tt arXiv:1910.05699}}
}

@article{Jordan2021Revisiting,
       author = {{Cotler}, Jordan and {Huang}, Hsin-Yuan and {McClean}, Jarrod R.},
        title = "{Revisiting dequantization and quantum advantage in learning tasks}",
         year = 2021,
        month = dec,
        note = {\href{https://arxiv.org/abs/2112.00811}{\tt arXiv:2112.00811}}
}

@article{Benioff1980computer,
  title={The computer as a physical system: A microscopic quantum mechanical {Hamiltonian} model of computers as represented by {Turing} machines},
  author={Benioff, Paul},
  journal={Journal of Statistical Physics},
  volume={22},
  number={5},
  pages={563--591},
  year={1980},
  month = may,
  publisher={Springer},
doi = {https://doi.org/10.1007/BF01011339}
}

@article{feynman1982simulating,
  title={Simulating Physics with Computers},
  author={Feynman, Richard P},
  journal={International Journal of Theoretical Physics},
  volume={21},
  number={6/7},
  year={1982},
doi = {10.1007/BF02650179}
}

@article{tilly2022the,
  title={The variational quantum eigensolver: a review of methods and best practices},
  author={Tilly, Jules and Chen, Hongxiang and Cao, Shuxiang and Picozzi, Dario and Setia, Kanav and Li, Ying and Grant, Edward and Wossnig, Leonard and Rungger, Ivan and Booth, George H. and others},
  journal={Physics Reports},
  volume={986},
  pages={1--128},
  year={2022},
  month = nov,
  note = {\href{https://arxiv.org/abs/2111.05176}{\tt arXiv:2111.05176}},
doi = {
10.1016/j.physrep.2022.08.003
}
}

@article{albash2018adiabatic,
  title = {Adiabatic quantum computation},
  author = {Albash, Tameem and Lidar, Daniel A.},
  journal = {Rev. Mod. Phys.},
  volume = {90},
  issue = {1},
  pages = {015002},
  numpages = {64},
  year = {2018},
  month = {Jan},
  publisher = {American Physical Society},
  doi = {10.1103/RevModPhys.90.015002},
 note = {\href{https://arxiv.org/abs/1611.04471}{\tt arXiv:1611.04471 }}
}

@inproceedings{cook1971the,
author = {Cook, Stephen A.},
title = {The complexity of theorem-proving procedures},
year = {1971},
isbn = {9781450374644},
publisher = {Association for Computing Machinery},
address = {New York, NY, USA},
doi = {10.1145/800157.805047},
abstract = {It is shown that any recognition problem solved by a polynomial time-bounded nondeterministic Turing machine can be “reduced” to the problem of determining whether a given propositional formula is a tautology. Here “reduced” means, roughly speaking, that the first problem can be solved deterministically in polynomial time provided an oracle is available for solving the second. From this notion of reducible, polynomial degrees of difficulty are defined, and it is shown that the problem of determining tautologyhood has the same polynomial degree as the problem of determining whether the first of two given graphs is isomorphic to a subgraph of the second. Other examples are discussed. A method of measuring the complexity of proof procedures for the predicate calculus is introduced and discussed.},
booktitle = {Proceedings of the Third Annual ACM Symposium on Theory of Computing},
pages = {151–158},
numpages = {8},
location = {Shaker Heights, Ohio, USA},
series = {STOC '71}
}

@article{levin1973universal,
  title={Universal sequential search problems},
  author={Levin, Leonid A.},
  journal={Problemy peredachi informatsii},
  volume={9},
  number={3},
  pages={115--116},
  year={1973},
  publisher={Russian Academy of Sciences, Branch of Informatics, Computer Equipment and~…}
}

@article{Arora1998proof,
author = {Arora, Sanjeev and Lund, Carsten and Motwani, Rajeev and Sudan, Madhu and Szegedy, Mario},
title = {Proof Verification and the Hardness of Approximation Problems},
year = {1998},
issue_date = {May 1998},
publisher = {Association for Computing Machinery},
address = {New York, NY, USA},
volume = {45},
number = {3},
issn = {0004-5411},
doi = {10.1145/278298.278306}, 
journal = {Journal of the ACM},
month = may,
pages = {501–555},
numpages = {55}
}

@article{Arora1998probabilistic,
author = {Arora, Sanjeev and Safra, Shmuel},
title = {Probabilistic checking of proofs: a new characterization of NP},
year = {1998},
issue_date = {Jan. 1998},
publisher = {Association for Computing Machinery},
address = {New York, NY, USA},
volume = {45},
number = {1},
issn = {0004-5411},
doi = {10.1145/273865.273901},
abstract = {We give a new characterization of NP: the class NP contains exactly those languages L for which membership proofs (a proof that an input x is in L) can be verified probabilistically in polynomial time using logarithmic number of random bits and by reading sublogarithmic number of bits from the proof.We discuss implications of this characterization; specifically, we show that approximating Clique and Independent Set, even in a very weak sense, is NP-hard.},
journal = {J. ACM},
month = {jan},
pages = {70–122},
numpages = {53},
keywords = {trade-offs/relations among complexity measures, reducibility and completeness, proof checking, probabilistic computation, interactive computation, hardness of approximations, error-correcting codes, computations on polynomials and finite fields, complexity hierarchies, approximation algorithms, NP-completeness}
}

@article{Dinur2007thepcp, author = {Dinur, Irit}, title = {The PCP theorem by gap amplification}, year = {2007}, issue_date = {June 2007}, publisher = {Association for Computing Machinery}, address = {New York, NY, USA}, volume = {54}, number = {3}, issn = {0004-5411}, doi = {10.1145/1236457.1236459},  journal = {J. ACM}, month = {jun}, pages = {12–es}, numpages = {44}, keywords = {PCP, Gap amplification} }

@article{anshu2022nlts,
  title={{NLTS} Hamiltonians from good quantum codes},
  author={Anshu, Anurag and Breuckmann, Nikolas P. and Nirkhe, Chinmay},
  journal={arXiv preprint},
  year={2022},
  month = jun,
  note={\href{https://arxiv.org/abs/2206.13228}{\tt arXiv:2206.13228}}
}

@article{arad2011note,
author = {Arad, Itai},
title = {A note about a partial no-go theorem for quantum PCP},
year = {2011},
issue_date = {November 2011},
publisher = {Rinton Press, Incorporated},
address = {Paramus, NJ},
volume = {11},
number = {11–12},
issn = {1533-7146},
abstract = {This is not a disproof of the quantum PCP conjecture! In this note we use perturbation on the commuting Hamiltonian problem on a graph, based on results by Bravyi and Vyalyi [1], to provide a very partial no-go theorem for quantum PCP. Specifically, we derive an upper bound on how large the promise gap can be for the quantum PCP still to hold, as a function of the non-commuteness of the system. As the system becomes more and more commuting, the maximal promise gap shrinks. We view these results as possibly a preliminary step towards disproving the quantum PCP conjecture posed in [2]. A different way to view these results is actually as indications that a critical point exists, beyond which quantum PCP indeed holds; in any case, we hope that these results will lead to progress on this important open problem.},
journal = {Quantum Info. Comput.},
month = {nov},
pages = {1019–1027},
numpages = {9},
keywords = {QSAT, local Hamiltonians, quantum PCP},
  note = {\href{https://arxiv.org/abs/1012.3319}{\tt arXiv:1012.3319}}

}

@article{poulin2011markov,
  title = {Markov Entropy Decomposition: A Variational Dual for Quantum Belief Propagation},
  author = {Poulin, David and Hastings, Matthew B.},
  journal = {Phys. Rev. Lett.},
  volume = {106},
  issue = {8},
  pages = {080403},
  numpages = {4},
  year = {2011},
  month = feb,
  publisher = {American Physical Society},
  doi = {10.1103/PhysRevLett.106.080403},
  note = {\href{https://arxiv.org/abs/1012.2050}{\tt arXiv:1012.2050}}
}

@article{brandao2013product,
author = {Brandao, Fernando G.S.L. and Harrow, Aram W.},
title = {Product-State Approximations to Quantum Ground States},
year = {2013},
month = jun,
journal = {STOC},
pages = {871–880},
numpages = {10},
note = {\href{https://arxiv.org/abs/1310.0017}{\tt arXiv:1310.0017}}
}

@article{drucker2011apcp,
author="Drucker, Andrew",
title="A {PCP} Characterization of {AM}",
journal = {ICALP},
year = {2011},
month = jul,
pages = {581--592},
doi = {10.1007/978-3-642-22006-7_49},
note = {\href{https://arxiv.org/abs/1002.3664}{\tt arXiv:1002.3664}}
}

@article{bravyi2008quantum,
  title = {Quantum Simulation of Many-Body {Hamiltonians} Using Perturbation Theory with Bounded-Strength Interactions},
  author = {Bravyi, Sergey and DiVincenzo, David P. and Loss, Daniel and Terhal, Barbara M.},
  journal = {Physical Review Letters},
  volume = {101},
  issue = {7},
  pages = {070503},
  numpages = {4},
  year = {2008},
  month = aug,
  note = {\href{https://arxiv.org/abs/0803.2686}{\tt arXiv:0803.2686}}
}

@article{schuch2007computational,
  title = {Computational Complexity of Projected Entangled Pair States},
  author = {Schuch, Norbert and Wolf, Michael M. and Verstraete, Frank and Cirac, J. Ignacio},
  journal = {Phys. Rev. Lett.},
  volume = {98},
  issue = {14},
  pages = {140506},
  numpages = {4},
  year = {2007},
  month = {Apr},
  publisher = {American Physical Society},
  doi = {10.1103/PhysRevLett.98.140506},
 note={\href{https://arxiv.org/abs/quant-ph/0611050}{\tt arXiv:0611050}}
}

@article{o2022intractability,
  title = {Intractability of Electronic Structure in a Fixed Basis},
  author = {O'Gorman, Bryan and Irani, Sandy and Whitfield, James and Fefferman, Bill},
  journal = {PRX Quantum},
  volume = {3},
  issue = {2},
  pages = {020322},
  numpages = {39},
  year = {2022},
  month = {May},
  publisher = {American Physical Society},
  doi = {10.1103/PRXQuantum.3.020322},
note={\href{https://arxiv.org/abs/2103.08215}{\tt arXiv:2103.08215}}
}

@article{liu2022prospects,
  title={Prospects of quantum computing for molecular sciences},
  author={Liu, Hongbin and Low, Guang Hao and Steiger, Damian S. and H{\"a}ner, Thomas and Reiher, Markus and Troyer, Matthias},
  journal={Materials Theory},
  volume={6},
  number={1},
  pages={1--17},
  year={2022},
  month = mar,
doi = {10.1186/s41313-021-00039-z},
  note = {\href{https://arxiv.org/abs/2102.10081}{\tt arXiv:2102.10081}}
}

@article{gilyen2019quantum,
author = {Gily\'{e}n, Andr\'{a}s and Su, Yuan and Low, Guang Hao and Wiebe, Nathan},
title = {Quantum Singular Value Transformation and beyond: Exponential Improvements for Quantum Matrix Arithmetics},
year = {2019},
month = jun,
isbn = {9781450367059},
publisher = {Association for Computing Machinery},
address = {New York, NY, USA},
doi = {10.1145/3313276.3316366},
journal = {STOC},
pages = {193–204},
numpages = {12},
note = {\href{https://arxiv.org/abs/1806.01838}{\tt arXiv:1806.01838}}
}

@inproceedings{raz2022oracle,
author = {Raz, Ran and Tal, Avishay},
title = {Oracle separation of BQP and PH},
year = {2019},
isbn = {9781450367059},
publisher = {Association for Computing Machinery},
address = {New York, NY, USA},
doi = {10.1145/3313276.3316315},
abstract = {We present a distribution D over inputs in {−1,1}2N, such that: (1) There exists a quantum algorithm that makes one (quantum) query to the input, and runs in time O(logN), that distinguishes between D and the uniform distribution with advantage Ω(1/logN). (2) No Boolean circuit of quasi-polynomial size and constant depth distinguishes between D and the uniform distribution with advantage better than polylog(N)/√N. By well known reductions, this gives a separation of the classes Promise-BQP and Promise-PH in the black-box model and implies an oracle O relative to which BQPO ⊈PHO.},
booktitle = {Proceedings of the 51st Annual ACM SIGACT Symposium on Theory of Computing},
pages = {13–23},
numpages = {11},
keywords = {BQP, bounded depth circuits, oracle separation, polynomial hierarchy},
location = {Phoenix, AZ, USA},
series = {STOC 2019}
}

@inproceedings{aaronson2010bqp,
author = {Aaronson, Scott},
title = {BQP and the polynomial hierarchy},
year = {2010},
isbn = {9781450300506},
publisher = {Association for Computing Machinery},
address = {New York, NY, USA},
doi = {10.1145/1806689.1806711},
abstract = {The relationship between BQP and PH has been an open problem since the earliest days of quantum computing. We present evidence that quantum computers can solve problems outside the entire polynomial hierarchy, by relating this question to topics in circuit complexity, pseudorandomness, and Fourier analysis.First, we show that there exists an oracle relation problem (i.e., a problem with many valid outputs) that is solvable in BQP, but not in PH. This also yields a non-oracle relation problem that is solvable in quantum logarithmic time, but not in AC0.Second, we show that an oracle decision problem separating BQP from PH would follow from the Generalized Linial-Nisan Conjecture, which we formulate here and which is likely of independent interest. The original Linial-Nisan Conjecture (about pseudorandomness against constant-depth circuits) was recently proved by Braverman, after being open for twenty years.},
booktitle = {Proceedings of the Forty-Second ACM Symposium on Theory of Computing},
pages = {141–150},
numpages = {10},
keywords = {quantum complexity classes, linial-nisan conjecture, fourier analysis, constant-depth circuits},
location = {Cambridge, Massachusetts, USA},
series = {STOC '10},
note={\href{https://arxiv.org/abs/0910.4698}{\tt arXiv:0910.4698}}
}

@article{coble2023local,
  title={Local Hamiltonians with no low-energy stabilizer states},
  author={Coble, Nolan J and Coudron, Matthew and Nelson, Jon and Nezhadi, Seyed Sajjad},
  journal={arXiv preprint arXiv:2302.14755},
  year={2023},
month = feb,
note = {\href{https://arxiv.org/abs/2302.14755}{\tt arXiv:2302.14755}}
}

@article{bremner2011classical,
  title={Classical simulation of commuting quantum computations implies collapse of the polynomial hierarchy},
  author={Bremner, Michael J and Jozsa, Richard and Shepherd, Dan J},
  journal={Proceedings of the Royal Society A: Mathematical, Physical and Engineering Sciences},
  volume={467},
  number={2126},
  pages={459--472},
  year={2011},
  publisher={The Royal Society Publishing},
doi = {10.1098/rspa.2010.0301},
note={\href{https://arxiv.org/abs/1005.1407}{\tt arXiv:1005.1407}}
}

@article{gottesman1998heisenberg,
  title={The Heisenberg representation of quantum computers},
  author={Gottesman, Daniel},
  journal={arXiv preprint},
  year={1998},
  note={\href{https://arxiv.org/quant-ph/9807006}{\tt arXiv:quant-ph/9807006}}    
}

@article{maslov2018shorter,
  title={Shorter stabilizer circuits via Bruhat decomposition and quantum circuit transformations},
  author={Maslov, Dmitri and Roetteler, Martin},
  journal={IEEE Transactions on Information Theory},
  volume={64},
  number={7},
  pages={4729--4738},
  year={2018},
  publisher={IEEE},
  note={\href{https://arxiv.org/abs/1705.09176}{\tt arXiv:1705.09176}}  
}

@misc{steane2003quantum,
      title={Quantum Computing and Error Correction}, 
      author={A. M. Steane},
      year={2003},
month = apr,
note = {\href{https://arxiv.org/abs/quant-ph/0304016}{\tt arXiv:quant-ph/0304016}}
}

@article{Bennett_1996,
	doi = {10.1103/physreva.54.3824},
  	year = 1996,
	month = {nov},
	publisher = {American Physical Society ({APS})},
	volume = {54},
	number = {5},
	pages = {3824--3851},
	author = {Charles H. Bennett and David P. DiVincenzo and John A. Smolin and William K. Wootters},
	title = {Mixed-state entanglement and quantum error correction},  
	journal = {Physical Review A},
note = {\href{https://arxiv.org/abs/quant-ph/9604024}{\tt arXiv:quant-ph/9604024}}
 }

@article{Knill_2008,
	doi = {10.1103/physreva.77.012307},
	year = 2008,
	month = {jan},
	publisher = {American Physical Society ({APS})},
	volume = {77},
	number = {1},
	author = {E. Knill and D. Leibfried and R. Reichle and J. Britton and R. B. Blakestad and J. D. Jost and C. Langer and R. Ozeri and S. Seidelin and D. J. Wineland},
	title = {Randomized benchmarking of quantum gates}, 
	journal = {Physical Review A},
note = {\href{https://arxiv.org/abs/0707.0963}{\tt arXiv:0707.0963}}
}

@misc{liu2007consistency,
      title={Consistency of Local Density Matrices is QMA-complete}, 
      author={Yi-Kai Liu},
      year={2006},
month = apr,
note = {\href{https://arxiv.org/abs/quant-ph/0604166}{\tt arXiv:quant-ph/0604166}}
}

@article{newman1960double,
  title={The double dixie cup problem},
  author={Newman, Donald J},
  journal={The American Mathematical Monthly},
  volume={67},
  number={1},
  pages={58--61},
  year={1960},
  publisher={JSTOR}
}

@article{broadbent2022qma,
  title={QMA-hardness of consistency of local density matrices with applications to quantum zero-knowledge},
  author={Broadbent, Anne and Grilo, Alex Bredariol},
  journal={SIAM Journal on Computing},
  volume={51},
  number={4},
  pages={1400--1450},
  year={2022},
  publisher={SIAM},
note={\href{https://arxiv.org/abs/1911.07782}{\tt arXiv:1911.07782}}
}

@article{canonne2020short,
  title={A short note on learning discrete distributions},
  author={Canonne, Cl{\'e}ment L},
    month = feb,
year={2020},
note={\href{https://arxiv.org/abs/2002.11457}{\tt arXiv:2002.11457}}
}

@incollection{goldreich2006promise,
  title={On promise problems: A survey},
  author={Goldreich, Oded},
  booktitle={Theoretical Computer Science: Essays in Memory of Shimon Even},
  pages={254--290},
  year={2006},
  publisher={Springer},
doi = {10.1007/11685654_12}
}

@article{gharibian2019complexity,
  doi = {10.22331/q-2019-09-30-189},
  title = {The complexity of simulating local measurements on quantum systems},
  author = {Gharibian, Sevag and Yirka, Justin},
  journal = {{Quantum}},
  issn = {2521-327X},
  publisher = {{Verein zur F{\"{o}}rderung des Open Access Publizierens in den Quantenwissenschaften}},
  volume = {3},
  pages = {189},
  month = sep,
  year = {2019},
note={\href{https://arxiv.org/abs/1606.05626}{\tt arXiv:1606.05626}}
}

@InProceedings{cade2023improved,
  author =	{Cade, Chris and Folkertsma, Marten and Gharibian, Sevag and Hayakawa, Ryu and Le Gall, Fran\c{c}ois and Morimae, Tomoyuki and Weggemans, Jordi},
  title =	{{Improved Hardness Results for the Guided Local Hamiltonian Problem}},
  booktitle =	{50th International Colloquium on Automata, Languages, and Programming (ICALP 2023)},
  pages =	{32:1--32:19},
  ISBN =	{978-3-95977-278-5},
  ISSN =	{1868-8969},
  year =	{2023},
  volume =	{261},
  URN =		{urn:nbn:de:0030-drops-180840},
  doi =		{10.4230/LIPIcs.ICALP.2023.32},
  annote =	{Keywords: Quantum computing, Quantum advantage, Quantum Chemistry, Guided Local Hamiltonian Problem},
note={\href{https://arxiv.org/abs/2207.10250}{\tt arXiv:2207.10250}}
}

@misc{jiang2023local,
      title={Local {Hamiltonian} Problem with succinct ground state is {MA}-Complete}, 
      author={Jiaqing Jiang},
      year={2023},
month = sep,
note = {\href{https://arxiv.org/abs/2309.10155}{\tt arXiv:2309.10155}}
}

\appendix 

\section{Perfect sampling access of MPS and stabilizer states}
\label{app:samp}
In this appendix we show that both matrix product states (MPS) and stabilizer states are samplable states, by checking all three conditions of Definition~\ref{def:sampaccess}.
\paragraph{Matrix product states:}
Let $u$ be a $N=2^n$-dimensional vector described by an MPS of $n$ particles, bounded bond dimension $D$ and local particle dimension $d$.
\begin{enumerate}[label=(\roman*)]
    \item Let $\hat{i}$ be the bit representation of $i$. The algorithm $Q_u$ can simply be the evaluation of $\text{Tr}[A_1^{(s_1)} A_2^{(s_2)} \dots A_n^{(s_n)}] $ for $s = \hat{i}$, which can be done via a naive matrix multiplication algorithm in time $\mO(nD^3)$, and thus clearly runs in time $\mO\left(\poly(\log(N)\right)$ when $d = \mO(\poly(n)$, $D = \mO(\poly(n)$.
    \item We will use that expectation values that are a tensor product of $1$-local observables can be computed efficiently for a MPS im time $\mO(nd^2 D^3)$~\cite{verstraete2008matrix}. We assume that $m$ is already known (see item (iii)), and that our MPS is therefore normalized. The algorithm $\mathcal{S}\mathcal{Q}_u$ works as follows: one computes the probability that the first qubit is $1$ by computing the expectation value of the $1$-local projector $ \Pi_1 = \ket{1}\bra{1}_1$. Let $p_1 = \bra{u} \Pi_1 \ket{u}$ and $p_0 = 1-p_1$. The algorithm now samples a bit $j_1 \in \{0,1\}$ according to distribution $\{p_0, p_1\}$, and computes the expectation value of the $2$-local projector $\Pi_{j_1 1}$ to obtain $p_{j_1,1}$ and $p_{j_1,0}$, from which again a bit is sampled according to the distribution $p_{j_1,0},p_{j_1,1}$. This procedure is repeated for all $n-2$ remaining sites, which yields a sample $j$ with probability $|u_j|^2$. The total time complexity of this procedure is $\mO(n^2 d^2 D^3) = \mO(\poly(\log N))$, when $d = \mO(\poly(n))$ and $D = \mO(\poly(n))$, as desired.
    \item $m$ can easily be computed by considering the overlap of the MPS with itself, which can be done in time $\mO(npD^3)$ as the overlap can be viewed as the expectation value of a $0$-local observable. 
\end{enumerate}

\paragraph{Stabilizer states:}
Let $u \in \mathbb{C}^{2^n}$, $N=2^n$, be a stabilizer state on $n$ qubits.
\begin{enumerate}[label=(\roman*)]
    \item This follows from the fact that basis states are stabilizer states, and that there exists an algorithm $\mathcal{Q}_u$ that computes inner products between stabilizer states in time $\mO(n^3) = \mO(\poly(\log N))$~\cite{aaronson2004improved}.
    \item This follows from the fact that stabilizer states can be strongly simulated (i.e. marginals can be computed), which allows for weak simulation as shown in~\cite{Terhal2002adaptive} at overhead $n$ for the cost of strong simulation. Using the strong simuation algorithm as in~\cite{aaronson2004improved}, this gives an algorithm $\mathcal{S}\mathcal{Q}_u$ that runs in time $\mO(n^3) =\mO(\poly(\log N))$.
    \item $m=1$ by definition.
\end{enumerate}

\section{MPS to circuit construction}
\label{app:MPS_to_Circuit}
In this section, we show that any MPS on $n$ qubits with bond dimension $D$ can be implemented on a quantum computer up to distance $\epsilon$, with respect to the $2$-norm,
in $\mO(nD\log(D)^2 \log(D n/\epsilon))$ one- and two-qubit gates and a $\mO(n poly(D))$-time classical pre-calculation. The result is based on a result from \cite{schon2005sequential}. For completeness we will first repeat their result. Let $\mathcal H_A = \mathbb C^D$ and $\mathcal H^B = \mathbb C^2$ be the Hilbert spaces characterising a $D$-dimensional ancillary system and a single qubit, respectively. Then every MPS of the form 
\[
    \ket{\psi} = \bra{\phi_F}V_{n}\dots V_{1}\ket{\phi_I}
\]
with arbitrary maps $V_{k}: \mathcal H_A \mapsto \mathcal H_A \otimes \mathcal H_{B}$, and $\ket{\phi_I}, \ket{\phi_F}\in \mathcal H_A$ is equivalent to a state 
\[
     \ket{\psi} = \bra{\tilde{\phi_F}}\tilde{V}_{n}\dots \tilde{V}_{1}\ket{\tilde{\phi}_I}
\]
with $\tilde{V}_{k}: \mathcal H_A \mapsto \mathcal H_A \otimes \mathcal H_{B}$ isometries and such that the ancillary register decouples in the last step
\[
    \tilde{V}_{n}\dots \tilde{V}_{1}\ket{\tilde{\phi}_I} = \ket{\phi_F} \otimes \ket{\psi}.
\]
Note that this is the canonical form of the MPS and can be found using $\mO(n poly(D))$ classical pre-calculation time. The isometries are of size $2 D \times D$ acting on the auxiliary system sequentially and create one qubit each. Every $\tilde{V}_{k}$ can be embedded into a unitary $U_{k}: \mathcal H_A \otimes \mathcal H_{B} \mapsto \mathcal H_A \otimes \mathcal H_{B}$ of size $2D \times 2D$, acting on the auxiliary system and a qubit initialised in $\ket{0}$ such that $U_{k}\ket{\tilde{\phi}_{k}}\otimes \ket{0} = \tilde{V}_{k}\ket{\tilde{\phi}_k}$. This gives the quantum circuit
\[
 U_{n} \dots U_{1} \ket{\tilde{\phi}_I} \ket{0}^{\otimes n} = \ket{\tilde{\phi}_F}\ket{\psi}.
\]
$\ket{\tilde{\phi_I}}$ is a state in $\mathcal H_A$ which can be generated on $\lceil\log(D)\rceil$ qubits, up to normalisation. By the Solovay-Kitaev theorem, this state can be prepared up to  distance $\epsilon$ by a circuit of 
\[
\mO(\lceil\log(D)\rceil^2 D \log(\lceil\log(D)\rceil^2 D/\epsilon)) = \mO(D\log(D)^2 \log(1/\epsilon)
\] 
two and one-qubit gates. The unitaries $U_{k}$ act on $\lceil\log(D)\rceil + 1$ qubits hence they can be approximated up to error $\epsilon$ in 
\[
\mO((\lceil\log(D)\rceil + 1)^2 (D +1)\log((\lceil\log(D)\rceil + 1)^2 (D + 1)/\epsilon)) = \mO(D\log(D)^2 \log(D/\epsilon)).
\]
Note that because every unitary incurs an error $\epsilon$ the entire error can be bounded by $n \epsilon$, setting individual error to $\epsilon' = \frac{\epsilon}{n}$ ensures that the generated state is at most $\epsilon$ far from the desired state. This results in a circuit of complexity: $\mO(n D\log(D)^2 \log(n D /\epsilon))$ generating the MPS up to normalisation.  

\section{Results for $\MA$}
\label{app:MA}
Let us define yet another class of guidable local Hamiltonian problems, which constrains the considered Hamiltonians to be of a specific form. 
\begin{definition}[Classically Guidable Local Stoquastic Hamiltonian Problem] The \textbf{Classically Guidable Local Stoquastic Hamiltonian Problem}, shortened as $\mathsf{CGaLSH}(k,\delta,\zeta)$, has the same input, promise, extra promise and output as $\mathsf{CGaLH}(k,\delta,\zeta)$ but with the extra constraint that the considered Hamiltonian is stoqastic. For our purposes this means that all $m$ $k$-local terms $H_i$ of the Hamiltonian $H = \sum_{i=0}^{m-1} H_i$ have real, non-positive off-diagonal matrix elements in the computational basis.
\label{def:GaLSH}
\end{definition}

By adopting the same proof structure as we used to prove Theorem~\ref{thm:CGaLH}, we can obtain a similar result for $\mathsf{CGaLSH}$. For this, we first define a coherent description of $\MA$, denoted as $\MA_q$.
\begin{definition}[Coherent classical verifier~\cite{bravyi2006complexity}] A coherent classical verifier is a tuple $V = (n,n_w,n_o,n_+,U)$, where
\begin{align*}
    n \quad&= && \text{number of input bits}\\
    n_w \quad&= && \text{number of witness bits}\\
    n_0 \quad&= && \text{number of ancillas } \ket{0}\\
    n_+ \quad&= && \text{number of ancillas } \ket{+}\\
    U \quad&= && \text{quantum circuit on $n+n_w+n_0+n_+$ qubits with X, CNOT and Toffoli gates.}
\end{align*}
The acceptance probability of a coherent classical verifier $V$ on input string $x \in \{0,1\}^n$ and witness state $\ket{\psi} \in \left(\mathbb{C}^2\right)^{\otimes n_w}$ is defined as 
\begin{align*}
    \text{P}[V;x,\psi] = \bra{\psi_\text{in}} U^\dagger \Pi_\text{out} U \ket{\psi_\text{in}},
\end{align*}
where $\ket{\psi_\text{in}}=\ket{x} \ket{0}^{\otimes n_0} \ket{+}^{\otimes n_+}$ is the initial state and $\Pi_\text{out} = \bra{0} \ket{0} \otimes I_\text{else}$ project the first qubit onto the state $\ket{0}$.
\label{def:CohV}
\end{definition}
\begin{definition}[$\MA_q$] The class $\MA_q[c,s]$ is the set of all languages $L \subset \{ 0,1\}^*$ for which there exists a (uniform family of) coherent classical verifier circuit $V$ such that for every $x\in \{0,1\}^*$ of length $n=|x|$,
\begin{itemize}
    \item if $x \in L $ then there exists a $\poly(n)$-qubit witness state $\ket{\psi_x}$ such that $V(x,\ket{\psi_x})$ accepts with probability $\geq c(=2/3)$,
    \item if $x \notin L$ then for every purported $\poly(n)$-qubit witness state $\ket{\psi}$, $V(x,\ket{\psi})$ accepts with probability $\leq s(=1/3)$.
\end{itemize}
\end{definition}
\begin{lemma}[\cite{bravyi2006complexity}] $\MA = \MA_q$
\label{lem:MAisMAq}
\end{lemma}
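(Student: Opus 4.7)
The plan is to prove the two inclusions separately. For the direction $\MA \subseteq \MA_q$, given a standard $\MA$ verifier, I would first rewrite it as a deterministic polynomial-time Turing machine $M'$ on input $(x,y,r)$, where $y$ is the witness and $r$ a uniformly random string. By the universality of $\{X,\text{CNOT},\text{Toffoli}\}$ for reversible classical computation (using $\ket{0}$ ancillas for scratch space), $M'$ can be implemented as a reversible circuit $U$ of polynomial size. The random string is supplied by the $\ket{+^{n_+}}$ ancillas since $\ket{+^{n_+}} = 2^{-n_+/2}\sum_r \ket{r}$; as $U$ is a basis-state permutation, the acceptance probability of the coherent verifier on $\ket{x}\ket{y}\ket{0^{n_0}}\ket{+^{n_+}}$ equals the expectation over uniformly random $r$ of the classical acceptance indicator of $M'$ on $(x,y,r)$, matching the $\MA$ acceptance probability exactly.

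For the direction $\MA_q \subseteq \MA$, the key step is to show that the optimal witness can always be taken to be a computational basis state. Since the allowed gates map basis states to basis states, we may write $U\ket{x,w,0^{n_0},r} = \ket{\phi(x,w,r)}$, where the map $(w,r)\mapsto \phi(x,w,r)$ is injective by unitarity of $U$. Expanding $\ket{\psi} = \sum_w \alpha_w \ket{w}$ and $\ket{+^{n_+}} = 2^{-n_+/2}\sum_r \ket{r}$ in the definition of $\text{P}[V;x,\psi]$ and using the orthogonality of distinct images under $\phi$, all cross terms with $(w,r)\neq(w',r')$ vanish, giving
\begin{align*}
    \text{P}[V;x,\psi] = \sum_w |\alpha_w|^2 \cdot p(x,w), \qquad p(x,w) := 2^{-n_+}\sum_r \text{acc}(x,w,r),
\end{align*}
where $\text{acc}(x,w,r)\in\{0,1\}$ indicates whether the first qubit of $\ket{\phi(x,w,r)}$ is $\ket{0}$. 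As this is a convex combination, $\max_\psi \text{P}[V;x,\psi] = \max_w p(x,w)$, so basis-state witnesses suffice without loss of generality.

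Given this reduction, the $\MA$ protocol is constructed as follows: on input $(x,w)$, the probabilistic classical verifier samples $r\in\{0,1\}^{n_+}$ uniformly, classically simulates the reversible circuit $U$ on $(x,w,0^{n_0},r)$ in polynomial time (which is possible because $U$ acts as a deterministic classical map on basis states), and accepts iff the first output bit is $0$. Its acceptance probability is exactly $p(x,w)$, inheriting the $\geq 2/3$ completeness and $\leq 1/3$ soundness from the $\MA_q$ verifier.

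The main obstacle is really the subtlety in the $\MA_q \subseteq \MA$ direction: the vanishing of cross terms in $\text{P}[V;x,\psi]$ crucially depends on $U$ being a basis-state permutation, which in turn depends on the restricted gate set and on the fact that the only non-basis ancillas are the $\ket{+}$ states that are symmetric across the computational basis. If, for instance, Hadamards could act directly on the witness register, or if the gate set included non-classical gates, $U$ would no longer be a basis-state permutation, nontrivial interference between witness components $w \neq w'$ could occur, and the class would likely capture something strictly larger than $\MA$ (plausibly $\QCMA$ or even $\QMA$).
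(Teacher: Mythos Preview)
Your proof is correct and follows the standard argument from the original reference~\cite{bravyi2006complexity}. Note that the present paper does not actually supply its own proof of this lemma; it merely cites the result from Bravyi et al., so there is no in-paper proof to compare against. Your argument is essentially the one given in that reference: the key point in both directions is that the gate set $\{X,\text{CNOT},\text{Toffoli}\}$ consists entirely of classical reversible gates, so $U$ acts as a permutation on computational basis states, which both (i) lets one simulate $U$ classically once $r$ is sampled, and (ii) forces all cross terms between distinct witness components to vanish because $\Pi_{\text{out}}$ is diagonal in the computational basis and distinct $(w,r)$ map to orthogonal basis states under $U$. Your closing remark about why the restricted gate set is essential is also on point.
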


\subsection{Proof of $\MA$-hardness of the Classically Guidable Local Stoquastic Hamiltonian problem}
\label{app:CGLSH-MAhard}

\begin{theorem} $\mathsf{CGaLSH}(k,\delta,\zeta)$ is $\MA$-hard for $k\geq 7$, $\zeta  \in (1/\poly(n),1-1/\poly(n))$ and $\delta = 1/\poly(n)$.
\label{thm:CGaLSH}
\end{theorem}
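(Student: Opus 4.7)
The plan is to mirror the proof of Theorem~\ref{thm:CGaLH}, replacing each quantum component with its stoquasticity-preserving analogue. First, using Lemma~\ref{lem:MAisMAq} we may assume the starting $\MA$ instance is given by a coherent classical verifier $V$ (as in Definition~\ref{def:CohV}) built only from $X$, $\text{CNOT}$, and Toffoli gates acting on the input, a classical witness register, and ancillas initialized to $\ket{0}$ and $\ket{+}$. We then apply a Valiant--Vazirani-style randomized reduction from $\MA$ to $\UMA$ (the classical-witness analogue of Lemma~\ref{lem:UQCMA}, which is in fact simpler since the witness distribution is already purely classical), obtaining a verifier $\hat V$ with a unique accepting witness $y^*$ in YES instances. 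Standard parallel-repetition error reduction for $\MA$ then amplifies completeness to $1-2^{-r}$ and soundness to $2^{-r}$ as in the quantum case, with a polynomial blow-up in the circuit size $\tilde T$.

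Next, we apply the Bravyi--Terhal stoquastic Feynman--Kitaev construction to $\hat V$. Because every gate of $\hat V$ is a real permutation matrix, and because the initial $\ket{+}$ ancillas only contribute real non-negative amplitudes, the resulting Hamiltonian
\[
H^x_{FK}=H_\text{in}+H_\text{clock}+H_\text{prop}+\epsilon H_\text{out}
\]
is stoquastic and has locality $7$ in the standard construction. Crucially, the small-penalty clock analysis of Lemma~\ref{lem:spcc} is purely spectral and does not use any quantum-ness of the gate set, so for the same choice $\epsilon = \mathcal O(1/\tilde T^5)$ the low-energy sector obeys the same eigenvalue bounds as in Theorem~\ref{thm:CGaLH}. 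In particular, the history state $\ket{\eta(y^*)}$ of the unique accepting classical witness becomes an approximate unique ground state with fidelity at least $1-2^{-c\tilde T}$ to the actual ground state. The auxiliary NO-case Hamiltonian $H_\text{no}$ of Eq.~\eqref{eq:H_no} is diagonal (hence trivially stoquastic), and combining it with $H_\text{yes}$ via the controlling qubit $D$ as in Eq.~\eqref{eq:full_H} preserves stoquasticity since $\kb{0}{0}_D$ and $\kb{1}{1}_D$ are diagonal projectors. The pre-idling boost applies verbatim (the identity gate is trivially real), producing the same polynomial-sized subset state guiding state of Eq.~\eqref{eq:new_guidstate_yes}, which is both classically evaluatable and quantumly preparable by the same check as in the proof of Theorem~\ref{thm:CGaLH}.

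The main obstacle, and the reason the theorem requires $k \geq 7$ rather than $k \geq 2$ as in the quantum case, is that we cannot invoke the locality-reducing perturbative gadgets of Lemma~\ref{lem:2localsim}: most known gadgets for reducing Hamiltonian locality (e.g.~those of Oliveira--Terhal) introduce $YY$ or other interactions that break stoquasticity, and the stoquasticity-preserving gadgets available in the literature are only known to bring locality down to somewhere in the range $5$--$7$ while simultaneously preserving both stoquasticity and the eigenstate encoding needed for Definition~\ref{def:cds}. Our proof therefore stops at the raw $7$-local stoquastic Hamiltonian output of the Bravyi--Terhal construction, which already gives all required properties: a stoquastic Hamiltonian of norm $\leq 1$ (after trivial rescaling), promise gap $\delta = 1/\poly(n)$, and a classically evaluatable and quantumly preparable guiding state with fidelity $\zeta$ tunable anywhere in $(1/\poly(n),1-1/\poly(n))$ via the pre-idling parameter $N$. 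Improving the locality from $7$ down to $2$ would require developing new stoquasticity-preserving gadgets that additionally preserve fidelity with semi-classical states (analogous to Lemma~\ref{lem:2localsim}), which we leave as an open problem.
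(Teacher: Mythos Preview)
Your proposal is correct and takes essentially the same approach as the paper: pass to a coherent $\MA_q$ verifier, isolate a unique classical witness via a Valiant--Vazirani-style reduction to $\UMA$, amplify, apply the small-penalty Kitaev clock construction (stoquastic because the $\{X,\text{CNOT},\text{Toffoli}\}$ gate set consists of real permutation matrices), block-encode with the diagonal $H_{\sc no}$, pre-idle, and forgo the locality-reducing gadgets. The only differences are cosmetic: the paper also inserts the CNOT trick of Lemma~\ref{lem:CNOT_trick} (harmless but in fact redundant here, since a verifier built entirely from permutation gates already has a Marriott--Watrous operator that is diagonal in the computational basis), and it accounts for the $7$-locality as $6$ from Kitaev's construction with $3$-local Toffoli gates plus $1$ from the block-encoding control qubit $D$, rather than attributing $7$ directly to the circuit-to-Hamiltonian step.
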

\begin{proof}
This follows by using a nearly identical construction as have used to prove Theorem~\ref{thm:CGaLH}, but then starting from a $\MA_q$ verification circuit (Definition~\ref{def:CohV}) instead of a quantum circuit and with some minor changes. Let us go through the relevant steps of the construction and verify that they preserve stoquasticity. We will not concern ourselves with reducing the locality, and hence the parts of the construction needed to achieve this in the $\QCMA$ setting are left out.
\begin{itemize}
    \item We have that $\MA_q [c,s]=\MA_q[1,s]$, since this holds for $\MA$ and the $\MA_q$ verifier has the same acceptance probability as the $\MA$ verifier. 
    \item The CNOT- and Marriot and Watrous-tricks can be applied as the CNOT gate is part of the allowed gate-set in Definition~\ref{def:CohV}.
    \item One can use Kitaev's clock construction with a small penalty $\epsilon$ on $H_\text{out}$ (Lemma~\ref{lem:spcc} would also work for Kitaev's construction, see~\cite{Deshpande2020}), where the only terms that are off-diagonal in the computational basis come from $H_\text{prop}$ and $H_\text{out}$. Inspection of these terms confirms that the Hamiltonian is stoquastic~\cite{bravyi2006complexity}. Since the Toffoli gate is a $3$-local gate, we have that our Hamiltonian becomes $6$-local.
    \item The block-encoding trick also preserves stoquasticity, as both blocks themselves are stoquastic. This does increase the locality of the Hamiltonian by $1$, and therefore it becomes $7$-local.
    \item Finally, the pre-idling can be done since the identity is trivially in any gate set.
\end{itemize}
\end{proof}
Somewhat surprisingly, a variant of the guidable local Hamiltonian problem for stoqastic Hamiltonians was already considered in a work by Bravyi in 2015~\cite{BravyiMonte2015}.\footnote{See also~\cite{liu2020stoqma}, which considers $\StoqMA$ circuit problems where the witnesses are promised to be of a more restricted form.} However, the considered promise on the guiding state is different, as can be read from the following definition.
\begin{definition}[\cite{BravyiMonte2015}] Let $H$ be a stoquastic Hamiltonian. We will say that $H$ admits a guiding state if and only if there exists a pair of normalized $n$-qubit states $\psi,\phi$ with non-negative amplitudes in the standard basis such that $\psi$ is the ground state of $H$, the function $x \rightarrow \bra{x}\ket{\phi}$ is computable by a classical circuit of size $\poly(n)$, and
\begin{align*}
    \bra{x}\ket{\phi} \geq \frac{\bra{x}\ket{\psi}}{\poly(n)} \quad \text{for all } x \in \{0,1\}^n. 
\end{align*}
\label{def:BravyiGLSH}
\end{definition}
Using the Projection Monte Carlo algorithm with a variable number of walkers, Bravyi was able to show $\MA$-containment of a guided Stoquastic Hamiltonian problem that uses guiding states satisfying Definition~\ref{def:BravyiGLSH}.  Note that Definition~\ref{def:BravyiGLSH} implies that the fidelity satisfies $|\bra{\psi} \ket{\phi}|^2 = \Omega \left(1/\poly(n) \right)$, but that the converse is not necessarily true. Therefore, Bravyi's result cannot be directly used to obtain $\MA$-containment of  $\mathsf{CGaLSH}(k,\delta,\zeta)$. We leave this as an open problem for future work.

\subsection{PCP statements for $\MA$}
\label{app:SPCP}
For $\MA$ it is not so clear how to define a $\PCP$ statement in a proof verification version, as the verifiers used in the original $\PCP$ system are already probabilistic. In~\cite{Aharonov2019stoquastic}, the authors define stoqastic $\PCP$ ($\SPCP$) in the following way:
\begin{conjecture}[SPCP conjecture - frustration-free Hamiltonian version~\cite{Aharonov2019stoquastic}] There exist constants $\epsilon > 0$, $k',d' >0$ and an efficient gap amplification procedure that reduces the problem of deciding if a uniform stoquastic $d$-degree $k$-local Hamiltonian is frustration-free or at least inverse polynomially frustrated, to the problem of deciding if a uniform stoquastic $d'$-degree $k'$-local Hamiltonian is frustration-free or at least $\epsilon$ frustrated.  
\label{con:SPCP_frustration}
\end{conjecture}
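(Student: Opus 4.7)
The plan is to mimic Dinur's iterative gap-amplification for classical constraint satisfaction, adapted carefully to the uniform stoquastic Hamiltonian setting, and to do so while preserving both stoquasticity and the frustration-free structure (i.e.\ that the ground state energy is exactly zero in the {\sc yes} case). Since the conjecture parallels the classical PCP theorem in its Dinur formulation, I would attempt to replicate her three-step loop: (i) a preprocessing step that makes the constraint graph of the Hamiltonian constant-degree and expanding, (ii) a ``powering'' step that amplifies any existing inverse-polynomial frustration into constant frustration, and (iii) an alphabet/locality reduction step that brings the locality $k'$ and degree $d'$ back to constants. Each of these must be performed by efficient classical reductions, analogous to those described in Lemma~\ref{lem:2localsim}, while additionally forcing the intermediate Hamiltonians to remain uniform stoquastic.

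In more detail, for step (i) I would use a Hamiltonian analogue of the expander-replacement construction: replace each qubit of high degree by a cluster of qubits tied together by a stoquastic ``equality'' gadget (e.g.\ ferromagnetic $-\ket{00}\bra{11}-\ket{11}\bra{00}$ type terms, appropriately rescaled) whose unique ground space is the symmetric subspace. Because such equality gadgets are manifestly stoquastic and frustration-free on the symmetric subspace, the frustration-free/frustrated dichotomy is preserved up to constant factors. For step (ii) I would take walks of constant length $t$ on the resulting expander: each walk defines a new $k_t$-local constraint obtained by summing, or tensor-combining, the local terms along the walk. As in Dinur's analysis, expander mixing should amplify any inverse-polynomial frustration into constant frustration. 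Crucially, sums and tensor products of stoquastic operators remain stoquastic (in an appropriate basis), so this step can in principle be made stoquasticity-preserving. For step (iii) I would compose the resulting high-locality Hamiltonian with an ``inner'' stoquastic verifier -- essentially a miniature $\SPCP$ for constant-size instances -- in the spirit of Aharonov–Arad–Landau–Vazirani's detectability lemma and the MA-type constructions of~\cite{Aharonov2019stoquastic}, to reduce the locality back down to $k'$.

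The main obstacle, and what makes this a genuine conjecture rather than a theorem, is step (ii)--(iii): it is precisely the quantum/stoquastic powering plus locality reduction that resists classical techniques. Tensoring local terms along a walk increases locality linearly, and then the currently available locality-reducing perturbative gadgets~\cite{bravyi2008quantum} inflate the operator norm by a constant factor, which after renormalization shrinks the relative promise gap back to inverse polynomial -- the exact problem that gap amplification is supposed to solve. Furthermore, most known perturbative gadgets introduce $Y$-type or antiferromagnetic terms that spoil stoquasticity. Thus the concrete technical problem I expect to be hardest is constructing a locality-reducing gadget that simultaneously (a) is stoquasticity-preserving, (b) preserves the frustration-free subspace exactly (so that zero ground energy remains zero, not merely inverse-polynomially close), and (c) has only a constant multiplicative effect on the operator norm.

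A complementary, and arguably safer, route would be to bypass Dinur-style amplification entirely and instead argue directly via the equivalence between the frustration-free stoquastic LH problem and $\MA$ established in~\cite{bravyi2006,bravyi2010complexity}: show an $\MA$-PCP type statement (as suggested by the $\QCPCP$ framework of Section~\ref{sec:QCPCP}) and then pull it back through the $\MA$-hardness reduction. Concretely, one would try to upgrade the $\MA$-hardness construction of Theorem~\ref{thm:CGaLSH} so that the resulting stoquastic Hamiltonian is already constant-gapped, analogous to how the classical PCP theorem upgrades Cook–Levin. Any such attempt, however, would necessarily derandomize $\MA$ (i.e.\ prove $\MA = \NP$), mirroring the obstruction for $\mathsf{CGaLH}^*$ in Theorem~\ref{thm:QCPCP_nogos}; so this second route would really only succeed under the additional assumption $\MA = \NP$, and in that sense the real barrier to proving the conjecture is the same no-go phenomenon highlighted throughout the paper.
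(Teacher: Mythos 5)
This statement is a \emph{conjecture} (imported from~\cite{Aharonov2019stoquastic}), not a theorem: the paper offers no proof of it, and in fact the theorem stated immediately after it records that its truth would imply $\MA = \NP$. So there is no proof in the paper to compare your proposal against, and any unconditional ``proof'' would be resolving a major open problem in derandomization. Your proposal does not do this, and to your credit you essentially say so yourself in the last two paragraphs.

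Concretely, the gaps in your Dinur-style outline are fatal rather than technical. Step (ii)--(iii) requires a powering step followed by a locality reduction that simultaneously (a) preserves stoquasticity, (b) preserves exact frustration-freeness (zero ground energy must stay exactly zero, since the {\sc yes}-case of the problem is $\lambda_0 = 0$, not $\lambda_0 \leq 1/\poly$), and (c) does not blow up the operator norm by more than a constant. No gadget with all three properties is known, and as the paper notes (and as you note), the best known locality-reducing gadgets~\cite{bravyi2008quantum} already fail (c), which collapses the amplified gap back to inverse polynomial after renormalization. Your expander-replacement and walk-powering steps are also only argued by analogy with the classical case; there is no analysis showing that frustration (a spectral quantity of a non-commuting sum of projectors) amplifies under walk-powering the way unsatisfied-clause fractions do, and the detectability-lemma machinery you invoke for the ``inner verifier'' has never been made to work as a composition step. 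Finally, your ``complementary route'' is circular: it concludes the conjecture only under the assumption $\MA = \NP$, which is precisely the consequence the conjecture is used to derive. The honest conclusion is that the statement should be left as a conjecture, which is how the paper treats it.
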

\begin{theorem}[\cite{Aharonov2019stoquastic}] If conjecture~\ref{con:SPCP_frustration} is true, then $\MA=\NP$.
\end{theorem}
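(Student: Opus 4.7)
The plan is to establish $\MA \subseteq \NP$ by exhibiting an $\NP$ verifier for an $\MA$-complete problem, using the conjectured gap amplification as the bridge; the reverse inclusion $\NP \subseteq \MA$ is trivial, so $\MA=\NP$ follows.

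First, I would invoke the $\MA$-completeness of the uniform stoquastic local Hamiltonian problem in the frustration-free vs.\ inverse-polynomially frustrated regime. This is the characterization of $\MA$ by Bravyi and Terhal (mentioned in the excerpt just above Section~\ref{sec:summary_of_main_results} and again in the final paragraph of the Open Questions): for $k$-local uniform stoquastic $H$ (written as a sum of projectors with uniform weights on a bounded-degree interaction graph), deciding whether $\lambda_0(H) = 0$ or $\lambda_0(H) \geq 1/\poly(n)$ is $\MA$-complete. This will be my starting point: any $L \in \MA$ reduces in polynomial time to an instance $H$ of this problem.

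Second, I would show that the constant-gapped version $(\epsilon,k',d')$ of the same problem, which the conjectured amplification outputs, lies in $\NP$. The witness is a classical succinct description of a low-energy state --- for instance, a description from which one can extract local marginals (or a classical probability distribution) certifying that the Gibbs-like ground state has energy $0$. The key structural fact is that ground states of frustration-free uniform stoquastic Hamiltonians have non-negative amplitudes in the computational basis and satisfy locality/Markov properties that, at constant frustration, let an $\NP$ verifier check a polynomial-size classical certificate of frustration-freeness (a product-state-like or marginal-consistency witness) against $H$ by computing local energies exactly and summing; in the soundness case the certificate necessarily witnesses energy at least some constant. This is the point where the \emph{constant} (rather than inverse-polynomial) promise gap is essential: small local errors from the certificate do not close the gap.

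Third, I would compose: on input $x$ to an $\MA$ problem, an $\NP$ machine (i) deterministically builds the $\MA$-hard uniform stoquastic instance $H_x$ with inverse-polynomial gap, (ii) deterministically applies the conjectured efficient gap amplification to obtain a uniform stoquastic instance $H_x'$ with constant promise gap $\epsilon$, constant locality $k'$ and constant degree $d'$, and (iii) uses the $\NP$ verifier from the previous step on $H_x'$, guessing the short classical certificate as its nondeterministic witness. Completeness and soundness are preserved because the gap amplification is, by hypothesis, a Karp reduction between the promise problems.

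The main obstacle is step two --- the $\NP$-containment of the constant-gapped uniform stoquastic frustration-free problem --- since this is where the actual content of the theorem lies; everything else is plumbing. The argument must exploit the uniform stoquastic structure to guarantee that constant frustration admits a polynomial-size classically-checkable obstruction (e.g., via a Markov-type decomposition of the Gibbs state at finite temperature, or via a product/pairwise-marginal certificate combined with a counting argument on violated projectors), and care must be taken that the certificate remains polynomial in size after gap amplification (so $k',d'$ being constant is used crucially). The other subtlety is promise-problem bookkeeping: the gap amplification is only required to respect the promise, so one must ensure the composed reduction never queries the $\NP$ verifier outside its promise domain, which is immediate from the conjecture's guarantees on the output instance.
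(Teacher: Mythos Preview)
The paper does not give its own proof of this statement: it is quoted directly from \cite{Aharonov2019stoquastic} and stated without argument in Appendix~\ref{app:SPCP}. So there is no in-paper proof to compare against; the relevant comparison is to the original argument of Aharonov and Grilo.

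Your overall architecture is exactly right and matches how the result is obtained: ($\MA$-hardness of the inverse-polynomial-gap uniform stoquastic problem) $+$ (the conjectured gap amplification) $+$ ($\NP$-containment of the constant-gap version) $\Rightarrow \MA\subseteq\NP$. You also correctly isolate step two as the only nontrivial ingredient.

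Where your write-up falls short is precisely there. What you offer for the $\NP$-containment of the constant-gap uniform stoquastic frustration problem is not an argument but a list of plausible-sounding keywords (``locality/Markov properties'', ``product-state-like or marginal-consistency witness'', ``Gibbs state at finite temperature''). None of these is the mechanism actually used, and some would not work: product states do \emph{not} generically witness frustration-freeness of stoquastic Hamiltonians, and a Gibbs-state argument would put you back in $\MA$, not $\NP$. The actual proof in \cite{Aharonov2019stoquastic} exploits the non-negativity of a ground state together with the uniform, bounded-degree, projector structure to extract a \emph{deterministically verifiable} polynomial-size combinatorial certificate (roughly, a carefully chosen subset of basis strings whose induced state is exactly annihilated by every local projector in the {\sc yes}-case, and for which any purported such set provably fails a local check in the {\sc no}-case once the gap is constant). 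That construction is the whole content of the theorem, and you have not supplied it.

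So: correct skeleton, correct identification of the crux, but the crux is left as a placeholder. To turn this into a proof you must either reproduce the Aharonov--Grilo $\NP$-containment argument or cite it as a black box; the present sketch does neither.
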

We obtain similar results for our guidable \textit{stoquastic} Local Hamiltonian problem, which differs from~\cite{Aharonov2019stoquastic} in the fact that the Hamiltonian does not have to be uniform and the energy decision parameters can be arbitrary real numbers (instead of zero or bounded away from zero, as is the case in deciding on the frustration of the Hamiltonian). However, this comes with the extra constraint that there has to exist a classically describable guiding state, as per Definition~\ref{def:cds}. Formally, we can define another version of a $\SPCP$-conjecture in terms of guidable Hamiltonians in the following way.
\begin{conjecture}[SPCP conjecture - guidable Hamiltonian version] $\mathsf{CGaSLH}(k,\zeta,\delta)$ with $k\geq 2$, some constant $\zeta > 0$, and $\delta = \Omega(1)$ is $\MA$-hard under classical poly-time reductions.
\label{conj:SPCP}
\end{conjecture}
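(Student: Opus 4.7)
The plan is to establish $\MA$-hardness at constant promise gap by starting from the inverse-polynomial-gap $\MA$-hardness of $\mathsf{CGaLSH}$ provided by Theorem~\ref{thm:CGaLSH} and applying a gap-amplification procedure, analogous in spirit to Dinur's proof of the classical PCP theorem but adapted to Hamiltonians. The key demand is that the amplification must simultaneously preserve (i) the $k$-locality of the Hamiltonian, (ii) the stoquastic structure of the off-diagonal entries, and (iii) the existence of a classically evaluatable guiding state with at-least-constant fidelity with the ground space, as per Definition~\ref{def:cds}.

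Concretely, I would proceed in two stages. In the first stage, I would apply a spectral transformation $H \mapsto f(H)$ that amplifies the gap between the ground energy and the first excited energy. Generic polynomial transformations are ruled out because odd powers of negative off-diagonal entries flip sign and therefore destroy stoquasticity, so the natural stoquasticity-preserving candidate is an imaginary-time-evolution-based map such as $H \mapsto I - e^{-tH}$, whose off-diagonal matrix elements remain non-positive for all $t \geq 0$ when $H$ is stoquastic. This transformed operator must then be recast as a sum of polynomially many local terms, for instance via a Lie--Trotter expansion truncated consistently with the target promise gap. In the second stage, I would restore strict $k$-locality via a perturbative gadget that is both stoquasticity-preserving and that preserves the structure of the guiding state in the manner of Lemma~\ref{lem:2localsim}, so that the fidelity between the new guiding state and the new ground space degrades by at most a constant factor.

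The main obstacle, and the reason this is posed only as a conjecture, is the absence of a locality-reducing perturbative gadget that simultaneously (i) preserves stoquasticity, (ii) preserves the existence of a classically evaluatable guide at non-trivial fidelity, and (iii) contracts the relative promise gap by at most a constant factor. The gadgets of~\cite{bravyi2008quantum} and the chain of reductions underlying Lemma~\ref{lem:2localsim} generically produce off-diagonal terms with sign patterns incompatible with stoquasticity, and they shrink the relative promise gap by polynomial factors. An alternative direct route --- encoding a constant-gap $\MA$-hard problem (for instance via the coherent classical verifier of Definition~\ref{def:CohV}) into a stoquastic Hamiltonian without any amplification step --- faces the complementary challenge of exhibiting a constant-fidelity classically evaluatable guide, which the construction in Theorem~\ref{thm:CGaLSH} does not supply at constant gap.

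Any successful proof moreover automatically implies $\MA = \NP$: since $\mathsf{CGaSLH}$ is a restriction of $\mathsf{CGaLH}$ to stoquastic Hamiltonians, the $\NP$-containment of Theorem~\ref{thm:cp_NP_NqP} immediately yields $\mathsf{CGaSLH}(k,\Omega(1),\Omega(1)) \subseteq \NP$ for $k = \mathcal{O}(\log n)$, so an $\MA$-hardness result collapses $\MA \subseteq \NP$ and hence $\MA = \NP$. Consequently, the true difficulty of proving the conjecture is at least that of derandomizing $\MA$, and a proof is effectively a simultaneous resolution of that longstanding open problem together with the construction of the stoquasticity-preserving, guide-preserving gap-amplification machinery sketched above.
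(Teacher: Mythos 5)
The statement you were asked to prove is posed in the paper as a \emph{conjecture}, and the paper offers no proof of it; what the paper does prove is that the conjecture holds \emph{if and only if} $\MA = \NP$. Your proposal is therefore right not to claim a proof, and your gap-amplification sketch (spectral transformation preserving stoquasticity, followed by a stoquasticity- and guide-preserving locality-reducing gadget) is a reasonable account of why a direct attack is blocked — it mirrors the paper's own discussion of why such amplification procedures cannot exist unless strong complexity collapses occur. Your derivation of the forward implication is exactly the paper's argument: $\mathsf{CGaSLH}$ is a restriction of $\mathsf{CGaLH}$, so Theorem~\ref{thm:cp_NP_NqP} places the constant-gap, constant-fidelity case in $\NP$, and $\MA$-hardness would force $\MA = \NP$.

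The one substantive point you miss is the converse, which the paper establishes and which upgrades your ``the difficulty is at least that of derandomizing $\MA$'' to an exact equivalence: if $\MA = \NP$, then the conjecture is \emph{true}. The reason is that the $\NP$-hard instance constructed in Lemma~\ref{lem:CGaLH_NP-hard} is a \emph{diagonal} $2$-local Hamiltonian — hence trivially stoquastic (all off-diagonal entries vanish) — with a constant promise gap and a computational basis state as its ground state, which is a classically evaluatable guiding state with fidelity $\zeta = 1$. Under $\MA = \NP$, $\NP$-hardness of this instance is $\MA$-hardness, so the conjecture follows with no gap amplification or gadgetry whatsoever. This also means your framing of the conjecture as requiring ``simultaneous resolution'' of derandomization \emph{and} the construction of stoquasticity-preserving amplification machinery overstates the second requirement: the machinery is not needed at all once $\MA = \NP$ is granted, and conversely no such machinery can exist without it.
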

Just as in the $\QCMA$-setting (see Theorem~\ref{thm:QCPCP_nogos}), we obtain certain no-go results which must hold if Conjecture~\ref{conj:SPCP} is true.
\begin{theorem} Conjecture~\ref{conj:SPCP} (SPCP conjecture) is true if and only if $\MA = \NP$.
\end{theorem}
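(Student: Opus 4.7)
The plan is to establish both directions by combining the classical dequantization techniques of Section~\ref{sec:clas_cont} with the $\NP$-hardness construction of Lemma~\ref{lem:CGaLH_NP-hard}, noting that both transfer cleanly to the stoquastic setting because they only ever produce \emph{diagonal} Hamiltonians. The forward direction ($\Rightarrow$) is the nontrivial one and proceeds via an $\NP$-containment result for $\mathsf{CGaLSH}$ in the constant-parameter regime. The backward direction ($\Leftarrow$) is essentially immediate from the classical PCP theorem plus the trivial observation that diagonal Hamiltonians are stoquastic.

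For the forward direction, I would first show that $\mathsf{CGaLSH}(k,\delta,\zeta) \in \NP$ whenever $k = \mO(\log n)$ and both $\delta$ and $\zeta$ are constant. The proof mirrors the containment portion of Theorem~\ref{thm:cp_NP_NqP}: the witness is a classical description $\text{desc}(u)$ of a classically evaluatable guiding state promised by the extra promise on the problem, and the deterministic verifier runs the spectral-amplification algorithm of Theorem~\ref{thm:CED} to compute $\norm{\tilde{\Pi}_{(a+b)/2} \ket{u}}$ to sufficient accuracy. For constant $\zeta, \delta$ and $m = \poly(n)$ local terms, the algorithm performs $\mO(m^{c \log(1/\sqrt{\zeta})/\delta}) = \poly(n)$ expectation-value computations, each of which is efficient by condition (ii) of Definition~\ref{def:cds}. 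The stoquasticity of $H$ plays no role here --- the argument is purely spectral. Given this containment, if Conjecture~\ref{conj:SPCP} holds, then every $\MA$ problem reduces classically in polynomial time to a problem in $\NP$, giving $\MA \subseteq \NP$; since $\NP \subseteq \MA$ is trivial, we conclude $\MA = \NP$.

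For the backward direction, I would adapt the $\NP$-hardness construction of Lemma~\ref{lem:CGaLH_NP-hard}. Starting from gapped $3$-SAT (which is $\NP$-hard by the PCP theorem), apply the gap-preserving reduction of~\cite{Ansoteguireducing2021} to obtain a gapped $2$-SAT instance, and then encode each clause as a diagonal $2$-local projector to produce a Hamiltonian $H = I - H'$ with $\bra{x}H'\ket{x}$ equal to the fraction of satisfied clauses. The resulting $H$ is diagonal in the computational basis and therefore trivially stoquastic (all off-diagonal entries are zero, which vacuously satisfies the real non-positive condition of Definition~\ref{def:GaLSH}). The ground state is a computational basis state $\ket{x^*}$, which is classically evaluatable with $\zeta = 1$, so the instance lies in $\mathsf{CGaLSH}(2,\delta,1)$ with constant $\delta$. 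Under the assumption $\MA = \NP$, this $\NP$-hardness immediately upgrades to $\MA$-hardness, establishing Conjecture~\ref{conj:SPCP}.

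The main obstacle, if any, is checking that the deterministic spectral-amplification machinery of Section~\ref{ssec:spec_amp} genuinely gives an $\NP$ verifier rather than a randomized one --- this is precisely why we insisted on $0$-classically evaluatable states (with deterministic $\mathcal{OQ}_u$) in Definition~\ref{def:cds}. With that subtlety addressed, the proof is a short combination of the two building blocks above and imposes no new technical content beyond what is already developed for $\mathsf{CGaLH}$; the restriction to stoquastic Hamiltonians weakens the problem (shrinking the hard instances) but neither the $\NP$-containment algorithm nor the hardness reduction is sensitive to this restriction.
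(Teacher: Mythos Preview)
Your proposal is correct and follows essentially the same approach as the paper: the forward direction is exactly the observation that $\mathsf{CGaLSH}$ is a special case of $\mathsf{CGaLH}$ and hence inherits the $\NP$-containment of Theorem~\ref{thm:cp_NP_NqP} in the constant-parameter regime, and the backward direction is exactly the observation that the $\NP$-hard Hamiltonian of Lemma~\ref{lem:CGaLH_NP-hard} is diagonal and therefore stoquastic. The paper simply cites these two results rather than unpacking them, but the content is identical.
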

\begin{proof}
The '$\Longrightarrow$' implication follows from Theorem~\ref{thm:cp_NP_NqP}, since $\mathsf{CGaSLH}(k,\zeta,\delta)$ is a special case of $\mathsf{CGaLH}(k,\zeta,\delta)$. `$\Longleftarrow$' follows from the fact that the NP-hard Hamiltonian in Lemma~\ref{lem:CGaLH_NP-hard} is diagonal, which therefore is also stoquastic.
\end{proof}
Furthermore, we can also give no-go results on the existence of stoquastic gap amplification procedures.
\begin{theorem} There cannot exist 
\begin{enumerate}
    \item {\bf Stoquastic gap amplification:} a polynomial-time classical reduction from an instance of $\mathsf{CGaSLH}(k,\zeta,\delta)$ with $k\geq 2$, some constant $\zeta > 0$, and $\delta = \Theta(1/\poly(n))$ to some $\mathsf{CGaSLH}(k',\zeta',\delta')$ with $k'\geq 2$, some constant $\zeta' > 0$, and $\delta' = \Omega(1)$,
\end{enumerate}
unless $\MA = \NP$, and
\begin{enumerate}
\setcounter{enumi}{1}
\item {\bf Stoquastic gap amplification (2):} a quasi-polynomial time classical reduction from an instance of $\mathsf{CGaSLH}(k,\zeta,\delta)$ with $k\geq 2$,  $\zeta =\Omega(1/\poly(n))$, and $\delta = \Theta(1/\poly(n))$ to some $\mathsf{CGaSLH}(k',\zeta',\delta')$ with $k'\geq 2$,  $\zeta' =\Omega(1/\poly(n))$, and $\delta' = \Omega(1)$,
\end{enumerate}
unless $\MA \subseteq \NqP$.
\end{theorem}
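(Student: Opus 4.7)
The plan is to mirror the argument for Theorem~\ref{thm:QCPCP_nogos} (the $\QCMA$-analogue) almost verbatim, exploiting the fact that the classical containment results of Section~\ref{sec:clas_cont} were stated for the more general $\mathsf{CGaLH}$ problem and therefore apply to its stoquastic restriction $\mathsf{CGaSLH}$ as a special case. The $\MA$-hardness side will be supplied by Theorem~\ref{thm:CGaLSH}, which covers the full range $\zeta \in (1/\poly(n),1-1/\poly(n))$ and $\delta = 1/\poly(n)$.

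For item (1), I would argue by contraposition. Suppose such a polynomial-time classical reduction exists. Since $\NP \subseteq \MA$ is known, it suffices to show $\MA \subseteq \NP$. Take any problem in $\MA$; by Theorem~\ref{thm:CGaLSH} it reduces (in polynomial time) to an instance of $\mathsf{CGaSLH}(k,\delta,\zeta)$ with $k \geq 7$, constant $\zeta$, and $\delta = 1/\poly(n)$. Apply the hypothesised gap-amplification reduction to obtain an instance of $\mathsf{CGaSLH}(k',\delta',\zeta')$ with constant $\delta'$ and constant $\zeta'$. Since $\mathsf{CGaSLH} \subseteq \mathsf{CGaLH}$, Theorem~\ref{thm:cp_NP_NqP} places this last instance in $\NP$. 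Chaining the two polynomial-time reductions with an $\NP$-verifier yields an $\NP$ procedure for the original $\MA$ problem, so $\MA \subseteq \NP$ and therefore $\MA = \NP$.

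For item (2), the argument is structurally identical but runs inside $\NqP$. Starting again from an arbitrary $\MA$ problem, Theorem~\ref{thm:CGaLSH} produces a polynomial-time reduction to $\mathsf{CGaSLH}(k,\delta,\zeta)$ with $\zeta = 1/\poly(n)$ and $\delta = 1/\poly(n)$. Applying the hypothesised quasi-polynomial-time classical reduction yields an instance of $\mathsf{CGaSLH}(k',\delta',\zeta')$ with constant $\delta'$ and $\zeta' = 1/\poly(n')$, where $n' \in \poly(n)^{\polylog(n)}$ remains quasi-polynomial in $n$. Theorem~\ref{thm:cp_NP_NqP} then places this instance in $\NqP$. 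Composing a polynomial-time reduction, a quasi-polynomial-time reduction, and an $\NqP$ verifier still gives a non-deterministic quasi-polynomial-time procedure, so $\MA \subseteq \NqP$.

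The only genuinely substantive step is checking that both Theorems~\ref{thm:CGaLSH} and~\ref{thm:cp_NP_NqP} apply to the precise parameter ranges named in the statement; in particular the $\MA$-hardness must be invoked for values of $\zeta$ that are still in scope after the (hypothetical) reduction, and the $\NP$/$\NqP$ containment must be invoked at a locality $k' \in \mO(\log n')$, which is automatic since both reductions are classical and polynomial-time and cannot blow up the locality beyond a polynomial factor. Once those are verified the argument is essentially mechanical, so I would not expect any substantive obstacles beyond carefully tracking polynomial blow-ups in the composed reductions.
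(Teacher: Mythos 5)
Your proposal is correct and is exactly the argument the paper intends: the paper's own proof is a one-line appeal to Theorem~\ref{thm:cp_NP_NqP}, and you have simply spelled out the implicit contrapositive chain ($\MA$-hardness via Theorem~\ref{thm:CGaLSH}, the hypothesised gap amplification, then $\NP$/$\NqP$ containment of the constant-gap instance, with the routine composition of running times). No substantive differences or gaps.
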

\begin{proof}
This follows again from Theorem~\ref{thm:cp_NP_NqP}.
\end{proof}
Hence, these results might provide yet another way, next to~\cite{Aharonov2019stoquastic}, to derandomize the class $\MA$ (i.e. show that $\MA=\NP$). 

\section{QCMA versions of Quantum SAT}
\label{app:GaQSAT}
The canonical $\NP$-complete problem is satisfiability, and whilst the general local Hamiltonian problem can be viewed as the `quantum analogue' of satisfiability, one might argue that an \textit{even closer} analogue would be the slightly different problem of \textit{quantum satisfiability}, introduced in~\cite{Bravyiefficient2006}. In quantum satisfiability, shortened as $k$-$\mathsf{QSAT}$, the Hamiltonian is given by a sum of local projectors and the task is to decide whether there exists a quantum state $\ket{\xi}$ such that all terms project $\ket{\xi}$ to zero or the expectation value of $H$ is greater or equal than some inverse polynomial, for all quantum states. $k$-$\mathsf{QSAT}$ was shown to be $\QMA_1$-complete for $k\geq 3$~\cite{Bravyiefficient2006,Gossetquantum2013}, and it can be solved in linear time classically when $k \leq 2$~\cite{Aradlinear2015}. 

In the case of $\QMA$ it is very much unclear whether $\QMA = \QMA_1$, but for $\QCMA$ we know that in fact $\QCMA = \QCMA_1$, as mentioned before. The simple observation that in our proof of Theorem~\ref{thm:CGaLH} using $\QCMA_1$ impies that the ground state energy is exactly zero in the {\sc yes}-case and inverse polynomially large in the {\sc no}-case, hints that our construction actually shows a result for a quantum satisfiability version for $\QCMA$ (which implies hardness for the more general local Hamiltonian version), and indeed with some very minor modifications this can be shown to be the case. First we formally define the `$\QCMA$ versions' of satisfiability.  

\begin{definition}[Guidable Quantum Satisfiability] Guidable Quantum Satisfiability Problems are problems defined by having the following input, promise, output and some extra promise to be precisely defined below for each of the problems separately:\\
\textbf{Input:} A collection $\{\Pi_i : i=1,\dots,m\}$ of $k$-local projectors acting on $n$ qubits, a precision parameter $\delta >0 $. Let the Hamiltonian $H=\sum_{i=1}^m \Pi_i$ be the sum of all these projectors.\\
\textbf{Promise:} Either there exists a state $\ket{w}$ s.t. $\Pi_i \ket{w} = 0$ for all $i=1,\dots, m$ (i.e. $\lambda_0(H) = 0)$, or otherwise $\sum_{i=1}^m |\braket{y}{\Pi_i | y}| \geq \delta$ for all states $\ket{y} \in \mathbb{C}^{2^n}$ (i.e. $\lambda_0(H) \geq \delta$).\\
\textbf{Extra promises:} 
Denote  $\Pi_\text{gs}$ for the projection on the subspace spanned by the ground state of $H$.  Then for each problem class, we have that either one of the following promises hold, each giving a different problem:
\begin{enumerate}
    \item {\bf $\mathsf{CGa}\text{-}\mathsf{QSAT}(k,\delta,\zeta)$ \textit{Classically Guidable Quantum Satifisfiability Problem} }: 
    
    There exists a classically evaluatable state $u\in \mathbb{C}^{2^n}$ for which $\norm{\Pi_\text{gs} u}^2 \geq \zeta$.
    
    \item {\bf $\mathsf{QGa}\text{-}\mathsf{QSAT}(k,\delta,\zeta)$ \textit{Quantumly Guidable Quantum Satifisfiability Problem}}: 
    
    There exists a unitary $V$ implemented by a quantum circuit composed of at most $T=\poly(n)$ gates from a fixed gate set $\mathcal{G}$ that produces the state $\ket{\phi}=V\ket{0}$,  which has $\norm{\Pi_\text{gs} \ket{\phi}}^2 \geq \zeta$.
\end{enumerate}
\textbf{Output:} \begin{itemize}
    \item If $\lambda_0(H) = 0$, output {\sc yes}.
    \item If $\lambda_0(H) \geq \delta$, output {\sc no}.
\end{itemize}
\label{def:GQKSAT}
\end{definition}
\begin{theorem} $\mathsf{CGa}\text{-}\mathsf{QSAT}(k,\delta,\zeta)$ is $\QCMA$-complete for $k\geq 4$, $\delta = 1/\poly(n)$ and $\zeta \in (1/\poly(n),1-1/\poly(n))$. $\mathsf{QGa}\text{-}\mathsf{QSAT}(k,\delta,\zeta)$ is $\QCMA$-complete for $k\geq 4$, $\delta = 1/\poly(n)$ and $\zeta \in (1/\poly(n),1-1/\exp(n)]$.
\end{theorem}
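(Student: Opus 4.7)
The plan is to adapt the proof of Theorem~\ref{thm:CGaLH} with two key modifications: replace the randomized reduction to $\UQCMA$ with the perfect-completeness reduction to $\QCMA_1$ (Lemma~\ref{lem:QCMA1}), and drop the small-penalty weighting and subsequent locality reduction, since these break the projector structure required by the $\mathsf{QSAT}$ definition. Specifically, starting from a $\QCMA$ protocol $\langle U_n, p_1, p_2\rangle$ with $p_1 - p_2 \geq 1/\poly(n)$, I first invoke Lemma~\ref{lem:QCMA1} to obtain a $\QCMA_1$-verifier $\hat{U}_n$ with completeness $1$ and soundness $s' < 1$. I then apply the CNOT-trick of Lemma~\ref{lem:CNOT_trick} to force the Marriott--Watrous operator of $\hat{U}_n$ to be diagonal in the computational basis, so that all honest witnesses can be taken classical, and perform standard error reduction so that the soundness drops to an exponentially small quantity $\tilde{s}$.

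Next I apply a projector-based circuit-to-Hamiltonian mapping (Kitaev's original construction, or the Kempe--Regev $3$-local variant, which yields $4$-local projector terms when combined with the additional register introduced by the CNOT-trick), producing a Hamiltonian $H_{\sc yes} = H_\text{in} + H_\text{clock} + H_\text{prop} + H_\text{out}$ whose terms are (weighted) projectors, all with coefficients that are bounded polynomial quantities. The key observation is that under perfect completeness, for any accepting witness $y^*$, the history state $\ket{\eta(y^*)}$ defined in Eq.~\eqref{eq:hist_state} lies in the null space of $H_{\sc yes}$, so $\lambda_0(H_{\sc yes}) = 0$ exactly in the {\sc yes}-case. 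In the {\sc no}-case, a standard Kitaev-type analysis (with soundness $\tilde{s}$) yields $\lambda_0(H_{\sc yes}) \geq \Omega(1/T^3) = 1/\poly(n)$, which is precisely the frustration gap $\delta$ demanded by $\mathsf{CGa}\text{-}\mathsf{QSAT}$. Because all summands are already projectors (after rescaling/splitting into individual projector terms), there is no need for the block-encoding trick that would otherwise spoil the $\mathsf{QSAT}$ structure.

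For the guiding state, I copy the pre-idling strategy from the proof of Theorem~\ref{thm:CGaLH}: prepend $N = \poly(n)$ identity gates, so that the history state associated with the accepting witness $y^*$ becomes a uniform superposition over $N + T + 1$ clock-time slices, and take the polynomial-sized subset state $\ket{u_{\sc yes}^{\text{new}}}$ from Eq.~\eqref{eq:new_guidstate_yes} as the classical guide. By choosing $N$ appropriately and using the fact that \emph{every} accepting witness yields a history state in the ground space (so I only need overlap with one element of the ground space, not a unique ground state), I obtain $\|\Pi_\text{gs}\ket{u_{\sc yes}^{\text{new}}}\|^2 \geq 1 - 1/\poly(n)$. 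In the {\sc no}-case any computational basis state trivially serves as a valid ``guide'' (no guide is actually required since the promise is only one-sided in the relevant direction). The classical evaluability and quantum preparability of polynomial-sized subset states are already verified in the proof of Theorem~\ref{thm:CGaLH}. Containment in $\QCMA$ follows the same template as Corollary~\ref{cor:CGaLHstar}: the verifier receives $\text{desc}(u)$, prepares $\ket{u}$, and estimates the ground state energy via phase estimation to precision smaller than $\delta$. The $\mathsf{QGa}\text{-}\mathsf{QSAT}$ result then follows by the same bootstrapping argument used in Theorem~\ref{thm:QGaLH}: since $\delta$, the promise bound on $\lambda_0(H_{\sc yes})$, and the spectral quantities all remain $1/\poly(n)$, Lemma~\ref{lem:gs_prep} upgrades any $1/\poly(n)$-fidelity preparable guide into one of fidelity $1 - 1/\exp(n)$ via a polynomial-size quantum circuit.

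The main obstacle I anticipate is bookkeeping around the projector structure: I have to verify that the Kempe--Regev-style terms, augmented by the CNOT-trick register, can genuinely be written as a sum of $k$-local projectors with $k \leq 4$ (rather than weighted operators), and that the Kitaev analysis yielding $\lambda_0 \geq \Omega(1/T^3)$ in the {\sc no}-case still goes through without the small-penalty $\epsilon$-rescaling that was crucial in Theorem~\ref{thm:CGaLH}. This second point is where uniqueness of the ground state was needed before; here it is unnecessary because the $\mathsf{QSAT}$ promise only asks about $\lambda_0$, not about a spectral gap or about fidelity with a unique eigenstate. Modulo checking these details, the construction inherits completeness, soundness, fidelity bounds, and quantum preparability of the guide directly from the machinery of Theorem~\ref{thm:CGaLH} and Theorem~\ref{thm:QGaLH}.
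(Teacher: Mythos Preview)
Your proposal has a genuine gap in the \textsc{no}-case. The extra promise in Definition~\ref{def:GQKSAT} --- that a classically evaluatable state with fidelity $\geq \zeta$ to the ground space exists --- is part of what makes an instance \emph{valid}, and must therefore hold in both the \textsc{yes}- and \textsc{no}-cases of the reduced instance. Your parenthetical claim that ``no guide is actually required since the promise is only one-sided'' is incorrect: if the reduced Hamiltonian fails the guide promise in the \textsc{no}-case, it lands in the invalid set and the reduction is not sound.

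This is exactly why the paper retains the block-encoding trick (point (iv) of its proof). In the \textsc{no}-case of the bare Feynman--Kitaev Hamiltonian $H_{\sc yes}$ with $H_\text{out}$ at full strength, the ground state is not a history state, nor is it cleanly characterised without the small-penalty analysis you chose to drop; with the CNOT-trick alone one cannot argue that some subset state overlaps well with it. The paper sidesteps this by block-encoding $H_{\sc yes}$ against a simple $H_{\sc no}$ whose unique ground state is $\ket{0\dots 0}$, and choosing the shift $b$ so that in the \textsc{no}-case the overall ground state is this known basis state --- which is trivially a classically evaluatable guide. Your worry that the block-encoding ``would spoil the $\mathsf{QSAT}$ structure'' is unfounded: tensoring a projector with $\ket{0}\bra{0}_D$ or $\ket{1}\bra{1}_D$ yields another projector, and the weights (including $b$ and $\epsilon$) are absorbed by the rescaling-and-duplication argument in point (iii). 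Incidentally, the block-encoding is also where the locality $k=4$ actually comes from (it adds one to the $3$-local Kempe--Regev terms); the CNOT-trick itself does not increase locality, contrary to what you wrote.
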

\begin{proof}
We will follow the same hardness construction as was used to prove Theorem~\ref{thm:CGaLH}, making the following three observations: (i) we can instead start from a $\QCMA_1$-hard promise problem $\langle U_2,1,p_2\rangle$ by Lemma~\ref{lem:QCMA1}. (ii) We apply a slightly modified form of \emph{error reduction} to the circuit. Since $s' = 1 - \Omega(w^{-2}) > \frac{1}{2}$, we have to resort to a ``unanimous vote" instead of a ``majority vote" in our construction to exponentially suppress all witnesses in the {\sc no}-case, and all but one witness (the one that achieves perfect completeness) in the {\sc yes}-case close to zero, again using the ``Marriot and Watrous trick" for error reduction to preserve the witness size~\cite{Mariott2005quantum}. By only accepting when \textit{all} repetitions accept, one can quickly verify that the probability of acceptance for the witness that was originally accepted with probability $1$ remains $1$, and that for all other witnesses the probability of acceptance becomes suppressed exponentially close to zero. More precisely, by repeating $k$ times we have that the probability of acceptance in the {\sc no}-case is at most $(1-\Omega(1/w^2))^k \leq e^{-\Omega(k/w^2)} = e^{-p(n)}$ for $p$ a polynomial such that $k > p(n) \mO(w^2)$, implying that we need to repeat the verification circuit only polynomially many times. Note that after these two changes the protocol is still in $\QCMA_1$, albeit with a new soundness parameter which is now exponentially close to zero. (iii) Definition~\ref{def:GQKSAT} requires that the Hamiltonian is an unweighted sum of projectors. Note that the Hamiltonian of Eq.~\eqref{eq:H_FK} from~\cite{kempe20033local} is of this form except for the fact that the terms $H_\text{prop}$, $H_\text{clock}$ and $H_\text{out}$ are weighted with some factor. For all integer factors (which is only the $T^{12}$ in front of $H_\text{clock}$), we can simply replace the factor with a sum containing this number of terms. For all non-integer terms (which are $1/2$ and $\epsilon$), we simply add $1/f $ times all other terms, where $f \in \{1/2,\epsilon\}$ is the factor. For $\epsilon$ we then simply choose a value small enough such that $1/\epsilon$ is integer when using Lemma~\ref{lem:spcc}. (iv) $H_\text{no}$ of Eq.~\eqref{eq:H_no} is a sum of projectors, and the block-encoding trick used to construct $H$ of Eq.~\eqref{eq:full_H} ensures that all terms remain projectors. (v) The ground state energy is (before the locality reduction through the perturbative gadgets, which we do not apply here) precisely $0$ in the {\sc yes}-case and inverse polynomially far from zero in the {\sc no}-case. This also ensures containment through the use of quantum phase estimation. Also, note that we do not have to scale down the Hamiltonian as $\norm{H}\leq 1$ is not required by Definition~\ref{def:GQKSAT}.

\end{proof}

\end{document}